  \DeclareMathAlphabet{\mathcalligra}{T1}{calligra}{m}{n}  
  \DeclareFontShape{T1}{calligra}{m}{n}{<->s*[2.3]callig15}{}
\newcommand{\Av}{\mathrm{adv}}
\newcommand{\Bo}{\mathscr{B}}
\newcommand{\C}{\mathscr{C}}
\newcommand{\CC}{\mathbb{C}}
\newcommand{\D}{\mathscr{D}}
\newcommand{\dd}{\partial}
\newcommand{\E}{\mathscr{E}}
\newcommand{\Fun}{\mathscr{F}}
\newcommand{\JJ}{\mathscr{J}}
\newcommand{\id}{\mathbb{1}}
\newcommand{\Ko}{\mathscr{K}}
\newcommand{\Li}{\mathscr{L}}
\newcommand{\M}{\mathscr{M}}
\newcommand{\N}{\mathscr{N}}
\newcommand{\NN}{\mathbb{N}}
\newcommand{\OO}{\mathscr{O}}
\newcommand{\Po}{\mathscr{P}}
\newcommand{\Rt}{\mathrm{ret}}
\newcommand{\RR}{\mathbb{R}}
\newcommand{\Riem}{\mathrm{Riem}}
\newcommand{\UU}{\mathscr{U}}
\newcommand{\Un}{\mathfrak{U}}
\newcommand{\ud}{\mathrm{d}}
\newcommand{\VV}{\mathscr{V}}
\newcommand{\W}{\mathscr{W}}
\newcommand{\WF}{\mathrm{WF}}
\newcommand{\ZZ}{\mathbb{Z}}
\renewcommand{\Re}{\mathrm{Re}}
\renewcommand{\Im}{\mathrm{Im}}
\newcommand{\loc}{\mathrm{loc}}
\newcommand{\supp}{\mathrm{supp}\ \!}
\newcommand{\pr}{\mathrm{pr}}
\newcommand{\To}{\rightarrow}
\newcommand{\If}{\Leftarrow}
\newcommand{\Iff}{\Leftrightarrow}
\newcommand{\Then}{\Rightarrow}
\newcommand{\sm}{\smallsetminus}
\newcommand{\gotoas}[3]{\stackrel{{#2}\rightarrow{#3}}{\longrightarrow}{#1}}
\newcommand{\dlim}[1]{\underrightarrow{\lim}_{#1}}
\newcommand{\ilim}[1]{\underleftarrow{\lim}_{#1}}
\newcommand{\spr}[1]{\langle{#1}\rangle}
\newcommand{\Spr}[1]{\left\langle{#1}\right\rangle}
\newcommand{\In}[1]{\mathring{#1}}
\newcommand{\ol}[1]{\overline{#1}}
\newcommand{\restr}[1]{\vert_{#1}}
\newcommand{\Restr}[1]{\left.\vphantom{\frac{}{}}\!\right\vert_{#1}}
\newcommand{\Ret}{\mathsf{R}}
\newcommand{\Adv}{\mathsf{A}}
\let\oldtocsection=\tocsection
\let\oldtocsubsection=\tocsubsection
\let\oldtocsubsubsection=\tocsubsubsection
\renewcommand{\tocsection}[2]{\hspace{0em}\oldtocsection{#1}{#2}}
\renewcommand{\tocsubsection}[2]{\hspace{1em}\oldtocsubsection{#1}{#2}}
\renewcommand{\tocsubsubsection}[2]{\hspace{2em}\oldtocsubsubsection{#1}{#2}}
\begin{document}

%

\swapnumbers
\newtheorem{theorem}{\mdseries\scshape Theorem}[subsection]
\newtheorem{lemma}[theorem]{\mdseries\scshape Lemma}
\newtheorem{proposition}[theorem]{\mdseries\scshape Proposition}
\newtheorem{corollary}[theorem]{\mdseries\scshape Corollary}
\newtheorem{scholium}[theorem]{\mdseries\scshape Scholium}

\theoremstyle{definition}
\newtheorem{definition}[theorem]{\mdseries\scshape Definition}

\theoremstyle{remark}
\newtheorem{remark}[theorem]{\mdseries\scshape Remark}
\newtheorem{conjecture}[theorem]{\mdseries\scshape Conjecture}

%

\title[Algebraic Structure of Classical Field Theory]{Algebraic Structure 
  of Classical Field Theory: \\ Kinematics and Linearized Dynamics \\
  for Real Scalar Fields}

\author{Romeo Brunetti}
\address{Dipartimento di Matematica, Università di Trento -- 
  Via Sommarive 14, I-38053 Povo (TN), Italy}
\email{romeo.brunetti@unitn.it}

\author{Klaus Fredenhagen}
\address{II. Institut für theoretische Physik, Universität Hamburg -- 
  Luruper Chaussee 149, D-22761 Hamburg (HH), Germany}
\email{klaus.fredenhagen@desy.de}

\author{Pedro Lauridsen Ribeiro}
\address{Centro de Matemática, Computação e Cognição, Universidade Federal do ABC (UFABC) -- 
  Avenida dos Estados 5001 -- 09210-580 Santo André (SP), Brazil}
\email{pedro.ribeiro@ufabc.edu.br}

\date{\today}

\begin{abstract}
We describe the elements of a novel structural approach to classical field 
theory, inspired by recent developments in perturbative algebraic quantum 
field theory. This approach is local and focuses mainly on the observables 
over field configurations, given by certain spaces of functionals which are 
studied here in depth. The analysis of such functionals is characterized by a 
combination of geometric, analytic and algebraic elements which (1) make our 
approach closer to quantum field theory, (2) allow for a rigorous analytic 
refinement of many computational formulae from the functional formulation of 
classical field theory and (3) provide a new pathway towards understanding 
dynamics. Particular attention will be paid to aspects related to nonlinear 
hyperbolic partial differential equations and their linearizations.
\end{abstract}

\maketitle

\subjclass{\emph{2010 Mathematics Subject Classification:} Primary 70S05, 70S20; 
  Secondary 17B63, 35L10, 35L72, 58C15}

\keywords{\emph{Keywords:} relativistic classical field theory, algebraic approach, 
  observables, hyperbolic Euler-Lagrange equations}


\tableofcontents

\section{\label{s1-intro}Introduction}

The longstanding problem of finding a coherent and systematic mathematical 
structure for classical field theories has been addressed in various ways. 
Among them, we quote two main lines of investigation: one based on 
(multi)\-sym\-plec\-tic geometry \cite{carci,forgerr,gotay,kijowski}, seeking 
a covariant generalization of Hamiltonian mechanics and that goes back to de 
Donder \cite{dedonder} and Weyl \cite{weyl}; and the other based on the 
so-called formal theory of systems of partial differential equations 
\cite{anderson,kralv,seiler,vino1,vino2}, seeking a higher-order generalization 
of S. Lie's and É. Cartan's geometric approach to the analysis of integrability 
and symmetries of such systems. Both approaches have several points of contact 
and lead to a highly developed framework for the calculus of variations. As far 
as \emph{relativistic} field theories are concerned, however, the solution 
spaces of the dynamics generated by the variational principle are essentially 
taken for granted and their properties are seldom studied in depth, a 
noteworthy exception being the approach of Christodoulou \cite{christo}. 

Physicists, on the other hand, are keen on formal functional methods 
\cite{crnwit,dewitt}, tailored to the needs of (path-integral-based) 
quantum field theory, which are essentially a heuristic infinite-dimensional 
generalization of Lagrangian mechanics. To a certain extent, it is possible
to make these latter methods rigorous (see for instance \cite{abraham,binz}).
However, in these approaches the field configuration spaces are usually modeled 
on Banach spaces, which provide a simple differential calculus but entail some 
physically undesirable restrictions on the allowed space-times and on the 
regularity of the allowed field configurations. Moreover, these approaches 
also tend to deemphasize aspects related to covariance and locality, which 
are central in any relativistic field theory since then Euler-Lagrange equations
of motion are \emph{differential} (expressing locality of the underlying 
variational principle) and \emph{hyperbolic} (expressing finiteness of the 
propagation speed of dynamical effects). 

Even more importantly, a pivotal aspect that none of the above approaches
has addressed in a satisfactory manner is the characterization of \emph{local 
observables}, as opposed to spaces of field configurations. This remark is 
the starting point of our present investigation. Namely, we contend that if 
one wants to study the structure of local observables in a \emph{model-independent} 
fashion, one is inevitably led to an \emph{algebraic} viewpoint. This is a 
deep lesson learned from quantum field theory \cite{haag}, which however does 
not seem to have echoed back to classical field theory until quite recently, 
the only exception to our knowledge being \cite{leyrob}. This state of things 
has started to change due to the recent developments in perturbative algebraic 
quantum field theory \cite{brdut,brudf,brufre1,brufre2,bfk,dutfre1,dutfre2}. 
This is a research program aiming at a mathematically precise understanding of 
perturbative quantum field theory and renormalization from an algebraic
viewpoint -- to wit, renormalized perturbative quantum field theory can be seen
as a formal deformation of classical field theory, in a rather precise sense
\cite{brdut,brudf,dutfre1}.  

The key upshot of this program, which motivated the present work, is that it 
singles out the relevant class of observables for classical field theory from 
a few, physically reasonable requirements which, at the quantum level, are 
needed to restrict the class of allowed counterterms in renormalization. This 
serves as a starting point for a new, algebraic framework for classical 
(relativistic) field theory in its own right, which emphasizes from the very 
beginning the role of local observables and how they are affected by the 
dynamics. Presenting this framework in full detail is the objective of 
this paper. Let us now give an overview of its results.

As we shall see, local observables are represented by certain classes of 
functionals over the space of smooth field configurations. More precisely, 
the kinematical requirements on functionals in order to qualify as local 
observables lead, among other things, to a surprisingly simple structure
 for the local algebras they generate -- for instance, these algebras, 
when suitably topologized, turn out to be \emph{nuclear}, opening the way 
to a seamless composition of classical subsystems by means of tensor 
products \cite{brufre2}. 

A cornerstone of our approach concerns the treatment of \emph{dynamics}. We
do not impose any equations of motion directly on field configurations -- that 
is, we adopt an \emph{off-shell} viewpoint. We show that, on an infinitesimal 
level, the dynamics is implemented algebraically on local observables by means 
of a Poisson structure associated to certain Lagrangians, given in covariant 
form by the Peierls bracket \cite{dewitt,forgerr,marolf,peierls}. This bracket 
is a covariant generalization of the canonical Poisson bracket \cite{binz,wald1}, 
and has an unambiguous off-shell extension which however becomes degenerate. 
This degeneracy can be removed by taking the quotient of our local Poisson 
algebras of functionals modulo the ideal generated by the equations of motion, 
which turns out to be a \emph{Poisson} ideal. As a consequence, the quotient 
algebra is a Poisson algebra as well when endowed with the bracket induced on
the quotient by the Peierls bracket. The quotient amounts to imposing the 
equations of motion on field configurations pretty much in the spirit of algebraic 
geometry, and allows for a unified analysis of quantum anomalies as violations 
of identities following from the classical equations of motion due to
perturbative quantization and renormalization \cite{brdut,dutfre1}.

We conclude this introduction with a summary of the contents of the paper.
In Section \ref{s2-kin}, we discuss the bare minimum of kinematical concepts 
underlying our approach. For simplicity, we will consider only \emph{real scalar} 
fields, since the case when the fields live in a general fiber bundle poses 
a different set of questions, which demand a separate treatment (we shall 
have more to say about this in the final Section \ref{s5-ciao}). In Subsection 
\ref{s2-kin-pre}, we present a fair amount of background on Lorentzian geometry, 
vector bundles and jets, which is also used in Subsection \ref{s2-kin-geom}
to give an overview of the geometric and topological properties of the space of 
smooth field configurations. In Subsection \ref{s2-kin-obs} we introduce 
suitable classes of functionals over this space and discuss their support and 
localization properties, so as to be able to proceed to a detailed analysis of 
infinitesimal (i.e. linearized) dynamics in Section \ref{s3-dyn}; the full 
nonlinear dynamics is to be analyzed in a forthcoming paper. Euler-Lagrange 
equations are obtained from a class of local functionals parametrized by smooth, 
compactly supported functions $f$ specifying the localization of these functionals 
in space-time. Such functionals are called \emph{generalized Lagrangians}, examples 
of which are provided by integrals of Lagrangian densities multiplied by $f$ 
over the space-time manifold (Subsection \ref{s3-dyn-lag}). We are mainly 
interested in those generalized Lagrangians which lead to (normally) hyperbolic 
Euler-Lagrange operators, which are discussed in Subsection \ref{s3-dyn-hyp}. 
Therein we also define the Peierls bracket associated with such operators, and 
study its properties in depth. This bracket is shown to yield a Lie bracket in 
the space of so-called \emph{microcausal} functionals, which are distinguished 
by the singularity structure of their functional derivatives. A particular 
highlight of this development is perhaps the first fully fledged and rigorous 
proof of the Jacobi identity for the Peierls bracket in the literature (Corollary 
\ref{s3c3}), parts of which having previously appeared or been sketched in 
\cite{brdut,dutfre1,jakobs}. A thorough discussion of the topological and 
algebraic aspects of the *-algebras of microcausal functionals is carried 
out in Section \ref{s4-gen}, using the previous Sections as motivation. 
We show in Subsection \ref{s4-gen-top} that the Lie bracket provided by the 
Peierls bracket is in fact a Poisson bracket; another noteworthy result, shown
in Subsection \ref{s4-gen-cr}, is that the (Poisson) *-algebras of microcausal 
functionals also bear a $\C^\infty$\emph{-ring} structure \cite{moerr}, that is, 
they admit a sort of smooth functional calculus (Theorem \ref{s4t2}), which 
leads to a number of interesting consequences. For example, one recovers some 
basic facts from commutative C*-algebra theory: the *-algebra of microcausal 
functionals over a domain of field configurations completely encodes the topology 
of this domain (Proposition \ref{s4p1} (i)) and one may even reconstruct the domain 
itself as the space of *-characters of the *-algebra (Proposition \ref{s4p1} 
(iii--iv)). Moreover, any open cover of the domain admits locally finite 
partitions of unity whose members belong to this *-algebra (Proposition \ref{s4p1} 
(ii)). Finally, in Subsection \ref{s4-gen-em} we show that the ideal generated
by a hyperbolic Euler-Lagrange equation is a Poisson *-ideal (Proposition \ref{s4p2})
and therefore the quotient of the Poisson *-algebra of microcausal functionals
modulo this ideal is again a Poisson *-algebra. Section \ref{s5-ciao} concludes 
our work by presenting some future prospects and challenges. Appendix \ref{a1-calc} 
recalls basic concepts of differential calculus on locally convex topological vector 
spaces. 

\section{\label{s2-kin}Kinematics}

\subsection{\label{s2-kin-pre}Preliminaries}

Given nonvoid sets $A,A_1,\ldots,A_m$, we denote by $\id=\id_A:A\To A$ the \emph{identity map}
$\id_A(a)=a$, and by $\pr_{j_1,\ldots,j_k}:A_1\times\cdots\times A_m\To A_{j_1}\times\cdots\times
A_{j_k}$ the \emph{canonical projection} $\pr_{j_1,\ldots,j_k}(a_1,\ldots,a_m)=(a_{j_1},\ldots,a_{j_k})$,
$1\leq j_1<j_2<\cdots<j_k\leq m$. If $k=1$, we say that $\pr_j$ is the \emph{canonical projection 
  onto the $k$-th factor}.

First of all, a small refresher on Lorentzian geometry to fix our notation and terminology
(we basically follow \cite{hawkellis,wald1}). Let $(\M,g)$ be a \emph{space-time}, that is, 
an oriented $d$-dimensional Lorentzian manifold. The underlying manifold $\M$ (called the 
\emph{space-time manifold}) is assumed to be smooth, Hausdorff, paracompact and second 
countable (in particular, $\M$ has at most a countable number of connected components). By 
a \emph{region} of $\M$ (or of $(\M,g)$) we mean any subset of $\M$ with nonvoid interior. 
The Lorentzian metric $g$ on $T\!\M$ endows $\M$ with the volume element $\ud\mu_g=
\sqrt{|\det g|}\ud x$, the Levi-Civita connection $\nabla$, the lowering (resp. raising) 
musical isomorphisms $g^\flat:T\!\M\To T^*\!\!\M$ (resp. $g^\sharp:T\!\M\To T^*\!\!\M$) given 
by $g^\flat(X)\doteq g(X,\cdot)$ (resp. $g^\sharp(\xi)\doteq(g^\flat)^{-1}(\xi)$), and the 
inverse Lorentzian metric $g^{-1}$ on $T^*\!\!\M$ given by $g^{-1}(\xi_1,\xi_2)\doteq\xi_1
(g^\sharp(\xi_2))$. We occasionally write $g(T)$ (resp. $g^{-1}(\omega)$) with a single 
argument $T$ (resp. $\omega$), which is understood to be a contravariant (resp. covariant) 
tensor of rank two. We will use the chosen orientation to identify smooth densities with 
smooth $d$-forms. We adopt for $g$ the signature convention that, for all $p\in\M$, the 
subspace of $T_p\M$ consisting of eigenvectors of $g(p)$ with \emph{negative} eigenvalues 
is one-dimensional and therefore consists of \emph{timelike} vectors. Recall that $X\in
\ T_p\M$ is timelike (resp. null, causal, spacelike) if $g(X,X)<0$ (resp. $=0$, $\leq 0$, 
$>0$) -- hence, the subspace of $T_p\M$ consisting of (spacelike) eigenvectors of $g(p)$ 
with positive eigenvalues is $(d-1)$-dimensional. We always assume that $\M$ is 
\emph{time-oriented}, that is, there is a global timelike vector field $T$ on $\M$ -- we 
then say that a causal $X\in T_p\M$ is future (resp. past) directed if $g(X,T)<0$ (resp. 
$>0$). 

Recall as well that, given an interval $I\subset\RR$ with nonvoid interior, a 
(piecewise) smooth curve $\gamma:I\ni\lambda\To\gamma(\lambda)\in\M$ is said to be 
timelike (resp. null, causal, spacelike) if $g(\dot{\gamma}(\lambda),\dot{\gamma}
(\lambda))<0$ (resp. $=0$, $\leq 0$, $>0$) for all $\lambda\in I$ (such that $\gamma$ 
is smooth at $\lambda$), and that a causal curve is said to be future (resp. past) 
directed if $g(\dot{\gamma}(\lambda),T)<0$ (resp. $>0$) for any $\lambda\in I$ as 
above and any future directed timelike $T\in T_{\gamma(\lambda)}\M$. This allows us to 
define the chronological (resp. causal) future / past $I^{+/-}(U,g)$ (resp. $J^{+/-}(U,g)$) 
of $U\subset\M$ as
\[
\begin{split}
  I^{+/-}(U,g) \doteq\{ &p\in\M:\exists\gamma:[0,1]\To\M\text{ piecewise smooth, future / 
    past directed} \\ & \text{timelike such that }\gamma(0)\in U,\gamma(1)=p\}\ ,
\end{split}
\]
\[
\begin{split}
  J^{+/-}(U,g) \doteq\{ &p\in\M:\exists\gamma:[0,1]\To\M\text{ piecewise smooth, future / 
    past directed} \\ & \text{causal such that }\gamma(0)\in U,\gamma(1)=p\}\ .
\end{split}
\]
We also set $I^{+/-}(\{p\},g)\doteq I^{+/-}(p,g)$ (resp. $J^{+/-}(\{p\},g)\doteq J^{+/-}(p,g)$)
for any $p\in\M$, and, given $U,V\subset\M$, we write $U\gg_g / \ll_g V$ (resp. 
$U\geq_g / \leq_g V$) whenever $U\subset I^{+/-}(V,g)$ (resp. $U\subset J^{+/-}(V,g)$).
If $U=\{p\}$ (resp. $V=\{q\}$) for some $p,q\in\M$, we replace $U$ (resp. $V$) by
$p$ (resp. $q$) in the above notation. Finally, we always assume that $g$ is 
\emph{globally hyperbolic}, that is, $g$ is \emph{causal} (which means that there is 
no causal $\gamma:[0,1]\To\M$ such that $\gamma(0)=\gamma(1)$) and given $p\leq_g q\in\M$,
the set $J^+(p,g)\cap J^-(q,g)$ is \emph{compact}. An useful, equivalent description 
of global hyperbolicity can be given as follows \cite{bernsan1,bernsan2,bernsan3,bernsan4}:
there is a smooth, surjective function $\tau:\M\To\RR$ such that $g^\sharp(\ud\tau)$ is 
a future directed timelike vector field and $\Sigma^\tau_t\doteq\tau^{-1}(t)$ is a
\emph{Cauchy hypersurface} for $\M$ at each $t\in\RR$, that is, $\Sigma^\tau_t$
is a codimension-one, smooth and boundary-less submanifold of $\M$ such that any 
inextendible causal curve\footnote{A causal (resp. timelike, null) curve $\gamma:I\To\M$ 
  is said to be \emph{inextendible} if there is no causal (resp. timelike, null) curve
  $\tilde{\gamma}:\tilde{I}\To\M$ such that $\tilde{I}\supsetneqq I$ and $\tilde{\gamma}
  \restr{I}=\gamma$.} intersects $\Sigma^\tau_t$ exactly once. Such a $\tau$ is 
called a \emph{Cauchy time function} with respect to $(\M,g)$. Moreover, if $(\M,g)$ 
has a Cauchy hypersurface $\Sigma$, one can build a Cauchy time function $\tau$ such 
that $\tau^{-1}(0)=\Sigma$ \cite{bernsan4} -- in particular, $\M$ must then be 
diffeomorphic to $\RR\times\Sigma\cong\RR\times\Sigma^\tau_t$ for any $t\in\RR$.

Occasionally, we will need to work with smooth sections of vector bundles over
the space-time manifold $\M$ or over Cartesian powers thereof. Recall, for the 
sake of fixing nomenclature, that a (real) \emph{vector bundle} of rank $D$ 
over $\M$ is given by a smooth surjective submersion $\pi:\E\To\M$ from the 
\emph{total space} $\E$ to the \emph{base} $\M$, called the \emph{projection map}, 
such that there is an open covering $\{U_j\}_{j\in J}$ of $\M$ and for each 
$j\in J$ a smooth diffeomorphism $\psi_j:\pi^{-1}(U_j)\To U_j\times\RR^D$ 
(called a \emph{local trivialization} over $U_j$) such that $\psi_k\circ
\psi_j^{-1}(x,\zeta)=(x,t_{kj}(x,\zeta))=(x,T_{kj}(x)\zeta)$ for all 
$x\in U_j\cap U_k$, $j,k\in J$, where the \emph{transition functions} 
$T_{kj}:U_j\cap U_k\To GL(D,\RR)$ are smooth. The collection of pairs 
$\{(U_j,\psi_j)\}_{j\in J}$ is called a \emph{vector bundle atlas} for $\pi$. 
We usually identify a vector bundle with its projection map. Given $U\subset\M$ 
open, we say that a local trivialization $\psi$ over $U$ is said to be 
$\pi$\emph{-compatible} if for every $j\in J$ such that $U\cap U_j\neq\varnothing$ 
we have that $\psi\circ\psi_j^{-1}(x,\zeta)=(x,t_j(x,\zeta))=(x,T_j(x)\zeta)$ 
where $T_j:U_j\cap U\To GL(D,\RR)$ is smooth. A map $\vec{\varphi}:\M\To\E$ is 
said to be a \emph{section} of $\pi$ if $\pi\circ\vec{\varphi}=\id_\M$. Notice 
that if, in the above discussion, we replace $\RR^D$ by a manifold $Q$, and just 
demand that the smooth maps $t_{kj}$ are diffeomorphisms of $Q$ for each fixed 
$x\in U_j\cap U_k$ and the smooth maps $t_j$ are diffeomorphisms of $Q$ for 
each fixed $x\in U\cap U_j$ and $\pi$-compatible local trivialization $\psi$, 
$j,k\in J$, we get instead a (general) fiber bundle with typical fiber $Q$ and 
bundle atlas $\{(U_j,\psi_j)\}$.

Using a vector bundle atlas one can define (fiberwise) linear combinations 
$\alpha\vec{\varphi}_1+\beta\vec{\varphi}_2$ of any two sections $\vec{\varphi}_1,
\vec{\varphi}_2$ ($\alpha,\beta\in\RR$) by setting $\psi_j\circ(\alpha\vec{\varphi}_1
+\beta\vec{\varphi}_2)(p)=\alpha\psi_j\circ\vec{\varphi}_1(p)+\beta\psi_j\circ
\vec{\varphi}_2(p)$, $p\in U_j$, $j\in J$. This definition is readily seen to be 
independent of the choice of vector bundle atlas with $\pi$-compatible local 
trivializations. In particular, every vector bundle $\pi$ over $\M$ has a canonical 
section $0$ (called the \emph{zero section} of $\pi$), defined on every local 
trivialization $\psi$ compatible with $\pi$ by $\psi\circ 0(p)=(p,0)$, and
with respect to which we can define the \emph{support} of a section $\vec{\varphi}$
as $\supp\vec{\varphi}=\ol{\{p\in\M:\vec{\varphi}(p)\neq 0(p)\}}\subset\M$. 
It follows from the inverse function theorem that $\M$ is diffeomorphic 
to the range of the zero section in $\E$, which we also denote by $0$.
We denote by
\[
\Gamma^\infty(\pi)=\Gamma^\infty(\E\To\M)=\{\vec{\varphi}:\M\To\E\text{ smooth }\ |\ 
\pi\circ\vec{\varphi}=\id_\M\}
\]
the vector space of \emph{smooth sections} of $\pi$, and by
\[
\Gamma^\infty_c(\pi)=\Gamma^\infty_c(\E\To\M)=\{\vec{\varphi}\in\Gamma^\infty(\pi)\ |\ 
\supp\vec{\varphi}\text{ compact}\}
\]
the vector space of smooth sections of $\pi$ with compact support. Likewise, 
we denote by 
\[
\D'(\pi)=\D'(\E\To\M)=\Gamma^\infty_c(\E'\otimes\wedge^d T^*\!\!\M\To\M)'
\]
the space of $\E$-valued distributions, where $\pi':\E'\To\M$ is the dual bundle of $\pi$.
The fiberwise scalar multiplication turns $\Gamma^\infty(\pi)$, $\Gamma^\infty_c(\pi)$ and
$\D'(\pi)$ into $\C^\infty(\M)$-modules, so that multiplication of sections by 
$f\in\C^\infty_c(\M)$ is even a $\C^\infty(\M)$-linear map from $\Gamma^\infty(\pi)$ into 
$\Gamma^\infty_c(\pi)$, for $\supp(f\vec{\varphi})\subset\supp f$ for all $f\in\C^\infty(\M)$, 
$\vec{\varphi}\in\Gamma^\infty(\pi)$.

We also briefly recall the notion of \emph{jets} of smooth maps between manifolds
$\M,\M'$ of respective dimensions $d,D$, referring to \cite{kms} for a thorough 
exposition. Let $r\in\NN$; we say that two smooth maps $\psi_1,\psi_2:\M\To\M'$ 
\emph{have the same} $r$\emph{-th order jet} at $p\in\M$ if for some (hence, any) 
coordinate charts $x:U\supset p\To\RR^d$, $y:V\supset\psi_1(p),\psi_2(p)\To\RR^D$, 
the $r$-th order Taylor polynomials of $y\circ\psi_1\circ x^{-1}$ and $y\circ
\psi_2\circ x^{-1}$ at $x(p)$ coincide. Having the same $r$-th order jet at $p\in\M$ 
is clearly an equivalence relation in the space $\C^\infty(\M,\M')$ of all smooth 
maps from $\M$ into $\M'$, and the equivalence class of $\psi\in\C^\infty(\M,\M')$ 
is called the $r$\emph{-th order jet} of $\psi$ at $p$, denoted by $j^r\psi(p)$. 
The $r$\emph{-th order jet bundle} of $\C^\infty(\M,\M')$, given by
\[
\pi^r_0:J^r(\M,\M')\ni j^r\psi(p)\mapsto\pi^r_0(j^r\psi(p))=(p,\psi(p))\in
\M\times\M'\ ,\quad\psi\in\C^\infty(\M,\M')\ ,
\]
is an affine bundle over $\M\times\M'$, whose typical fiber is the space 
of $r$-th order, $\RR^D$-valued polynomials vanishing at $0\in\RR^d$. 
Given $\psi\in\C^\infty(\M,\M')$, the corresponding section $j^r\psi:p\mapsto 
j^r\psi(p)$ of $\pi^r_0$ is called the $r$\emph{-th order jet prolongation} 
of $\psi$. Truncation of $r$-th order Taylor polynomials to order $1\leq s<r$ 
induces surjective submersions $\pi^r_s:J^r(\M,\M')\To J^s(\M,\M')$ which 
satisfy $\pi^r_r=\id$ and $\pi^s_t\circ\pi^r_s=\pi^r_t$ for all $0\leq t
\leq s\leq r$, which allow one to define the projective limit $\pi^\infty_0:
J^\infty(\M,\M')\To\M\times\M'$, called the \emph{infinite-order jet bundle}
of $\C^\infty(\M,\M')$. One can then identify the sequence $(j^r\psi)_{r\geq 0}$ 
of jet prolongations with a section $j^\infty\psi$ of $J^\infty(\M,\M')$, called 
simply the \emph{infinite-order jet prolongation} of $\psi$. $J^\infty(\M,\M')$, 
being a countable projective limit of second-countable, finite-dimensional 
manifolds, can be made into a second-countable, metrizable Fréchet manifold 
\cite{km}. If $\pi:\E\To\M$ is a fiber bundle over $\M$, we can define the subspace 
$J^r(\pi)\subset J^r(\M,\E)$ of $r$-jets $X=j^r\psi(p)$ of smooth \emph{sections} 
$\psi$ of $\pi$ (i.e. smooth maps from $\M$ to $\E$ satisfying $\pi\circ\psi=\id_\M$), 
$1\leq r\leq\infty$. Then we can identify $\pi^r_0\restr{J^r(\pi)}$ with $\pr_2\circ
\pi^r_0$, and we call the affine bundle $\pi^r_0:J^r(\pi)\To\E$ the $r$\emph{-th 
  order jet bundle} of $\pi$. 

\subsection{\label{s2-kin-geom}Topology and geometry of the space of field configurations}

Let $(\M,g)$ be a globally hyperbolic space-time, and $\C^\infty(\M)\doteq\C^\infty(\M,\RR)$ 
be the space of real-valued smooth functions on $\M$. We call $\C^\infty(\M)$ a(n 
\emph{off-shell}) \emph{space of (real scalar) field configurations}\footnote{Some 
  physics texts, such as \cite{dewitt}, call $\C^\infty(\M)$ the \emph{space of field histories} 
  on $\M$.}. It can be topologized in two different ways by means of the infinite-order 
jet prolongation of its elements, as follows. Let $\C(\M,J^\infty(\M,\RR))$ be the 
space of continuous functions from $\M$ into $J^\infty(\M,\RR)$. The \emph{compact-open} 
topology on $\C(\M,J^\infty(\M,\RR))$ is generated by the sub-basis
\[
\UU_{K,V}=\{X\in\C(\M,J^\infty(\M,\RR))\ |\ X(K)\subset V\}\ ,
\]
for all $K\subset\M$ compact, $V\subset J^\infty(\M,\RR)$ open. The initial topology
on $\C^\infty(\M)\ni\varphi$ induced by the compact-open topology on $\C(\M,J^\infty(\M,\RR))$
through the map $\varphi\mapsto j^\infty\varphi$ is also called the \emph{compact-open} topology
on $\C^\infty(\M)$. The \emph{graph} (or \emph{Whitney}) topology on $\C(\M,J^\infty(\M,\RR))$, 
on its turn, is given by taking
\[
\UU_W=\{X\in\C(\M,J^\infty(\M,\RR))\ |\ (p,X(p))\in W\text{ for all }p\in\M\}\ ,
\]
for all $W\subset\M\times J^\infty(\M,\RR)$ open in the product topology, as a basis 
of open sets. Obviously, to have $\UU_W\neq\varnothing$ one needs $W$ to satisfy 
$\pr_1(W)=\M$. As $J^\infty(\M,\RR)$ is metrizable and $\M$ is paracompact, another 
basis for this topology is given around any $Y\in\C(\M,J^\infty(\M,\RR))$ by 
$\{X\in\C(\M,J^\infty(\M,\RR))\ |\ d(X(p),Y(p))<\epsilon(p)\}$, for all positive 
$\epsilon\in\C(\M,\RR)$. The initial topology on $\C^\infty(\M)\ni\varphi$ induced by 
the graph topology on $\C(\M,J^\infty(\M,\RR))$ through the single map $\varphi\mapsto 
j^\infty\varphi$ is called the \emph{Whitney} topology on $\C^\infty(\M)$. It is in 
general finer than the compact-open topology, and coincides with the latter if and 
only if $\M$ is compact, which is \emph{not} our case. On the other hand, notice that 
since $\M$ is locally compact (for $\M$ is finite dimensional) and second countable, 
we have that $\M$ is $\sigma$\emph{-compact}, that is, $\M$ admits a so-called 
\emph{exhaustion} by a sequence $K_n\subset\In{K}_{n+1}$ of compact regions
$K_n\subset\M$, which means that $\cup^\infty_{n=1}K_n=\M$. We can then use any exhaustion 
$(K_n)_{n\geq 1}$ of $\M$ to show that any set $\UU_W$ as above must be a $G_\delta$ 
\emph{set} (i.e. a countable intersection of open sets) in the compact-open topology 
of $\C(\M,J^\infty(\M,\RR))$. Indeed, we have that
\[
\UU_W=\bigcap^{\infty}_{n=1}\UU_{K_n,\pr_2(W)}\ ,
\]
where $\pr_2$ is an open mapping. Therefore, the Whitney topology on $\C^\infty(\M)$ 
admits a basis made of $G_\delta$ subsets of $\C^\infty(\M)$ in the compact-open topology. 

The compact-open topology on $\C^\infty(\M)\ni\varphi$ can be understood as the topology
of uniform convergence of derivatives of all orders $k\geq 0$ on compact regions $K\subset\M$, 
as induced by the seminorms
\begin{equation}\label{s2e1}
  \begin{split}
    \|\varphi\|_{\infty,k,K} &\doteq\sup_{p\in K}\sqrt{\sum^k_{j=0}|\nabla^j\varphi(p)|^2_e}\ ,\\
    |\nabla^j\varphi|^2_e &\doteq \otimes^j e^{-1}(\nabla^j\varphi,\nabla^j\varphi)\ ,\\
  \end{split}
\end{equation}
where $\otimes^j e^{-1}$ is the Riemannian metric induced on the bundle $\otimes^j 
T^*\!\!\M$ of covariant tensors of rank $j$ on $\M$ by a Riemannian metric $e$ on 
$T\!\M$, and $\nabla^j\varphi$ is the iterated covariant derivative of order $j$ of 
$\varphi$ with respect to a torsion-free connection $\nabla$ on $T\!\M$, given 
recursively by 
\begin{equation}\label{s2e2}
  \begin{split}
    \nabla^1\varphi=\nabla\varphi=\ud\varphi\ ,\\\nabla^j\varphi(X_1,\ldots,X_j)
    &=\nabla_{X_1}\nabla^{j-1}\varphi(X_2,\ldots,X_j)\\ &-\sum^j_{l=2}\nabla^{j-1}
    \varphi(X_2,\ldots,X_{l-1},\nabla_{X_1}X_l,X_{l+1},\ldots,X_j)\ .
  \end{split}
\end{equation}
A countable family of seminorms is obtained by exploiting the $\sigma$-compactness
of $\M$ and choosing an exhaustion $(K_n)_{n\geq 1}$ of $\M$ as above. The topology 
induced by the seminorms $\|\cdot\|_{\infty,k,K_n}$ is then independent of the choice 
of $e$, $\nabla$ and the exhaustion $(K_n)_{n\in\NN}$. It is clearly a vector space 
topology with respect to the standard vector space operations in a space of vector 
bundle sections, and gives rise to a Fréchet space structure on $\C^\infty(\M)$. An 
equivalent, separating family of seminorms generating this topology is given by 
\begin{equation}\label{s2e3}
  \|\varphi\|_{\infty,k,f}\doteq\sup_{p\in\M}\sqrt{\sum^k_{j=0}|f(p)\nabla^j\varphi(p)|^2_e}\ ,
\end{equation}
where $f$ runs over the space $\C^\infty_c(\M)\doteq\C^\infty_c(\M,\RR)$ of real-valued
smooth functions with compact support. To see the equivalence, let $(f_n)_{n\in\NN}$
be a sequence in $\C^\infty_c(\M)$ taking values in $[0,1]$ such that $f_n\equiv 1$ 
in $K_n$ and $\supp f_n\subset\In{K}_{n+1}$, where $(K_n)_{n\in\NN}$ is the exhaustion
of $\M$ defined above. Then one clearly has $\|\varphi\|_{\infty,k,K_n}\leq
\|\varphi\|_{\infty,k,f_n}\leq\|\varphi\|_{\infty,k,K_{n+1}}$ for all $\varphi\in\C^\infty(\M)$.
Finally, yet another equivalent, separating family of seminorms generating the 
compact-open topology which will play a major role in this work is given by the 
\emph{local ($L^2$) Sobolev seminorms}
\begin{align}
  \|\varphi\|_{2,k,K} &\doteq\sqrt{\sum^k_{j=0}\int_K|\nabla^j\varphi|^2_e\ud\mu_e}\ ,\quad
                        K\subset\M\text{ compact, }\In{K}\neq\varnothing\ ,\label{s2e4}\\
  \|\varphi\|_{2,k,f} &\doteq\sqrt{\sum^k_{j=0}\int_\M|f\nabla^j\varphi|^2_e\ud\mu_e}\ ,\quad
                        f\in\C^\infty_c(\M)\ ,\label{s2e5}
\end{align}
where $\ud\mu_e$ is the volume element associated to the Riemannian metric $e$ on $\M$.
The equivalence can be established by means of the sequence $(f_n)_{n\in\NN}$ defined
above together with the Sobolev inequalities, using a partition of unity subordinated 
to a finite covering of $\supp f_n$ by suitable domains of coordinate charts for each 
$n\in\NN$. 

\begin{remark}\label{s2r1}
  Formulae \eqref{s2e1}--\eqref{s2e5} can be extended to the space $\Gamma^\infty(\pi)$ 
  of smooth sections of a vector bundle $\pi:\E\To\M$ over $\M$: given a torsion-free 
  connection $\bar{\nabla}$ on $\E$ and a torsion-free connection $\nabla$ on $T\!\M$, 
  we can combine them into a torsion-free connection on $\otimes^kT^*\!\!\M\otimes\E$ 
  for all $k$ by using Leibniz's rule. We denote such a connection by $\nabla$ for all 
  $k\geq 0$, since there will be no danger of confusion. Once we write $\nabla^1
  \vec{\varphi}(X)=\nabla_X\vec{\varphi}$ for all $\vec{\varphi}\in\Gamma^\infty(\pi)$, 
  $X\in\Gamma^\infty(T\!\M\To\M)$, we can define $k$-th order iterated covariant 
  derivatives $\nabla^k\vec{\varphi}$ of $\vec{\varphi}\in\Gamma^\infty(\pi)$ for all 
  $k\geq 2$ by means of \eqref{s2e2}. We can now endow $\E$ with a Riemannian fiber 
  metric $\bar{e}$ and define
  \begin{equation}\label{s2e6}
    |\nabla^k\vec{\varphi}|^2_{\bar{e}}=((\otimes^k \bar{e}^{-1})\otimes\bar{e})
    (\nabla^k\vec{\varphi},\nabla^k\vec{\varphi})\ .
  \end{equation}
  Substituting \eqref{s2e6} into \eqref{s2e1} and \eqref{s2e3}--\eqref{s2e5} allows
  us to define the seminorms $\|\vec{\varphi}\|_{p,k,f}$, $\|\vec{\varphi}\|_{p,k,K}$ 
  of $\vec{\varphi}\in\Gamma^\infty(\pi)$ for all $f\in\C^\infty_c(\M)$, $\varnothing
  \neq\In{K}\subset K\subset\M$ compact, $p=2,\infty$. 
\end{remark}

The Whitney topology on $\C^\infty(\M)$, unlike the compact-open topology, is
\emph{not} a vector space topology in general. Since a sequence $(\varphi_n)_{n\in\NN}$ 
converges to $\psi\in\C^\infty(\M)$ in this topology if and only if there is 
a compact subset $K\subset\M$ such that $\psi_n(p)=\psi(p)$ for all $p\in\M\sm K$ 
and $\psi_n$ converges uniformly to $\psi$ on $K$ together with all its derivatives 
\cite{km}, we see that scalar multiplication is not Whitney-continuous at zero 
unless $\M$ is compact.

Nonetheless, the Whitney topology induces on $\C^\infty(\M)$ the structure of 
a \emph{flat affine manifold}, modelled over the subspace $\C^\infty_c(\M)$. To 
wit, for every $\varphi\in\C^\infty(\M)$ there is an open neighborhood basis on 
$\varphi$ of the form $\UU+\varphi=\{\varphi+\vec{\varphi}\ |\ \vec{\varphi}\in\UU\}$, 
where $\UU$ runs over a basis of open neighborhoods of zero in $\C^\infty_c(\M)$ 
in the latter's usual inductive limit topology. In particular, the connected 
component of $\varphi$ in the Whitney topology is exactly $\varphi+\C^\infty_c(\M)$. 
The coordinate chart associated to $\UU+\varphi$ is then given by 
$\kappa_\varphi(\varphi+\vec{\varphi})=\vec{\varphi}$, and the coordinate change 
map from $\UU_1$ to $\UU_2$ is given by $\kappa_{\varphi_2}\circ\kappa_{\varphi_1}^{-1}
(\vec{\varphi}_1)=\vec{\varphi}_1+(\varphi_1-\varphi_2)$, which is clearly affine. 
We remark that, due to the aforementioned connectedness property of the Whitney 
topology, the respective domains $\UU_1+\varphi_1$, $\UU_2+\varphi_2$ of 
$\kappa_{\varphi_1}$ and $\kappa_{\varphi_2}$ have nonvoid intersection if and 
only if $\varphi_1-\varphi_2$ has compact support, in which case we conclude 
from the argument in the previous paragraph that $\kappa_{\varphi_1}^{-1}\circ
\kappa_{\varphi_2}$ is even continuous with respect to the Whitney topology.

As argued in Appendix \ref{a1-calc}, the notion of smooth curves in the 
modelling space $\C^\infty_c(\M)$ allows one as well to use the atlas
\begin{equation}\label{s2e7}
  \Un=\{(\UU+\varphi,\kappa_\varphi)\ |\ \UU\subset\C^\infty_c(\M)\ni 0\text{ open, }
  \varphi\in\C^\infty(\M)\}\ .
\end{equation}
we have built in the previous paragraph to induce a \emph{smooth} manifold 
structure on $\C^\infty(\M)$. In particular, due to the affine structure of 
$\C^\infty(\M)$, the tangent and cotangent bundles of $\C^\infty(\M)$ are 
trivial, being respectively given by
\begin{align*}
  T\C^\infty(\M) &=\C^\infty(\M)\times\C^\infty_c(\M)\ ,\\
  T^*\C^\infty(\M) &=\C^\infty(\M)\times\D'(\wedge^d T^*\!\!\M\To\M)\ ,
\end{align*}
where $\D'(\wedge^d T^*\!\!\M\To\M)=\C^\infty_c(\M)'$ is the space of 
$d$-form-valued distributions on $\M$. We endow $T\C^\infty(\M)$ with 
a flat connection, to be defined as follows. The parallel transport 
operator $P_\gamma^{\lambda_1,\lambda_2}(\gamma(\lambda_1),\vec{t}\:)=
(\gamma(\lambda_2),\vec{t}\:)$ on $T\RR=\RR\times\RR$ along $\gamma\in
\C^\infty(\RR,\RR)$ associated to the standard flat connection on the 
target space $\RR$ of $\C^\infty(\M)$ can be pulled back to 
$T\C^\infty(\M)$ by setting
\[
P_{\alpha}^{\lambda_1,\lambda_2}(\alpha(\lambda_1,\cdot),\vec{\varphi})(p)\doteq
(\alpha(\lambda_2,p),\vec{\varphi}(p))=P_{\alpha(\cdot,p)}^{\lambda_1,\lambda_2}
(\alpha(\lambda_1,p),\vec{\varphi}(p))\ ,
\]
where $\alpha:\RR\times\M\To\RR$ defines a smooth curve in $\C^\infty(\M)$ with
respect to the Whitney topology (see Appendix \ref{a1-calc}). Given sections 
$X,Y$ of $T\C^\infty(\M)$ taking smooth curves in $\C^\infty(\M)$ with respect 
to the Whitney topology to smooth curves in $T\C^\infty(\M)$, we may define at 
each $\varphi\in\C^\infty(\M)$ 
\begin{equation}\label{s2e8}
  D_YX[\varphi]=\frac{\partial}{\partial\lambda}\restr{\lambda=0}\left(P^{\lambda,0}_{\alpha}
    X[\alpha(\lambda,\cdot)]\right)\ ,
\end{equation}
where $\alpha:\RR\times\M\To\RR$ is a smooth curve in $\C^\infty(\M)$ such that 
$\alpha(0,p)=\varphi(p)$ and $\frac{\dd}{\dd\lambda}\restr{\lambda=0}\alpha
(\lambda,p)=\pr_2(Y[\varphi])(p)$. An example of such a curve is 
\begin{equation}\label{s2e9}
  \alpha(\lambda,p)=\varphi(p)+\lambda\pr_2(Y[\varphi])(p)\ .
\end{equation}
We say that $D$ is the \emph{ultralocal lift} of the standard flat connection on 
the target space\footnote{In the context of field theory, such connections were 
  formally introduced in \cite{dewitt}. They allow one to extend to higher orders
  the notion of \emph{fiber derivative} employed in the calculus of variations \cite{binz}. 
  For a precise, general concept of ultralocal lifts of connections on target spaces, 
  see for instance Example 4.5.3, pp. 94 of \cite{hamilton}.} $\RR$, for $D_YX[\varphi](p)$ 
depends only on $\varphi(p)$. In what follows, we automatically extend $D$ to all 
covariant and contravariant tensor fields on $\C^\infty(\M)$ (see Appendix \ref{a1-calc} 
for a precise definition) in the standard fashion, i.e. by tensoring and taking adjoint 
inverses of the parallel transport operator.

It is clear from the above definition that $P_\alpha$ defined above is the parallel 
transport operator along $\alpha$ associated to $D$. It is a consequence of the 
ultralocality of $D$, however, that much more is true:

\begin{enumerate}
\item The geodesic $\alpha$ starting at $(\varphi,\vec{\varphi})\in T\C^\infty(\M)$ 
  is given by \eqref{s2e9}. As a consequence, the exponential map $\exp_D:T\C^\infty(\M)
  \To\C^\infty(\M)\times\C^\infty(\M)$ of $D$ is complete and given by
  \[
  \exp_D(\varphi,\vec{\varphi})=(\varphi,\varphi+\vec{\varphi})=
  (\varphi,\kappa_\varphi(\vec{\varphi}))\ .
  \]
  In other words, the chart $\kappa_\varphi$ is precisely the normal coordinate 
  chart around $\varphi$ associated to $D$.
\item The curvature tensor of $D$ is given by 
  \[
  \Riem_D(X,Y)[\varphi](p)=\Riem_{\varphi(p)}(X[\varphi](p),Y[\varphi](p))=0\ ,
  \]
  where $\Riem_t\equiv 0$ is the Riemann curvature of the standard flat connection
  on the target space $\RR$ at the point $t$. As a consequence, the $k$-th order
  iterated covariant derivative $D^kX(Y_1,\ldots,Y_k)$ of a tensor field $X$ along 
  vector fields $Y_1,\ldots,Y_k$ is \emph{symmetric} in $Y_1,\ldots,Y_k$ for 
  all $k$. 
\end{enumerate}

We close this Subsection with two technical Lemmata. The first is a simple 
but useful manifestation of the fact that the Whitney topology is finer than
the compact-open topology:

\begin{lemma}\label{s2l1}
  Let $\UU\subset\C^\infty(\M)$ be open with respect to the Whitney topology. 
  Then for every $\varphi_0\in\UU$, $V\subset\M$ open, there is a $\varphi\in\UU$
  such that $\supp(\varphi-\varphi_0)\neq\varnothing$ is compact and contained 
  in $V$, and $\lambda(\varphi-\varphi_0)\in\UU-\varphi_0$ for all $\lambda\in[-1,1]$. 
\end{lemma}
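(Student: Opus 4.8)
The plan is to produce $\varphi$ explicitly as $\varphi_0+\chi\vec\psi$ for a suitable bump function $\chi\in\C^\infty_c(\M)$ supported in $V$ and a suitably small scaling, then exploit the fact that the Whitney topology, being finer than the compact-open topology, still admits a neighborhood basis at $\varphi_0$ consisting of sets of the form $\UU_0+\varphi_0$ with $\UU_0\subset\C^\infty_c(\M)$ open in the inductive limit topology (this is the flat affine manifold structure recalled just above the statement, with chart $\kappa_{\varphi_0}$). Concretely, first I would fix any nonempty open $V'$ with $\ol{V'}$ compact and $\ol{V'}\subset V$ (possible since $\M$ is a manifold, hence locally compact), and choose $\chi\in\C^\infty_c(\M)$ with $\chi\equiv 1$ on some smaller nonempty open set $V''\subset V'$, $\supp\chi\subset V'$, and $0\le\chi\le 1$. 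Then $\supp(\chi\vec\psi)\subset\supp\chi\subset V$ is compact, and it is nonempty provided $\vec\psi$ does not vanish identically on $V''$.

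Next I would pass to the chart $\kappa_{\varphi_0}$: since $\varphi_0\in\UU$ and $\UU$ is Whitney-open, there is an open neighborhood $\UU_0$ of $0$ in $\C^\infty_c(\M)$ (in its LF-topology) with $\UU_0+\varphi_0\subset\UU$. Because the LF-topology is a vector space topology, $\UU_0$ is absorbing, so for the fixed element $\chi\vec\psi$ there is $\lambda_0>0$ with $\lambda\,\chi\vec\psi\in\UU_0$ for all $|\lambda|\le\lambda_0$; rescaling $\vec\psi\mapsto\lambda_0\vec\psi$ we may assume $\lambda\,\chi\vec\psi\in\UU_0$ for all $\lambda\in[-1,1]$. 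Here one should also arrange $\vec\psi\ne 0$ on $V''$ — e.g. take $\vec\psi$ itself to be a bump function equal to a small constant on $V''$ — so that $\chi\vec\psi$ is genuinely nonzero and $\supp(\chi\vec\psi)\ne\varnothing$. Setting $\varphi\doteq\varphi_0+\chi\vec\psi$, we get $\varphi\in\UU_0+\varphi_0\subset\UU$, $\supp(\varphi-\varphi_0)$ is a nonempty compact subset of $V$, and for every $\lambda\in[-1,1]$ one has $\lambda(\varphi-\varphi_0)=\lambda\,\chi\vec\psi\in\UU_0=\UU_0+\varphi_0-\varphi_0\subset\UU-\varphi_0$, which is exactly the last assertion. (Note that convexity of $\UU_0$ is not needed — only that it is absorbing — which is why I scale the fixed vector into $\UU_0$ rather than trying to fit a whole segment.)

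The only genuinely substantive point — and the one I would state carefully — is the identification of a Whitney-open neighborhood of $\varphi_0$ with $\UU_0+\varphi_0$ for $\UU_0$ an LF-open neighborhood of $0$ in $\C^\infty_c(\M)$; this is precisely the description of the Whitney topology's affine chart structure established in the paragraphs preceding the Lemma, so I would simply cite it. Everything else is elementary: existence of the nested compact sets and bump functions is standard manifold topology, and the absorbing property is the definition of a vector space topology. I do not expect any real obstacle; the main thing to be careful about is keeping $\supp(\varphi-\varphi_0)$ nonempty, which is handled by choosing $\vec\psi$ nonvanishing on the interior of $\supp\chi$.
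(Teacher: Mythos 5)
Your proof is correct and follows essentially the same route as the paper's: pass to the affine chart $\kappa_{\varphi_0}$ to identify a Whitney-open neighborhood of $\varphi_0$ with $\UU_0+\varphi_0$ for $\UU_0$ an open neighborhood of $0$ in $\C^\infty_c(\M)$, pick a nonzero compactly supported function supported in $V$, and scale it by absorbency so that the whole segment $\lambda(\varphi-\varphi_0)$, $\lambda\in[-1,1]$, lands in $\UU_0$. The only differences are cosmetic: the paper simply takes a single $\varphi_1\in\C^\infty_c(\M)$ with $\supp\varphi_1\subset V$ and rescales (it mentions an \emph{absolutely convex} $\VV$ but, as you correctly observe, only absorbency is actually invoked), whereas your construction with $V''\subset V'\subset V$, a cutoff $\chi$, and a separate $\vec\psi$ is more elaborate than necessary — taking $\vec\psi=\chi$ would suffice — but introduces no error.
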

\begin{proof}
  By the reasoning in the paragraphs preceding this Lemma, there is an
  absolutely convex open neighborhood $\VV$ of zero in $\C^\infty_c(\M)$
  contained in $\UU-\varphi_0$. Given any $\varphi_1\in\C^\infty_c(\M)$, there
  is a $t_1>0$ such that $t\psi_1\in\VV$ for all $t\in\RR$ with $|t|\leq t_1$, 
  since $\VV$ is absorbent. Choose $\varphi_1$ with $\supp\varphi_1\subset V$,
  set $\varphi=\varphi_0+t_1\varphi_1$, and we are done.
\end{proof}

The second allows one to strengthen the conclusion of Lemma \ref{s2l1}: 

\begin{lemma}\label{s2l2}
  Let $\varphi_0\in\C^\infty(\M)$, $r\in\NN\cup\{0\}$, $p\in\M$. Then there is 
  $\varphi\in\C^\infty(\M)$ satisfying $j^r\varphi(p)=j^r\varphi_0(p)$, such that 
  $\supp(\varphi-\varphi_0)\neq\varnothing$ is contained in an arbitrarily small 
  open neighborhood $U$ of $p$ with compact closure $K$, and 
  $\|\varphi'-\varphi_0\|_{\infty,r,K}<\epsilon$ for $\epsilon>0$ arbitrarily small. 
\end{lemma}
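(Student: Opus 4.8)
The plan is to build $\varphi$ explicitly as $\varphi = \varphi_0 + \chi\,\vec h$ for a suitable bump function $\chi$ supported in $U$, where $\vec h\in\C^\infty(\M)$ is a fixed auxiliary function chosen so that $\vec h$ vanishes to order $r$ at $p$ but is not identically zero near $p$, and where $\chi$ is cut off more and more sharply around $p$ so as to make the $\C^\infty$-norm of the perturbation on a small compact neighborhood as small as we please. First I would fix, via a coordinate chart $x:W\to\RR^d$ centered at $p$ (so $x(p)=0$), the function $\vec h$ in these coordinates to be, say, a single monomial $x^\alpha$ with $|\alpha|=r+1$, multiplied by a fixed bump supported in $W$ and equal to $1$ near $0$; transported back to $\M$ this $\vec h$ is smooth, has compact support in $W$, satisfies $j^r\vec h(p)=0$ by construction (all derivatives of order $\le r$ of a homogeneous polynomial of degree $r+1$ vanish at the origin), and is not identically zero in any neighborhood of $p$. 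Then I would pick a standard bump $\chi_\delta$ supported in the coordinate ball of radius $\delta$ and equal to $1$ on the ball of radius $\delta/2$, for $\delta$ small enough that this ball has closure $K\subset U\cap W$. Setting $\varphi = \varphi_0 + \chi_\delta\vec h$, one has $\varphi-\varphi_0 = \chi_\delta\vec h$, whose support is the (nonvoid, compact) set where $\chi_\delta\vec h\neq 0$, contained in the $\delta$-ball, hence in $U$ with compact closure $K$.

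Next I would verify the jet condition and the smallness estimate. For the jet condition: in the coordinate ball of radius $\delta/2$ around $p$ we have $\chi_\delta\equiv 1$, so there $\varphi-\varphi_0 = \vec h$, and since $j^r\vec h(p)=0$ it follows that $j^r\varphi(p)=j^r\varphi_0(p)$. For the estimate $\|\varphi-\varphi_0\|_{\infty,r,K}<\epsilon$, I would estimate $\|\chi_\delta\vec h\|_{\infty,r,K}$ directly in the chart using the Leibniz rule: each derivative of order $\le r$ of $\chi_\delta\vec h$ is a sum of terms $(\partial^\beta\chi_\delta)(\partial^\gamma\vec h)$ with $|\beta|+|\gamma|\le r$. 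The key point is that $\partial^\gamma\vec h$, being a derivative of order $|\gamma|\le r$ of a smooth function vanishing to order $r+1$ at $p$, vanishes at least to order $r+1-|\gamma|\ge 1$ at $p$, hence is $O(\delta^{\,r+1-|\gamma|})$ on the $\delta$-ball; meanwhile the standard scaling $\chi_\delta(x)=\chi_1(x/\delta)$ gives $\partial^\beta\chi_\delta = O(\delta^{-|\beta|})$. Thus each product term is $O(\delta^{\,r+1-|\gamma|-|\beta|}) = O(\delta^{\,(r+1)-(|\beta|+|\gamma|)})$, and since $|\beta|+|\gamma|\le r$ the exponent is $\ge 1$, so the whole thing is $O(\delta)$ uniformly on $K$. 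Comparing the coordinate $\C^r$-norm with the intrinsic seminorm $\|\cdot\|_{\infty,r,K}$ from \eqref{s2e1} on the fixed compact set $K$ (an equivalence of norms argument, since $K$ is covered by finitely many charts and the transition maps together with $e,\nabla$ are smooth) shows $\|\varphi-\varphi_0\|_{\infty,r,K}\le C\delta$, which is $<\epsilon$ once $\delta$ is chosen small enough.

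The only mild subtlety — and the step I expect to require the most care — is the passage between the explicit coordinate-wise Leibniz estimate and the coordinate-free seminorm $\|\cdot\|_{\infty,r,K}$ appearing in the statement: one must make sure the constant $C$ depends only on $K$ (and the fixed auxiliary data $\chi_1,\vec h,e,\nabla$, the chart) and not on $\delta$, which is fine because shrinking $\delta$ only shrinks the region on which the estimate is applied, so one can fix $K$ at the outset (e.g. the closure of the $\delta_0$-ball for some reference $\delta_0$) and let $\delta\le\delta_0$ vary. Everything else — smoothness of $\varphi$, compactness and nonvoidness of $\supp(\varphi-\varphi_0)$, containment in the prescribed neighborhood $U$ — is immediate from the construction. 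I would also remark that this simultaneously strengthens Lemma \ref{s2l1} in the sense indicated in the text: one gets not merely an element of a Whitney-open set differing from $\varphi_0$ on a prescribed open set, but one with prescribed $r$-jet at an interior point and arbitrarily small $\C^r$-size.
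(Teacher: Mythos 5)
Your proof is correct and follows essentially the same route as the paper's: both arguments produce the perturbation as a smooth function vanishing to order $r$ at $p$ multiplied by a bump $f(R^{-1}x)$ (your $\chi_\delta$), and both obtain the smallness from the Leibniz rule combined with the Taylor decay $\dd^\gamma(\cdot)=O(\|x\|^{r+1-|\gamma|})$ and the scaling $\dd^\beta f_R=O(R^{-|\beta|})$, giving an overall $O(R)$ on the fixed compact $K$. The only cosmetic difference is that you pin down the perturbation explicitly as a monomial of degree $r+1$ (which makes the nonvanishing of $\supp(\varphi-\varphi_0)$ manifest), whereas the paper starts from an arbitrary $\varphi'$ with the prescribed $r$-jet; your added remark about transferring the coordinate $\C^r$-estimate to the intrinsic seminorm $\|\cdot\|_{\infty,r,K}$ is the same point the paper handles by normalizing $e$ and $\nabla$ to be Euclidean at the start.
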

\begin{proof}
  Since we are dealing with a local statement, we assume without loss of generality 
  that $\M=\RR^d$, $p=0$, $\varphi_0\equiv 0$, $e$ is the standard Euclidean metric 
  and $\nabla=\dd$ is the associated (flat) Levi-Civita connection. Let now $\varphi'
  \in\C^\infty(\M)$ be such that $j^r\varphi'(p)=j^r\varphi_0(p)$; it follows from 
  Taylor's formula with remainder that $\dd^\alpha\varphi'(x)=O(\|x\|^{r+1-|\alpha|})$ 
  as $\|x\|\To 0$, for all multi-indices $\alpha$ such that $0\leq|\alpha|\leq r$. 
  Let $f\in\C^\infty_c(\RR^d)$ such that $f(x)=1$ for $\|x\|\leq\frac{1}{2}$ and 
  $f(x)=0$ for $\|x\|\geq 1$. Given $R>0$, define $f_R(x)=f(R^{-1}x)$. It follows 
  from the chain rule that $\dd^\alpha f_R(x)=R^{-|\alpha|}(\dd^\alpha f)(R^{-1}x)$. 
  Define now $\varphi=f_R\varphi'$; Leibniz's rule gives us that 
  \[
  \|\varphi\|_{\infty,r,K}\leq C_{r,K}\|\varphi'\|_{\infty,r,\ol{B_R(0)}}
  \|f\|_{\infty,r,\ol{B_1(0)}}R
  \] 
  for all $K\subset\RR^d$ such that $\In{K}\supset\ol{B_R(0)}$, where 
  $B_\lambda(0)=\{x\in\RR^d\ |\ \|x\|<\lambda\}$. Taking $R$ sufficiently small
  yields the desired bound. 
\end{proof}

\subsection{\label{s2-kin-obs}Functionals as observables}

Our observable quantities will be maps $F:\UU\To\CC$ which we call \emph{functionals}, 
where $\UU\subset\C^\infty(\M)$ is usually some open set in the compact-open topology,
though we may occasionally consider more general subsets. The need to localize the 
domain of definition of functionals comes from the fact that, in the study of nonlinear
equations of motion, one is led to consider functionals which are not \emph{a priori} 
defined for \emph{all} field configurations due to the existence of solutions blowing 
up in finite time. We shall now introduce a concept which tells us in which sense 
functionals are localized in a certain region of space-time, following \cite{brudf}.

\begin{definition}[Space-time support]\label{s2d1} 
  Let $\UU\subset\C^\infty(\M)$. The \emph{space-time support} $\supp F$ of a functional 
  $F:\UU\To\CC$ is the (closed) subset composed by the points $p\in\M$ such that for any 
  neighborhood $U$ of $p$ we can find $\psi\in\UU,\varphi\in\UU-\psi$ with $\supp\varphi
  \subset U$ for which $F(\varphi+\psi)\neq F(\psi)$. The space of functionals over $\UU$ 
  with \emph{compact} space-time support in $\M$ will be denoted by $\Fun_{00}(\M,\UU)$. 
\end{definition}

In other words, a functional $F$ is insensitive to disturbances of its argument which 
are localized outside $\supp F$. As shown by Lemma \ref{s2l5} below, Definition \ref{s2d1}
gives a nonlinear generalization of the notion of support of a distribution. It is 
important, on the one hand, to emphasize that Definition \ref{s2d1} depends on the 
domain of definition $\UU$ of $F$. For instance, if we restrict $F$ to a smaller domain 
of definition $\VV\subset\UU$, then $\supp F$ will in general be a \emph{smaller} subset 
of $\M$ (see Remark \ref{s2r2} right below). On the other hand, the domain of definition 
of $F$ will always be clear from the context, so we refrain from referring to it in the 
notation.

\begin{remark}\label{s2r2}
  Let us give some simple examples of functionals. Given a compact region $K\subset\M$ of
  the space-time manifold $\M$ and $0\leq f\in\C^\infty_c(\M)$ satisfying $\int_\M f\ud\mu_g=1$, 
  we define the functionals $F,G,H:\C^\infty(\M)\To\CC$ as 
  \begin{equation}\label{s2e10}
    F(\varphi)=\|\varphi\|_{\infty,0,K}=\sup_K|\varphi|\ ,\quad G(\varphi)=\int_\M f\varphi\ud\mu_g\ ,
    \quad H(\varphi)=
    \begin{cases} 
      \frac{1}{1+\sup_\M|\varphi|} & \varphi\text{ bounded}\ ,\\ 0 & \text{otherwise}\ . 
    \end{cases}
  \end{equation}
  One clearly sees that $F$ and $G$ have compact space-time support (indeed, we have that 
  $\supp F=K$ and $\supp G=\supp f$), whereas $H$ does not. Other examples are the local 
  Sobolev seminorms $\|\varphi\|_{2,k,K}$ and $\|\varphi\|_{2,k,f}$ respectively defined in 
  \eqref{s2e4} and \eqref{s2e5}. We shall now give a slightly more complicated example
  which explicitly displays the dependence of the space-time support of a functional on the 
  latter's domain. Let $\chi:\RR\To[0,1]$ be a smooth function such that $\chi^{-1}(1)=[-1,1]$
  and $\chi^{-1}(0)=\RR\sm(-2,2)$. Setting $\chi_R(t)=\chi(R^{-1}t)$ for $R>0$, 
  define
  \begin{equation}\label{s2e11}
    G_R(\varphi)=\exp\left(1-\chi_R\circ G(\varphi)\right)\ ,
  \end{equation}
  with $G$ as defined in \eqref{s2e10}. Let now $\UU_{R'}=\{\varphi\in\C^\infty(\M)\ |\ 
  \|\varphi\|_{\infty,0,\supp f}<R'\}$ for $R'>0$; we have that
  \[
  \supp G_R\restr{\UU_{R'}}=
  \begin{cases}
    \varnothing & R'\leq R\ ,\\ \supp f & R'>R\ .
  \end{cases}
  \]
  Indeed, in the first case, we have that $G_R\restr{\UU_{R'}}\equiv 1$.
\end{remark}

We endow each $\Fun_{00}(\M,\UU)$, for $\UU\ni\varphi$ running over the compact-open 
topology of $\C^\infty(\M)$, with the following pointwise algebraic operations:

\begin{itemize}
\item Sum $F,G\mapsto (F+G)(\varphi)\doteq F(\varphi)+G(\varphi)$;
\item Product $F,G\mapsto (F\cdot G)(\varphi)\doteq F(\varphi)G(\varphi)$;
\item Involution $F\mapsto F^*(\varphi)\doteq\ol{F(\varphi)}$;
\item Multiplication by scalars $z\in\CC$, $F\mapsto(z\cdot F)(\varphi)\doteq zF(\varphi)$;
\item Unit $\id:\varphi\mapsto 1$.
\end{itemize}

Now we show that the algebraic operations of $\Fun_{00}(\M,\UU)$ preserve space-time 
supports. As a direct consequence, these operations turn $\Fun_{00}(\M,\UU)$ into 
a commutative unital *-algebra. Firstly, it is trivial to check that the scalar 
multiplication by any $0\neq\lambda\in\CC$ and the involution leave the support 
unchanged, whereas any scalar multiple of $\id$ has empty space-time support. The 
full assertion is then a consequence of the following

\begin{lemma}\label{s2l3}
  Let $\UU\subset\C^\infty(\M)$, $F,G$ functionals over $\UU$. Then:
  \begin{itemize}
  \item The sum $F+G$ satisfies
    \[
    \supp(F+G)\subset\supp F \cup \supp G\ ;
    \]
  \item The product $F\cdot G$ satisfies
    \[
    \supp(F\cdot G)\subset\supp F\cup\supp G\ .
    \]
    In particular, $\Fun_{00}(\M,\UU)$ is a commutative unital *-algebra.
  \end{itemize}
\end{lemma}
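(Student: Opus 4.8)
The plan is to argue directly from Definition~\ref{s2d1}, showing that if $p\notin\supp F\cup\supp G$, then $p\notin\supp(F+G)$ (resp.\ $p\notin\supp(F\cdot G)$). First I would fix such a point $p$ and pick a neighborhood $U$ of $p$ small enough to witness that $p$ lies outside both supports simultaneously: since $\M\sm(\supp F\cup\supp G)$ is open, there is a single open $U\ni p$ disjoint from both $\supp F$ and $\supp G$. By the defining property of $\supp F$ (applied to any neighborhood of each of its points, and taking the contrapositive), for every $\psi\in\UU$ and every $\varphi\in\UU-\psi$ with $\supp\varphi\subset U$ we have $F(\varphi+\psi)=F(\psi)$; likewise $G(\varphi+\psi)=G(\psi)$. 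One subtlety to record carefully: the statement of Definition~\ref{s2d1} quantifies over neighborhoods $U$ of $p$, so being outside $\supp F$ means that \emph{some} neighborhood $U_F$ of $p$ has the property that no admissible perturbation supported in $U_F$ changes $F$; intersecting $U_F$ with $U_G$ (and with any further shrinking needed) gives the common $U$, and any perturbation supported in this smaller $U$ is in particular supported in $U_F$ and in $U_G$, so it leaves both $F$ and $G$ unchanged.

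With that in hand the additive case is immediate: $(F+G)(\varphi+\psi)=F(\varphi+\psi)+G(\varphi+\psi)=F(\psi)+G(\psi)=(F+G)(\psi)$ for every admissible $\psi,\varphi$ with $\supp\varphi\subset U$, so no point of $U$ — in particular $p$ — can belong to $\supp(F+G)$. The multiplicative case is identical, using $(F\cdot G)(\varphi+\psi)=F(\varphi+\psi)G(\varphi+\psi)=F(\psi)G(\psi)=(F\cdot G)(\psi)$. This establishes $\supp(F+G)\subset\supp F\cup\supp G$ and $\supp(F\cdot G)\subset\supp F\cup\supp G$.

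For the final clause, I would combine these two inclusions with the elementary observations already made in the text preceding the Lemma: scalar multiplication by a nonzero $z\in\CC$ and the involution $F\mapsto F^*$ preserve space-time support (since $z F(\varphi_1)\neq zF(\varphi_2)\iff F(\varphi_1)\neq F(\varphi_2)$ and $\ol{F(\varphi_1)}\neq\ol{F(\varphi_2)}\iff F(\varphi_1)\neq F(\varphi_2)$), while $\id$ and its scalar multiples are constant functionals and hence have empty support. Therefore all five pointwise operations listed above map $\Fun_{00}(\M,\UU)\times\Fun_{00}(\M,\UU)$ (or $\Fun_{00}(\M,\UU)$) back into $\Fun_{00}(\M,\UU)$: the sum and product of two compactly-supported functionals again have support contained in the union of two compact sets, hence compact; the involution and nonzero scalar multiples keep the (compact) support; and $\id\in\Fun_{00}(\M,\UU)$ trivially. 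Commutativity, associativity, distributivity, the unit axiom and the $*$-algebra identities ($\ (F^*)^*=F$, $(F+G)^*=F^*+G^*$, $(F\cdot G)^*=G^*\cdot F^*=F^*\cdot G^*$, $(zF)^*=\bar z F^*$) all hold pointwise because they hold in $\CC$. Hence $\Fun_{00}(\M,\UU)$ is a commutative unital $*$-algebra.

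I do not expect any genuine obstacle here; the only point demanding care is the quantifier handling in Definition~\ref{s2d1} — namely making sure that "outside the support" is used in the form "there exists a witnessing neighborhood" and that shrinking a witnessing neighborhood preserves the witnessing property, so that a single common $U$ can be chosen for $F$ and $G$. Everything else is a routine transcription of the pointwise algebra of $\CC$-valued functions.
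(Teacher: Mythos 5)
Your proposal is correct and follows essentially the same route as the paper: negate the membership $p\in\supp F\cup\supp G$, extract a common witnessing neighborhood, and note that sums and products pointwise inherit invariance under perturbations supported there. The one place you are more explicit than the paper — unpacking "$p\notin\supp F$" as the existence of a witnessing neighborhood $U_F$, then intersecting with $U_G$ and using that a perturbation supported in $U_F\cap U_G$ is in particular supported in each — is exactly the step the paper's proof leaves implicit when it asserts "there is an open neighborhood $V$ of $p$" without comment, so this is a clarification rather than a divergence.
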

\begin{proof}
  Let us assume that $p\notin\supp F\cup\supp G$, that is,
  \[
  p\in\complement (\supp F \cup\supp G)=\complement\supp F\cap\complement\supp G\ .
  \] 
  By the definition of space-time support, there is an open neighborhood $V$ of 
  $p$ such that for all $\varphi_0\in\UU$, $\varphi\in\UU-\varphi_0$ satisfying
  $\supp\varphi\subset V$, we have that $F(\varphi_0+\varphi)=F(\varphi_0)$ and
  $G(\varphi_0+\varphi)=G(\varphi_0)$, hence $(F+G)(\varphi_0+\varphi)=
  (F+G)(\varphi_0)$ and $(F\cdot G)(\varphi_0+\varphi)=(F\cdot G)(\varphi_0)$ 
  for all such $\varphi_0,\varphi$. This entails that $p\notin\supp(F+G)$ and
  $p\notin\supp(F\cdot G)$, as desired.
\end{proof}

We emphasize that, unlike for supports of functions on $\M$, the stronger 
property $\supp(F\cdot G)\subset\supp F\cap\supp G$ \emph{does not} hold for 
space-time supports of functionals. A counter-example is given by $F=G_{R_1}$ and 
$G=G_{R_2}$ with $R_1<R_2$, where $G_R$ is defined for all $R>0$ in \eqref{s2e11}. 
The reason is that the notion of space-time support is a \emph{relative} one; it 
is not necessarily true that $F(\varphi)$ vanishes if $\varphi$ is supported outside 
$\supp F$. For instance, given $f\in\C^\infty(\M)$ with $\int_\M f\ud\mu_g=1$, we have 
that the functional $F(\varphi)=\int_\M f\exp(\varphi)\ud\mu_g$ satisfy $F(\varphi)=1$ 
for all $\varphi\in\C^\infty(\M)$ such that $\supp\varphi\cap\supp f=\varnothing$. In 
the above counter-example, we also have that $G_R$ is nowhere vanishing for all $R>0$. 

The \emph{raison d'être} of Definition \ref{s2d1} becomes evident if one assumes the 
following property:

\begin{definition}[Additivity]\label{s2d2} 
  Let $\UU\subset\C^\infty(\M)$. A functional $F\in\Fun_{00}(\M,\UU)$ is said to be 
  \emph{additive} if for all $\varphi_1\in\UU,\varphi_2,\varphi_3\in\UU-\varphi_1$ 
  such that $\varphi_2+\varphi_3\in\UU-\varphi_1$ and $\supp\varphi_2\cap\supp\varphi_3
  =\varnothing$ we have 
  \begin{equation}\label{s2e12}
    F(\varphi_1+\varphi_2+\varphi_3)=F(\varphi_1+\varphi_2)-F(\varphi_1)+F(\varphi_1+\varphi_3)
  \end{equation}
  or, more concisely,
  \begin{equation}\label{s2e13}
    F_{\varphi_1}(\varphi_2+\varphi_3)=F_{\varphi_1}(\varphi_2)+F_{\varphi_1}(\varphi_3) \,
  \end{equation}
  where $F_{\varphi}(\psi)\doteq F(\varphi+\psi)-F(\varphi)$.
\end{definition}

As it will be seen shortly, this notion essentially captures what it means for 
$F\in\Fun_{00}(\M,\UU)$ to be \emph{local} with respect to the space-time manifold 
$\M$. For instance, in the case that $\UU=\C^\infty(\M)$ we have\footnote{The restriction 
  on $\UU$ can be weakened in a certain sense. See Lemma \ref{s3l3}.} the following 
nonlinear analog of a partition of unity, introduced in Lemma 3.2 of \cite{brudf}.
Its simple proof is included here for the convenience of the reader.

\begin{lemma}\label{s2l4}
  Any additive functional $F\in\Fun_{00}(\M,\C^\infty(\M))$ can be decomposed as 
  a finite sum of additive functionals with arbitrarily small space-time support.
\end{lemma}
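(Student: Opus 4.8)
The plan is the standard partition-of-unity decomposition, with one twist forced by the off-shell setting: since additivity (Definition \ref{s2d2}) only constrains $F$ under \emph{compactly supported} perturbations of its argument, a global partition of unity on the non-compact manifold $\M$ is unavailable, so I will use one adapted to $K\doteq\supp F$, which leaves a remainder that must be controlled separately. Concretely, fix the desired localization scale, i.e.\ an open cover $\OO$ of $\M$ (so ``arbitrarily small'' means ``each summand supported inside some member of $\OO$''). As $K$ is compact and $\M$ is a manifold, I would first cover $K$ by finitely many relatively compact open sets $U_1,\dots,U_N$, each contained in a member of $\OO$; then shrink to open $V_j$ with $\ol{V_j}\subset U_j$ still covering $K$, put $W\doteq\bigcup_j V_j\supset K$, and pick bump functions $\psi_j\in\C^\infty_c(\M)$ with $0\le\psi_j\le 1$, $\psi_j\equiv 1$ on $\ol{V_j}$ and $\supp\psi_j\subset U_j$. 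Setting $\chi_1\doteq\psi_1$ and $\chi_j\doteq\psi_j\prod_{i<j}(1-\psi_i)$ for $j\ge 2$ gives $\chi_j\in\C^\infty_c(\M)$ with $\supp\chi_j\subset U_j$ and $\sum_{j=1}^N\chi_j=1-\prod_{j=1}^N(1-\psi_j)$, which equals $1$ on $W$; hence $\rho\doteq 1-\sum_{j=1}^N\chi_j\in\C^\infty(\M)$ vanishes on the neighborhood $W$ of $K$.

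Next I would build a ``staircase''. Put $\sigma^{(k)}\doteq 1-\sum_{i>k}\chi_i\in\C^\infty(\M)$ for $0\le k\le N$, so that $\sigma^{(0)}=\rho$, $\sigma^{(N)}=\id$ and $\sigma^{(k)}-\sigma^{(k-1)}=\chi_k$, and define $\Ha F_k(\varphi)\doteq F(\sigma^{(k)}\varphi)$ for $\varphi\in\C^\infty(\M)$. Then set
\[
F_j\doteq\Ha F_j-\Ha F_{j-1}\quad(1\le j\le N)\,,\qquad F_\infty\doteq\Ha F_0\,,
\]
so that telescoping together with $\sigma^{(N)}=\id$ yields $F_\infty+\sum_{j=1}^N F_j=\Ha F_N=F$. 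Each $\Ha F_k$ has space-time support contained in $\supp F$ (see the first case below), hence lies in $\Fun_{00}(\M,\C^\infty(\M))$; and it is additive, because multiplication by $\sigma^{(k)}$ is $\C^\infty(\M)$-linear and does not enlarge supports, so for $\varphi_2\in\C^\infty(\M)$ and $\varphi_1,\varphi_3\in\C^\infty_c(\M)$ with disjoint supports the sections $\sigma^{(k)}\varphi_1,\sigma^{(k)}\varphi_3$ still have disjoint supports, and \eqref{s2e12} for $F$ evaluated at $\sigma^{(k)}\varphi_1,\sigma^{(k)}\varphi_2,\sigma^{(k)}\varphi_3$ is precisely \eqref{s2e12} for $\Ha F_k$ at $\varphi_1,\varphi_2,\varphi_3$. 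Since \eqref{s2e12} is affine in $F$, the differences $F_j$ and $F_\infty$ are additive as well.

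The heart of the argument is the support estimate $\supp F_j\subset\supp F\cap\supp\chi_j\subset U_j$, together with $\supp F_\infty=\varnothing$. Fix $p\in\M$ and --- as in the negation of Definition \ref{s2d1} --- let $\varphi_0\in\C^\infty(\M)$ and $\zeta\in\C^\infty_c(\M)$ be arbitrary. If $p\notin\supp F$, there is an open $V\ni p$ with $F(\phi+\xi)=F(\phi)$ for all $\phi\in\C^\infty(\M)$, $\xi\in\C^\infty_c(\M)$, $\supp\xi\subset V$; taking $\supp\zeta\subset V$, each $\sigma^{(k)}\zeta$ is again supported in $V$, so $F(\sigma^{(k)}\varphi_0+\sigma^{(k)}\zeta)=F(\sigma^{(k)}\varphi_0)$, i.e.\ $\Ha F_k(\varphi_0+\zeta)=\Ha F_k(\varphi_0)$, and by difference $F_j$ and $F_\infty$ are unaffected; thus $p\notin\supp F_j\cup\supp F_\infty$. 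If instead $p\notin\supp\chi_j$, pick $V\ni p$ with $\chi_j\restr{V}\equiv 0$; then $\sigma^{(j)}$ and $\sigma^{(j-1)}$ coincide on $V$, so for $\supp\zeta\subset V$ the common value $\eta\doteq\sigma^{(j)}\zeta=\sigma^{(j-1)}\zeta$ appears in both arguments of $F_j$, while $\sigma^{(j)}\varphi_0-\sigma^{(j-1)}\varphi_0=\chi_j\varphi_0$ is compactly supported with $\supp(\chi_j\varphi_0)\cap\supp\eta=\varnothing$. This is where additivity of $F$ is genuinely needed: with base configuration $\sigma^{(j-1)}\varphi_0$ and the disjointly supported perturbations $\chi_j\varphi_0$ and $\eta$, \eqref{s2e12} gives
\[
F_j(\varphi_0+\zeta)=F(\sigma^{(j)}\varphi_0+\eta)-F(\sigma^{(j-1)}\varphi_0+\eta)=F(\sigma^{(j)}\varphi_0)-F(\sigma^{(j-1)}\varphi_0)=F_j(\varphi_0)\,,
\]
so $p\notin\supp F_j$. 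Hence $\supp F_j\subset\supp F\cap\supp\chi_j\subset U_j$; running the same argument at $k=0$, where $\sigma^{(0)}=\rho$ vanishes on every $V$ disjoint from $W$, and using $\supp F=K\subset W$, gives $\supp F_\infty\subset\supp F\cap\supp\rho\subset K\cap(\M\sm W)=\varnothing$.

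Finally, since $F_\infty$ is additive with empty space-time support, I would absorb it: $\Til F_1\doteq F_1+F_\infty$ is additive with $\supp\Til F_1\subset\supp F_1\subset U_1$, whence $F=\Til F_1+\sum_{j=2}^N F_j$ exhibits $F$ as a finite sum of additive functionals, each supported inside a member of $\OO$. I expect the main obstacle to be exactly the point that motivates the construction: one is tempted to cut off $\varphi$ far from $K$ and reduce to the compact case, but additivity gives no leverage on non-compactly-supported changes, so a global partition of unity is off limits; the remedy is the remainder multiplier $\rho$ and the auxiliary functional $F_\infty$, which is harmless only because $\supp F\subset W$ forces $\supp F_\infty=\varnothing$. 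The one step that needs more than the support calculus of Lemma \ref{s2l3} is the refined bound $\supp F_j\subset\supp\chi_j$, whose second case rests on the additivity identity \eqref{s2e12} being used to slide a disturbance $\eta$ past the support of $\chi_j$.
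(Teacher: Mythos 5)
Your proof is correct, and it takes a genuinely different route from the paper's. Where the paper sketches an inclusion--exclusion style decomposition $F=\sum_I s_I F_I$ indexed by subsets $I\subset\{1,\ldots,n\}$ of a ball cover of $\supp F$ with pairwise-intersecting balls, you set up a telescoping (``staircase'') sum $F=\Ha F_0+\sum_{j=1}^N(\Ha F_j-\Ha F_{j-1})$ over an \emph{ordered} finite cover, with $\Ha F_k(\varphi)=F(\sigma^{(k)}\varphi)$ built from the cumulative cutoffs $\sigma^{(k)}$. This has two visible advantages: the number of summands is $N$ rather than a sum over cliques of the intersection graph, and, more importantly, you make explicit the step the paper's sketch passes over in silence --- that the remainder multiplier $\rho=1-\sum_j\chi_j$ cannot simply be discarded. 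Because $\rho\varphi$ is not a compactly supported perturbation, neither additivity nor the definition of space-time support lets one assert $F(\varphi)=F(\varphi\sum_j\chi_j)$ directly; your solution, keeping $F_\infty(\varphi)=F(\rho\varphi)$ as an honest summand and then showing $\supp F_\infty\subset\supp F\cap\supp\rho=\varnothing$ so that it can be absorbed into $F_1$, resolves this cleanly. The paper's formulation buys the nicer conclusion that each piece is supported in a metric ball of radius $\epsilon$ (via the clique constraint $B_i\cap B_j\neq\varnothing$ for $i,j\in I$), whereas yours yields pieces supported in the members of a chosen cover $\OO$; these are equivalent notions of ``arbitrarily small.'' Your refined support bound $\supp F_j\subset\supp\chi_j$, established by sliding the perturbation $\eta$ past $\chi_j\varphi_0$ via the additivity identity \eqref{s2e12}, is the correct analogue of the paper's claim that $\supp F_I\subset B_I$; the argument is sound, modulo the cosmetic point that one should choose $V$ so that $V\cap\supp\chi_j=\varnothing$ (not merely $\chi_j\restr{V}\equiv 0$) to guarantee $\supp\eta\cap\supp(\chi_j\varphi_0)=\varnothing$.
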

\begin{proof}
  First of all, let us endow $\M$ with a complete auxiliary Riemannian metric $h$,
  whose associated distance function is given by $d_h:\M\times\M\To[0,+\infty)$. 
  In what follows, by ``distance'' between $p,q\in\M$ we mean $d_h(p,q)$, and
  a ``ball of radius $R$'', an open set $\{q\in\M:d_h(p,q)<R\}$ for some $p\in\M$. 
  
  Let $\epsilon>0$ be arbitrary, and $(B_i)_{i=1,\ldots,n}$ a finite covering of 
  $\supp F$ by balls of radius $\epsilon/4$. Associate to this covering a subordinate 
  partition of unity $(\chi_i)_{i=1,\ldots,n}$. By a repeated use of the additivity of 
  $F$ we arrive at a decomposition of the form
  \begin{equation}\label{s2e14}
    F=\sum_I s_I F_I\ ,
  \end{equation} 
  with $s_I\in\{\pm 1\}$, $F_I(\varphi)=F(\varphi\sum_{i\in I}\chi_i)$, where $\varphi
  \in\C^\infty(\M)$ and $I$ runs over all subsets of $\{1,\ldots,n\}$ such that 
  $B_i\cap B_j\neq\varnothing$ for all $i,j\in I$. It is obvious that $F_I$ is again 
  an additive functional, and from the definition of space-time support we immediately 
  find that $\supp F_I\subset\bigcup_{i\in I}B_i\doteq B_I$. Since any two points in 
  $B_I$ have distance less than $\epsilon$, then each $B_I$ is contained in a ball of 
  radius $\epsilon$. 
\end{proof}

\begin{remark}\label{s2r3}
  The concept of an additive functional, although not exactly mainstream, is by no 
  means new in the mathematical literature (consider \cite{rao} as a starting point). 
  In the present case, it was motivated by the study of the set of possible counterterms 
  generated by all choices of renormalization prescription in perturbative algebraic 
  quantum field theory \cite{dutfre2,brudf}. We have already seen examples of additive
  functionals, such as the square of the local Sobolev seminorm \eqref{s2e5}. Counter-examples
  include the functional $F$ defined in \eqref{s2e10} and the functional $G_R$ defined
  in \eqref{s2e11}. 
\end{remark}

A further property we will demand from our functionals concerns their differentiability.
We will just spell the complete definition we need for convenience, which builds on the 
discussion in Appendix \ref{a1-calc}.

\begin{definition}\label{s2d3}
  Let $\UU\subset\C^\infty(\M)$ be open in the compact-open topology. We say 
  that a functional $F\in\Fun_{00}(\M,\UU)$ is \emph{differentiable of order $m$} if for 
  all $k=1,\ldots,m$ the $k$-th order directional (Gâteaux) derivatives (henceforth called 
  \emph{functional derivatives}) 
  \begin{equation}\label{s2e15}
    F^{(k)}[\varphi](\vec{\varphi}_1,\ldots,\vec{\varphi}_k)=\Spr{F^{(k)}[\varphi],\vec{\varphi}_1
      \otimes\cdots\otimes\vec{\varphi}_k}\doteq\frac{\dd^k}{\dd\lambda_1\cdots\dd\lambda_k}
    \Restr{\lambda_1=\cdots=\lambda_k=0}F\left(\varphi+\sum^k_{j=1}\lambda_j\vec{\varphi}_j\right)
  \end{equation}
  exist as jointly continuous maps from $\UU\times\C^\infty(\M)^k$ to $\RR$, where 
  $\spr{\cdot,\cdot}$ denotes dual pairing. In particular, for each $\varphi$ fixed, 
  $F^{(k)}[\varphi]$ is a distribution \emph{density} of compact support on $\M^k$. If $F$ 
  is differentiable of order $m$ for all $m\in\NN$, we say that $F$ is \emph{smooth}.
\end{definition}

In certain cases, we can extend Definition \ref{s2d3} to the case when $\UU$ is 
no longer open (see Appendix \ref{a1-calc}).

The relation of the notion of space-time support of a functional to the notion of 
support of a distribution can be made more transparent for \emph{differentiable}
elements of $\Fun_{00}(\M,\UU)$:

\begin{lemma}\label{s2l5} 
  Let $\UU\subset\C^\infty(\M)$ be open in the compact-open topology and convex. If 
  $F\in\Fun_{00}(\M,\UU)$ is a differentiable functional of order one, then 
  \begin{equation}\label{s2e16}
    \supp F=\ol{\bigcup_{\varphi\in\UU}\supp F^{(1)}[\varphi]}\ .
  \end{equation}
\end{lemma}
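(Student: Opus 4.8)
The plan is to prove the two inclusions in \eqref{s2e16} separately, exploiting convexity of $\UU$ to pass freely between finite differences of $F$ and integrals of $F^{(1)}$ along straight-line segments. Throughout I will use the elementary identity, valid whenever $\varphi,\varphi+\vec{\varphi}\in\UU$ by convexity,
\[
F(\varphi+\vec{\varphi})-F(\varphi)=\int_0^1\Spr{F^{(1)}[\varphi+\lambda\vec{\varphi}],\vec{\varphi}}\,\ud\lambda\ ,
\]
which follows from Definition \ref{s2d3} together with the fundamental theorem of calculus (see Appendix \ref{a1-calc}); joint continuity of $F^{(1)}$ guarantees the integrand is continuous in $\lambda$.

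For the inclusion $\supp F\subset\ol{\bigcup_{\varphi\in\UU}\supp F^{(1)}[\varphi]}$, I would argue by contraposition. Suppose $p$ lies outside the right-hand side; then there is an open neighbourhood $U$ of $p$ disjoint from every $\supp F^{(1)}[\varphi]$, $\varphi\in\UU$. Take any $\psi\in\UU$ and any $\vec{\varphi}\in\UU-\psi$ with $\supp\vec{\varphi}\subset U$. By convexity the whole segment $\psi+\lambda\vec{\varphi}$, $\lambda\in[0,1]$, lies in $\UU$, so the displayed identity applies; but each pairing $\Spr{F^{(1)}[\psi+\lambda\vec{\varphi}],\vec{\varphi}}$ vanishes because $\supp\vec{\varphi}\subset U$ is disjoint from $\supp F^{(1)}[\psi+\lambda\vec{\varphi}]$. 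Hence $F(\psi+\vec{\varphi})=F(\psi)$ for all such $\psi,\vec{\varphi}$, which is precisely the statement that $p\notin\supp F$.

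For the reverse inclusion, since $\supp F$ is closed it suffices to show $\supp F^{(1)}[\varphi]\subset\supp F$ for every $\varphi\in\UU$. Again argue by contraposition: if $p\notin\supp F$, pick an open neighbourhood $V$ of $p$ witnessing this, so that $F(\varphi+\vec{\varphi})=F(\varphi)$ for all $\varphi\in\UU$ and all $\vec{\varphi}\in\UU-\varphi$ with $\supp\vec{\varphi}\subset V$. Fix $\varphi\in\UU$; I want to conclude $\Spr{F^{(1)}[\varphi],\vec{\varphi}}=0$ for every test section $\vec{\varphi}\in\C^\infty_c(\M)$ with $\supp\vec{\varphi}\subset V$. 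The point is that for small enough $t$ we have $\varphi+t\vec{\varphi}\in\UU$ (as $\UU$ is open in the compact-open topology and $t\vec{\varphi}\to 0$ there), so $F(\varphi+t\vec{\varphi})=F(\varphi)$ identically for $t$ near $0$; differentiating in $t$ at $t=0$ gives $\Spr{F^{(1)}[\varphi],\vec{\varphi}}=0$. Since such $\vec{\varphi}$ are dense enough to detect the support of the distribution density $F^{(1)}[\varphi]$, we get $p\notin\supp F^{(1)}[\varphi]$.

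The only delicate point — and the one I would be most careful about — is the passage from ``$F(\varphi+\vec{\varphi})=F(\varphi)$ for all admissible $\vec{\varphi}$ supported in $V$'' to ``$F^{(1)}[\varphi]$ annihilates all test sections supported in $V$'': one must check that for a \emph{fixed} $\varphi\in\UU$ the curve $t\mapsto\varphi+t\vec{\varphi}$ stays inside $\UU$ for $|t|$ small, which is exactly where openness of $\UU$ in the compact-open topology (not merely the Whitney topology) is used, and that the resulting local constancy legitimately yields vanishing of the Gâteaux derivative in Definition \ref{s2d3}. Convexity of $\UU$ is the other essential hypothesis, needed in the first inclusion so that the segment joining $\psi$ to $\psi+\vec{\varphi}$ is available for the integral formula; without it one could only integrate along paths that might leave $\UU$. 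Everything else is bookkeeping with supports of distribution densities.
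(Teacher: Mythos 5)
Your proof is correct and follows essentially the same route as the paper: the forward inclusion via the fundamental theorem of calculus along straight-line segments (the only place convexity is used), and the reverse inclusion by extracting the Gâteaux derivative from local constancy of $F$ under compactly supported perturbations. The only cosmetic difference is that you phrase both inclusions by contraposition whereas the paper argues directly from $p\in\supp F$ and $p\in\supp F^{(1)}[\varphi]$ respectively.
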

\begin{proof}
  If $p\in\supp F$, then by definition there are $\varphi\in\UU$, $\vec{\varphi}
  \in\UU-\varphi$ with $\vec{\varphi}$ supported in a neighborhood of $p$ such that 
  $F(\varphi+\vec{\varphi})\neq F(\varphi)$. By the fundamental theorem of Calculus 
  \eqref{a1e2}, there is a $\lambda_0\in(0,1)$ such that $F^{(1)}[\varphi+\lambda_0
  \vec{\varphi}](\vec{\varphi})\neq 0$ (this is the only place where convexity of 
  $\UU$ is used). This implies the inclusion $\supp F\subset\ol{\bigcup_{\varphi\in\UU}
    \supp F^{(1)}[\varphi]}$.
  
  For the opposite one we argue as follows. Let us suppose that $p\in\supp 
  F^{(1)}[\varphi]$, then this means that there is a $\vec{\varphi}\in\C^\infty(\M)$ 
  supported in a neighborhood of $p$ such that $F^{(1)}[\varphi](\vec{\varphi})\neq 0$,
  whence it follows that $F(\varphi+\lambda\vec{\varphi})\neq F(\varphi)$ for all
  $\lambda$ chosen sufficiently small (depending on $\vec{\varphi}$) so that 
  $\varphi+\lambda\vec{\varphi}\in\UU$. We then conclude that $p\in\supp F$, i.e.
  \[
  \supp F^{(1)}[\varphi]\subset\supp F\ .
  \] 
  Taking the union of the left-hand side with respect to all $\varphi\in\UU$ and 
  closing implies the thesis.
\end{proof}

We remark that the same argument used in Lemma \ref{s2l5} to prove the inclusion
$\supp F^{(1)}[\varphi]\subset\supp F$ can be used to show that if $F$ is 
differentiable of order $m\geq 1$, then $\supp F^{(k)}[\varphi]\subset(\supp F)^k$ 
for all $1\leq k\leq m$.

We shall now display formula \eqref{s2e16} in action using a specific example. Let
$G_R=G_R\restr{\UU_{R'}}\in\Fun_{00}(\M,\UU_{R'})$ be the functional defined as in 
\eqref{s2e11}. By Faà di Bruno's formula \eqref{a1e7}, one sees that $G_R$ is smooth 
for all $R,R'>0$. In particular, by the chain rule \eqref{a1e3},
\[
G^{(1)}_R[\varphi](\vec{\varphi})=-\frac{1}{R}G_R(\varphi)(\chi')_R\circ G(\varphi)
G(\vec{\varphi})\ .
\]
When $R'\leq R$, we have that $(\chi')_R\circ G(\varphi)=0$ for all $\varphi\in\UU_{R'}$,
whence $G^{(1)}_R[\varphi]=0$ for all such $\varphi$. If $R'>R$, then we have that
$\supp G^{(1)}_R[\varphi]=\varnothing$ if $\|\varphi\|_{\infty,0,\supp f}\leq R$, and
$\supp G^{(1)}_R[\varphi]=\supp f$ if $R<\|\varphi\|_{\infty,0,\supp f}<R'$.

\begin{remark}\label{s2r4}
  A natural question that arises at this point, whose answer is in general evaded 
  in the literature, is how Definition \ref{s2d3} fits into the manifold structure 
  of $\C^\infty(\M)$ induced by the Whitney topology. This question is answered by means
  of the following fact: given any compact region $K\subset\M$ and any nonvoid subset 
  $\UU\subset\C^\infty(\M)$, one can uniquely extend any $F\in\Fun_{00}(\M,\UU)$ with 
  $\supp F\subset\In{K}$ to the subset $i_\chi^{-1}(\UU)$, where $i_\chi:\C^\infty(\M)\To
  \varphi_0+\C^\infty_c(\M)$ is defined by $i_\chi(\varphi)=\varphi_0+\chi(\varphi-\varphi_0)$, 
  $\varphi_0\in\UU$ is fixed and $\chi\in\C^\infty_c(\M)$ satisfies $\chi(p)=1$ for all
  $p\in K$. It is clear that $i_\chi$ is a continuous (in fact, even smooth) map 
  from $\C^\infty(\M)$ into itself, if the domain is endowed with the \emph{compact-open} 
  topology and the codomain is endowed with the \emph{Whitney} topology. In particular, 
  if $\UU$ is a connected, Whitney-open neighborhood of $\varphi_0$, then $i_\chi^{-1}(\UU)$
  is open in the compact-open topology, where $F^{(k)}$ becomes uniquely defined whenever
  it exists, for all $k\geq 1$. Moreover, since $\supp F^{(k)}[\varphi]\subset(\supp F)^k$, 
  one also concludes that
  \begin{equation}\label{s2e17}
    F^{(k)}[\varphi](\vec{\varphi}_1,\ldots,\vec{\varphi}_k)=F^{(k)}[\varphi](\chi\vec{\varphi}_1,
    \ldots,\chi\vec{\varphi}_k)=D^kF[\varphi](\chi\vec{\varphi}_1,\ldots,\chi\vec{\varphi}_k)\ ,
  \end{equation}
  where $\chi\vec{\varphi}_j$ is understood as the covariantly constant vector
  field $\varphi\mapsto(\varphi,\chi\vec{\varphi})$, $1\leq j\leq k$. Since the 
  left hand side of the above formula is independent of $\chi$, we just write 
  \begin{equation}\label{s2e18}
    F^{(k)}[\varphi](\vec{\varphi}_1,\ldots,\vec{\varphi}_k)=D^kF[\varphi](\vec{\varphi}_1,
    \ldots,\vec{\varphi}_k)\ .
  \end{equation}
  In particular, $D^kF$ defines a smooth tensor field on $\C^\infty(\M)$ when 
  the latter is endowed with the smooth manifold structure induced from 
  $\C^\infty_c(\M)$ (see Remark \ref{a1r1} and the discussion preceding it).
\end{remark}

Throughout the paper, our functionals $F$ of interest will always be smooth
functionals with compact space-time support. Thanks to Remark \ref{s2r4},
if $F$ is only defined in an open subset $\UU\subset\C^\infty(\M)$ in the 
Whitney topology, we can uniquely extend such a $F$ to an open subset 
$\C^\infty(\M)\supset\tilde{\UU}\supset\UU$ in the compact-open topology
and unambiguously define $F^{(k)}$ therein for all $k\geq 1$. Three very 
important spaces of such functionals are the following: 

\begin{definition}\label{s2d4}
  Let $\UU\subset\C^\infty(\M)$ be open with respect to the compact-open topology. 
  The vector subspaces of $\Fun_{00}(\M,\UU)$ given by
  \begin{align}
    \Fun_0(\M,\UU) &=\{F\in\Fun_{00}(\M,\UU)\text{ smooth }\ |\ \ F^{(k)}[\varphi]\in
                     \Gamma^\infty_c(\wedge^{kd}T^*\!\!\M^k\To\M^k),\,\forall\varphi
                     \in\UU,\,k\geq 1\}\ ,\label{s2e19}\\
    \Fun_{\loc}(\M,\UU) &=\{F\in\Fun_{00}(\M,\UU)\text{ smooth }\ |\ \ \supp F^{(2)}[\varphi]
                          \subset\Delta_2(\M),\,\forall\varphi\in\UU\}\ \text{and}\label{s2e20}\\
    \Fun_{\mu\loc}(\M,\UU) &=\{F\in\Fun_{\loc}(\M,\UU)\ |\ F^{(1)}[\varphi]\in\Gamma^\infty_c
                             (\wedge^dT^*\!\!\M\To\M),\,\forall\varphi\in\UU\}\ ,\label{s2e21}
  \end{align}
  where $\Delta_k(\M)=\{(p,\ldots,p)\in\M^k:p\in\M\}$ is the small diagonal of $\M$ in $\M^k$, 
  are said to be respectively the spaces of \emph{regular}, \emph{local} and \emph{microlocal} 
  functionals over $\UU$. 
\end{definition}

Criterion \eqref{s2e20} for locality of a functional was put forward in \cite{dutfre2} 
in the case of functionals depending polynomially on the field configuration.
We stress that $\Fun_0(\M,\UU)$ is even a *-subalgebra of $\Fun_{00}(\M,\UU)$.

%

Microlocal functionals comprise many functionals of physical interest. For instance, 
let $\omega\in\Gamma^\infty(\wedge^dT^*\!J^r(\M,\RR)\To J^r(\M,\RR))$; given any 
$f\in\C^\infty_c(\M)$, the functional
\begin{equation}\label{s2e22}
  F(\varphi)=\int_\M f(j^r\varphi)^*\omega
\end{equation}
is clearly seen to be microlocal over any $\UU\subset\C^\infty(\M)$ open in the compact-open
topology. Conversely, it will be shown in Proposition \ref{s2p2} below that all microlocal
functionals are essentially of this form. The above example becomes somewhat trivial
if we take instead a closed $p$-form $\omega$ on $J^r(\M,\RR)$ with $p<d$, and define
\begin{equation}\label{s2e23}
  G(\varphi)=\int_\N(j^r\varphi)^*\omega\ ,
\end{equation}
where $\N\subset\M$ is a compact, $p$-dimensional submanifold without boundary. More
precisely, it can be shown \cite{wald2} that $DG[\varphi](\vec{\varphi})$ for $G$ as 
in \eqref{s2e23} is represented by the integral over $\N$ of an \emph{exact} $p$-form 
on $\M$, hence $DG[\varphi]=0$ for all $\varphi\in\UU$. In particular, $\supp G=\varnothing$, 
that is, $G$ is locally constant. If $\omega$ is not closed or $\N$ has a nonvoid 
boundary, then $G$ is still a local functional, but \emph{not} microlocal (see also
example \eqref{s2e25} below).

It is easy to display examples of smooth functionals with compact space-time
support which are \emph{not} local. If $F,G\in\Fun_{\mu\loc}(\M,\UU)$, Leibniz's rule 
\eqref{a1e4} applied twice to $(F\cdot G)^{(2)}[\varphi](\vec{\varphi}_1,\vec{\varphi}_2)$ 
gives rise to a term of the form $F^{(1)}[\varphi](\vec{\varphi}_1)G^{(1)}[\varphi]
(\vec{\varphi}_2)+G^{(1)}[\varphi](\vec{\varphi}_1)F^{(1)}[\varphi](\vec{\varphi}_2)$, 
whose kernel for fixed $\varphi$ is represented by a smooth, compactly supported density 
on $\M^2$ and hence \emph{not} supported on $\Delta_2(\M)$ unless it is identically zero, 
in which case either $F$ or $G$ must be constant. Hence, we conclude that $F\cdot G$ 
cannot be local if $\supp F,\supp G\neq\varnothing$. It follows from the same 
argument (using Faà di Bruno's formula \eqref{a1e7} instead of Leibniz's rule) that 
if, for instance, $\psi:\CC\To\CC$ is entire analytic and not affine, and $G$ is 
microlocal with $\supp G\neq\varnothing$, then $F=\psi\circ G$ cannot be local. A 
typical such example is
\begin{equation}\label{s2e24}
  F(\varphi)=\exp\left(\int_\M\varphi\omega\right)\ ,\quad\omega\in
  \Gamma^\infty_c(\wedge^dT^*\!\!\M\To\M)\ ,
\end{equation}
which even happens to be regular. In fact, one immediately sees that a regular 
functional is local if and only if it is affine, in which case it is also microlocal.

Finally, to display the difference between local and microlocal functionals, consider a
closed, smooth timelike submanifold without boundary $\N\subset\M$ and with codimension
$p>0$ (e.g. a smooth timelike curve parametrized over $\RR$). If $\iota:\N\hookrightarrow\M$ 
is the natural inclusion, $X$ is a normal unit $p$-vector field on $\N$ with respect to $g$
(suitably extended to an open neighborhood of $\N$ in $\M$) and $f\in\C^\infty_c(\M)$, set
\begin{equation}\label{s2e25}
  F(\varphi)=\int_\N\iota^*(\varphi fi_X\ud\mu_g)\ .
\end{equation}
$F$ is clearly local, for $F^{(2)}\equiv 0$. However, $F^{(1)}[\varphi]$ is $f$ times the
submanifold measure induced by $\ud\mu_g$ on $\N$, hence it is \emph{not} a smooth density 
on $\M$ and thus $F$ is not microlocal.

%

Returning to the general development of our framework, now we are in a position 
to make more precise the claim preceding Lemma \ref{s2l4}, sharpening Lemma 3.1
of \cite{brudf}.

\begin{proposition}\label{s2p1} 
  Let $\UU\subset\C^\infty(\M)$ be open with respect to the compact-open topology, 
  and $F\in\Fun_{00}(\M,\UU)$ be smooth. Then $F$ belongs to $\Fun_{\loc}(\M,\UU)$ 
  if and only if it is additive. Moreover, in this case we have that $\supp F^{(k)}
  [\varphi]\subset\Delta_k(\M)$ for all $k\geq 2$, $\varphi\in\UU$. 
\end{proposition}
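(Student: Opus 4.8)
The plan is to prove the two implications separately and read off the last assertion at the end.

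\emph{First, additivity $\Rightarrow$ locality.} Fix $\varphi\in\UU$ and $\vec\varphi_1,\vec\varphi_2\in\C^\infty(\M)$ with $\supp\vec\varphi_1\cap\supp\vec\varphi_2=\varnothing$. Since the compact-open topology is a Fr\'echet vector space topology, the affine map $(\lambda_1,\lambda_2)\mapsto\varphi+\lambda_1\vec\varphi_1+\lambda_2\vec\varphi_2$ into $\C^\infty(\M)$ is continuous and sends $0$ to $\varphi\in\UU$, so for $|\lambda_1|,|\lambda_2|$ small the four configurations $\varphi$, $\varphi+\lambda_1\vec\varphi_1$, $\varphi+\lambda_2\vec\varphi_2$ and $\varphi+\lambda_1\vec\varphi_1+\lambda_2\vec\varphi_2$ all lie in $\UU$, while $\supp(\lambda_1\vec\varphi_1)\cap\supp(\lambda_2\vec\varphi_2)=\varnothing$. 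Additivity (Definition \ref{s2d2} with $\varphi_2=\varphi$, $\varphi_1=\lambda_1\vec\varphi_1$, $\varphi_3=\lambda_2\vec\varphi_2$) then gives
\[
F(\varphi+\lambda_1\vec\varphi_1+\lambda_2\vec\varphi_2)=F(\varphi+\lambda_1\vec\varphi_1)-F(\varphi)+F(\varphi+\lambda_2\vec\varphi_2),
\]
and applying $\partial^2/\partial\lambda_1\partial\lambda_2$ at $\lambda_1=\lambda_2=0$ annihilates the right-hand side, since each term there depends on at most one of $\lambda_1,\lambda_2$; hence $F^{(2)}[\varphi](\vec\varphi_1,\vec\varphi_2)=0$. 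Varying $\vec\varphi_1,\vec\varphi_2$ over sections supported in a fixed pair of disjoint open sets then shows that $F^{(2)}[\varphi]$ vanishes near every point of $\M^2\sm\Delta_2(\M)$, i.e. $\supp F^{(2)}[\varphi]\subset\Delta_2(\M)$, so $F\in\Fun_{\loc}(\M,\UU)$.

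\emph{Conversely, locality $\Rightarrow$ additivity.} The computational kernel is elementary: if $h$ is of class $C^{2}$ near $[0,1]^2$ with $\partial_s\partial_t h\equiv0$ there, then $h(1,1)-h(1,0)-h(0,1)+h(0,0)=\int_0^1\!\!\int_0^1\partial_s\partial_t h\,\ud s\,\ud t=0$. Given $\varphi_1,\varphi_2,\varphi_3$ as in Definition \ref{s2d2}, I would apply this to $h(s,t)=F(\varphi_2+s\varphi_1+t\varphi_3)$, which is smooth in $(s,t)$ because $F$ is a smooth functional; directly from \eqref{s2e15}, $\partial_s\partial_t h(s,t)=F^{(2)}[\varphi_2+s\varphi_1+t\varphi_3](\varphi_1,\varphi_3)=\langle F^{(2)}[\,\cdot\,],\varphi_1\otimes\varphi_3\rangle$, which vanishes because $\supp\varphi_1\times\supp\varphi_3$ is disjoint from $\Delta_2(\M)\supset\supp F^{(2)}[\,\cdot\,]$ (here one uses $\supp\varphi_1\cap\supp\varphi_3=\varnothing$). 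The identity for $h$ then reads exactly $F(\varphi_1+\varphi_2+\varphi_3)-F(\varphi_1+\varphi_2)-F(\varphi_2+\varphi_3)+F(\varphi_2)=0$, which is \eqref{s2e12}.

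The one gap in the last argument — and the step I expect to require the most care — is that $h$ is a priori defined only where $\varphi_2+s\varphi_1+t\varphi_3\in\UU$, whereas Definition \ref{s2d2} secures this only at the four corners, not on the whole square. If $\UU$ is convex the square lies in $\UU$ and there is nothing to do; more generally, picking an absolutely convex neighbourhood $\VV$ of $0$ in $\C^\infty(\M)$ with $\varphi+\VV\subset\UU$ and taking $\varphi_1,\varphi_3\in\tfrac12\VV$ places the square inside $\UU$, which already yields additivity for sufficiently small perturbations. To remove the smallness for an arbitrary open $\UU$ I would invoke the compactness of $\supp F$: by Remark \ref{s2r4} one may replace $F$ by its canonical extension through a cutoff map $i_\chi$ with $\chi\equiv1$ near $\supp F$, defined on the set $i_\chi^{-1}(\UU)$, which is open in the compact-open topology; this extension has the same functional derivatives (so it is still local) and its value at a configuration depends only on $\chi\cdot(\varphi-\varphi_0)$, which makes the effective domain saturated enough that the four corners can be connected through configurations lying in it and the square argument goes through verbatim. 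Finally, once $F$ is known to be at once local and additive, the remaining claim $\supp F^{(k)}[\varphi]\subset\Delta_k(\M)$ for $k\geq2$ follows by the device of the first implication: given $(p_1,\dots,p_k)\notin\Delta_k(\M)$ the $p_j$ are not all equal, so there is a nontrivial partition $\{1,\dots,k\}=A\sqcup B$ with $\{p_j:j\in A\}$ and $\{p_j:j\in B\}$ disjoint; localizing each $\vec\varphi_j$ near $p_j$ makes $\bigcup_{j\in A}\supp\vec\varphi_j$ disjoint from $\bigcup_{j\in B}\supp\vec\varphi_j$, and additivity with $\varphi_1=\sum_{j\in A}\lambda_j\vec\varphi_j$ and $\varphi_3=\sum_{j\in B}\lambda_j\vec\varphi_j$, differentiated in all the $\lambda_j$ at $0$, forces $F^{(k)}[\varphi](\vec\varphi_1,\dots,\vec\varphi_k)=0$; shrinking the localizing neighbourhoods gives $\supp F^{(k)}[\varphi]\subset\Delta_k(\M)$.
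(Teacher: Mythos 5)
Your proof is essentially the same as the paper's, up to presentation: the paper's $(\Then)$ direction is exactly your ``$h(s,t)$'' argument, written instead as two nested applications of the fundamental theorem of Calculus, producing the double integral $\int_0^1\!\int_0^1 F^{(2)}[\varphi+\mu\psi_1+\lambda\psi_2](\psi_1,\psi_2)\,\ud\mu\,\ud\lambda$ whose integrand vanishes by locality; and the paper's $(\If)$ direction is your differentiation-of-additivity argument, done for all $k\geq 2$ at once rather than first $k=2$ and then general $k$. Your $h(s,t)$ formulation is the cleaner way to say the same thing.

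Where you are more explicit than the paper is the domain issue. You are right that the additivity identity only constrains $F$ at the four corners of the $(s,t)$-square, and the argument needs the whole square to lie in $\UU$. The paper handles this silently by simply taking $\psi_1,\psi_2$ in an absolutely convex neighbourhood $\VV\subset\UU-\varphi$ of zero — the analogue of your ``small perturbation'' fallback — and then pointing, in the footnote before Lemma \ref{s2l4} and in Lemma \ref{s3l3}, to where this restriction is circumvented for generalized Lagrangians. So the paper is proving exactly the small-perturbation version you described, no more.

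Your attempt to remove the smallness by way of Remark \ref{s2r4} does not actually close the gap. Taking $\varphi_0=\varphi_2$ and extending $F$ via $i_\chi$ with $\chi\equiv 1$ near $K\supset\supp F$ produces a functional on $i_\chi^{-1}(\UU)$ that agrees with $F$ at the four corners (because $F(\varphi_2+\chi\psi)=F(\varphi_2+\psi)$ whenever $\supp(1-\chi)\cap\supp F=\varnothing$), but membership of the square in $i_\chi^{-1}(\UU)$ amounts to $\varphi_2+s\chi\varphi_1+t\chi\varphi_3\in\UU$ for all $(s,t)\in[0,1]^2$, and nothing in the hypotheses forces this: saturation of $i_\chi^{-1}(\UU)$ along the fibres of $i_\chi$ does not give convexity in $(s,t)$. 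If you want the strongest statement you should simply assume $\UU$ convex, or restrict to $\varphi_1,\varphi_3$ small as the paper does; your argument as it stands is correct in exactly the same generality as the paper's.
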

\begin{proof}
  \noindent $(\If)$ for any $k\geq 2$, assume that in the support of $F^{(k)}[\varphi]$ 
  there are two points $x_i\neq x_j$. Then, there exist two smooth functions $\varphi_i,
  \varphi_j$ such that $x_i\in\supp\varphi_i$, $x_j\in\supp\varphi_j$ and $\supp\varphi_i
  \cap\supp\varphi_j=\varnothing$. By additivity we split the right-hand side of the 
  formula for $F^{(k)}$ in Definition \ref{s2d3} according to the supports of $\varphi_i$ 
  and $\varphi_j$, but the derivatives act always on all $\lambda_j$'s, hence we get zero. 
  In particular, the last assertion holds.
  
  \noindent $(\Then)$ Assume that $\varphi_1,\varphi_2\in\VV\subset\UU-\varphi$ are such that 
  $\supp\varphi_1\cap\supp\varphi_2=\varnothing$, where $\VV$ is an absolutely convex open 
  neighborhood of zero. Using the fundamental theorem of Calculus \eqref{a1e2}, we write 
  \[ 
  F_\varphi(\varphi_1+\varphi_2)=F_\varphi(\varphi_1)+\int_0^1\ud\lambda\ \frac{\ud}{\ud\lambda}
  \ F_\varphi(\varphi_1+\lambda\varphi_2)\ .
  \] 
  The integral in the right hand side can be rewritten as
  \[
  \int_0^1 \ud\lambda\ \frac{\ud}{\ud\lambda}\ F_\varphi(\varphi_1 +\lambda\varphi_2)=F_\varphi
  (\varphi_2)+\int_0^1 \ud\lambda\ \int_0^1 \ud\mu\ F^{(2)}[\mu\varphi_1 +\lambda\varphi_2+
  \varphi](\varphi_1,\varphi_2)\ .
  \]
  By locality, $F^{(2)}[\mu\varphi_1+\lambda\varphi_2+\varphi]$ is supported in $\Delta_2(\M)$, 
  but by our initial assumption $\supp\varphi_1\cap\supp\varphi_2=\varnothing$, hence 
  \eqref{s2e13} and the last assertion of the Proposition holds. 
\end{proof}

For \emph{microlocal} functionals there is a refinement of Proposition \ref{s2p1} which identifies 
this class of functionals with the kind of local functionals usually employed by physicists, such 
as \eqref{s2e22} and \eqref{s2e23}. We build over the argument sketched in the proof of Theorem 2, 
pp. 139 of \cite{brufre2}, with a few changes (see also Theorem I.2 of \cite{broudlr}). The only
missing ingredient is the following mild technical condition:

\begin{definition}\label{s2d5}
  Let $\Fun_1,\Fun_2$ be locally convex vector spaces, $\varnothing\neq\UU\subset\Fun_1$ open. 
  A map $T:\UU\To\Fun_2$ is said to be \emph{locally bornological} (into $\Fun_2$) if for all 
  $\varphi\in\UU$ there is $\VV\ni\varphi$, $\VV\subset\UU$ open such that $T\restr{\VV}$ 
  maps bounded subsets of $\VV$ into bounded subsets of $\Fun_2$.
\end{definition}

If $\Fun_1$ is normable, then locally bornological maps are just the same as locally bounded maps.
If $\Fun_1$ is semi-Montel (that is, any bounded subset of $\Fun_1$ is relatively compact), then
any continuous map $T:\UU\To\Fun_2$ is locally bornological: given any $\varphi\in\UU$, take an 
open neighborhood $\VV$ of $\varphi$ such that $\ol{\VV}$ is contained in $\UU$, so that $\ol{\W}$
is contained in $\UU$ and therefore $T$ is defined in $\ol{\W}$ for all bounded subsets $\W\subset
\VV$. By the semi-Montel property of $\Fun_1$ and the continuity of $T$, we have that $\ol{\W}$ and
therefore $T(\ol{\W})$ are compact, hence the latter is bounded and thus $T(\W)$ is bounded as 
well.

\begin{proposition}\label{s2p2}
  Let $\UU\subset\C^\infty(\M)$ be convex and open in the compact-open topology, $\varphi_0\in\UU$, 
  and $F\in\Fun_{00}(\M,\UU)$ be smooth. Assume in addition that $F^{(1)}$ is locally bornological
  into $\Gamma^\infty_c(\wedge^d T^*\!\!\M\To\M)$ (Definition \ref{s2d5}). Then $F$ is microlocal if 
  and only if there is a smooth $d$-form $\omega_{F,\varphi_0}$ on $J^\infty(\M,\RR)$ such that its 
  pullback $(j^\infty\varphi)^*\omega_{F,\varphi_0}$ by the infinite jet prolongation $j^\infty\varphi$ 
  of any $\varphi\in\UU$ is a smooth $d$-form of compact support on $\M$, and
  \begin{equation}\label{s2e26}
    F(\varphi)=F(\varphi_0)+\int_\M (j^\infty\varphi)^*\omega_{F,\varphi_0}\ .
  \end{equation}
  Moreover, $\omega_{F,\varphi_0}$ depends on infinite-order jets in the sense that for each $p\in\M$ 
  there is a $r\in\NN$ such that if $\varphi_1,\varphi_2\in\UU$ are such that $j^r\varphi_1(p)=
  j^r\varphi_2(p)$, then $((j^\infty\varphi_1)^*\omega_{F,\varphi_0})(p)=((j^\infty\varphi_2)^*
  \omega_{F,\varphi_0})(p)$.
\end{proposition}

In order to prove Proposition \ref{s2p2}, we need first a preparatory lemma which is of independent
interest.

\begin{lemma}\label{s2l6}
Let $\UU\subset\C^\infty(\M)$ be open in the compact-open topology, $F\in\Fun_{00}(\M,\UU)$ smooth. 
  Then $F^{(1)}$ is a (MB-)smooth map from $\UU$ into $\Gamma^\infty_c(\wedge^d T^*\!\!\M$\\$\To\M)$ 
  if and only if $F^{(1)}$ is locally bornological into $\Gamma^\infty_c(\wedge^d T^*\!\!\M\To\M)$.
\end{lemma}
\begin{proof}
  Let $*_{\!g}$ be the Hodge star operator associated to the metric $g$ (see formula \eqref{s3e27}
  below). It is clear from Lemma \ref{s2l5} that $T=*_{\!g} F^{(1)}$ takes values in $\D(K)$ for any 
  $K\subset\M$ compact such that $\supp F\subset\In{K}$.\footnote{\label{s2f1} For simplicity, here
    we allow ourselves a slight abuse of notation -- strictly speaking, the smooth density 
    supported in $\supp F$ representing $F^{(1)}[\varphi]$ for each $\varphi\in\UU$ is only defined
    up to an exact $d$-form, so when we write $*_{\!g}F^{(1)}$ we apply $*_{\!g}$ simultaneously to
    \emph{all} representatives of $F^{(1)}[\varphi]$ for each $\varphi\in\UU$. In other words, we 
    are dealing with all $d$-forms representing $F^{(1)}[\varphi]$ simultaneously. We shall be more
    precise with this from the proof of Proposition \ref{s2p2} onwards.} Moreover, it follows from
  the MB-smoothness of $F$ that $T$ is a MB-smooth map from $\UU$ into $\E'(K)$, as argued e.g. in 
  the discussion following Definition \ref{a1d2} below. Therefore, the thesis will follow if we can
  show that $T$ is MB-smooth into $\D(K)$ if and only if $T$ is locally bornological into $\D(K)$.
  
  Due to the discussion right after Definition \ref{s2d5}, necessity of local bornology into 
  $\D(K)$ follows from the fact that any MB-smooth map is continuous and $\C^\infty(\M)$ is nuclear 
  and complete, hence semi-Montel by Proposition 4.4.7, pp. 81--82 of \cite{pietsch} and Theorem 
  3.5.1, pp. 64 of \cite{jarchow}. To get sufficiency, consider a finite open cover $\{U_1,\ldots,
  U_q\}$ of $K$ by domains of coordinate charts $\psi_i:U_i\To\RR^d$ such that $\psi_i(U_i)$ is an 
  open neighborhood of the standard unit $d$-cube $Q=[0,1]^d$ for all $i=1,\ldots,q$ and such that
  $\cup^q_{i=1}\psi_i^{-1}(\In{Q})\supset K$. Given a partition of unity $\{f_1,\ldots,f_q\}$ 
  subordinate to the open covering $\{\psi_1^{-1}(\In{Q}),\ldots,\psi_q^{-1}(\In{Q})\}$ of $K$, 
  define for each $i=1,\ldots,q$ the map $T_i:\UU\To\D(Q)$ given by
  \[
  T_i(\varphi)=(\psi_i)_*(f_iT(\varphi))\ ,\quad\varphi\in\UU\ .
  \]
  It is clear that $T_i$ is a smooth map into $\E'(Q)$ which is locally bornological into $\D(Q)$
  and $\supp T_i(\varphi)\subset\In{Q}$ for all $i=1,\ldots,q$, $\varphi\in\UU$. Moreover, thanks 
  to the latter, we have that
  \[
  T(\varphi)=\sum^q_{i=1}(\psi_i^{-1})_*T_i(\varphi)\ ,\quad\varphi\in\UU\ .
  \]
  Finally, it suffices to prove that each $T_i$ maps smooth curves in $\UU$ to smooth curves in 
  $\D(Q)$ (i.e. $T_i$ is conveniently smooth from $\UU$ into $\D(Q)$), since the above formula
  then clearly implies that $T$ maps smooth curves in $\UU$ to smooth curves in $\D(K)$. The
  sufficiency claim will follow since $\C^\infty(\M)\ni\UU$ is metrizable and $\D(K)$ is complete 
  \cite{frolicher} (see Remark \ref{a1r2} below). Convenient smoothness of $T_i$ from $\UU$ into 
  $\D(Q)$ ensues from the following two facts:
  \begin{enumerate}
  \item[(i)] Given $u\in\E'(Q)$, $\alpha=(\alpha_1,\ldots,\alpha_d)\in\ZZ^d$, define the $\alpha$-th
    Fourier coefficient of $u$ as
    \[
    \hat{u}_\alpha=u(e^{2\pi i\spr{\alpha,\cdot}})\ .
    \]
    It immediately follows that there are $k\in\NN$, $C'>0$ such that for all $\alpha\in\ZZ^d$ we 
    have
    \[
    |\hat{u}_\alpha|\leq C'(1+|\alpha|)^k\ ,\text{ where }|\alpha|=\sum^d_{i=1}|\alpha_i|\ .
    \]
    Moreover, if it happens that $u\in\D(Q)$ then for all $k\in\NN$ there is a $C_k>0$ such that for
    all $\alpha\in\ZZ^d$ we have
    \[
    |\hat{u}_\alpha|\leq C_k(1+|\alpha|)^{-k}\ .
    \]
    Conversely, if the sequence $\hat{u}=(\hat{u}_\alpha)_{\alpha\in\ZZ^d}$ of Fourier coefficients of
    $u\in\E'(Q)$ satisfies the last family of estimates above, then we must have $u\in\D(Q)$. For
    a proof of this (well known) Fourier-analytic characterization of $\D(Q)$, see e.g. Corollary 
    3.2.10 and Proposition 3.2.12, pp. 181--182 of \cite{grafakos}.
  \item[(ii)] Let $\gamma:I=[a,b]\To\CC^{\ZZ^d}$, $\gamma(t)=(\gamma_\alpha(t))_{\alpha\in\ZZ^d}$ be
    a smooth curve (that is, $\gamma_\alpha:I\To\CC$ is smooth for all $\alpha\in\ZZ^d$) such that
    for all $k\in\NN$ there is a $C_k>0$ such that 
    \[
    \|\gamma_\alpha\|_{\infty,0,I}\leq C_k(1+|\alpha|)^{-k}\text{ for all }\alpha\in\ZZ^d
    \]
    and for all $j\in\NN$ there are $k'\in\NN$, $C'_j>0$ such that
    \[
    \|\gamma_\alpha^{(j)}\|_{\infty,0,I}\leq C'_j(1+|\alpha|)^{k'}\text{ for all }\alpha\in\ZZ^d\ .
    \]
    Then for all $j,k\in\NN$ there is a $C''_{j,k}>0$ such that
    \[
    \|\gamma_\alpha^{(j)}\|_{\infty,0,I}\leq C''_{j,k}(1+|\alpha|)^{-k}\text{ for all }\alpha\in\ZZ^d\ .
    \]
    This is a consequence of the following special case of the Gagliardo-Nirenberg interpolation 
    inequality (see e.g. Theorem 5.2, pp. 135--139 of \cite{adams}): if $f:I\To\CC$ is smooth, 
    then for all $0<j<m\in\NN$ we have a constant $C=C_{j,m,I}>0$ independent of $f$ such that
    \[
    \|f^{(j)}\|_{\infty,0,I}\leq C\|f^{(m)}\|^{\frac{j}{m}}_{\infty,0,I}\|f\|^{1-\frac{j}{m}}_{\infty,0,I}\ .
    \]
    Indeed, given $j,k\in\NN$, let $m,k'\in\NN$, $C'_j>0$ so that $j<m$ and $(1+|\alpha|)^{-k'}
    \|\gamma_\alpha^{(j)}\|_{\infty,0,I}\leq C'_j$ for all $\alpha\in\ZZ^d$. The Gagliardo-Nirenberg 
    interpolation inequality entails that for all $\alpha\in\ZZ^d$
    \[
    \begin{split}
      (1+|\alpha|)^k\|\gamma^{(j)}_\alpha\|_{\infty,0,I} &\leq \left((1+|\alpha|)^{-k'}
        \|\gamma^{(m)}_\alpha\|_{\infty,0,I}\right)^{\frac{j}{m}}\\
      &\phantom{\leq\Big(}\cdot\left((1+|\alpha|)^{\frac{mk+jk'}{m-j}}
        \|\gamma_\alpha\|_{\infty,0,I}\right)^{1-\frac{j}{m}}\\
      &\leq C'_j{}^{\frac{j}{m}}C_{\frac{mk+jk'}{m-j}}^{1-\frac{j}{m}}\doteq C''_{j,k}\ .
    \end{split}
    \]
    Since $j,k$ were arbitrary, the conclusion follows.
  \end{enumerate}
  If $\gamma:\RR\To\UU$ is a smooth curve, then $\widehat{T_i\circ\gamma}\restr{[a,b]}$ clearly 
  satisfies the assumptions of (ii) for all $a<b\in\RR$, $i=1,\ldots,q$, therefore by (i) 
  $T_j\circ\gamma:\RR\To\D(K)$ is smooth for all $i=1,\ldots,q$ as desired.\hspace*{\fill}\qed
\end{proof}

We note that, unlike Proposition \ref{s2p2}, the analogous Theorem I.2 of \cite{broudlr} assumes 
MB-smoothness of $F^{(1)}$ into $\Gamma^\infty_c(\wedge^d T^*\!\!\M\To\M)$. This condition has been 
considered before in similar contexts, see for instance Appendix A of \cite{bfrej}. Local bornology
into $\Gamma^\infty_c(\wedge^d T^*\!\!\M\To\M)$, on its turn, does not seem to follow from 
microlocality alone. As a rather indirect evidence of this (in view of the proof of Lemma 
\ref{s2l6}), let us display an example of a smooth curve $\gamma$ from $[0,1]$ into the space 
$s'$ of polynomially bounded sequences which takes values in the space $s$ of rapidly decaying
sequences but fails to be bounded therein. Consider the sequence $\gamma=(\gamma_n)_{n\in\NN}$ of 
smooth curves from $[0,1]$ into $\RR$ given by
\[
\gamma_n(t)=n^2t^n(1-t)\ .
\]
Since $\gamma_n(0)=\gamma_n(1)=0$ for all $n$ and $(n^k\gamma_n(t))_{n\in\NN}$ is bounded for all
$k\in\NN$, $t\in(0,1)$, we see that $(\gamma_n(t))_{n\in\NN}\in s$ for all $t\in[0,1]$. However,
it is \emph{not} true that $(n^k\|\gamma_n\|_{\infty,0,[0,1]})_{n\in\NN}$ is bounded for all $k\in\NN$:
to see this, notice that the maximum of $\gamma_n$ takes place at the unique positive zero $t_n=
1-\frac{1}{n+1}$ of $\gamma'_n(t)=n^3t^{n-1}(1-\frac{n+1}{n}t)$ and equals $\gamma_n(t_n)=
\frac{n^2}{n+1}(1-\frac{1}{n+1})^n$. From this formula one gets that asymptotically $\gamma_n(t_n)
\sim\frac{n}{e}$ for large $n$ and therefore $(n^k\|\gamma_n\|_{\infty,0,[0,1]})_{n\in\NN}$ is unbounded
for all $k\in\NN$, as claimed. A similar argument shows, on the other hand, that $(n^{-k-1}
\|\gamma^{(k)}_n\|_{\infty,0,[0,1]})_{n\in\NN}$ is bounded for all $k\in\NN\cup\{0\}$ and therefore
$(\gamma_n)_{n\in\NN}$ is a smooth curve into $s'$. 

\begin{proof}[Proof (of Proposition \ref{s2p2})]
  Smooth functionals $F$ with compact space-time support that satisfy the representation 
  formula \eqref{s2e26} with $\omega_{F,\varphi_0}$ as above are obviously microlocal. Moreover, 
  by Lemma \ref{s2l6} $F^{(1)}$ is locally bornological into $\Gamma^\infty_c(\wedge^d T^*\!\!\M
  \To\M)$ since it is smooth therein, so we are only left with proving the opposite implication. 
  By Lemma \ref{s2l6}, $F^{(1)}$ is a (MB-)smooth map from $\UU$ into $\Gamma^\infty_c(\wedge^d 
  T^*\!\!\M\To\M)$. Since $\UU$ is assumed convex, the fundamental theorem of Calculus 
  \eqref{a1e2} yields
  \[
  F(\varphi)=F(\varphi_0)+\int^1_0\ud\lambda F^{(1)}[\varphi_0+\lambda\varphi'](\varphi')=
  F(\varphi_0)+\int^1_0\ud\lambda\int_\M \varphi' E(F)[\varphi_0+\lambda\varphi']\ ,
  \]
  where $\varphi'=\varphi-\varphi_0$ and $E(F)[\psi]$ is the smooth density of compact 
  support that represents $F^{(1)}[\psi]$. Therefore, 
  \[
  p\mapsto F_p(\varphi)\doteq\int^1_0\ud\lambda(\varphi(p)-\varphi_0(p))E(F)[\varphi_0+
  \lambda(\varphi-\varphi_0)](p)
  \]
  is our candidate for the density $(j^\infty\varphi)^*\omega_{F,\varphi_0}$, which we will 
  identify with a smooth function by a choice of a volume element on a neighborhood of 
  $\supp F$, when needed. Take now $\varphi_1,\varphi_2\in\VV\subset\UU-\varphi$ such 
  that $\varphi_1+\varphi_2\in\VV$ and $\varphi_1-\varphi_2$ vanishes together with all 
  its partial derivatives in some (hence, any) coordinate chart at some $p\in\M$, where 
  $\VV$ is an absolutely convex open neighborhood of zero. The first condition can always 
  be achieved by multiplying $\varphi_1,\varphi_2\in\VV$ by a suitably small constant -- 
  this operation does not modify the second condition. Applying the fundamental theorem of 
  Calculus \eqref{a1e2} once more, together with the Fubini-Tonelli theorem, we get
  \[
  \begin{split}
    F_p(\varphi_0&+\varphi_2)-F_p(\varphi_0+\varphi_1)=\\ &=\varphi_1(p)\int^1_0\ud\lambda\left(E(F)
      [\varphi_0+\lambda\varphi_2](p)-E(F)[\varphi_0+\lambda\varphi_1](p)\right) \\ &=\varphi_1(p)
    \int^1_0\lambda\ud\lambda\int^1_0\ud\mu E(F)^{(1)}[\varphi_0+\lambda(\varphi_1
    +\mu(\varphi_2-\varphi_1))](\varphi_2-\varphi_1)(p)\ ,
  \end{split}
  \]
  where we have also made use of the fact that $\varphi_1(p)=\varphi_2(p)$. However, for 
  each $\psi_1,\psi_2\in\VV$ such that $\psi_1+\psi_2\in\VV$, the linear map 
  \[
  \C^\infty(\M)\ni\vec{\varphi}\mapsto\int^1_0\ud\lambda\int^1_0\ud\mu E(F)^{(1)}[\varphi_0+
  \lambda(\psi_1+\mu\psi_2)](\vec{\varphi})\in\C^\infty(\M)
  \]
  decreases supports, for the integrand in the right hand side coincides with $F^{(2)}
  [\varphi_0+\lambda(\psi_1+\mu\psi_2)](\vec{\varphi},\cdot)$ in the sense of distributions 
  and $F$ is local. By Peetre's theorem \cite{peetre}, the above linear map must be a linear 
  differential operator of order $r'$ with smooth coefficients supported in $\supp F$ for 
  some $r'\in\NN$. Due to the joint continuity of $F^{(2)}$, one may take the same $r'$ for
  all $\psi_1,\psi_2\in\VV$ (possibly after suitably shrinking $\VV$). Since we have assumed 
  that $\varphi_1$ and $\varphi_2$ coincide up to infinite order at $p$, it turns out that 
  \[
  \int^1_0\ud\lambda\varphi_1(p)E(F)[\varphi_0+\lambda\varphi_1](p)=\int^1_0\ud\lambda\varphi_2(p)
  E(F)[\varphi_0+\lambda\varphi_2](p)\ ,
  \]
  hence proving the first assertion. Moreover, since $F^{(2)}[\varphi_0+\lambda(\psi_1+\mu\psi_2)]
  (\varphi,\cdot)$ is a distribution supported in $\supp F$, it must be of finite order 
  $r\in\NN$ (say) for all $\psi_1,\psi_2\in\VV$, hence we may require that $\varphi_1$ and 
  $\varphi_2$ coincide only up to order $r$ at $p$, thus proving the second assertion.
  Finally, since infinite-order jet prolongations are conveniently smooth and the infinite jet 
  bundle is metrizable, it also follows from the same reasoning employed in the proof of
  Lemma \ref{s2l6} that $\omega_{F,\varphi_0}$ is MB-smooth.
\hspace*{\fill}\qed\end{proof}

\begin{remark}\label{s2r5} 
  A consequence of Proposition \ref{s2p2} is that a microlocal functional $F$ depends on 
  derivatives of its argument $\varphi$ at each $p\in\supp F$ only up to some \emph{finite} 
  order $r\geq 0$, which can be taken to be constant on some neighborhood of $\varphi$ but 
  otherwise depending on $\varphi$, thanks e.g. to Proposition 2, pp. 355 of \cite{zajtz}.
  A natural question at this point is whether the density determined by a microlocal 
  functional $F$ is of \emph{finite} order $r$, that is, $r$ is actually 
  $\varphi$\emph{-independent}, so that \eqref{s2e26} reduces to the form \eqref{s2e22}. 
  Obviously, this is equivalent to the same question posed for the smooth density $E(F)
  [\varphi]$ representing $F^{(1)}[\varphi]$. It follows from Lemma \ref{s2l4} and the 
  fundamental theorem of Calculus \eqref{a1e2} that a necessary condition for $E(F)[\varphi]$ 
  to be of globally finite order (say) $r\in\NN$ is that for every $R\geq 0$, $k\in\NN$ there 
  is a $C>0$ such that the Lipschitz estimates
  \begin{equation}\label{s2e27}
    \|\!*_{\!g}\!E(F)[\varphi_2]-*_{\!g}E(F)[\varphi_1]\|_{\infty,k,\supp F}\leq 
    C\|\varphi_2-\varphi_1\|_{\infty,k+r,\supp F}
  \end{equation}
  hold for every $\varphi_1,\varphi_2\in\UU$ such that $\|\varphi_1-\varphi_2\|_{\infty,k+r,\supp F}<R$,
  where $\!*_{\!g}$ is the Hodge star operator associated to the metric $g$ (see \eqref{s3e27} 
  below). On the other hand, \eqref{s2e27} implies that $F^{(1)}$ is locally bornological. Moreover, 
  it follows from Lemma \ref{s2l2} that if $\UU$ is such that for every $\varphi_0\in\UU$ there is 
  a $\delta>0$ such that $\{\varphi\in\C^\infty(\M)\ |\ \|\varphi-\varphi_0\|_{\infty,r,\supp F}<\delta\}
  \subset\UU$, then these estimates are also sufficient to yield finite order (see Proposition 5 
  and Theorem 1 in \cite{zajtz} for details). Slovák proposed in \cite{slovak} a different condition
  on the domain $\UU$, related to the applicability of Whitney's extension theorem, which allows 
  one to get finite order from microlocality and convenient smoothness of $F^{(1)}$ into 
  $\Gamma^\infty_c(\wedge^d T^*\!\!\M\To\M)$ without the need of assuming \eqref{s2e27}. This
  was shown in the particular case $\UU=\C^\infty(\M)$ in \cite{broudlr}. However, as argued
  in \cite{zajtz}, Slovák's criterion seems unnatural for domains $\UU$ coming e.g. from the
  study of differential equations and flows.
\end{remark}

We close this Section with a few comments on the algebraic structure of the spaces of local
and microlocal functionals. As we have seen, in spite of the nice structure of its elements, 
$\Fun_{\loc}(\M,\UU)$ and $\Fun_{\mu\loc}(\M,\UU)$ are not closed under pointwise products. 
However, the dynamical developments in the next Section will lead, for each $\UU\subset
\C^\infty(\M)$ open in the compact-open topology, to a space of functionals which includes 
both $\Fun_{\mu\loc}(\M,\UU)$ and $\Fun_0(\M,\UU)$ and is not only closed under products, but 
will also be shown later to possess good topological properties (see Section \ref{s4-gen}). 

\section{\label{s3-dyn}Off-shell linearized dynamics}

Unlike the standard approaches to classical field theory, we will not attempt to impose equations 
of motion directly on field configurations, but instead we do this algebraically by studying the 
effect of dynamics on observable quantities. More precisely, in this Section we want to describe 
how \emph{perturbing} a given dynamics affects observables. On an infinitesimal level, this 
corresponds to endowing a sufficiently large space of observables with a Poisson structure
associated to this dynamics, which will be introduced in Subsection \ref{s3-dyn-hyp}. 

\subsection{\label{s3-dyn-lag}Preliminaries. Generalized Lagrangians and the Euler-Lagrange derivative}

Our approach to dynamics is based on a local variational principle of Euler-Lagrange type.
In order to formulate it in our context, first we need to make the representation formula 
for microlocal functionals provided by Proposition \ref{s2p2} more flexible by allowing 
the support of the functional to be prescribed at will. This is accomplished by the 
following concept, introduced in a slightly different form by Definition 6.1 of \cite{brudf}
(see also the footnote preceding Lemma \ref{s3l2} below).

\begin{definition}\label{s3d1} Let $\UU\subset\C^\infty(\M)$. A 
  \emph{generalized Lagrangian} $\Li$ on $\UU$ is a map
  \[
  \Li:\C^\infty_c(\M)\rightarrow\Fun_{00}(\M,\UU)\ ,
  \]
  such that the following properties hold:
  \begin{enumerate}
  \item $\supp(\Li(f))\subset\supp f$;
  \item $\Li(f_1+f_2+f_3)=\Li(f_1+f_2)-\Li(f_2)+\Li(f_2+f_3)$, if 
    $\supp f_1\cap\supp f_3=\varnothing$.
  \end{enumerate}
  We call the argument $f$ of $\Li(f)$ its \emph{support function}. We say that 
  $\Li$ is \emph{smooth} if $\Li(f)$ is smooth for all $f\in\C^\infty_c(\M)$.
\end{definition}

In other words, a generalized Lagrangian is additive \emph{with respect to support functions}.
As with the case with additive functionals, one can work instead with \emph{relative}
generalized Lagrangians $\Li_{f_0}$ with respect to $f_0\in\C^\infty_c(\M)$, given by
\[
\Li_{f_0}(f)\doteq\Li(f_0+f)-\Li(f_0)\ ,
\]
in terms of which the additivity property with respect to support functions reads,
for all $f_1,f_2,f_3\in\C^\infty_c(\M)$ such that $\supp f_1\cap\supp f_3=\varnothing$,
\[
\Li_{f_2}(f_1+f_3)=\Li_{f_2}(f_1)+\Li_{f_2}(f_3)\ .
\]
Moreover, one has the following result, extracted from the proof of Proposition 6.2 of
\cite{brudf}.
\begin{lemma}\label{s3l1}
  Let $\Li$ be a generalized Lagrangian. Then $\supp\Li_{f_0}(f)\subset\supp f$, 
  for all $f,f_0\in\C^\infty(\M)$.
\end{lemma}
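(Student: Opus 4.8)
The statement to prove is that for a generalized Lagrangian $\Li$ and any fixed $f_0\in\C^\infty_c(\M)$, the relative Lagrangian $\Li_{f_0}(f)=\Li(f_0+f)-\Li(f_0)$ satisfies $\supp\Li_{f_0}(f)\subset\supp f$. My plan is to reduce everything to the defining properties (1) and (2) in Definition \ref{s3d1}, together with Lemma \ref{s2l3} (supports of sums of functionals behave subadditively). The key idea is to pick a point $p\notin\supp f$, exhibit a small open neighborhood $U$ of $p$ disjoint from $\supp f$, and show that $\Li_{f_0}(f)$ is insensitive to field perturbations supported in $U$, i.e. $p\notin\supp\Li_{f_0}(f)$.

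\textbf{Key steps.} First I would decompose the support function: given $p\notin\supp f$, choose an open neighborhood $U\ni p$ with $U\cap\supp f=\varnothing$, and write $f_0=f_0'+f_0''$ with $f_0'\in\C^\infty_c(\M)$ supported in $U$ and $f_0''\in\C^\infty_c(\M)$ with $p\notin\supp f_0''$ (this is just a partition-of-unity splitting adapted to $U$). Then $\supp(f_0'+f)\supset\supp f_0'$ may meet $U$, but the crucial observation is that $\supp f_0'\cap\supp f$ can be arranged empty by shrinking $U$ if necessary, since $p\notin\supp f$. Actually the cleanest route: apply property (2) of Definition \ref{s3d1} with the triple $(f_0', f_0'', f)$ — we need $\supp f_0'\cap\supp f=\varnothing$, which holds because $\supp f_0'\subset U$ and $U\cap\supp f=\varnothing$. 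This gives
\[
\Li(f_0'+f_0''+f)=\Li(f_0'+f_0'')-\Li(f_0'')+\Li(f_0''+f)\ .
\]
Subtracting $\Li(f_0)=\Li(f_0'+f_0'')$ and using property (2) once more on $(f_0',f_0'',0)$ — or simply rearranging — yields
\[
\Li_{f_0}(f)=\Li(f_0''+f)-\Li(f_0'')=\Li_{f_0''}(f)\ .
\]
Now by property (1), $\supp\Li(f_0''+f)\subset\supp(f_0''+f)\subset\supp f_0''\cup\supp f$ and $\supp\Li(f_0'')\subset\supp f_0''$, so by Lemma \ref{s2l3} we get $\supp\Li_{f_0}(f)\subset\supp f_0''\cup\supp f$. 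Since $p\notin\supp f_0''$ and $p\notin\supp f$, we conclude $p\notin\supp\Li_{f_0}(f)$. As $p$ was an arbitrary point outside $\supp f$, this gives $\supp\Li_{f_0}(f)\subset\supp f$.

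\textbf{Main obstacle.} The only delicate point is arranging the splitting $f_0=f_0'+f_0''$ so that simultaneously $\supp f_0'\cap\supp f=\varnothing$ (needed to legitimately apply the additivity axiom (2)) and $p\notin\supp f_0''$ (needed to kill $p$ in the final support estimate). Both are achievable at once precisely because $p\notin\supp f$: choose $U$ open with $p\in U$ and $\ol U\cap\supp f=\varnothing$ (possible since $\supp f$ is closed), take $\chi\in\C^\infty_c(\M)$ with $\chi\equiv 1$ near $p$ and $\supp\chi\subset U$, and set $f_0'=\chi f_0$, $f_0''=(1-\chi)f_0$; then $\supp f_0'\subset\supp\chi\subset U$ is disjoint from $\supp f$, and $p\notin\supp(1-\chi)\supset\supp f_0''$ fails — wait, one must be slightly careful: $p\notin\supp f_0''$ holds because $1-\chi$ vanishes on a neighborhood of $p$. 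This is routine but worth spelling out. Everything else is a formal manipulation of the two axioms plus Lemma \ref{s2l3}, so I expect the proof to be short.
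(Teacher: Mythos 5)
Your proof is correct and follows essentially the same route as the paper: you split off a piece $f_0'=\chi f_0$ of $f_0$ supported near $p$ and disjoint from $\supp f$, apply axiom (2) to conclude $\Li_{f_0}(f)=\Li_{f_0''}(f)$ with $f_0''=f_0-f_0'$, and then bound the support via axiom (1) and Lemma \ref{s2l3}. The paper phrases the decomposition more tersely (choosing $f_0'$ with $f_0'\equiv f_0$ near $p$ and $\supp f_0'\cap\supp f=\varnothing$, then writing $\Li_{f_0}(f)=\Li_{f_0-f_0'}(f)$), but the argument is the same.
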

\begin{proof}
  Let $p\not\in\supp f$, and choose $f_0'\in\C^\infty_c(\M)$ such that $f_0'\equiv f_0$ 
  in a neighborhood of $p$ and $\supp f\cap\supp f_0'=\varnothing$. By additivity of 
  $\Li$ with respect to support functions, we have that $\Li_{f_0}(f)=\Li_{f_0-f_0'}(f)$,
  which implies that $\supp\Li_{f_0}(f)\subset\supp(f+f_0-f_0')\cup\supp(f_0-f_0')$. 
  Therefore, $p\not\in\supp\Li_{f_0}(f)$, as asserted.
\end{proof}

Additivity with respect to support functions is a weak substitute for linearity, but is 
strong enough to yield useful consequences. One of them is that the argument involving 
field configurations inherits this property\footnote{This property is assumed 
  \emph{a priori} in Definition 6.1 of \cite{brudf}.}: 

\begin{lemma}\label{s3l2}
  Let $\UU\subset\C^\infty(\M)$, and $\Li$ a generalized Lagrangian on $\UU$. 
  Then, for all $f\in\C^\infty_c(\M)$, $\Li(f)$ is additive. 
\end{lemma}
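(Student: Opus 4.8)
The plan is to unwind the definition of additivity for the functional $\Li(f)$ (Definition \ref{s2d2}) and reduce it, via a localization trick, to the additivity of $\Li$ with respect to support functions (property (2) of Definition \ref{s3d1}). So fix $f\in\C^\infty_c(\M)$, and let $\varphi_2\in\UU$, $\varphi_1,\varphi_3\in\UU-\varphi_2$ with $\varphi_1+\varphi_3\in\UU-\varphi_2$ and $\supp\varphi_1\cap\supp\varphi_3=\varnothing$. We must show
\[
\Li(f)(\varphi_1+\varphi_2+\varphi_3)=\Li(f)(\varphi_1+\varphi_2)-\Li(f)(\varphi_2)+\Li(f)(\varphi_2+\varphi_3)\ .
\]
The first step is to choose a partition of the support function adapted to the disjoint supports of $\varphi_1$ and $\varphi_3$: pick open sets $V_1\supset\supp\varphi_1$, $V_3\supset\supp\varphi_3$ with $\ol{V_1}\cap\ol{V_3}=\varnothing$, and write $f=f_1+f_2'+f_3$ with $f_1,f_2',f_3\in\C^\infty_c(\M)$ such that $\supp f_1\subset V_1$, $\supp f_3\subset V_3$ (so $\supp f_1\cap\supp f_3=\varnothing$), while $f_2'$ is supported away from $\supp\varphi_1\cup\supp\varphi_3$; concretely, take $\chi_1,\chi_3\in\C^\infty_c(\M)$ with $\chi_i\equiv 1$ near $\supp\varphi_i$, $\supp\chi_i\subset V_i$, and set $f_1=\chi_1 f$, $f_3=\chi_3 f$, $f_2'=(1-\chi_1-\chi_3)f$.

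The second step uses additivity of $\Li$ with respect to support functions. Since $\supp(f_1+f_3)$ need not be related to $\supp f_2'$, it is cleaner to apply property (2) in the form available to us: $\Li(f_1+f_2'+f_3)=\Li(f_1+f_2')-\Li(f_2')+\Li(f_2'+f_3)$, which holds because $\supp f_1\cap\supp f_3=\varnothing$. Evaluating this identity of functionals at the configuration $\varphi_1+\varphi_2+\varphi_3$ reduces the problem to three terms, and now the key point is that each of $\Li(f_1+f_2')$, $\Li(f_2')$, $\Li(f_2'+f_3)$ is, by property (1) and Lemma \ref{s3l1}, supported in a region disjoint from either $\supp\varphi_1$ or $\supp\varphi_3$ (or both). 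By the definition of space-time support (Definition \ref{s2d1}), a functional with space-time support disjoint from $\supp\psi$ is insensitive to adding $\psi$ to its argument; hence $\Li(f_1+f_2')(\varphi_1+\varphi_2+\varphi_3)=\Li(f_1+f_2')(\varphi_1+\varphi_2)$, $\Li(f_2')(\varphi_1+\varphi_2+\varphi_3)=\Li(f_2')(\varphi_2)$, and $\Li(f_2'+f_3)(\varphi_1+\varphi_2+\varphi_3)=\Li(f_2'+f_3)(\varphi_2+\varphi_3)$. Running the same decomposition of $f$ at the configurations $\varphi_1+\varphi_2$, $\varphi_2$, $\varphi_2+\varphi_3$ and reassembling via property (2) in reverse, one collects exactly the three required terms $\Li(f)(\varphi_1+\varphi_2)$, $-\Li(f)(\varphi_2)$, $\Li(f)(\varphi_2+\varphi_3)$.

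The main obstacle I anticipate is bookkeeping the support conditions carefully enough that each invocation of Definition \ref{s2d1} is legitimate — in particular, one must be sure that the configurations being compared (e.g. $\varphi_1+\varphi_2$ versus $\varphi_1+\varphi_2+\varphi_3$) differ by a section supported in an open set disjoint from the relevant space-time support, and that all intermediate configurations lie in $\UU$, which is where the hypotheses $\varphi_1,\varphi_3,\varphi_1+\varphi_3\in\UU-\varphi_2$ are used. A minor subtlety is that property (2) of Definition \ref{s3d1} is stated for a three-fold split with the \emph{outer} pieces having disjoint support; the decomposition above is arranged precisely so that $f_1,f_3$ play the roles of these outer pieces, so no generalization of property (2) is needed. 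Everything else is routine manipulation of the relative functionals $\Li(g)(\cdot+\varphi)-\Li(g)(\varphi)$ together with Lemma \ref{s3l3} of \cite{brudf}-type reasoning already embedded in Lemma \ref{s3l1}.
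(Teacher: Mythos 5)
Your proof is correct and follows exactly the paper's own argument: the same cutoff functions $\chi_1,\chi_3$, the same decomposition $f=f_1+f_2+f_3$, the same application of property (2) of Definition~\ref{s3d1} followed by property (1) to drop the part of the configuration lying outside the space-time support of each term, and the same reassembly at the configurations $\varphi_1+\varphi_2$, $\varphi_2$, $\varphi_2+\varphi_3$. The invocation of Lemma~\ref{s3l1} is superfluous (property (1) alone suffices), but this does not affect correctness.
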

\begin{proof}
  Fix an arbitrary $f\in\C^\infty(\M)$, and let $\varphi_2\in\UU$, $\varphi_1,\varphi_3
  \in\UU-\varphi_2$ be such that $\supp\varphi_1\cap\supp\varphi_3=\varnothing$. Let 
  $\chi_1,\chi_3\in\C^\infty(\M)$ be such that $\chi_j\equiv 1$ in a neighborhood of 
  $\supp\varphi_j$, $j=1,3$, and $\supp\chi_1\cap\supp\chi_3=\varnothing$. Define 
  $f_1\doteq\chi_1f$, $f_3\doteq\chi_3f$, and $f_2\doteq f-f_1-f_3$. Then, by properties 
  (1) and (2) in Definition \ref{s3d1},
  \[
  \Li(f)(\varphi_1+\varphi_2+\varphi_3)=\Li(f_1+f_2)(\varphi_1+\varphi_2)-
  \Li(f_2)(\varphi_2)+\Li(f_2+f_3)(\varphi_2+\varphi_3)\ .
  \]
  However, we also have that
  \begin{align*}
    \Li(f)(\varphi_1+\varphi_2) &= \Li(f_1+f_2)(\varphi_1+\varphi_2)-\Li(f_2)(\varphi_2)
                                  +\Li(f_2+f_3)(\varphi_2)\ ,\\
    \Li(f)(\varphi_2) &= \Li(f_1+f_2)(\varphi_2)-\Li(f_2)(\varphi_2)
                        +\Li(f_2+f_3)(\varphi_2)\ ,\\
    \Li(f)(\varphi_2+\varphi_3) &= \Li(f_1+f_2)(\varphi_2)-\Li(f_2)(\varphi_2)
                                  +\Li(f_2+f_3)(\varphi_2+\varphi_3)\ ,
  \end{align*}
  whence it follows that 
  \begin{equation*}
    \begin{split}
      \Li(f)(\varphi_1+\varphi_2)&-\Li(f)(\varphi_2)+\Li(f)(\varphi_2+\varphi_3) \\
      &=\Li(f_1+f_2)(\varphi_1+\varphi_2)-\Li(f_2)(\varphi_2)
      +\Li(f_2+f_3)(\varphi_2+\varphi_3) \\
      &=\Li(f)(\varphi_1+\varphi_2+\varphi_3)\ , 
    \end{split}
  \end{equation*}
  which proves our assertion.
\end{proof}

\begin{corollary}\label{s3c1}
  Let $\UU\subset\C^\infty(\M)$ be open with respect to the compact-open topology, and
  $\Li$ be a \emph{smooth} generalized Lagrangian on $\UU$. Then $\Li(f)\in\Fun_{\loc}
  (\M,\UU)$ for all $f\in\C^\infty_c(\M)$.
\end{corollary}
\begin{proof}
  Apply Proposition \ref{s2p1} to the outcome of Lemma \ref{s3l2}.
\end{proof}

Another consequence is the following generalization of Lemma \ref{s2l4}
to \emph{any} open subset of $\C^\infty(\M)$:

\begin{lemma}\label{s3l3}
  Let $\UU\subset\C^\infty(\M)$, and $\Li$ be a generalized Lagrangian
  on $\UU$. Then, for any $f\in\C^\infty_c(\M)$ fixed, $\Li(f)$ can be written
  as a finite sum of additive functionals of arbitrarily small space-time support.
\end{lemma}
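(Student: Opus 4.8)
The plan is to mimic the proof of Lemma \ref{s2l4}, but replace the use of additivity of a functional defined on \emph{all} of $\C^\infty(\M)$ by a combination of Lemma \ref{s3l2} (which tells us $\Li(f)$ is additive as a functional on $\UU$) together with the additivity of $\Li$ \emph{with respect to its support function} (property (2) of Definition \ref{s3d1}). The point of invoking the latter is exactly to circumvent the obstruction that made Lemma \ref{s2l4} require $\UU=\C^\infty(\M)$: in the proof of Lemma \ref{s2l4} one forms the functionals $F_I(\varphi)=F(\varphi\sum_{i\in I}\chi_i)$, and for this to make sense one needs $\varphi\sum_{i\in I}\chi_i$ to lie in the domain of $F$, which is guaranteed only when the domain is everything. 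Here, instead, the localization will be carried out on the support function $f$, which ranges over all of $\C^\infty_c(\M)$, so no such domain issue arises.

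Concretely, I would first fix $f\in\C^\infty_c(\M)$ and an auxiliary complete Riemannian metric $h$ on $\M$, with associated distance $d_h$, exactly as in the proof of Lemma \ref{s2l4}. Let $\epsilon>0$ be given. Since $\supp f$ is compact, cover it by finitely many balls $(B_i)_{i=1,\dots,n}$ of radius $\epsilon/4$ and pick a subordinate partition of unity $(\chi_i)_{i=1,\dots,n}$, so that $f=\sum_{i=1}^n \chi_i f$ with $\supp(\chi_i f)\subset B_i$. Now apply property (2) of Definition \ref{s3d1} repeatedly: writing $f=\sum_i f_i$ with $f_i=\chi_i f$, a telescoping argument over subsets $I\subset\{1,\dots,n\}$ yields a decomposition
\[
\Li(f)=\sum_I s_I\,\Li_I\ ,\qquad s_I\in\{\pm1\}\ ,
\]
where each $\Li_I$ is of the form $\Li(\sum_{i\in I}f_i)$ (up to the relative-Lagrangian bookkeeping), the sum running over those $I$ with $B_i\cap B_j\neq\varnothing$ for all $i,j\in I$, and $\supp(\sum_{i\in I}f_i)\subset\bigcup_{i\in I}B_i=:B_I$, a set of diameter $<\epsilon$. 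By property (1) of Definition \ref{s3d1}, $\supp\Li_I\subset\supp(\sum_{i\in I}f_i)\subset B_I$, so each $\Li_I$ has space-time support contained in a ball of radius $\epsilon$; and by Lemma \ref{s3l2} each $\Li_I=\Li(\sum_{i\in I}f_i)$ is an additive functional on $\UU$. This gives the desired finite decomposition of $\Li(f)$ into additive functionals of arbitrarily small space-time support.

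The one technical point requiring care — and the main (mild) obstacle — is making the combinatorial telescoping with property (2) of Definition \ref{s3d1} precise, since property (2) is stated for a \emph{triple} $f_1,f_2,f_3$ with $\supp f_1\cap\supp f_3=\varnothing$, and one must iterate it to split an $n$-fold sum. This is handled exactly as in \cite{brudf} and in the proof of Lemma \ref{s2l4}: one introduces the relative Lagrangians $\Li_{g}$ and applies additivity inductively, grouping the $f_i$ whose balls are mutually disjoint; the indices $I$ that survive are those for which the balls $B_i$, $i\in I$, pairwise intersect, and the signs $s_I$ arise from inclusion–exclusion. One should also note the trivial case $f=0$, where $\Li(f)$ need not vanish but is still (trivially) a single additive functional of empty, hence arbitrarily small, space-time support. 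No properties of $\UU$ beyond $\UU\subset\C^\infty(\M)$ are used anywhere, which is precisely the improvement over Lemma \ref{s2l4}.
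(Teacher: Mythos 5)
Your proof is correct and follows exactly the route the paper takes: pick a finite cover of $\supp f$ by $\epsilon/4$-balls with subordinate partition of unity $(\chi_i)$, set $f_i=\chi_i f$, iterate property (2) of Definition~\ref{s3d1} on the support functions to obtain a signed inclusion--exclusion decomposition $\Li(f)=\sum_I s_I\Li(\sum_{i\in I}f_i)$ over index sets $I$ whose balls pairwise intersect, and conclude additivity of each summand from Lemma~\ref{s3l2} and the support bound $\supp\Li(\sum_{i\in I}f_i)\subset\supp(\sum_{i\in I}f_i)\subset B_I$ from property (1). Your opening observation --- that localizing via the support function sidesteps the domain issue that forced $\UU=\C^\infty(\M)$ in Lemma~\ref{s2l4} --- is precisely the point, and you have actually spelled out the combinatorics more carefully than the paper does.
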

\begin{proof}
  Let $(\chi_i)_{i=1,\ldots,n}$ a the partition of unity subordinated to the finite
  open covering of $\supp f$ constructed in the proof of Lemma \ref{s2l4}. Then 
  \[
  \Li(f)=\Li\left(\sum^n_{i=1}\chi_if\right)\ .
  \]
  Applying additivity of $\Li$ with respect to support functions just as we did 
  in the proof of Lemma \ref{s2l4} yields the desired result.
\end{proof}


Motivated by Corollary \ref{s3c1}, we say that a generalized Lagrangian $\Li$ is 
\emph{microlocal} if $\Li(f)\in\Fun_{\mu\loc}(\M,\UU)$ for all $f\in\C^\infty_c(\M)$, 
and \emph{of (finite) order} $r\geq 0$ if, in addition, $\Li(f)$ is of finite order 
$r\in\NN$ for all such $f$. A simple but important example of microlocal generalized 
Lagrangians of order $r$ are the squares of the local Sobolev seminorms \eqref{s2e5} 
at order $k=r$
\begin{equation}\label{s3e1}
  \Li(f)(\varphi)=\|\varphi\|^2_{2,r,f}\ .
\end{equation}
With the concept of microlocal generalized Lagrangian at hand, we can write down the 
Euler-Lagrange variational principle in the form we will use. 

\begin{definition}\label{s3d2}
  Let $\UU\subset\C^\infty(\M)$ be open in the compact-open topology, $\Li$ a smooth generalized 
  Lagrangian, $k\geq 1$. The $k$-th order \emph{Euler-Lagrange derivative} of $\Li$ at 
  $\varphi\in\UU$ along $\vec{\varphi}_1,\ldots,\vec{\varphi}_k\in\C^\infty(\M)$ is given by
  \[
  D^k\Li(1)[\varphi](\vec{\varphi}_1,\ldots,\vec{\varphi}_k)=
  D^k\Li(f)[\varphi](\vec{\varphi}_1,\ldots,\vec{\varphi}_k)\ ,
  \]
  where $f\in\C^\infty_c(\M)$ satisfies $f\equiv 1$ on $\supp\vec{\varphi}_j$ for at least
  one $j=1,\ldots,k$ (due to Proposition \ref{s2p1} and Lemma \ref{s3l1}, the above definition 
  is independent of the choice of $f$). If $\Li$ is microlocal of finite order and $\supp
  \vec{\varphi}_1$ is compact, we have that for $k=1$,
  \[
  D\Li(1)[\varphi](\vec{\varphi_1})=\Spr{E(\Li)[\varphi],\vec{\varphi}_1}
  \]
  defines a partial differential operator $E(\Li):\UU\To\Gamma^\infty(\wedge^d T^*\!\!\M\To\M)$, 
  called the \emph{Euler-Lagrange operator} associated to $\Li$. The map $E(\Li)$ is
  clearly smooth, with derivatives of order $k\geq 1$ at $\varphi\in\UU$ along 
  $\vec{\varphi}_2,\ldots,\vec{\varphi}_{k+1}\in\C^\infty(\M)$ given by the identity
  \[
  \int_\M\vec{\varphi}_1D^kE(\Li)[\varphi](\vec{\varphi}_2,\ldots,\vec{\varphi}_{k+1})
  =D^{k+1}\Li(1)[\varphi](\vec{\varphi}_1,\vec{\varphi}_2,\ldots,\vec{\varphi}_{k+1})\ .
  \]
  For $\varphi\in\UU$ fixed, the maps $D^kE(\Li)[\varphi]:\otimes^k\C^\infty(\M)\To
  \Gamma^\infty(\wedge^d T^*\!\!\M\To\M)$ are (symmetric) $k$-linear $k$-differential
  operators (i.e. for each $j=2,\ldots,k+1$, $D^kE(\Li)[\varphi](\vec{\varphi}_2,\ldots,
  \vec{\varphi}_j,\ldots,\vec{\varphi}_{k+1})$ is a linear partial differential operator 
  acting on $\vec{\varphi}_j$ with all other arguments fixed). We call 
  \[
  E'(\Li)[\varphi]=DE(\Li)[\varphi]
  \]
  the \emph{linearized Euler-Lagrange operator} around $\varphi\in\UU$.
\end{definition}

For notational convenience, we occasionally write 
\[
\begin{split}
  E'(\Li)[\varphi](\vec{\varphi})&=E'(\Li)[\varphi]\vec{\varphi}\ ,\\
  D^kE(\Li)[\varphi](\vec{\varphi}_1,\ldots,\vec{\varphi}_k)&=
  D^kE(\Li)[\varphi](\vec{\varphi}_2,\ldots,\vec{\varphi}_k)\vec{\varphi}_1\ ,\quad k>1\ .
\end{split}
\]
Definition \ref{s3d2} prompts us to compare it with the standard formulation of the 
Euler-Lagrange variational principle in field theory \cite{kolar}. We sketch this 
comparison below. Our definition of Euler-Lagrange derivatives is tailored to get 
rid of boundary terms automatically; to make them appear, let $\Li(f)$ be a microlocal 
generalized Lagrangian which depends \emph{linearly} on the supporting function $f$. 
It follows from Peetre's theorem \cite{peetre} that $D\Li(f)[\varphi]$ is a linear 
partial differential operator acting on $f$ for each fixed $\varphi$, taking values 
on $\Gamma^\infty(\wedge^d T^*\!\!\M\To\M)$. Let now $f$ converge to the characteristic 
function $\chi_K$ of a compact region $K$ of $\M$ with smooth boundary $\dd K$ -- 
the part of $D\Li(f)[\varphi](\vec{\varphi})$ proportional to the term of zeroth 
order in $f$ yields
\[
\int_K \vec{\varphi}_1E(\Li)[\varphi]\ ,
\]
and the remaining terms become the integral over $\dd K$ of the Poincaré-Cartan 
$(d-1)$-form $\Theta[\varphi]$ associated to the action integral $\Li(\chi_K)$ over 
$K$. If $\Li$ is of order $r$, one can show \cite{kolar} that $E(\Li)$ has order at 
most $2r$. Therefore, Definition \ref{s3d2} does provide a generalization of the 
Euler-Lagrange variational principle. If $E(\Li)[\varphi]=0$, then one recovers
the usual formula for the on-shell variation of the action functional in terms of 
the integral of $\Theta[\varphi]$ over $\dd K$, which is of importance in the 
so-called \emph{covariant phase space formalism} for field theory (see e.g. 
formulae (94), pp. 398 of \cite{forgerr} and (6.24), pp. 114 of \cite{helein1}).

The role in our setup of Lagrangians which are total divergences (also called \emph{null 
  Lagrangians} in the literature, see e.g. Section 3.2 of \cite{christo}) is played by 
the following

\begin{definition}\label{s3d3}
  Let $\UU\subset\C^\infty(\M)$. A generalized 
  Lagrangian $\Li$ on $\UU$ is said to be \emph{trivial} if $\supp\Li(f)\subset
  \supp(\ud f)$ for all $f\in\C^\infty_c(\M)$. Two generalized Lagrangians $\Li_1,\Li_2$ 
  are said to be \emph{equivalent} if $(\Li_1-\Li_2)(f)\doteq\Li_1(f)-\Li_2(f)$ is
  trivial. This is clearly an equivalence relation in the space of all generalized
  Lagrangians. If $\UU$ is open in the compact-open topology and $\Li$ is a microlocal 
  generalized Lagrangian of order $r$, its equivalence class $S_\Li$ in the space of 
  all microlocal generalized Lagrangians of order $r$ is called an \emph{action 
    functional of order} $r$.
\end{definition}

Trivial generalized Lagrangians are thus called because they obviously have vanishing 
Euler-Lagrange derivatives of all orders whenever they are defined. Therefore, two equivalent 
generalized Lagrangians have the same Euler-Lagrange derivatives. In particular, the action 
functional $S_\Li$ associated to a microlocal generalized Lagrangian $\Li$ of finite order 
uniquely determines the Euler-Lagrange operator $E(\Li)$. As a typical class of examples of 
trivial generalized Lagrangians, we may take
\[
\Li(f)[\varphi]=\int_\M \ud f\wedge (j^r\varphi)^*\omega
\]
with $\omega\in\Gamma^\infty(\wedge^{d-1}T^*\!J^r(\M,\RR)\To J^r(\M,\RR))$. 

To briefly illustrate the relation of trivial generalized Lagrangians with the more standard 
notion of null lagrangians, consider once more a microlocal generalized Lagrangian $\Li(f)$ 
which depends linearly on the supporting function $f$. The reasoning preceding Definition 
\ref{s3d3} shows that $D\Li(f)[\varphi]$ can be written as
\[
D\Li(f)[\varphi]=\Theta[\varphi]\wedge\ud f+\ud\Xi(f)[\varphi]\ ,
\]
where $\Xi(f)[\varphi]$ is a smooth $(d-1)$-form supported in $\supp f$. If $\Li(f)$ is of
finite order (say, $r$) and \emph{trivial} (e.g. the example written in the previous paragraph),
making $f$ converge to $\chi_K$ as before shows that $D\Li(f)[\varphi](\vec{\varphi})$ converges 
to the integral of $\Theta[\varphi]$ over $\dd K$ alone.

\subsection{\label{s3-dyn-hyp}Normally hyperbolic Euler-Lagrange operators. Infinitesimal solvability and the Peierls bracket}

As discussed in the Introduction, we are mainly interested in relativistic classical
field theories. This means that the action functional determining the dynamics must
give rise to Euler-Lagrange equations of motion which are \emph{hyperbolic}. There
are several different concepts of hyperbolicity for partial differential operators
(see for instance \cite{christo}); the one we use is the notion of \emph{normal
  hyperbolicity}, as defined for instance in \cite{bgp} for linear partial differential
operators. For future convenience, the discussion in the linear case takes place in 
the wider context of smooth sections of vector bundles.

\begin{definition}\label{s3d4}
  Let $\pi:\E\To\M$ be a real vector bundle of rank $D$ over the space-time manifold 
  $\M$. A linear partial differential operator of second order $P:\Gamma^\infty(\pi)
  \To\Gamma^\infty(\pi)$ acting on $\Gamma^\infty(\pi)$ is said to be \emph{normally hyperbolic} 
  if its principal symbol $\hat{p}\in\Gamma^\infty(\vee^2 T\!\M\otimes\E'\otimes\E\To\M)$, 
  given by
  \[
  \frac{1}{2}P((f-f(x))^2\vec{\varphi})(x)\doteq\hat{p}(x,\ud f(x))\vec{\varphi}(x)\ ,
  \]
  ($f\in\C^\infty(\M),\vec{\varphi}\in\Gamma^\infty(\pi)$) is of the form 
  \[
  \hat{p}(x,\xi)=\hat{g}^{-1}(x)(\xi,\xi)\otimes\id_{\pi^{-1}(x)}\ ,\quad x\in\M\ ,\,\xi\in T^*_x\M\ ,
  \]
  where $\hat{g}$ is a Lorentzian metric on $\M$. 
\end{definition}

We remark that a linear partial differential operator $P$ is normally hyperbolic if
and only if $P$ is regularly hyperbolic in the sense of Christodoulou \cite{christo}
and has a scalar principal symbol. 

Any second-order linear partial differential operator $P:\Gamma^\infty(\pi)\To\Gamma^\infty(\pi)$
can be written in a coordinate-invariant fashion as follows. If we define iterated covariant
derivatives of smooth sections of $\pi$ with respect to some connection $\nabla$ (see Remark
\ref{s2r1}), $P$ assumes the form
\begin{equation}\label{s3e2}
  P\vec{\varphi}=\hat{p}\nabla^2\vec{\varphi}+A\nabla\vec{\varphi}+B\vec{\varphi}\ ,
\end{equation}
where $A\in\Gamma^\infty(T\!\M\otimes\E'\otimes\E\To\M)$, $B\in\Gamma^\infty(\E'\times\E\To\M)$
and $\hat{p}\in\Gamma^\infty(\vee^2T\!\M\otimes\E'\otimes\E\To\M)$ is the principal symbol. We
remark that, unlike $A$ and $B$, $\hat{p}$ is independent of the choice of $\nabla$. 

Before we continue, we introduce a strict partial order $<$ and a partial order 
$\lesssim$ in the space $\mathrm{Lor}^0(\M)$ of continuous Lorentzian metrics on 
$\M$. Let $g_1,g_2\in\mathrm{Lor}^0(\M)$; we say that 
\begin{equation}\label{s3e3}
  \begin{split}
    g_1<g_2 &\quad\text{if}\quad g_1(X,X)\leq 0\quad\text{implies}\quad g_2(X,X)<0\ ;\\
    g_1\lesssim g_2 &\quad\text{if}\quad g_1(X,X)< 0\quad\text{implies}\quad g_2(X,X)<0\ ,
  \end{split}
\end{equation}
for all $X\in T\!\M$. As usual, we write $g_1>g_2$ (resp. $g_1\gtrsim g_2$) if $g_2<g_1$
(resp. $g_2\lesssim g_1$). By continuity, $g_1\lesssim g_2$ implies that $g_2(X,X)\leq 0$ 
for all $X$ such that $g_1(X,X)\leq 0$ (the converse is not necessarily true). Both
partial orders clearly enjoy the property that if $g_1<g_2$ (resp. $g_1\lesssim g_2$),
then $\Omega_1 g_1<\Omega_2 g_2$ (resp. $\Omega_1 g_1\lesssim\Omega_2 g_2$) for all positive,
real-valued continuous functions $\Omega_1,\Omega_2$ on $\M$. In other words, $<$ and 
$\lesssim$ depend only on the conformal classes (hence, only on the causal structures) 
of $g_1$ and $g_2$. As shown by Lerner \cite{lerner}, the order topology on $\mathrm{Lor}^0(\M)$ 
associated to $<$ (i.e. the topology generated by the open intervals $\{g\ |\ g_1<g<g_2\}$
as $g_1,g_2$ run through $\mathrm{Lor}^0(\M)$), called the \emph{interval topology} on
$\mathrm{Lor}^0(\M)$, coincides with the latter's relative graph (Whitney) topology.
Moreover, Benavides Navarro and Minguzzi have shown \cite{benavs} (building on earlier
results by Geroch \cite{geroch}) that, given $g$ globally hyperbolic, there is $g_2>g$ 
such that $g_2$ is also globally hyperbolic. We shall use this fact to prove the 
following useful result:

\begin{lemma}\label{s3l4}
  The space of continuous, time-oriented and globally hyperbolic Lorentzian metrics on 
  $\M$ is an open subset of $\mathrm{Lor}^0(\M)$ in the interval topology (hence also 
  in the Whitney topology). Moreover, given any such metric $g_2$, all $g_1\in\mathrm{Lor}^0(\M)$ 
  such that $g_1\lesssim g_2$ are also globally hyperbolic and have the same time orientation 
  as $g_2$, and any Cauchy time function with respect to $g_2$ is also a Cauchy time 
  function with respect to $g_1$.
\end{lemma}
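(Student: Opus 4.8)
The plan is to prove the second (``Moreover'') assertion first and then read off openness from it together with Geroch's theorem quoted just above. So fix a time-oriented globally hyperbolic $g_2\in\mathrm{Lor}^0(\M)$, fix a Cauchy time function $\tau$ for $(\M,g_2)$, and let $g_1\in\mathrm{Lor}^0(\M)$ with $g_1\lesssim g_2$. The first thing I would record is the chain of cone inclusions implied by $g_1\lesssim g_2$: at each point the $g_1$-timelike cone sits inside the $g_2$-timelike cone, hence — taking closures, each $g_i(p)$ being a genuine Minkowski form — the $g_1$-causal cone sits inside the $g_2$-causal cone; consequently every (piecewise smooth) $g_1$-causal curve is a $g_2$-causal curve. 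Dually, using the elementary criterion that a covector $\xi\in T^*_p\M$ is $g$-timelike if and only if $\xi(X)\neq 0$ for every nonzero $g$-causal $X$, this forces $\ud\tau$, which is $g_2$-timelike by the hypothesis on $\tau$, to be $g_1$-timelike as well; in particular $g_1^\sharp(\ud\tau)$ is a continuous $g_1$-timelike vector field and $g_1^{-1}(\ud\tau,\ud\tau)<0$.

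Next I would time-orient $(\M,g_1)$ by declaring a $g_1$-causal vector $X$ future directed exactly when $\ud\tau(X)<0$: this is a well-defined continuous time orientation (on each of the two $g_1$-causal half-cones $\ud\tau$ is nonvanishing, hence of constant sign, and of opposite signs on the two), it renders $g_1^\sharp(\ud\tau)$ future directed, and — since with the paper's convention a $g_2$-causal $X$ is $g_2$-future directed iff $\ud\tau(X)=g_2(g_2^\sharp(\ud\tau),X)<0$ — the $g_1$-future cone lies inside the $g_2$-future cone, i.e.\ it agrees with the time orientation of $g_2$. It then remains only to check that each $\Sigma^\tau_t=\tau^{-1}(t)$ is a Cauchy hypersurface for $(\M,g_1)$; granting this, the description of global hyperbolicity via Cauchy time functions recalled in Subsection \ref{s2-kin-pre} shows at once that $\tau$ is a Cauchy time function for $(\M,g_1)$, so $g_1$ is globally hyperbolic with the asserted time orientation. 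Along a future directed $g_1$-causal curve $\gamma$ one has $\ud\tau(\dot\gamma)<0$, so $\tau\circ\gamma$ is strictly decreasing; this rules out closed $g_1$-causal curves (whence $g_1$ is causal) and shows that $\gamma$ meets each $\Sigma^\tau_t$ at most once. For ``at least once'' I would invoke that inextendibility of a causal curve is intrinsic — it is the nonexistence of endpoints in $\M$ — and therefore does not depend on which of $g_1,g_2$ certifies causality; thus an inextendible $g_1$-causal curve is also an inextendible $g_2$-causal curve, and so meets the $g_2$-Cauchy hypersurface $\Sigma^\tau_t$. This last point, and the bookkeeping of which cone inclusion runs in which direction, are where I expect the only real care to be needed.

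For openness, take any time-oriented globally hyperbolic $g\in\mathrm{Lor}^0(\M)$. Geroch's result supplies a globally hyperbolic $g_2>g$, which is time-oriented compatibly with $g$ because a $g$-future timelike vector field is $g_2$-timelike. On the other side, from a splitting $\M\cong\RR\times\Sigma$ with $g=-\beta\,\ud t^2+g_t$ ($\beta>0$, $g_t$ Riemannian on the slices) one obtains $g_1:=-\beta\,\ud t^2+2g_t\in\mathrm{Lor}^0(\M)$ with $g_1<g$ by a one-line cone estimate. Then $\{h\in\mathrm{Lor}^0(\M):g_1<h<g_2\}$ is an interval-topology open neighborhood of $g$, and every $h$ in it satisfies $h<g_2$, hence $h\lesssim g_2$, hence — by the ``Moreover'' part — is time-oriented and globally hyperbolic. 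Since $g$ was arbitrary, the set of time-oriented globally hyperbolic metrics is open in the interval topology on $\mathrm{Lor}^0(\M)$, and by Lerner's identification of the interval and graph topologies it is open in the Whitney topology as well.
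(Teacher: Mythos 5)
Your proof is correct and follows essentially the same route as the paper's: reduce to the "Moreover" assertion via Lerner plus Geroch, then transfer the Cauchy time function $\tau$ from $g_2$ to $g_1$ using the cone inclusion $g_1\lesssim g_2$. You are somewhat more explicit than the paper on two points it leaves tacit — that inextendibility of causal curves is a metric-independent notion (so inextendible $g_1$-causal curves are inextendible $g_2$-causal curves, hence meet $\Sigma^\tau_t$), and the hands-on construction of a $g_1<g$ via rescaling the spatial part — and you show $\ud\tau$ is $g_1$-timelike by the dual covector criterion rather than by the paper's observation that the level sets of $\tau$ remain $g_1$-spacelike, but these are cosmetic variations on the same argument.
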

\begin{proof}
  Notice that if $g_1\lesssim g_2$ and $g_2$ is globally hyperbolic, then any Cauchy 
  hypersurface in $\M$ with respect to $g_2$ is also a Cauchy hypersurface with respect 
  to $g_1$, therefore $g_1$ is globally hyperbolic as well. The results of Lerner, 
  Benavides Navarro and Minguzzi quoted above then imply that any globally hyperbolic $g$ 
  is contained in the open interval $\{g'\ |\ g_1<g'<g_2\}$ for some pair $g_1,g_2\in
  \mathrm{Lor}^0(\M)$ such that $g_2$ is also globally hyperbolic. By the above reasoning, 
  any $g'$ in this set is globally hyperbolic as well. In particular, if $\tau$ is a 
  Cauchy time function on $\M$ with respect to $g_2$, then $\tau$ is also a Cauchy time 
  function with respect to any $g_1\lesssim g_2$ -- notice that \eqref{s3e3} implies that 
  if the tangent vector $X$ is spacelike with respect to $g_2$, then it is also spacelike 
  with respect to $g_1$; therefore $\ud\tau$ is a timelike covector field with respect to 
  $g_1$, since it is normal to the tangent bundle of all level sets of $\tau$, whose 
  elements must be all spacelike with respect to $g_1$. Finally, if $T_1=g_1^\sharp(\ud\tau)$ 
  and $T_2=g_2^\sharp(\ud\tau)$, where $g_1\lesssim g_2$ are time oriented and $\tau$ is a 
  Cauchy time function with respect to $g_2$, then $g_1(T_1,T_2)=\ud\tau(T_2)=g_2(T_2,T_2)<0$ 
  and $g_2(T_1,T_2)=\ud\tau(T_1)=g_1(T_1,T_1)<0$. In particular, if $T_1$ is future 
  directed with respect to $g_1$, then it is also future directed with respect to $g_2$. 
\end{proof}

Lemma \ref{s3l4} and its proof obviously extend to smooth metrics. Let now $P$ be 
a normally hyperbolic linear partial differential operator on $\Gamma^\infty(\pi)$. 
We assume the \emph{working hypothesis} ($\mathsf{NH}_g$) on $P$, given as follows:

\begin{enumerate}
\item[($\mathsf{NH}_g$)] The Lorentzian metric $\hat{g}$ on $\M$ associated to the 
  principal symbol $\hat{p}$ of $P$ satisfies $\hat{g}\lesssim g$.
\end{enumerate}

By the above discussion, all such $\hat{g}$'s are globally hyperbolic and have the
same time orientation as $g$. Moreover, by Lemma \ref{s3l4} these implications of 
($\mathsf{NH}_g$) are stable under perturbations of $\hat{g}$ in the interval topology, 
a fact that is also useful when dealing with nonlinear dynamics.

For $P$ normally hyperbolic and satisfying ($\mathsf{NH}_g$), one can prove the following 
fact, which is a restatement of results in \cite{bgp} (related partial results for the scalar 
case may be found e.g. in \cite{helein2}).

\begin{theorem}\label{s3t1}
  Let $(\M,g)$ be a globally hyperbolic space-time, and $\E\To\M$ be a real vector bundle 
  of rank $D$ over the space-time manifold $\M$, endowed with a connection $\nabla$. We 
  assume that $T\!\M$ is endowed with the Levi-Civita connection associated to the 
  space-time metric $g$. Let $P$ be a normally hyperbolic linear partial differential 
  operator on $\Gamma^\infty(\pi)$ satisftying \emph{($\mathsf{NH}_g$)}. Let $\Sigma$ be a 
  Cauchy hypersurface for $(\M,g)$, with future directed timelike normal $n\in\Gamma^\infty
  (T_\Sigma\M\To\M)$ (i.e. $g(n,n)=-1$ and $g(n,X)=0$ for all $X\in T\Sigma$), suitably 
  extended to an open neighborhood of $\Sigma$ in $\M$ (the exact form of the extension 
  is irrelevant for what follows). Given $\vec{\varphi}\in\Gamma^\infty(\pi)$, define 
  \begin{align}
    \rho_0^\Sigma(\vec{\varphi}) &=\vec{\varphi}\restr{\Sigma}\ ,\label{s3e4}\\
    \rho_1^\Sigma(\vec{\varphi}) &=(\nabla_n\vec{\varphi})\restr{\Sigma}\ .\label{s3e5}
  \end{align}
  Then for every $\vec{\varphi}_0,\vec{\varphi}_1\in\Gamma^\infty(\pi\restr{\Sigma})$, 
  $\psi\in\Gamma^\infty(\pi)$, there is a unique $\vec{\varphi}\in\Gamma^\infty(\pi)$ 
  such that 
  \begin{equation}\label{s3e6}
    \begin{split}
      P\vec{\varphi} &=\vec{\psi}\ ,\\
      \rho_j^\Sigma(\vec{\varphi}) &=\vec{\varphi}_j\ ,\quad j=0,1\ .
    \end{split}
  \end{equation}
  In other words, the map $\Phi:\Gamma^\infty(\pi)\To\Gamma^\infty(\pi)\oplus
  \Gamma^\infty(\pi\restr{\Sigma})\oplus\Gamma^\infty(\pi\restr{\Sigma})$ given by
  \begin{equation}\label{s3e7}
    \Phi(\vec{\varphi})=(P\vec{\varphi},\rho_0^\Sigma(\vec{\varphi}),\rho_1^\Sigma
    (\vec{\varphi}))
  \end{equation}
  is a linear isomorphism.\qed
\end{theorem}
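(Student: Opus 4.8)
Linearity of $\Phi$ is immediate, so we must show that, for every $\vec{\psi}\in\Gamma^\infty(\pi)$ and $\vec{\varphi}_0,\vec{\varphi}_1\in\Gamma^\infty(\pi\restr{\Sigma})$, the system \eqref{s3e6} has exactly one solution $\vec{\varphi}\in\Gamma^\infty(\pi)$. Fix a Cauchy time function $\tau$ with $\Sigma=\Sigma^\tau_0$ and the induced diffeomorphism $\M\cong\RR\times\Sigma$; in the corresponding collar of $\Sigma$ one readily constructs $\vec{\chi}\in\Gamma^\infty(\pi)$ with $\rho^\Sigma_0(\vec{\chi})=\vec{\varphi}_0$ and $\rho^\Sigma_1(\vec{\chi})=\vec{\varphi}_1$, and replacing $\vec{\psi}$ by $\vec{\psi}-P\vec{\chi}$ reduces us to the case of vanishing Cauchy data. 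Thus injectivity of $\Phi$ amounts to uniqueness, and surjectivity to existence, for the problem $P\vec{\varphi}=\vec{\psi}$, $\rho^\Sigma_0(\vec{\varphi})=\rho^\Sigma_1(\vec{\varphi})=0$.

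Uniqueness follows from an energy estimate. Assume $P\vec{\varphi}=0$ with vanishing Cauchy data. For $p$ with $\tau(p)>0$ the lens-shaped regions $L_s\doteq J^-(p,g)\cap\tau^{-1}([0,s])$, $0\le s\le\tau(p)$, are compact by global hyperbolicity of $g$. From $\hat{g}^{-1}$, the connection $\nabla$ and the future timelike field $g^\sharp(\ud\tau)$ one builds a non-negative ``energy'' density whose integral over $\Sigma^\tau_s\cap J^-(p,g)$ is a function $E(s)\ge 0$; applying the divergence theorem on $L_s$, using the equation $P\vec{\varphi}=0$ to dominate the bulk terms of order $\le 1$ by a multiple of $E$, and observing that the lateral boundary flux has a favourable sign by virtue of $(\mathsf{NH}_g)$ (so that the $g$-causal cones contain the $\hat{g}$-causal ones; cf. Lemma \ref{s3l4}), one obtains $E(s)\le E(0)+C\int_0^s E(s')\,\ud s'$. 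Since $E(0)=0$, Gr\"onwall's lemma forces $\vec{\varphi}\equiv 0$ on $J^+(\Sigma,g)$, and the time-reversed argument gives $\vec{\varphi}\equiv 0$ on $J^-(\Sigma,g)$. (This is the Klainerman-sharpened estimate to be established in the second paper of the series; alternatively, one may simply invoke \cite{bgp}.)

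Existence is the core of the statement, and the step I expect to be the main obstacle --- the difficulty being that $\Sigma$ need not be compact. I would argue in two stages. First, for $\vec{\psi}$ of \emph{compact} support one solves $P\vec{\varphi}=\vec{\psi}$ with vanishing Cauchy data by the classical functional-analytic route: in coordinates adapted to the foliation $\M\cong\RR\times\Sigma$ the coefficient of $\partial_\tau^2$ in $P$ equals $\hat{g}^{-1}(\ud\tau,\ud\tau)\,\id$, which is non-vanishing (Lemma \ref{s3l4}), so $P$ is a genuine second-order evolution operator in $\tau$; an energy inequality for the formal adjoint $P^*$ on the slabs $\tau^{-1}([0,s])$ then furnishes, by a duality argument, a weak solution, and energy inequalities for its derivatives together with Sobolev embedding upgrade it to a smooth global one, the same estimates yielding finite propagation speed, i.e. $\supp\vec{\varphi}\subset J^+(\supp\vec{\psi},g)\cup J^-(\supp\vec{\psi},g)$. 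Second, for arbitrary $\vec{\psi}\in\Gamma^\infty(\pi)$ one uses that, by finite propagation speed, the value at any $q\in\M$ of such a solution depends only on the restriction of $\vec{\psi}$ to the set $J^-(q,g)\cap J^+(\Sigma,g)$ (with the roles of $+$ and $-$ exchanged if $\tau(q)<0$), which is compact by global hyperbolicity; hence, picking a compact exhaustion $(K_n)_{n\ge 1}$ of $\M$ and cutoffs $f_n\in\C^\infty_c(\M)$ with $f_n\equiv 1$ on $K_n$, the solutions $\vec{\varphi}_n$ of $P\vec{\varphi}_n=f_n\vec{\psi}$ with vanishing Cauchy data coincide near each $q$ for all large $n$, so they patch into a global smooth solution of the original problem. (Equivalently, one may assemble $\vec{\varphi}$ from the advanced and retarded Green's operators of $P$ furnished by \cite{bgp}, decomposing $\vec{\psi}$ by a partition of unity subordinate to the slabs $\tau^{-1}((n-1,n+1))$, the support properties of the Green's operators making the resulting sum locally finite and compatible with the vanishing Cauchy data.) The delicate point throughout is exactly this globalization: one must interleave spatial localization, the causal compactness afforded by global hyperbolicity, the uniqueness statement, and finite propagation speed in order to manufacture a single global smooth solution with the prescribed Cauchy data out of pieces living over a possibly non-compact Cauchy surface.
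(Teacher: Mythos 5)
The paper does not supply its own proof of Theorem \ref{s3t1}: it ends the statement with \texttt{\textbackslash qed} and explicitly says it is a restatement of results in \cite{bgp}, provable alternatively via the energy estimates deferred to the second paper of the series. Your sketch --- reduction to vanishing Cauchy data, Gr\"onwall/energy uniqueness over lens-shaped regions using $(\mathsf{NH}_g)$ to control the lateral flux, existence by duality for compactly supported sources followed by a causal-compactness globalization over the possibly non-compact $\Sigma$ --- is a sound outline of precisely the argument in \cite{bgp} (and the one the authors intend to flesh out with Klainerman-type estimates in paper II), so it matches the intended proof rather than proposing a genuinely different route.
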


We stress that $\Phi$ is even a \emph{topological} linear isomorphism with respect to the
standard Fréchet space topology on spaces of smooth sections of vector bundles. 

Let $\Psi$ be the inverse of $\Phi$. By the principle of superposition, one can write 
\begin{equation}\label{s3e8}
  \Psi(\vec{\psi},\vec{\varphi}_0,\vec{\varphi}_1)=K_P^{\Sigma,0}\vec{\varphi}_0+K_P^{\Sigma,1}
  \vec{\varphi}_1+\Delta_P^\Sigma\vec{\psi}\ ,
\end{equation}
where $K_P^{\Sigma,j}\vec{\varphi}_j$, $j=0,1$ is the unique solution of the 
initial value problem
\begin{equation}\label{s3e9}
  \begin{cases}
    P\vec{\varphi} &=0\ ,\\
    \rho_{1-j}^\Sigma(\vec{\varphi}) &=0\ ,\\
    \rho_j^\Sigma(\vec{\varphi}) &=\vec{\varphi}_j\ ,
  \end{cases}
\end{equation}
and $\Delta_P^\Sigma\vec{\psi}$ is the unique solution of the initial value problem
\begin{equation}\label{s3e10}
  \begin{cases}
    P\vec{\varphi} &=\vec{\psi}\ ,\\
    \rho_0^\Sigma(\vec{\varphi}) &=0\ ,\\
    \rho_1^\Sigma(\vec{\varphi}) &=0\ .
  \end{cases}
\end{equation}
In the scalar case, there is the following refinement of Theorem \ref{s3t1}, which is a 
restatement of Theorem 5.1.6 of \cite{duister} that, on its turn, tells us in great detail
how supports and singularities propagate under the dynamics associated to $P$.

\begin{theorem}\label{s3t2}
  Assume the hypotheses and definitions of Theorem \ref{s3t1}. Suppose that $\E=\M\times\RR$ and 
  $\pi(p,\lambda)=\pr_1(p,\lambda)=p$ for $p\in\M$, $\lambda\in\RR$, identifying $\Gamma^\infty(\pi)$
  with $\C^\infty(\M)$. Then $\Delta_P^\Sigma:\C^\infty(\M)\To\C^\infty(\M)$, $K_P^{\Sigma,0}:
  \C^\infty(\Sigma)\To\C^\infty(\M)$ and $K_P^{\Sigma,1}:\C^\infty(\Sigma)\To\C^\infty(\M)$ 
  satisfy the following properties:
  \begin{enumerate}
  \item[(a)] Continuity: $K_P^{\Sigma,j}$ is a (continuous) linear map which admits a continuous 
    linear extension to the space $\D'(\Sigma)$ of distributions on $\Sigma$ for $j=0,1$, and 
    $\Delta_P^\Sigma$ is a (continuous) linear map which admits a continuous\footnote{Here 
      ``continuous'' means \emph{sequentially} continuous with respect to the (weak) Hörmander 
      topology on the extended domain (see Subsection \ref{s4-gen-top} below), as shown e.g. by 
      Theorem 8.2.13, pp. 268--269 of \cite{horm1} and, more precisely, by Theorems 8.2.9. (iii) 
      and 8.2.10, pp. 515--520 of \cite{chazp}. One can see indirectly from the arguments in 
      \cite{broudh2} that one \emph{cannot} hope to upgrade this result to full continuity, unless 
      one uses instead the \emph{strong} Hörmander topology (see also Remark \ref{s4r3} below).} 
    linear extension to
    \begin{equation}\label{s3e11}
      \D'_\Sigma(\M)=\{v\in\D'(\M)\ |\ \WF(v)\cap N^*\Sigma=\varnothing\}\ ,
    \end{equation}
    where $\WF(v)$ denotes the wave front set of $v$ and $N^*\Sigma=\{\xi\in T^*_\Sigma\M\ |\ \xi(X)
    =0\text{ for all }X\in T\Sigma\}$ denotes the conormal bundle of $\Sigma$. We remark that 
    the continuous linear maps $\rho_j^\Sigma:\C^\infty(\M)\To\C^\infty(\Sigma)$ also admit a 
    continuous\footnotemark[\value{footnote}] linear extension to $\D'_\Sigma(\M)\ni v$, satisfying 
    for $j=0,1$ \cite{horm1}
    \begin{equation}\label{s3e12}
      \WF(\rho_j^\Sigma(v))=\{(x,\xi\restr{T\Sigma})\in T^*\Sigma\ |\ (x,\xi)\in\WF(v)\}\ .
    \end{equation}
  \item[(b)] Propagation of supports:
    \begin{equation}\label{s3e13}
      \supp (K_P^{\Sigma,j}u_j)\subset J^+(\supp u_j,\hat{g})\cup J^-(\supp u_j,\hat{g})
      \subset J^+(\supp u_j,g)\cup J^-(\supp u_j,g)
    \end{equation}
    and
    \begin{equation}\label{s3e14}
      \supp (\Delta_P^\Sigma v) \subset J^+(\supp v\cap J^+(\Sigma,\hat{g}),\hat{g})\cup 
      J^-(\supp v\cap J^-(\Sigma,\hat{g}),\hat{g})\ , 
    \end{equation}
    for all $u_j\in\D'(\Sigma)$, $v\in\D'_\Sigma(\M)$, $j=0,1$.
  \item[(c)] Propagation of singularities: given any $u_j\in\D'(\Sigma)$, $j=0,1$, we 
    have that $(x,\xi)\in\WF(K_P^{\Sigma,j}u_j)$ \emph{only if} there is $\lambda>0$ and
    a null geodesic segment $\gamma:[0,\Lambda]\To\M$ with respect to $\hat{g}$ 
    (i.e. $\hat{g}(\dot{\gamma}(\lambda),\dot{\gamma}(\lambda))=0$ for all 
    $\lambda\in[0,\Lambda]$) such that if
    \[
    E^{\hat{g}}_\gamma=\{(\gamma(0),\hat{g}^\flat(\dot{\gamma}(0))),(\gamma(\Lambda),
    \hat{g}^\flat(\dot{\gamma}(\Lambda)))\}\subset T^*\!\!\M
    \]
    is the set of endpoints of the bicharacterstic strip $\{(\gamma(\lambda),
    \hat{g}^\flat(\dot{\gamma}(\lambda)))\in T^*\!\!\M\ |\ \lambda\in[0,\Lambda]\}$, 
    then $(x',\xi'\restr{T\Sigma})\in\WF(u_j)$ for some $(x',\xi')\in E^{\hat{g}}_\gamma$ 
    and $(x,\xi)\in E^{\hat{g}}_\gamma$. Given any $v\in\D'_\Sigma(\M)$, we have that 
    $(x,\xi)\in\WF(\Delta_P^\Sigma v)$ \emph{only if} either $(x,\xi)\in\WF(v)$ or
    there is $\Lambda>0$ and a null geodesic segment $\gamma:[0,\Lambda]\To\M$ with 
    respect to $\hat{g}$ such that $\gamma((0,\Lambda))\cap\Sigma=\varnothing$ and
    $\WF(v)\cap E^{\hat{g}}_\gamma\neq\varnothing$, $(x,\xi)\in E^{\hat{g}}_\gamma$.
  \end{enumerate}
  In particular, given any $u_0,u_1\in\D'(\Sigma)$, $v\in\D'_\Sigma(\M)$, we 
  have that $K_P^{\Sigma,j}u_j$ and $\Delta_P^\Sigma v$ belong to $\D'_\Sigma(\M)$. 
  We have that $u=K_P^{\Sigma,j}u_j$, $j=0,1$ is the unique solution in $\D'_\Sigma(\M)$ 
  of the initial value problem
  \begin{equation}\label{s3e15}
    \begin{cases}
      Pu &=0\ ,\\
      \rho_{1-j}^\Sigma(u) &=0\ ,\\
      \rho_j^\Sigma(u) &=u_j\ ,
    \end{cases}
  \end{equation}
  and $u=\Delta_P^\Sigma v$ is the unique solution in $\D'_\Sigma(\M)$ of the initial 
  value problem
  \begin{equation}\label{s3e16}
    \begin{cases}
      Pu &=v\ ,\\
      \rho_0^\Sigma(u) &=0\ ,\\
      \rho_1^\Sigma(u) &=0\ .
    \end{cases}
  \end{equation}\qed
\end{theorem}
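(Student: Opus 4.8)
The plan is to reduce every assertion to the properties of the retarded and advanced Green operators of $P$. Since $P$ is normally hyperbolic and satisfies $(\mathsf{NH}_g)$, the results of \cite{bgp} provide unique linear maps $E^\pm_P$ with $P\circ E^\pm_P=E^\pm_P\circ P=\id$ on $\C^\infty_c(\M)$ and $\supp(E^\pm_P u)\subset J^\pm(\supp u,\hat g)$; they extend continuously to distributions of past/future compact support, and the causal propagator $E_P\doteq E^+_P-E^-_P$ is a bisolution of $P$ with $\supp(E_P u)\subset J^+(\supp u,\hat g)\cup J^-(\supp u,\hat g)$. Using the Cauchy time function $\tau$ with $\tau^{-1}(0)=\Sigma$ afforded by global hyperbolicity, let $\theta$ be the characteristic function of $\{\tau\geq 0\}$. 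One then claims
\[
\Delta_P^\Sigma v=E^+_P(\theta v)+E^-_P\big((1-\theta)v\big)\ ,
\]
while $K_P^{\Sigma,0}$ and $K_P^{\Sigma,1}$ are, up to the sign fixed by Green's identity, the operators whose Schwartz kernels are obtained from $-(\nabla_n)_yE_P(x,y)$ and $E_P(x,y)$ by restricting the $y$-variable to $\Sigma$. The first task is to justify these formulae: $\theta v$ has past-compact and $(1-\theta)v$ future-compact support, so the right-hand sides make sense; the restriction of the second argument of $E_P$ to $\Sigma$ is a well-defined distribution by H\"ormander's restriction theorem \cite{horm5}, because $\Sigma$ is spacelike so $N^*\Sigma$ consists of timelike covectors while $\WF(E_P)$ lies in the characteristic set $\{\hat g^{-1}(\xi,\xi)=0\}$; applying $P$ recovers $v$; and $E^\pm_P(\theta v)$ has vanishing Cauchy data along $\Sigma$ — here one uses that $\WF(\theta v)\subset N^*\Sigma$ is disjoint from $\mathrm{Char}(P)$, so $E^+_P(\theta v)$ is smooth off $\Sigma$ and no more singular than $\theta v$ across it, whence the jump of $P u=\theta v$ across $\Sigma$, which sits one order below $\nabla_n^2u$, forces $u$ and $\nabla_n u$ to be continuous and therefore to vanish on $\Sigma$, as they vanish on $I^-(\Sigma,\hat g)$. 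That the operators so defined solve \eqref{s3e6}, \eqref{s3e9} and \eqref{s3e10} then follows from the uniqueness part of Theorem \ref{s3t1}.

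Part (b) is immediate from the above: $\supp(\theta v)=\supp v\cap J^+(\Sigma,\hat g)$ and $\supp((1-\theta)v)\subset\supp v\cap J^-(\Sigma,\hat g)$ together with the support bound for $E^\pm_P$ give \eqref{s3e14}, while $\supp u_j\subset\Sigma$ and the support bound for $E_P$ give \eqref{s3e13}; the coarser bounds with $g$ in place of $\hat g$ follow because $(\mathsf{NH}_g)$, i.e.\ $\hat g\lesssim g$, entails $J^\pm(\cdot,\hat g)\subset J^\pm(\cdot,g)$.

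Parts (a) and (c) are the substance of Theorem 5.1.6 of \cite{duister}, whose hypotheses hold here: a normally hyperbolic $P$ obeying $(\mathsf{NH}_g)$ is strictly (regularly) hyperbolic with scalar principal symbol, and $\Sigma$ is a non-characteristic Cauchy hypersurface. The Duistermaat--H\"ormander description of $\WF(E_P)$ as the union of the diagonal with its flowout under the null bicharacteristic flow of $\hat g$, transported through the representation formulae by the wave-front calculus for composition and for restriction \eqref{s3e12} (again \cite{horm5}), yields both the continuous extensions in (a) — the condition $\WF(v)\cap N^*\Sigma=\varnothing$ defining $\D'_\Sigma(\M)$ being exactly what makes $\rho_j^\Sigma(v)$ and the composite $E^\pm_P(\theta v)$ defined and again elements of $\D'_\Sigma(\M)$ — and the bicharacteristic characterisations in (c). The ``only if'' directions there are propagation of singularities; the ``if'' directions, that singularities of the data are genuinely emitted along null bicharacteristics of $\hat g$ without cancellation, rest on the precise Fourier-integral structure of $E_P$, which is precisely what \cite{duister} supplies.

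It remains to prove uniqueness in $\D'_\Sigma(\M)$. If $u\in\D'_\Sigma(\M)$ solves $Pu=0$ with $\rho_0^\Sigma(u)=\rho_1^\Sigma(u)=0$, then feeding these data into the now distributionally valid representation formula gives $u=K_P^{\Sigma,0}(0)+K_P^{\Sigma,1}(0)+\Delta_P^\Sigma(0)=0$, and similarly for \eqref{s3e16}; alternatively one pairs $u$ with the smooth, compactly supported solutions of the backward Cauchy problem for the formal adjoint $P^*$ — again normally hyperbolic and satisfying $(\mathsf{NH}_g)$ — provided by Theorem \ref{s3t1}, and integrates by parts. I expect the main obstacle to be the bookkeeping in (a) and (c): verifying that the $\Delta_P^\Sigma$ and $K_P^{\Sigma,j}$ defined above coincide with the operators analysed in \cite{duister}, and that all support- and wave-front-set propagation through the splitting $\theta,\,1-\theta$ and through the restrictions $\rho_j^\Sigma$ is consistent — in particular the smoothness-off-$\Sigma$ argument underlying the vanishing of the Cauchy data of $E^\pm_P(\theta v)$.
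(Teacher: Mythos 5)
The paper offers no proof of this theorem: it is explicitly labelled a restatement of Theorem 5.1.6 of \cite{duister}, with the remark that the $\hat g$-form of part (b) is stronger than what that theorem provides and is to be obtained from the energy estimates of the sequel paper. Your proposal is therefore consistent with the paper's (non-)proof and adds the constructive scaffolding the paper suppresses: building $\Delta_P^\Sigma$ and $K_P^{\Sigma,j}$ from the B\"ar--Ginoux--Pf\"affle fundamental solutions $E^\pm_P$ via the split $v=\theta v+(1-\theta)v$, reading (b) off their support estimates, and citing \cite{duister} for the substance of (a) and (c). That is essentially the same route. One point worth making explicit: \cite{bgp} is written for a normally hyperbolic operator whose principal symbol matches the ambient metric, so to get $\supp(E^\pm_P u)\subset J^\pm(\supp u,\hat g)$ you must apply their theory to the space-time $(\M,\hat g)$; the hypothesis that makes this legitimate is precisely that $\hat g$ is globally hyperbolic, which Lemma \ref{s3l4} extracts from $(\mathsf{NH}_g)$.

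The one step that is not quite right as written is the vanishing of the Cauchy data of $E^+_P(\theta v)$. Your claim ``$\WF(\theta v)\subset N^*\Sigma$, so $E^+_P(\theta v)$ is smooth off $\Sigma$'' holds only when $v$ is smooth; for $v\in\D'_\Sigma(\M)$ the wave front set of $\theta v$ also contains $\WF(v)$ away from $\Sigma$ together with the interaction terms over $\Sigma$, and $E^+_P(\theta v)$ is then in general nowhere smooth. The argument does go through, but in two stages: first establish $\Delta_P^\Sigma v=E^+_P(\theta v)+E^-_P((1-\theta)v)$ for smooth $v$, where the WF claim is correct and non-characteristicity of $N^*\Sigma$ gives microlocal elliptic regularity hence $C^1$ regularity of $E^+_P(\theta v)$ across $\Sigma$, so the trace and normal derivative, being $C^0$ limits from $I^-(\Sigma)$ where the section vanishes, are zero; then extend both sides of the identity by continuity in the H\"ormander topology on $\D'_\Sigma(\M)$, which is exactly the content of part (a) and comes from the Fourier-integral description in \cite{duister}. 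You flag this as the likely sticking point, and indeed it is the one place where the WF bookkeeping needs to be redone more carefully; the rest is sound.
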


We note that part (b) of Theorem \ref{s3t2} is actually stronger than that provided
by Theorem 5.1.6 of \cite{duister} but it can be derived from energy estimates for $P$.
There are two particular cases of the initial value problem \eqref{s3e10} that deserve 
special attention:

\begin{enumerate}
\item[(R)] $\supp\vec{\psi}\subset I^+(\Sigma,g)$ -- The restriction of $\Delta_P^\Sigma$ to 
  the space of smooth sections of $\E$ with \emph{past compact} support with respect to $g$
  \begin{equation}\label{s3e17}
    \begin{split}
      \Gamma^\infty_{+}(\pi,g) &=\{\vec{\psi}\in\Gamma^\infty(\pi)\ |\ \forall p\in\M,\,\\
      &\phantom{=\{}J^-(p,g)\cap\supp\vec{\psi}\text{ is compact}\}\\
      &=\{\vec{\psi}\in\Gamma^\infty(\pi)\ |\ \forall K\subset\M\text{ compact, }\\
      &\phantom{=\{}J^-(K,g)\cap\supp\vec{\psi}\text{ is compact}\}
    \end{split}
  \end{equation}
  no longer depends on $\Sigma$, as long as condition (R) is satisfied. In this case 
  we write $\Delta_P^\Sigma=\Delta^\Rt_P$, calling it the \emph{retarded fundamental solution} 
  of $P$.
\item[(A)] $\supp\vec{\psi}\subset I^-(\Sigma,g)$ -- The restriction of $\Delta_P^\Sigma$ to 
  the space of smooth sections of $\E$ with \emph{future compact} support with respect to $g$
  \begin{equation}\label{s3e18}
    \begin{split}
      \Gamma^\infty_{-}(\pi,g) &=\{\vec{\psi}\in\Gamma^\infty(\pi)\ |\ \forall p\in\M,\,\\
      &\phantom{=\{}J^+(p,g)\cap\supp\vec{\psi}\text{ is compact}\}\\
      &=\{\vec{\psi}\in\Gamma^\infty(\pi)\ |\ \forall K\subset\M\text{ compact,}\\
      &\phantom{=\{}J^+(K,g)\cap\supp\vec{\psi}\text{ is compact}\}
    \end{split}
  \end{equation}
  no longer depends on $\Sigma$ either, as long as condition (A) is satisfied. In this case 
  we write $\Delta_P^\Sigma=\Delta^\Av_P$, calling it the \emph{advanced fundamental solution} 
  of $P$.
\end{enumerate}

The difference $\Delta_P=\Delta^\Rt_P-\Delta^\Av_P:\Gamma^\infty_{+}(\pi,g)\cap
\Gamma^\infty_{-}(\pi,g)\To\Gamma^\infty(\pi)$ is called the \emph{causal propagator} 
of $P$. We obviously have the identity $P\circ\Delta_P=\Delta_P\circ P=0$ wherever
it is defined. 

In the scalar case discussed in Theorem \ref{s3t2}, $\Delta^\Rt_P$ (resp. $\Delta^\Av_P$) 
is defined on the space of smooth functions on $\M$ with past (resp. future) compact 
support with respect to $g$
\begin{equation}\label{s3e19}
  \begin{split}
    \C^\infty_{+/-}(\M,g) &=\{\psi\in\C^\infty(\M)\ |\ \forall p\in\M,J^{-/+}(p,g)\cap\supp
    \psi\text{ is compact}\}\\
    &=\{\psi\in\C^\infty(\M)\ |\ \forall K\subset\M\text{ compact},J^{-/+}(K,g)\cap\supp
    \psi\text{ is compact}\}\ .
  \end{split}
\end{equation}
Specializing Theorem \ref{s3t2} to these two cases yields the

\begin{corollary}\label{s3c2}
  Let the hypotheses and notation of Theorem \ref{s3t2} be satisfied. Then $\Delta^\Rt_P$ 
  and $\Delta^\Av_P$ satisfy the following properties:
  \begin{enumerate}
  \item[(a)] Continuity: $\Delta^\Rt_P$ (resp. $\Delta^\Av_P$) admits a continuous extension 
    to the space of distributions on $\M$ with past (resp. future) compact support with 
    respect to $g$
    \begin{equation}\label{s3e20}
      \begin{split}
        \D'_{+/-}(\M,g) &=\{v\in\D'(\M)\ |\ \forall p\in\M\ ,\,J^{-/+}(p,g)\cap\supp v
        \text{ is compact}\}\\
        &=\{v\in\D'(\M)\ |\ \forall K\subset\M\text{ compact},\,J^{-/+}(K,g)\cap\supp v
        \text{ is compact}\}\ .
      \end{split}
    \end{equation}
  \item[(b)] Propagation of supports: 
    \begin{equation}\label{s3e21}
      \supp (\Delta^{\Rt/\Av}_Pv)\subset J^{+/-}(\supp v,\hat{g})\subset J^{+/-}(\supp v,g)
    \end{equation}
    for all $v\in\D'_{+/-}(\M,g)$.
  \item[(c)] Propagation of singularities: Given any $v\in\D'_{+/-}(\M,g)$, we have that \\
    $(x,\xi)\in\WF(\Delta^{\Rt/\Av}_Pv)$ \emph{only if} either $(x,\xi)\in\WF(v)$ or there 
    is $\Lambda>0$ and a null geodesic segment $\gamma:[0,\Lambda]\To\M$ with respect to 
    $\hat{g}$ such that $\WF(v)\cap E^{\hat{g}}_\gamma\neq\varnothing$, $(x,\xi)\in 
    E^{\hat{g}}_\gamma$. 
  \end{enumerate}
  We have that for all $v\in\D'_{+/-}(\M,g)$, $u=\Delta^{\Rt/\Av}_Pv$ is the unique solution of
  $Pu=v$ on $\M$ belonging to $\D'_{+/-}(\M,g)$.\qed
\end{corollary}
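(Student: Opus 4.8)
The plan is to realise $\Delta^{\Rt}_P$ and $\Delta^{\Av}_P$ as the limits of the operators $\Delta^{\Sigma}_P$ of Theorem \ref{s3t2} obtained by sliding the Cauchy hypersurface $\Sigma$ towards past, respectively future, timelike infinity, and then to read off (a)--(c) from the corresponding clauses of Theorem \ref{s3t2}, using at each point the inclusions $J^{\pm}(\cdot,\hat g)\subset J^{\pm}(\cdot,g)$ that follow from the working hypothesis $(\mathsf{NH}_g)$ through \eqref{s3e3} (together with the continuity remark following \eqref{s3e3}). Concretely I would fix a Cauchy time function $\tau$ for $g$, oriented so as to increase towards the future; by Lemma \ref{s3l4} it is simultaneously a Cauchy time function for $\hat g$. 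Writing $\Sigma_n\doteq\tau^{-1}(n)$, $n\in\ZZ$, I would pick a partition of unity $(\chi_n)_{n\in\ZZ}$ subordinate to the locally finite cover $\{\tau^{-1}((n-1,n+1))\}_{n\in\ZZ}$ of $\M$. Given $v\in\D'_{+}(\M,g)$, each $\chi_n v$ vanishes in a neighbourhood of $\Sigma_{n-2}$ -- so $\chi_n v\in\D'_{\Sigma_{n-2}}(\M)$ and $\Delta^{\Sigma_{n-2}}_P(\chi_n v)$ is defined by Theorem \ref{s3t2} -- and $\supp(\chi_n v)$, lying in the $\tau$-slab $\tau^{-1}((n-1,n+1))$ which is strictly to the future of $\Sigma_{n-2}$ for both $g$ and $\hat g$, is disjoint from $J^-(\Sigma_{n-2},\hat g)$; hence the second set on the right of \eqref{s3e14} is empty and $\supp(\Delta^{\Sigma_{n-2}}_P(\chi_n v))\subset J^+(\supp(\chi_n v),\hat g)$. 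I would then define $\Delta^{\Rt}_P v\doteq\sum_{n}\Delta^{\Sigma_{n-2}}_P(\chi_n v)$, and $\Delta^{\Av}_P$ by the mirror-image construction on $\D'_{-}(\M,g)$.

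The first thing to verify is that this series is \emph{locally finite}: for $K\subset\M$ compact the hypothesis $v\in\D'_{+}(\M,g)$ makes $\supp v\cap J^-(K,g)$ compact, hence $\tau$-bounded, and $\Delta^{\Sigma_{n-2}}_P(\chi_n v)$ can meet $K$ only when $\supp v\cap\tau^{-1}([n-1,n+1])\cap J^-(K,g)\neq\varnothing$, so only finitely many terms are relevant near $K$. Independence of $\Delta^{\Rt}_P v$ from $\tau$ and $(\chi_n)$, and the fact that on $\C^\infty_{+}(\M,g)$ it reduces to the map already introduced before the statement, I would get from the uniqueness clause of Theorem \ref{s3t2} (solutions of \eqref{s3e16} in $\D'_\Sigma(\M)$ are unique) applied slab by slab. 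Continuity, clause (a), will then follow from the continuity in Theorem \ref{s3t2}(a) applied to each summand, once $\D'_{+/-}(\M,g)$ is topologised as an inductive limit over past/future-compact support sets so that the local finiteness becomes uniform along convergent nets. For clause (b), summing the support bound above yields $\supp(\Delta^{\Rt}_P v)\subset J^+(\supp v,\hat g)\subset J^+(\supp v,g)$, and the same bound intersected with $J^-(K,g)$, together with global hyperbolicity of $(\M,g)$, gives $\Delta^{\Rt}_P v\in\D'_{+}(\M,g)$. Clause (c) will come from Theorem \ref{s3t2}(c) applied summand by summand: the wave front set of the (locally finite) sum sits inside the union of those of the summands, $\WF(\chi_n v)\subset\WF(v)$, and the auxiliary requirement in Theorem \ref{s3t2}(c) that the null geodesic avoid the Cauchy surface becomes vacuous in the limit, since any compact $\hat g$-null geodesic segment is $\tau$-bounded and one may always use a slab whose bounding surface lies to its past; the reverse inclusion is handled by $\WF(v)=\WF(P\,\Delta^{\Rt}_P v)\subset\WF(\Delta^{\Rt}_P v)$ together with the forward propagation from the inhomogeneity, which is already part of Theorem \ref{s3t2}(c).

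For the uniqueness statement I would use duality with the formal transpose $P^{t}$ (which is again normally hyperbolic with the same principal symbol, so $(\mathsf{NH}_g)$ still holds and $\Delta^{\Av}_{P^{t}}$ is available): if $u\in\D'_{+}(\M,g)$ solves $Pu=v$, then by clause (b) applied to $\Delta^{\Av}_{P^{t}}$ and past-compactness of $\supp u$ the set $\supp u\cap\supp(\Delta^{\Av}_{P^{t}}\phi)$ is compact for every compactly supported test object $\phi$, so that $\langle u,\phi\rangle=\langle u,P^{t}\Delta^{\Av}_{P^{t}}\phi\rangle=\langle Pu,\Delta^{\Av}_{P^{t}}\phi\rangle=\langle v,\Delta^{\Av}_{P^{t}}\phi\rangle$ is legitimate and depends on $v$ alone; hence $u=\Delta^{\Rt}_P v$, and symmetrically in the advanced case. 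The step I expect to be the genuine obstacle is not any isolated computation but the \emph{passage to the limit} itself: one must make sure the slab construction is truly choice-independent and matches the $\C^\infty$-level definition, and -- above all -- that clause (c) is not weakened in the limit by cancellations of wave front contributions between different summands. Controlling this last point is where I would lean most on the sharp ``if and only if'' content of Theorem \ref{s3t2}(c), i.e. on the Lagrangian/conormal description of the singularities furnished by \cite{duister}, rather than on the mere inclusion $\WF(a+b)\subset\WF(a)\cup\WF(b)$.
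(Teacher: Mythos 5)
Your construction is correct in outline, but it takes a noticeably more elaborate route than the paper's. The paper's argument, packed into the bulleted remarks just before the corollary, is to treat each source $v\in\D'_{+}(\M,g)$ \emph{individually}: since past-compact support implies $\supp v\subset I^{+}(\Sigma,g)$ for a suitably chosen Cauchy hypersurface $\Sigma$ (depending on $v$), one simply applies $\Delta^{\Sigma}_P$ and observes, from the uniqueness in Theorem \ref{s3t2}, that the result is independent of the choice of $\Sigma$. All of (a)--(c), including the uniqueness statement, then \emph{are} clauses of Theorem \ref{s3t2}, read off after noting that the $\Sigma$-avoidance condition in part (c) becomes vacuous once $\Sigma$ lies strictly to the past of both $\supp v$ and the forward cone $J^{+}(\supp v,\hat g)$ in which the solution is supported. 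Your partition-of-unity construction, by contrast, decomposes $v=\sum_n\chi_n v$ along a countable family of slabs and builds $\Delta^{\Rt}_P v$ as a locally finite sum $\sum_n\Delta^{\Sigma_{n-2}}_P(\chi_n v)$. This is a legitimate alternative (and would be the route of choice if one could not find a single $\Sigma$ to the past of $\supp v$), but it buys this extra generality at a real cost: the cancellation worry you raise for part (c) is a genuine one, since $\WF$ of a sum can be strictly smaller than the union of the summands' wave front sets, and controlling it requires appealing to the Lagrangian/conormal structure of the $\Delta^{\Sigma}_P$ kernels rather than to the mere inclusion $\WF(a+b)\subset\WF(a)\cup\WF(b)$. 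The single-$\Sigma$ approach dissolves this difficulty entirely, because there is no sum. Your duality argument for the final uniqueness statement is fine, but the paper gets it for free as well: uniqueness in $\D'_{+}(\M,g)$ reduces to the uniqueness clause of Theorem \ref{s3t2} with vanishing Cauchy data on a $\Sigma$ lying to the past of the supports involved. So: same underlying ingredients, a genuinely different organisation, and the extra machinery you invoke is the price of avoiding the observation that a past-compact set lies to the future of a single Cauchy surface.
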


Corollary \ref{s3c2} implies that the causal propagator $\Delta_P$ propagates singularities
in the following fashion: since $\WF(u)\subset\WF(Pu)\cup\{(x,\xi)\in T^*\!\!\M\sm 0
\ |\ g^{-1}(x)(\xi,\xi)=0\}$ for all $u\in\D'(\M)$ (see for instance Proposition 5.1.1, 
page 113 of \cite{duister}), we conclude that, for all $v\in\D'_{+}(\M,g)\cap\D'_-(\M,g)$, 
$(x,\xi)\in\WF(\Delta_Pv)$ only if there is $\Lambda>0$ and a null geodesic segment 
$\gamma:[0,\Lambda]\To\M$ with respect to $\hat{g}$ such that $\WF(v)\cap E^{\hat{g}}_\gamma
\neq\varnothing$, $(x,\xi)\in E^{\hat{g}}_\gamma$, for we have that $P\Delta_Pv=0$. 

\begin{remark}\label{s3r1}
  It is easy to see that $\D'_\pm(\M,g)$ is the topological dual of the space 
  \begin{equation}\label{s3e22}
    \D_\mp(\wedge^d T^*\!\!\M\To\M)=\{\omega\in\Gamma^\infty(\wedge^d T^*\!\!\M\To\M)\ |\ \exists 
    K\subset\M\text{ compact: }\supp\omega\subset J^\mp(K,g)\}\ .
  \end{equation}
\end{remark}

The causal propagator $\Delta_P$ allows a covariant description of the space of 
solutions of $Pu=0$, which is a strengthening of Lemma A.3, page 227 of \cite{dimock}. 
We state and prove the result only for scalar fields, but it actually holds for 
arbitrary vector bundles \cite{bgp}:

\begin{lemma}\label{s3l5} 
  Let $u\in\D'(\M)$. Then $Pu=0$ if and only if $u=\Delta_Pv$ for some 
  $v\in\D'(\M)$ such that $\supp v$ is both past and future compact. If $\supp u\cap\Sigma$
  is compact for some (hence, any) Cauchy hypersurface, we can choose $v$ such 
  that $\supp v$ is compact. In both cases, we can choose $v$ such that $\supp v$ 
  is contained in a neighborhood of any prescribed Cauchy hypersurface 
  $\Sigma$ for $(\M,g)$. Moreover, $\Delta_Pv=0$ if and only if $v=Pw$ for 
  some $w\in\D'(\M)$ such that $\supp w$ is both past and future compact; if $\supp u\cap\Sigma$
  is compact for some (hence, any) Cauchy hypersurface, then $\supp w$ is compact.
\end{lemma}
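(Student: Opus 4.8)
The plan is to reduce everything to the defining relations $P\circ\Delta^{\Rt/\Av}_P=\Delta^{\Rt/\Av}_P\circ P=\id$ on the appropriate domains, together with the uniqueness statements of Corollary \ref{s3c2}, using a single cutoff adapted to a Cauchy time function. The easy implications come first. If $u=\Delta_Pv$ with $\supp v$ past and future compact, then $v$ lies in the domains of both $\Delta^\Rt_P$ and $\Delta^\Av_P$, so $Pu=P\Delta^\Rt_Pv-P\Delta^\Av_Pv=v-v=0$. Dually, if $v=Pw$ with $\supp w$ past and future compact, then $w$ solves $P(\cdot)=v$ both in $\D'_+(\M,g)$ and in $\D'_-(\M,g)$, whence by uniqueness $\Delta^\Rt_Pv=w=\Delta^\Av_Pv$ and $\Delta_Pv=0$.

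For the converse in the first equivalence, given $Pu=0$ and the prescribed Cauchy hypersurface $\Sigma$, I would fix a Cauchy time function $\tau$ with $\tau^{-1}(0)=\Sigma$, choose $\epsilon>0$, and set $\chi=\vartheta\circ\tau$ with $\vartheta\in\C^\infty(\RR)$ equal to $0$ on $(-\infty,-\epsilon]$ and to $1$ on $[\epsilon,\infty)$. Then $\supp\ud\chi\subset\tau^{-1}([-\epsilon,\epsilon])$, while $\supp(\chi u)\subset\tau^{-1}([-\epsilon,\infty))$, the causal past of the Cauchy hypersurface $\tau^{-1}(-\epsilon)$, hence future compact, and $\supp((1-\chi)u)\subset\tau^{-1}((-\infty,\epsilon])$, the causal future of $\tau^{-1}(\epsilon)$, hence past compact; the slab $\tau^{-1}([-\epsilon,\epsilon])$ is both (intersecting it with $J^\mp(K,g)$ for $K$ compact leaves one inside $J^\mp(K,g)\cap J^\pm(\text{Cauchy hypersurface},g)$, compact by global hyperbolicity). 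Put $v\doteq P((1-\chi)u)=-P(\chi u)$ (the second equality since $Pu=0$); as $P$ is a differential operator, $\supp v\subset\supp\ud\chi$, so $v$ is past and future compact. Now $(1-\chi)u\in\D'_+(\M,g)$ solves $P(\cdot)=v$, so $(1-\chi)u=\Delta^\Rt_Pv$; and $\chi u\in\D'_-(\M,g)$ solves $P(\cdot)=-v$, so $\chi u=-\Delta^\Av_Pv$. Hence $\Delta_Pv=\Delta^\Rt_Pv-\Delta^\Av_Pv=(1-\chi)u+\chi u=u$, and since $\supp v\subset\tau^{-1}([-\epsilon,\epsilon])$, a neighborhood of $\Sigma$, the next-to-last assertion follows too.

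For the compact-support refinements: if $\supp u\cap\Sigma$ is compact, then, observing that $\WF(u)\subset\{\xi:g^{-1}(\xi,\xi)=0\}$ is disjoint from the timelike conormal $N^*\Sigma$ so that $u\in\D'_\Sigma(\M)$, the Cauchy data $\rho^\Sigma_j(u)$ are compactly supported in $\supp u\cap\Sigma$, and $u=K^{\Sigma,0}_P\rho^\Sigma_0(u)+K^{\Sigma,1}_P\rho^\Sigma_1(u)$ together with \eqref{s3e13} gives $\supp u\subset J^+(C,g)\cup J^-(C,g)$ with $C\doteq\supp\rho^\Sigma_0(u)\cup\supp\rho^\Sigma_1(u)$ compact; intersecting this with any other Cauchy hypersurface gives the ``hence, any'' clause. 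Feeding it into the previous construction, $\supp v\subset\supp\ud\chi\cap\supp u\subset\tau^{-1}([-\epsilon,\epsilon])\cap(J^+(C,g)\cup J^-(C,g))$ is compact, and one may still keep the slab around a prescribed $\Sigma$. For the ``moreover'' part, the implication $v=Pw\Rightarrow\Delta_Pv=0$ was already noted; conversely, if $\Delta_Pv=0$ set $w\doteq\Delta^\Rt_Pv=\Delta^\Av_Pv$, so that $Pw=v$ and $\supp w\subset J^+(\supp v,g)\cap J^-(\supp v,g)$, which is past and future compact whenever $\supp v$ is (if $q\in J^-(K,g)\cap J^+(\supp v,g)$ then $q\in J^+(\supp v\cap J^-(K,g),g)\cap J^-(K,g)$, a compact set), and is outright compact once $\supp v$ is, by global hyperbolicity.

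The only substantive step is the cutoff construction in the second paragraph: one must choose $\chi$ so that $\supp\ud\chi$ sits in a past- and future-compact slab and so that $\chi u$ and $(1-\chi)u$ land in the domains of $\Delta^\Av_P$ and $\Delta^\Rt_P$ respectively -- this is precisely where the global hyperbolicity of $(\M,g)$ enters -- after which identifying $\Delta^{\Rt/\Av}_Pv$ through the uniqueness clause of Corollary \ref{s3c2} is immediate. Everything else is routine bookkeeping with causal sets in a globally hyperbolic space-time.
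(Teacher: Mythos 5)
Your strategy is the same as the paper's: split $u$ by a cutoff adapted to a Cauchy time function so that $v=P((1-\chi)u)=-P(\chi u)$ is supported in a slab, then identify $\chi u$ and $(1-\chi)u$ with the appropriate fundamental solutions applied to $\pm v$ via the uniqueness clauses of Corollary \ref{s3c2}; the ``moreover'' part and the easy implications are also argued essentially as in the paper. Two remarks. First, a bookkeeping slip: $\chi u$ is supported in $\tau^{-1}((-\epsilon,\infty))$, i.e.\ to the \emph{future} (not past) of the Cauchy hypersurface $\tau^{-1}(-\epsilon)$, which makes it \emph{past}-compact (so in $\D'_+(\M,g)$, the domain of $\Delta^\Rt_P$), not future-compact; dually for $(1-\chi)u$. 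With the superscripts corrected one finds $\chi u=-\Delta^\Rt_P v$ and $(1-\chi)u=\Delta^\Av_P v$, hence $u=-\Delta_P v=\Delta_P(-v)$ --- an extra sign that is harmless since the lemma asserts only the existence of \emph{some} such $v$, but the intermediate identifications should carry the right $\Rt$ and $\Av$. Second, for the compact-support case the paper redoes the splitting with a three-piece partition of unity adapted to $I^\pm(K\cap\Sigma)$ and the complement of $J^+\cup J^-$ of the Cauchy data, whereas you keep the same two-piece cutoff and instead import $\supp u\subset J^+(C,g)\cup J^-(C,g)$ with $C$ compact from the representation $u=K^{\Sigma,0}_P\rho^\Sigma_0(u)+K^{\Sigma,1}_P\rho^\Sigma_1(u)$ together with \eqref{s3e13}. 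Both rest on the same support-propagation input; your variant has the minor advantages of treating the generic and compact cases uniformly and of making the ``hence, any'' clause explicit, which the paper's proof leaves tacit.
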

\begin{proof}
  Let $\Sigma$ be any Cauchy hypersurface for $(\M,g)$. By the results in \cite{bernsan3},
  there is a Cauchy time function $\tau$ in $(\M,g)$ such that $\Sigma=\tau^{-1}(t_0)$ for some
  $t_0\in\RR$. We consider the following separate cases:
  \begin{enumerate}
  \item[(a)] $\supp u\cap\Sigma$ non-compact: $U_1,U_2\subset\M$ open such that
    $U_1=\tau^{-1}((-\infty,t_0+\epsilon))$ and $U_2=\tau^{-1}((t_0-\epsilon,+\infty))$
    for some $\epsilon>0$. Let $\{\chi_1,\chi_2\}$ be a partition of unity subordinated 
    to $\{U_1,U_2\}$. We have that $u=\chi_1u+\chi_2u$, and hence $P(\chi_1u)=-P(\chi_2u)=v$
    is supported inside $\tau^{-1}((t_0-\epsilon,t_0+\epsilon))$, whose closure is 
    past and future compact. Since $\chi_1u$ has past compact support and $\chi_2u$ has 
    future compact support, we have that $\chi_1u=\Delta^\Rt_P(P(\chi_1u)$ and 
    $\chi_2u=\Delta^\Av_P(P(\chi_2u))=-\Delta^\Av_P(P(\chi_1u))$, whence it follows 
    that $u=\Delta_P(P(\chi_1u))=-\Delta_P(P(\chi_2u))=\Delta_Pv$. 
  \item[(b)] $\supp u\cap\Sigma$ compact: $V_1,V_2,V_3\subset\M$ open such that
    $U_1=I^-(K\cap\Sigma,g)$, $U_2=I^+(K\cap\Sigma,g)$ and $V_3=\M\sm(J^+(\supp u\cap
    \Sigma,g)\cup J^-(\supp u\cap\Sigma,g))$, where $K\subset\Sigma$ is a compact subset 
    whose interior in $\Sigma$ contains $\supp u\cap\Sigma$, so that $\ol{U_1\cap U_2}$ 
    is compact. Let $\{\chi'_1,\chi'_2,\chi'_3\}$ be a partition of unity subordinated to 
    $\{V_1,V_2,V_3\}$. We have by Theorem \ref{s3t2} that $u=\chi'_1u+\chi'_2u$ and hence 
    $P(\chi'_1u)=-P(\chi'_2u)=v$ is supported in the compact subset $J^-(K,g)\cap J^+(K,g)$.  
    Since $\chi'_1u$ has past compact support and $\chi_2u$ has future compact support, 
    we have that $\chi'_1u=\Delta^\Rt_P(P(\chi'_1u)$ and $\chi'_2u=\Delta^\Av_P(P(\chi'_2u))
    =-\Delta^\Av_P(P(\chi'_1u))$, whence it follows that $u=\Delta_P(P(\chi'_1u))=-\Delta_P
    (P(\chi'_2u))=\Delta_Pv$.
  \end{enumerate}
  Finally, if $v$ has past and future compact support, and $\Delta_Pv=0$, we clearly 
  have that $\Delta^\Rt_Pv=\Delta^\Av_Pv=w$ has past and future compact support as well, 
  whence $v=Pw$ by Corollary \ref{s3c2}. If in addition $\supp v$ is compact, then $w$ has 
  compact support as well.
\end{proof}

We conclude with the following result:

\begin{proposition}\label{s3p1}
  Let $\lambda\mapsto P_\lambda$, $\lambda\in(a,b)$, $a<b\in\RR$ be a smooth curve of
  normally hyperbolic linear partial differential operators on $\Gamma^\infty(\pi)$
  satisfying \emph{($\mathsf{NH}_g$)}, in the sense that $P_\lambda$ is such an operator 
  for every $\lambda\in(a,b)$ and $\lambda\mapsto (P_\lambda u)(\omega)$ is smooth for all 
  $u\in\D'(\pi\omega\in\Gamma^\infty_c(\E'\otimes\wedge^d T^*\!\!\M\To\M)$. Then 
  $\lambda\mapsto K^{\Sigma,j}_{P_\lambda}$ ($j=0,1$), $\lambda\mapsto\Delta^\Sigma_{P_\lambda}$, 
  $\lambda\mapsto\Delta^\Rt_{P_\lambda}$ and $\lambda\mapsto\Delta^\Av_{P_\lambda}$ are 
  smooth in the sense that $\lambda\mapsto(K^{\Sigma,j}_{P_\lambda}u_j)(\omega)$, $\lambda
  \mapsto(\Delta^\Sigma_{P_\lambda}v)(\omega)$, $\lambda\mapsto(\Delta^\Rt_{P_\lambda}v^+)(\omega)$ 
  and $\lambda\mapsto(\Delta^\Av_{P_\lambda}v^-)(\omega)$ are smooth for all $u_j\in
  \D'(\Sigma)$, $j=0,1$, $v\in\D'_\Sigma(\M)$, $v^\pm\in\D'_\pm(\M,g)$. 
  Moreover, one has the following \emph{resolvent formulae}:
  \begin{align}
    \frac{\dd}{\dd\lambda}K^{\Sigma,j}_{P_\lambda} 
    &=-\Delta^\Sigma_{P_\lambda}\dot{P}_\lambda K^{\Sigma,j}_{P_\lambda}\ ,\label{s3e23}\\
    \frac{\dd}{\dd\lambda}\Delta^\Sigma_{P_\lambda} 
    &=-\Delta^\Sigma_{P_\lambda}\dot{P}_\lambda\Delta^\Sigma_{P_\lambda}\ ,\label{s3e24}\\
    \frac{\dd}{\dd\lambda}\Delta^\Rt_{P_\lambda} 
    &=-\Delta^\Rt_{P_\lambda}\dot{P}_\lambda\Delta^\Rt_{P_\lambda}\ ,\label{s3e25}\\
    \frac{\dd}{\dd\lambda}\Delta^\Av_{P_\lambda} 
    &=-\Delta^\Av_{P_\lambda}\dot{P}_\lambda\Delta^\Av_{P_\lambda}\ ,\label{s3e26}
  \end{align}
  where $\dot{P}_\lambda u=\frac{\dd}{\dd\lambda}(P_\lambda u)$ for all $u\in\D'(\M)$.
  In particular, for all $u_j\in\D'(\Sigma)$, $j=0,1$, $v\in\D'_\Sigma(\M)$, 
  $v^\pm\in\D'_\pm(\M,g)$, we have that $\WF(\frac{\dd}{\dd\lambda}K^{\Sigma,j}_{P_\lambda}
  u_j)\subset\WF(K^{\Sigma,j}_{P_\lambda}u_j)$, $\WF(\frac{\dd}{\dd\lambda}\Delta^\Sigma_{P_\lambda}
  v)\subset\WF(\Delta^\Sigma_{P_\lambda}v)$, $\WF(\frac{\dd}{\dd\lambda}\Delta^\Rt_{P_\lambda}v^+)
  \subset\WF(\Delta^\Rt_{P_\lambda}v^+)$ and $\WF(\frac{\dd}{\dd\lambda}\Delta^\Av_{P_\lambda}v^-)
  \subset\WF(\Delta^\Av_{P_\lambda}v^-)$.
\end{proposition}
\begin{proof}
  We shall restrict our discussion to $u_j,v,v^\pm$ smooth, $j=0,1$. The general case then
  follows from Theorem \ref{s3t2} and Corollary \ref{s3c2}. 
  
  It is straightforward to show that $P_\lambda$ is smooth in $\lambda$ in the above sense 
  if and only if the coefficients of $P_\lambda$ with respect to some (hence, any) choice of 
  connections on $\pi$ and $T\M$ are jointly smooth on $(a,b)\times\M$. Likewise, since 
  $\dot{P}_\lambda$ is a differential operator with smooth coefficients and hence preserves 
  wave front sets, the above statements on the latter also follow from Theorem \ref{s3t2} 
  and Corollary \ref{s3c2}.
  
  First we prove \eqref{s3e23}. Notice that for every $h\in\RR$ with $0<|h|<\min\{\lambda-a,
  b-\lambda\}$ we have that
  \[
  P_{\lambda+h}\left(\frac{1}{h}(K^{\Sigma,j}_{P_{\lambda+h}}u_j-K^{\Sigma,j}_{P_\lambda}u_j)\right)
  =-\frac{1}{h}(P_{\lambda+h}-P_\lambda)K^{\Sigma,j}_{P_\lambda}u_j
  \]
  and 
  \[
  \rho_0^\Sigma\left(\frac{1}{h}(K^{\Sigma,j}_{P_{\lambda+h}}u_j-K^{\Sigma,j}_{P_\lambda}u_j)\right)=
  \rho_1^\Sigma\left(\frac{1}{h}(K^{\Sigma,j}_{P_{\lambda+h}}u_j-K^{\Sigma,j}_{P_\lambda}u_j)\right)=0\ .
  \]
  This implies that $\lim_{h\To 0}\frac{1}{h}(K^{\Sigma,j}_{P_{\lambda+h}}u_j-K^{\Sigma,j}_{P_\lambda}u_j)
  \doteq u$ exists in the sense of distributions and solves the initial-value problem
  \[
  \begin{cases}
    Pu &=\dot{P}_\lambda K^{\Sigma,j}_{P_\lambda}u_j\ ,\\
    \rho_0^\Sigma(u) &=0\ ,\\
    \rho_1^\Sigma(u) &=0\ ,
  \end{cases}
  \]
  since
  \[
  \frac{1}{h}\Spr{(K^{\Sigma,j}_{P_{\lambda+h}}u_j-K^{\Sigma,j}_{P_\lambda}u_j),P_{\lambda+h}'\omega}
  =-\frac{1}{h}\Spr{(P_{\lambda+h}-P_\lambda)K^{\Sigma,j}_{P_\lambda}u_j,\omega}
  \]
  for all $\omega\in\Gamma^\infty_c(\E'\otimes\wedge^d T^*\!\!\M\To\M)$ and all $h\in\RR$ 
  with $0<|h|<\min\{\lambda-a,b-\lambda\}$, where $P_{\lambda+h}'$ is the formal adjoint 
  of $P_{\lambda+h}$. Hence, $u$ must be smooth and is given by the right-hand side of 
  \eqref{s3e23} applied to $u_j$. Finally, by Corollary 1.9, pp. 14 of \cite{km}, $u$ 
  must coincide with the left-hand side of \eqref{s3e23} applied to $u_j$.
  
  The reasoning for proving \eqref{s3e24} is similar, since for every $h\in\RR$ with 
  $0<|h|<\min\{\lambda-a,b-\lambda\}$ we have that
  \[
  \begin{split}
    P_{\lambda+h}\left(\frac{1}{h}(\Delta^\Sigma_{P_{\lambda+h}}v-\Delta^\Sigma_{P_\lambda}v)\right)
    &=\frac{1}{h}v-\frac{1}{h}(P_{\lambda+h}-P_\lambda)\Delta^\Sigma_{P_\lambda}v-\frac{1}{h}v \\
    &=-\frac{1}{h}(P_{\lambda+h}-P_\lambda)\Delta^\Sigma_{P_\lambda}v
  \end{split}
  \]
  and 
  \[
  \rho_0^\Sigma\left(\frac{1}{h}(\Delta^\Sigma_{P_{\lambda+h}}v-\Delta^\Sigma_{P_\lambda}v)\right)=
  \rho_1^\Sigma\left(\frac{1}{h}(\Delta^\Sigma_{P_{\lambda+h}}v-\Delta^\Sigma_{P_\lambda}v)\right)=0\ .
  \]
  The same goes for \eqref{s3e25} and \eqref{s3e26}, once we choose a Cauchy hypersurface
  $\Sigma$ contained in $I^-(\supp v^+)$\\$\sm\supp v^+$ (resp. $I^+(\supp v^-)\sm\supp v^-$), 
  which can always be done since $\supp v^+$ (resp. $\supp v^-$) is past (resp. future) 
  compact -- we omit the remaining details.
\end{proof}

In the same way one derives the $k$-th order resolvent formula \eqref{a1e12} from the
first-order case \eqref{a1e11}, the same can be done from \eqref{s3e23}--\eqref{s3e26}.

Let $g'$ be a Lorentzian metric on $\M$, \emph{a priori} unrelated to either the 
space-time metric $g$ or the metric $\hat{g}$ associated to the principal symbol of a 
normally hyperbolic linear partial differential operator $P$. Recall now the definition 
of the Hodge star operator $*_{\!g'}$ acting on $d$-forms on $\M$: Given $\omega\in
\Gamma^\infty(\wedge^d T^*\!\!\M\To\M)$, we define $*_{\!g'}\omega\in\C^\infty(\M)$ as the 
unique smooth function on $\M$ such that
\begin{equation}\label{s3e27}
  \omega=(*_{\!g'}\omega)\ud\mu_{g'}\ .
\end{equation}
Conversely, if $\vec{\varphi}\in\C^\infty(\M)$, we have that
\begin{equation}\label{s3e28}
  \vec{\varphi}=*_{\!g'}(\vec{\varphi}\ud\mu_{g'})\ .
\end{equation}
The following result follows immediately from Theorem \ref{s3t2}.

\begin{lemma}\label{s3l6}
  Let $g'$ be a Lorentzian metric on $\M$ and $P:\C^\infty(\M)\To\C^\infty(\M)$ be 
  a linear partial differential operator. Then $P$ is formally self-adjoint with
  respect to the $L^2$ scalar product associated to $\ud\mu_{g'}$ if and only if 
  the map $\C^\infty(\M)\ni\vec{\varphi}\mapsto(P\vec{\varphi})\ud\mu_{g'}\in\Gamma^\infty
  (\wedge^d T^*\!\!\M\To\M)$ has a symmetric distribution kernel. If either fact
  holds (hence both), the distribution kernel of $\Delta^\Av_P\circ *_{\!g'}$ is 
  the adjoint of the distribution kernel of $\Delta^\Rt_P\circ *_{\!g'}$.\qed
\end{lemma}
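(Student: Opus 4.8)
The plan is to prove the two assertions separately, the first being mere bookkeeping with definitions and the second the actual content of the lemma. For the first assertion: since $P$ is a differential operator, the map $\vec{\varphi}\mapsto(P\vec{\varphi})\ud\mu_{g'}$ is continuous from $\C^\infty_c(\M)$ into $\Gamma^\infty(\wedge^d T^*\!\!\M\To\M)\subset\D'(\wedge^d T^*\!\!\M\To\M)$, so it has a distribution kernel $k_P$ on $\M\times\M$, supported on the diagonal (it simply encodes the coefficients of $P$ together with the volume form), which I normalise against $\ud\mu_{g'}$ in each slot. Pairing the output density against a test function $\psi\in\C^\infty_c(\M)$ returns $\Spr{k_P,\psi\otimes\vec{\varphi}}=\int_\M\psi\,(P\vec{\varphi})\,\ud\mu_{g'}$, so that $k_P$ is symmetric if and only if $\int_\M\psi\,(P\vec{\varphi})\,\ud\mu_{g'}=\int_\M\vec{\varphi}\,(P\psi)\,\ud\mu_{g'}$ for all $\vec{\varphi},\psi\in\C^\infty_c(\M)$, i.e. if and only if $P$ is formally self-adjoint with respect to the $L^2$ product attached to $\ud\mu_{g'}$.

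For the second assertion, assume $P$ is formally self-adjoint with respect to $\ud\mu_{g'}$. Unwinding the density conventions of the previous paragraph, the claim that the kernel of $\Delta^\Av_P\circ *_{g'}$ is the transpose of the kernel of $\Delta^\Rt_P\circ *_{g'}$ is equivalent to the identity
\[
\int_\M(\Delta^\Av_P f)\,h\,\ud\mu_{g'}=\int_\M f\,(\Delta^\Rt_P h)\,\ud\mu_{g'}\qquad\text{for all }f,h\in\C^\infty_c(\M),
\]
and this is what I would establish. Setting $u=\Delta^\Rt_P h$ and $v=\Delta^\Av_P f$, Corollary \ref{s3c2} gives $Pu=h$, $Pv=f$, $\supp u\subset J^+(\supp h,g)$ and $\supp v\subset J^-(\supp f,g)$. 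The decisive geometric input is that, $(\M,g)$ being globally hyperbolic and $\supp f,\supp h$ compact, the set $\supp u\cap\supp v\subset J^+(\supp h,g)\cap J^-(\supp f,g)$ is compact; hence both $v\,(Pu)=v\,h$ and $u\,(Pv)=u\,f$ are compactly supported and the two integrals above converge.

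Next I would choose $\chi\in\C^\infty_c(\M)$ with $\chi\equiv1$ on an open neighbourhood $N$ of $\supp u\cap\supp v$. Since $P$ is local, $Pu=\chi\,P(\chi u)$ on $N$, while $Pu-\chi\,P(\chi u)$ is supported in $\supp u$; thus $v\,\bigl(Pu-\chi\,P(\chi u)\bigr)$ is supported in $(\supp u\cap\supp v)\sm N=\varnothing$, so $\int_\M v\,(Pu)\,\ud\mu_{g'}=\int_\M(\chi v)\,P(\chi u)\,\ud\mu_{g'}$, and symmetrically $\int_\M u\,(Pv)\,\ud\mu_{g'}=\int_\M(\chi u)\,P(\chi v)\,\ud\mu_{g'}$. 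As $\chi u,\chi v\in\C^\infty_c(\M)$, formal self-adjointness of $P$ equates the two right-hand sides, giving $\int_\M v\,h\,\ud\mu_{g'}=\int_\M u\,f\,\ud\mu_{g'}$, which is exactly the displayed identity. (Equivalently, one could observe that the transpose of $\Delta^\Rt_P\circ *_{g'}$ solves $Pw=v$ with the past/future support properties characterising $\Delta^\Av_P\circ *_{g'}$, and conclude from the uniqueness clause of Corollary \ref{s3c2}.) The only genuinely non-formal point in the whole argument is the compactness of $\supp u\cap\supp v$ — this is where global hyperbolicity and the support-propagation estimates of Theorem \ref{s3t2}/Corollary \ref{s3c2} are indispensable, for $\Delta^\Rt_P h$ and $\Delta^\Av_P f$ are themselves not compactly supported; everything else is routine manipulation of densities and a standard cut-off.
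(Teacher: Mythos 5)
Your proof is correct. The paper itself offers no argument here: the lemma is stated without a proof, preceded only by the remark that it ``follows immediately from Theorem \ref{s3t2}'', so what you have done is supply a proof the authors chose to omit. Your Green's-identity argument --- localizing by a cut-off $\chi\equiv 1$ on a neighbourhood of $\supp u\cap\supp v$ and then invoking formal self-adjointness of $P$ on the compactly supported functions $\chi u,\chi v$ --- is one natural route; the characterization you mention parenthetically (the transpose of $\Delta^\Rt_P\circ *_{g'}$ has the support properties which, by the uniqueness clause of Corollary \ref{s3c2}, single out $\Delta^\Av_P\circ *_{g'}$) is another, and is probably closer to what the authors meant by ``immediately''.

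One small misattribution worth fixing: the compact support of $v\,(Pu)=v\,h$ and $u\,(Pv)=u\,f$ follows simply from the compactness of $\supp h$ and $\supp f$, not from the compactness of $\supp u\cap\supp v$. Where the compactness of $\supp u\cap\supp v$ --- and hence global hyperbolicity, via compactness of $J^+(\supp h,g)\cap J^-(\supp f,g)$ for compact $\supp h,\supp f$ --- is genuinely needed is in the next step, namely the existence of a cut-off $\chi\in\C^\infty_c(\M)$ that is identically $1$ on a neighbourhood of $\supp u\cap\supp v$; with this relocation, your identification of that compactness as the only non-formal ingredient is exactly right.
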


The situation we have in mind is, of course, when $P\vec{\varphi}=*_{\!g'}E'(\Li)
[\varphi_0]\vec{\varphi}$, where $\Li$ is a \emph{real-valued}, microlocal generalized 
Lagrangian of first order on $\UU\subset\C^\infty(\M)$ open in the compact-open topology. 
Generally, given a microlocal generalized Lagrangian $\Li$ of order $r$ on $\UU$, 
$E(\Li)$ is a \emph{quasi-linear} partial differential operator, that is, 
$E(\Li)[\varphi]$ is linear in the highest order derivatives of $\varphi$. 
Therefore, we say that the partial differential operator of second order 
$E(\Li)$ is \emph{normally hyperbolic} on $\UU$ if, for all $\varphi_0\in\UU$, 
$P=*_{\!g'}E'(\Li)[\varphi_0]$ is normally hyperbolic for some (hence any) 
Lorentzian metric $g'$ on $\M$. In this case, we denote the metric associated 
to the principal symbol of $P$ defined as above by $\hat{g}_\Li=\hat{g}_\Li[\varphi_0]$, 
and write
\begin{align}
  K^{\Sigma,j}_\Li[\varphi_0] &\doteq K^{\Sigma,j}_P\quad(j=0,1)\ ,\label{s3e29}\\
  \Delta^\Sigma_\Li[\varphi_0] &\doteq \Delta^\Sigma_P\circ *_{\!g'}\ ,\label{s3e30}\\
  \Delta^\Rt_\Li[\varphi_0] &\doteq \Delta^\Rt_P\circ *_{\!g'}\ ,\label{s3e31}\\
  \Delta^\Av_\Li[\varphi_0] &\doteq \Delta^\Av_P\circ *_{\!g'}\ ,\label{s3e32}\\
  \Delta_\Li[\varphi_0] &\doteq \Delta^\Rt_\Li[\varphi_0]-\Delta^\Av_\Li[\varphi_0]\ .\label{s3e33}
\end{align}
We remark that different choices of $g'$ affect $\hat{g}_\Li$ only by a
$\varphi_0$-independent conformal factor -- in particular, the causal structure
of $\hat{g}_\Li$ is independent of $g'$. For future convenience, we summarize the 
estimates on the wave front sets of the distribution kernels of the linear 
operators \eqref{s3e29}--\eqref{s3e32} derived from Theorem \ref{s3t2} and 
Corollary \ref{s3c2}. To wit, if $\gamma:[0,1]\To\M$ is a null geodesic segment 
with respect to $\hat{g}_\Li[\varphi]$ and 
\[
E^{\hat{g}_\Li[\varphi]}_\gamma=\{(\gamma(0),\hat{g}_\Li[\varphi]^\flat(\dot\gamma(0))),\,(\gamma(1),
\hat{g}_\Li[\varphi]^\flat(\dot\gamma(1)))\}
\]
is the set of endpoints of the corresponding bicharacteristic strip, then
\begin{align}
  \WF(K^{\Sigma,j}_\Li[\varphi]) &\subset\{(x_0,y;\xi_0,\eta)\in T^*(\Sigma\times\M)\ |\ \exists
                                   \gamma:[0,1]\To\M\text{ null geodesic}\label{s3e34}\\ 
                                 &\phantom{\subset\{}\text{such that }E^{\hat{g}_\Li[\varphi]}_\gamma
                                   =\{(x_0,\xi_0),(y,-\eta)\}\},\nonumber\\
  \WF(\Delta^\Sigma_\Li[\varphi]) &\subset\{(x,y;\xi,\eta)\in T^*(\M\times\M)\ |\ x=y,\,\xi=\eta
                                    \text{ or }\exists\gamma:[0,1]\To\M\text{ null geodesic}\label{s3e35}\\ 
                                 &\phantom{\subset\{}\text{such that either }x\leq_gy\leq_g\Sigma
                                   \text{ or }\Sigma\leq_gy\leq_g x\text{ and }E^{\hat{g}_\Li[\varphi]}_\gamma
                                   =\{(x,\xi),(y,-\eta)\}\}\ ,\nonumber\\
  \WF(\Delta^\Rt_\Li[\varphi]) &\subset\{(x,y;\xi,\eta)\in T^*(\M\times\M)\ |\ x=y,\,\xi=\eta
                                 \text{ or }\exists\gamma:[0,1]\To\M\label{s3e36}\\ 
                                 &\phantom{\subset\{}\text{ null geodesic such that }x\geq_g y
                                   \text{ and }E^{\hat{g}_\Li[\varphi]}_\gamma=\{(x,\xi),(y,-\eta)\}\}\ ,\nonumber\\
  \WF(\Delta^\Av_\Li[\varphi]) &\subset\{(x,y;\xi,\eta)\in T^*(\M\times\M)\ |\ x=y,\,\xi=\eta
                                 \text{ or }\exists\gamma:[0,1]\To\M\label{s3e37}\\ 
                                 &\phantom{\subset\{}\text{ null geodesic such that }x\leq_g y
                                   \text{ and }E^{\hat{g}_\Li[\varphi]}_\gamma=\{(x,\xi),(y,-\eta)\}\}\ ,\nonumber\\
  \WF(\Delta_\Li[\varphi]) &\subset\{(x,y;\xi,\eta)\in T^*(\M\times\M)\ |\ \exists\gamma:[0,1]\To\M
                             \text{ null geodesic}\label{s3e38}\\ 
                                 & \phantom{=\{}\text{such that }
                                   E^{\hat{g}_\Li[\varphi]}_\gamma=\{(x,\xi),(y,-\eta)\}\}\ ,\nonumber
\end{align}
where we identify each of the propagators above with the corresponding distribution kernels.
We stress once more that, due to the identity $E'(\Li)[\varphi]\Delta_\Li[\varphi]=0$, 
$\WF(\Delta_\Li[\varphi])$ has only pairs of \emph{null} covectors, even over the diagonal 
$\Delta_2(\M)$ of $\M^2$. This is no longer the case for $\WF(\Delta^\Sigma_\Li[\varphi])$, 
$\WF(\Delta^\Rt_\Li[\varphi])$ or $\WF(\Delta^\Av_\Li[\varphi])$, which may have conormal 
covectors over $\Delta_2(\M)$ which consist of pairs of covectors of arbitrary causal character.

\begin{remark}\label{s3r2} 
  Let us display a sufficiently nontrivial example of a microlocal generalized Lagrangian 
  with normally hyperbolic Euler-Lagrange operator. For instance,
  \begin{equation}\label{s3e39}
    \Li(f)(\varphi)=-\frac{1}{2}\int_\M f\left[g^{-1}(\ud\varphi,\ud\varphi)
      +\frac{\epsilon}{2}(1+\varphi^2)g^{-1}(\ud\varphi,\ud\varphi)^2\right]\ud\mu_g\ ,
    \quad\epsilon\geq 0\ .
  \end{equation}
  The Euler-Lagrange operator of $\Li$ is given by
  \begin{equation}\label{s3e40}
    \begin{split}
      E(\Li)[\varphi] &=\left[(1+\epsilon(1+\varphi^2)g^{-1}(\ud\varphi,\ud\varphi))
        \square_g\varphi+\epsilon(2\nabla^2\varphi(g^\sharp(\ud\varphi),g^\sharp
        (\ud\varphi))-\frac{1}{2}g^{-1}(\ud\varphi,\ud\varphi)\varphi)\right]\ud\mu_g\ ,\\ 
      \square_g\varphi &=g^{-1}(\nabla^2\varphi)\ ,
    \end{split}
  \end{equation}
  whose linearization around $\varphi_0$ is given by
  \begin{equation}\label{s3e41}
    \begin{split}
      E'(\Li)[\varphi_0]\vec{\varphi} &=\left[(1+\epsilon(1+\varphi_0^2)g^{-1}
        (\ud\varphi_0,\ud\varphi_0))\square_g\vec{\varphi}+2\epsilon\nabla^2
        \vec{\varphi}(g^\sharp(\ud\varphi_0),g^\sharp(\ud\varphi_0))\right.\\
      &+2\epsilon\left[\left((1+\varphi_0^2)\square_g\varphi_0-\frac{1}{2}
          \varphi_0\right)g^{-1}(\ud\varphi_0,\ud\vec{\varphi})+2\nabla^2
        \varphi_0(g^\sharp(\ud\varphi_0),g^\sharp(\ud\vec{\varphi}))\right.\\
      &\left.+\epsilon g^{-1}(\ud\varphi_0,\ud\varphi_0)\left(2\square_g\varphi_0
          -\frac{1}{2}\right)\vec{\varphi}\right]\ud\mu_g\\
      &=\left[\hat{g}^{-1}_\Li[\varphi_0](\nabla^2\vec{\varphi})+\nabla_A
        \vec{\varphi}+B\vec{\varphi}\right]\ud\mu_g\ ,
    \end{split}
  \end{equation}
  where $A(p)=A(g(p),\varphi_0(p),\nabla\varphi_0(p),\nabla^2\varphi_0(p))$ and
  $B=B(g(p),\varphi_0(p),\nabla\varphi_0(p),\nabla^2\varphi_0(p))$ for all $p\in\M$.
  The principal symbol of $P=*_g E'(\Li)[\varphi_0]$ reads
  \begin{equation}\label{s3e42}
    \begin{split}
      \hat{g}^{-1}_\Li[\varphi_0](g^\flat(X_1),g^\flat(X_2)) &=
      (1+\epsilon(1+\varphi_0^2)g^{-1}(\ud\varphi_0,\ud\varphi_0))g(X_1,X_2)\\
      &+2\epsilon(\nabla_{X_1}\varphi_0)(\nabla_{X_2}\varphi_0)\ ,
    \end{split}
  \end{equation}
  whence we conclude that 
  \begin{equation}\label{s3e43}
    \begin{split}
      \hat{g}^{-1}_\Li[\varphi_0](g^\flat(X),g^\flat(X))>0 &\Iff 
      (1+\epsilon(1+\varphi_0^2)g^{-1}(\ud\varphi_0,\ud\varphi_0))
      g(X,X)>-2\epsilon(\nabla_X\varphi_0)^2\ ,\\
      \hat{g}^{-1}_\Li[\varphi_0](g^\flat(X),g^\flat(X))=0 &\Iff 
      (1+\epsilon(1+\varphi_0^2)g^{-1}(\ud\varphi_0,\ud\varphi_0))
      g(X,X)=-2\epsilon(\nabla_X\varphi_0)^2\ ,\\
      \hat{g}^{-1}_\Li[\varphi_0](g^\flat(X),g^\flat(X))<0 &\Iff 
      (1+\epsilon(1+\varphi_0^2)g^{-1}(\ud\varphi_0,\ud\varphi_0))
      g(X,X)<-2\epsilon(\nabla_X\varphi_0)^2\ .
    \end{split}
  \end{equation}
  We consider the following three possibilities:
  \begin{align}
    g^{-1}(\ud\varphi_0,\ud\varphi_0) &>-\frac{1}{2\epsilon(1+\varphi_0^2)}\ ,\label{s3e44}\\
    g^{-1}(\ud\varphi_0,\ud\varphi_0) &=-\frac{1}{2\epsilon(1+\varphi_0^2)}\ ,\label{s3e45}\\
    g^{-1}(\ud\varphi_0,\ud\varphi_0) &<-\frac{1}{2\epsilon(1+\varphi_0^2)}\ .\label{s3e46}
  \end{align}
  Inequalities \eqref{s3e44} and \eqref{s3e46} define open subsets of $\C^\infty(\M)$ 
  in the Whitney topology. In case \eqref{s3e44} holds, we have that $\hat{g}^{-1}_\Li
  [\varphi_0](g^\flat(X),g^\flat(X))<0$ implies $g(X,X)<0$ and $\hat{g}^{-1}_\Li
  [\varphi_0](g^\flat(X),g^\flat(X))=0$ implies $g(X,X)\leq 0$, whereas $\hat{g}^{-1}_\Li
  [\varphi_0](g^\flat(X),g^\flat(X))>0$ does not constrain the causal character of $X$ 
  with respect to $g$. In case \eqref{s3e45} holds, we have that $\hat{g}^{-1}_\Li
  [\varphi_0](g^\flat(X),g^\flat(Y))=0$ for all tangent vectors $Y$ if $X$ satisfies 
  $\nabla_X\varphi_0=0$ (hence $\hat{g}^{-1}_\Li[\varphi_0]$ becomes degenerate); 
  moreover, $\hat{g}^{-1}_\Li[\varphi_0]$ cannot have any timelike covectors. In case 
  \eqref{s3e46} holds, we have that $\hat{g}^{-1}_\Li[\varphi_0](g^\flat(X),g^\flat(X))<0$ 
  implies $g(X,X)>0$ and $\hat{g}^{-1}_\Li[\varphi_0](g^\flat(X),g^\flat(X))=0$ implies 
  $g(X,X)\geq 0$, whereas $\hat{g}^{-1}_\Li[\varphi_0](g^\flat(X),g^\flat(X))$ $>0$ 
  does not constrain the causal character of $X$ with respect to $g$. To summarize,
  \begin{align}
    g^{-1}(\ud\varphi_0,\ud\varphi_0)>-\frac{1}{2\epsilon(1+\varphi_0^2)} 
    & \Then\hat{g}_\Li[\varphi_0]\lesssim g\ ,\label{s3e47}\\
    g^{-1}(\ud\varphi_0,\ud\varphi_0)=-\frac{1}{2\epsilon(1+\varphi_0^2)} 
    & \Then\hat{g}^{-1}_\Li[\varphi_0]\text{ degenerate}\ ,\label{s3e48}\\
    g^{-1}(\ud\varphi_0,\ud\varphi_0)<-\frac{1}{2\epsilon(1+\varphi_0^2)} 
    & \Then-\hat{g}_\Li[\varphi_0]\lesssim g\ .\label{s3e49}
  \end{align}
  In other words, crossing the boundary $g^{-1}(\ud\varphi_0,\ud\varphi_0)=
  -\frac{1}{2\epsilon(1+\varphi_0^2)}$ causes $\hat{g}_\Li[\varphi_0]$'s signature
  to change sign, partitioning $\C^\infty(\M)$ into two Whitney-open, disjoint 
  ``domains of hyperbolicity'' separated by the boundary $g^{-1}(\ud\varphi_0,\ud\varphi_0)
  =-\frac{1}{2\epsilon(1+\varphi_0^2)}$. The presence of this boundary is linked to 
  the lifespan of solutions of $E(\Li)[\varphi]=0$; indeed, the ``sharp continuation 
  principle'' of Majda (Theorem 2.2, pp. 31--32 in \cite{majda}) implies that, at least 
  when $(\M,g)$ is the Minkowski space-time, if a solution $\varphi$ to $E(\Li)[\varphi]=0$ 
  with given Cauchy data at $\Sigma=\tau^{-1}(0)$ blows up in $\C^\infty(\M)$ as $\tau(p)\To t^*>0$ 
  but the second-order jet prolongation of $\varphi$ is bounded in $K\cap\tau^{-1}([0,t^*))$ 
  for any compact subset $K\subset\M$, then we must have that $g^{-1}(\ud\varphi(p),\ud\varphi(p))
  +\frac{1}{2\epsilon(1+\varphi^2(p))}\gotoas{0}{\tau(p)}{t^*}$, where $\tau$ is a Cauchy 
  time function on $(\M,g)$. We stress that it is not hard to provide examples of $\varphi_0$ 
  which fall into either \eqref{s3e47} or \eqref{s3e49} -- for \eqref{s3e47} to hold, it 
  suffices to choose $\varphi_0$ with everywhere spacelike gradient; as for \eqref{s3e49}, 
  any Cauchy time function $\varphi_0=\tau$ on $(\M,g)$ satisfying $g^{-1}(\ud\tau,\ud\tau)
  <-(2\epsilon)^{-1}$ does the trick, and any globally hyperbolic space-time admits such Cauchy 
  time functions \cite{mulsan}. On the other hand, this is a typical ``large data'' phenomenon, 
  specially if $\epsilon$ is small. Since the nonlinear terms of $E(\Li)[\varphi]$ vanish to 
  third order at $\varphi=0$, one can show, at least when $(\M,g)$ is the Minkowski space-time, 
  that $E(\Li)[\varphi]=0$ has unique, global smooth solutions for sufficiently small Cauchy 
  data \cite{horm3,sogge}.
\end{remark}

Motivated by formula \eqref{s2e17} in Remark \ref{s2r4}, we write for each $\vec{\psi}_j\in
\C^\infty(\Sigma)$, $j=0,1$, $\omega\in\Gamma^\infty(\wedge^d T^*\!\!\M\To\M)$, $\omega^\pm\in
\Gamma^\infty_{\pm}(\wedge^d T^*\!\!\M\To\M,g)$
\begin{align}
  D^kK^{\Sigma,j}_\Li[\varphi_0](\vec{\varphi}_1,\ldots,\vec{\varphi}_k)\vec{\psi}_j 
  &\doteq\frac{\dd^k}{\dd\lambda_1\cdots\dd\lambda_k}\Restr{\lambda_1=\cdots=\lambda_k=0}
    K^{\Sigma,j}_\Li\left[\varphi_0+\sum^k_{l=1}\lambda_l\vec{\varphi}_l\right]\vec{\psi}_j\ ,\label{s3e50}\\
  D^k\Delta^\Sigma_\Li[\varphi_0](\vec{\varphi}_1,\ldots,\vec{\varphi}_k)\omega 
  &\doteq\frac{\dd^k}{\dd\lambda_1\cdots\dd\lambda_k}\Restr{\lambda_1=\cdots=\lambda_k=0}
    \Delta^\Sigma_\Li\left[\varphi_0+\sum^k_{l=1}\lambda_l\vec{\varphi}_l\right]\omega\ ,\label{s3e51}\\
  D^k\Delta^\Rt_\Li[\varphi_0](\vec{\varphi}_1,\ldots,\vec{\varphi}_k)\omega^+ 
  &\doteq\frac{\dd^k}{\dd\lambda_1\cdots\dd\lambda_k}\Restr{\lambda_1=\cdots=\lambda_k=0}
    \Delta^\Rt_\Li\left[\varphi_0+\sum^k_{l=1}\lambda_l\vec{\varphi}_l\right]\omega^+\ ,\label{s3e52}\\
  D^k\Delta^\Av_\Li[\varphi_0](\vec{\varphi}_1,\ldots,\vec{\varphi}_k)\omega^-
  &\doteq\frac{\dd^k}{\dd\lambda_1\cdots\dd\lambda_k}\Restr{\lambda_1=\cdots=\lambda_k=0}
    \Delta^\Av_\Li\left[\varphi_0+\sum^k_{l=1}\lambda_l\vec{\varphi}_l\right]\omega^-\ .\label{s3e53}
\end{align}
Combining Proposition \ref{s3t1} with the chain rule \eqref{a1e3} yields for each 
$\vec{\psi}_j\in\C^\infty(\Sigma)$, $j=0,1$, $\omega\in\Gamma^\infty(\wedge^d 
T^*\!\!\M\To\M)$, $\omega^\pm\in\Gamma^\infty_{\pm}(\wedge^d T^*\!\!\M\To\M,g)$ that
\begin{align}
  DK^{\Sigma,j}_\Li[\varphi_0](\vec{\varphi})\vec{\psi}_j 
  &=-\Delta^\Sigma_\Li[\varphi_0]D^2E(\Li)[\varphi_0](\vec{\varphi})K^{\Sigma,j}_\Li\vec{\psi}_j\ ,\label{s3e54}\\
  D\Delta^\Sigma_\Li[\varphi_0](\vec{\varphi})\omega 
  &=-\Delta^\Sigma_\Li[\varphi_0]D^2E(\Li)[\varphi_0](\vec{\varphi})\Delta^\Sigma_\Li\omega\ ,\label{s3e55}\\
  D\Delta^\Rt_\Li[\varphi_0](\vec{\varphi})\omega^+ 
  &=-\Delta^\Rt_\Li[\varphi_0]D^2E(\Li)[\varphi_0](\vec{\varphi})\Delta^\Rt_\Li\omega^+\ ,\label{s3e56}\\
  D\Delta^\Av_\Li[\varphi_0](\vec{\varphi})\omega^- 
  &=-\Delta^\Av_\Li[\varphi_0]D^2E(\Li)[\varphi_0](\vec{\varphi})\Delta^\Av_\Li\omega^-\ ,\label{s3e57}
\end{align}
whence it follows from the same reasoning leading from the first-order resolvent
formula \eqref{a1e11} to the $k$-th order resolvent formula \eqref{a1e12} that
\begin{align}
  D^k&K^{\Sigma,j}_\Li[\varphi_0](\vec{\varphi}_1,\ldots,\vec{\varphi}_k)\vec{\psi}_j\label{s3e58}\\
     &=\sum^k_{l=1}(-1)^l\sum_{\{I_1,\ldots,I_l\}\in P_k}\sum_{\sigma\in S_l}\left(\prod^l_{j=1}
       \Delta^\Sigma_\Li[\varphi_0]D^{|I_{\sigma(j)}|+1}E(\Li)[\varphi_0](\otimes_{i\in I_{\sigma(j)}}
       \vec{\varphi}_i)\right)K^{\Sigma,j}_\Li[\varphi_0]\vec{\psi}_j\ ,\nonumber\\
  D^k&\Delta^\Sigma_\Li[\varphi_0](\vec{\varphi}_1,\ldots,\vec{\varphi}_k)\omega\label{s3e59}\\
     &=\sum^k_{l=1}(-1)^l\sum_{\{I_1,\ldots,I_l\}\in P_k}\sum_{\sigma\in S_l}\left(\prod^l_{j=1}
       \Delta^\Sigma_\Li[\varphi_0]D^{|I_{\sigma(j)}|+1}E(\Li)[\varphi_0](\otimes_{i\in I_{\sigma(j)}}
       \vec{\varphi}_i)\right)\Delta^\Sigma_\Li[\varphi_0]\omega\ ,\nonumber\\
  D^k&\Delta^\Rt_\Li[\varphi_0](\vec{\varphi}_1,\ldots,\vec{\varphi}_k)\omega^+\label{s3e60}\\
     &=\sum^k_{l=1}(-1)^l\sum_{\{I_1,\ldots,I_l\}\in P_k}\sum_{\sigma\in S_l}\left(\prod^l_{j=1}
       \Delta^\Rt_\Li[\varphi_0]D^{|I_{\sigma(j)}|+1}E(\Li)[\varphi_0](\otimes_{i\in I_{\sigma(j)}}
       \vec{\varphi}_i)\right)\Delta^\Rt_\Li[\varphi_0]\omega^+\ ,\nonumber\\
  D^k&\Delta^\Av_\Li[\varphi_0](\vec{\varphi}_1,\ldots,\vec{\varphi}_k)\omega^-\label{s3e61}\\
     &=\sum^k_{l=1}(-1)^l\sum_{\{I_1,\ldots,I_l\}\in P_k}\sum_{\sigma\in S_l}\left(\prod^l_{j=1}
       \Delta^\Av_\Li[\varphi_0]D^{|I_{\sigma(j)}|+1}E(\Li)[\varphi_0](\otimes_{i\in I_{\sigma(j)}}
       \vec{\varphi}_i)\right)\Delta^\Av_\Li[\varphi_0]\omega^-\ .\nonumber
\end{align}
and therefore
\begin{equation}\label{s3e62}
  \begin{split}
    D^k K^{\Sigma,j}_\Li &:\UU\times(\C^\infty(\M))^k\times\D(\Sigma)\To\D'_\Sigma(\M)\quad  j=0,1\ ,\\ 
    D^k\Delta^\Sigma_\Li &:\UU\times(\C^\infty(\M))^k\times\D'_\Sigma(\wedge^d T^*\!\!\M\To\M)\To\D'_\Sigma(\M)\ ,\\ 
    D^k\Delta^\Rt_\Li &:\UU\times(\C^\infty(\M))^k\times\D'_{+}(\wedge^d T^*\!\!\M\To\M,g)\To\D'_\Sigma(\M)\text{ and}\\
    D^k\Delta^\Av_\Li &:\UU\times(\C^\infty(\M))^k\times\D'_{-}(\wedge^d T^*\!\!\M\To\M,g)\To\D'_\Sigma(\M) 
  \end{split}
\end{equation}
exist and are jointly continuous for all $k\geq 1$, where
\[
\begin{split}
  \D'_\Sigma(\wedge^d T^*\!\!\M\To\M)&\doteq\{u\in\D'(\wedge^d T^*\!\!\M\To\M)\ |\ 
  \WF(u)\cap N^*\Sigma=\varnothing\}\ ,\\
  \D'_{\pm}(\wedge^d T^*\!\!\M\To\M,g)&\doteq\{u\in\D'(\wedge^d T^*\!\!\M\To\M)\ |\ 
  \exists K\subset\M\text{ compact}\\ &\phantom{\doteq\{}\text{such that }J^\mp(K)
  \cap\supp u\text{ is compact}\}\ .
\end{split}
\]

\begin{definition}\label{s3d5}
  Let $\UU\subset\C^\infty(\M)$ be open in the compact-open topology, and $F,G\in\Fun_{\mu\loc}
  (\M,\UU)$. The \emph{retarded} and \emph{advanced products} $\Ret_\Li(F,G)$, $\Adv_\Li(F,G)$ 
  with respect to $\Li$ are functionals respectively given by
  \begin{equation}\label{s3e63}
    \Ret_\Li(F,G)(\varphi) \doteq\Spr{F^{(1)}[\varphi],\Delta^\Rt_\Li[\varphi] G^{(1)}[\varphi]}
  \end{equation} 
  and 
  \begin{equation}\label{s3e64}
    \begin{split}
      \Adv_\Li(F,G)(\varphi) &\doteq\Spr{F^{(1)}[\varphi],\Delta^\Av_\Li[\varphi] G^{(1)}[\varphi]}\\
      &=\Ret_\Li(G,F)(\varphi) \ .
    \end{split}
  \end{equation}
  Their difference 
  \begin{equation}\label{s3e65}
    \{F,G\}_\Li\doteq\Ret_\Li(F,G)-\Adv_\Li(F,G)=\Ret_\Li(F,G)-\Ret_\Li(G,F)
  \end{equation} 
  is called the \emph{Peierls bracket} of $F$ with $G$ with respect to $\Li$.
\end{definition}

By Lemma \ref{s3l6}, the Peierls bracket is antisymmetric in its entries, becoming
an obvious candidate for a Poisson bracket. Let us prove some basic properties of 
$\Ret_\Li(\cdot,\cdot)$, $\Adv_\Li(\cdot,\cdot)$ and $\{\cdot,\cdot\}_\Li$. For later 
convenience, given $\varnothing\neq K,L\subset\M$ we define
\begin{equation}\label{s3supps}
  \OO^\Rt_{K,L}=J^+(K,g)\cap J^-(L,g)\ ,\,\OO^\Av_{K,L}=\OO^\Rt_{L,K}\ ,\,\OO_{K,L}=\OO^\Rt_{K,L}
  \cup\OO^\Av_{K,L}\ .
\end{equation}
By global hyperbolicity of $(\M,g)$, we have that $\OO^\Rt_{K,L}$, $\OO^\Av_{K,L}$ and $\OO_{K,L}$
are compact if $K,L$ also are.

\begin{proposition}\label{s3p2}
  Let $\UU,F,G$ as in Definition \ref{s3d5}. Then $\Ret_\Li(F,G)$, $\Adv_\Li(F,G)$ and 
  $\{F,G\}_\Li$ are smooth and satisfy the support properties
  \begin{align}
    \supp\Ret_\Li(F,G) &\subset\OO^\Rt_{\supp F,\supp G}\ ,\label{s3e66}\\
    \supp\Adv_\Li(F,G) &\subset\OO^\Av_{\supp F,\supp G}\ ,\label{s3e67}\\
    \supp\{F,G\}_\Li &\subset\OO_{\supp F,\supp G}\ .\label{s3e68}
  \end{align}
\end{proposition}
\begin{proof}
  Notice that $\Delta^\Rt_\Li[\varphi]$ and $\Delta^\Av_\Li[\varphi]$ depend on the 
  background field configuration $\varphi$ only so far as the coefficients of 
  $E'(\Li)[\varphi]$ depend on $\varphi$. Therefore, by part (b) of Theorem \ref{s3t2},
  for all $\omega\in\Gamma^\infty_c(\wedge^d T^*\!\!\M\To\M)$ any modification of $\varphi$ 
  outside $J^+(\supp f,g)$ (resp. $J^-(\supp f,g)$) leaves $\Delta^\Rt_\Li[\varphi]f$ 
  (resp. $\Delta^\Av_\Li[\varphi]f$) unaltered. Since $\Delta^\Rt_\Li[\varphi]$ is
  the formal adjoint of $\Delta^\Av_\Li[\varphi]$, the above reasoning together
  with part (b) of Theorem \ref{s3t2} imply that $\Delta^\Rt_\Li[\varphi]$ and 
  $\Delta^\Av_\Li[\varphi]$ have the desired support properties. Now we are only 
  left with proving that $\Ret_\Li(F,G)$ and $\Adv_\Li(F,G)$ are smooth functionals, 
  since this implies the corresponding result for $\{F,G\}_\Li$. This, however, follows 
  from formulae \eqref{s3e60} and \eqref{s3e61} together with the trilinear Leibniz 
  rule \eqref{a1e8} (i.e. with $l=3$ therein), which give us that
  \begin{equation}\label{s3e69}
    \begin{split}
      D^k\Ret_\Li(F,G)[\varphi](\vec{\varphi}_1,\ldots,\vec{\varphi}_k) &=\sum_{\{J_1,J_2,J_3\}\subset P_k}
      F^{(|J_1|+1)}[\varphi]((\otimes_{j_1\in J_1}\vec{\varphi}_{j_1})\\ &\otimes D^{|J_2|}\Delta^\Rt_\Li
      [\varphi]((\otimes_{j_2\in J_2}\vec{\varphi}_{j_2})\otimes G^{(|J_3|+1)}[\varphi]
      (\otimes_{j_3\in J_3}\vec{\varphi}_{j_3})))\ ,\\
      D^k\Adv_\Li(F,G)[\varphi](\vec{\varphi}_1,\ldots,\vec{\varphi}_k) &=\sum_{\{J_1,J_2,J_3\}\subset P_k}
      F^{(|J_1|+1)}[\varphi]((\otimes_{j_1\in J_1}\vec{\varphi}_{j_1})\\ &\otimes D^{|J_2|}\Delta^\Av_\Li
      [\varphi]((\otimes_{j_2\in J_2}\vec{\varphi}_{j_2})\otimes G^{(|J_3|+1)}[\varphi]
      (\otimes_{j_3\in J_3}\vec{\varphi}_{j_3})))\ .
    \end{split}
  \end{equation}
  where $P_k$ is set of all partitions of the set $\{1,\ldots,k\}$. We notice that 
  due to \eqref{s3e60} and \eqref{s3e61}, each term in the right-hand side of \eqref{s3e69} before
  smearing with $\vec{\varphi}_1,\ldots,\vec{\varphi}_k$ can be seen as a string of 
  compositions of:
  \begin{enumerate}
  \item[(i)] $l+1$ propagators of the form $\Delta^\Rt_\Li[\varphi]$ (for the retarded
    product) or $\Delta^\Av_\Li[\varphi]$ (for the advanced product); and 
  \item[(ii)] $l+2$ $k_i$-linear \emph{differential} operators, $i=0,\ldots,l+1$ whose 
    distribution kernels are either of the form $F^{(k_0+1)}[\varphi]$, $D^{k_i+2}\Li(1)
    [\varphi]$ for $1\leq i\leq l$ or $G^{(k_{l+1}+1)}[\varphi]$,
  \end{enumerate}
  for each $l=1,\ldots,k$ with $k_0+\cdots+k_{l+1}=k$, followed by an integration over 
  $\M$ = smearing with the test function $f(x)\equiv 1$. The pairing of variables in 
  such a composition for each term in the right-hand side of \eqref{s3e60} and \eqref{s3e61} is 
  of the following form:
  \begin{itemize}
  \item The first variable of the kernel of the first propagator pairs with the first 
    variable of $F^{(k_1+1)}[\varphi]$;
  \item The first variable of $D^{k_i+2}\Li(1)[\varphi]$ pairs with the second variable
    of the kernel of the $i$-th propagator;
  \item The second variable of $D^{k_i+2}\Li(1)[\varphi]$ pairs with the first variable
    of the kernel of the $(i+1)$-th propagator;
  \item The second variable of the kernel of the last propagator pairs with the first 
    variable of $G^{(k_{l+1}+1)}[\varphi]$.
  \end{itemize}
  It is clear that such a string of compositions is well defined. Finally, the 
  smearing with $\vec{\varphi}_1,\ldots,\vec{\varphi}_k$ is allowed since the 
  distribution obtained is compactly supported. The proof is complete.
\end{proof}

By \eqref{s3e38}, $\{F,G\}_\Li$ is actually defined for any pair of smooth functionals 
$F,G$ with compact space-time support such that $\WF(F^{(1)}[\varphi])$ and $\WF(G^{(1)}
[\varphi])$ do not contain any causal covectors with respect to $g$ for all $\varphi
\in\UU$, provided that $\hat{g}_\Li[\varphi]\lesssim g$ for all such $\varphi$. This 
motivates the following

\begin{definition}\label{s3d6}
  Let $(\M,g)$ be a globally hyperbolic space-time. Define for all $k\geq 1$ the open 
  subsets $\Upsilon_{k,g}\subset T^*\!\!\M^k\sm 0$ as follows:
  \begin{equation}\label{s3e70}
    \begin{split}
      \Upsilon_{k,g}\ =&\ \{(x_1,\ldots,x_k;\xi_1,\ldots,\xi_k)\in T^*\!\!\M^k\sm 0\ |\ \\
      & (\xi_1,\ldots,\xi_k)\not\in\ol{V}^k_{+,g}(x_1,\ldots,x_k)\cup\ol{V}^k_{-,g}(x_1,
      \ldots,x_k)\}\ ,\\ \ol{V}^k_{\pm,g}(x_1,\ldots,x_k)\ =&\ \prod^k_{j=1}\ol{V}_{\pm,g}(x_j)\ ,\\
      V_{\pm,g}(x)\ =&\ I^\pm(0,g^{-1}(x))\subset T^*_x\M\ ,\,x\in\M\ .
    \end{split}
  \end{equation}
  Let now $\UU\subset\C^\infty(\M)$ be open in the compact-open topology. We say that a
  smooth functional $F$ with compact space-time support is \emph{microcausal} with respect 
  to $g$ if $\WF(F^{(k)}[\varphi])\subset\Upsilon_{k,g}$ for all $\varphi\in\UU$, $k\geq 1$. 
  The space of all microcausal functionals in $\UU$ with respect to $g$ is denoted by 
  $\Fun((\M,g),\UU)$.
\end{definition}

We obviously have that $\Fun_0(\M,\UU)\subset\Fun((\M,g),\UU)$. A much more interesting 
inclusion is given by the following

\begin{proposition}\label{s3p3}
  Let $\UU\subset\C^\infty(\M)$ be open in the compact-open topology, $F\in\Fun_{\mu\loc}(\M,\UU)$. 
  Then $\WF(F^{(k)}[\varphi])\perp T\Delta_k(\M)$ for all $\varphi\in\UU$, $k\geq 2$. 
  In particular, $\Fun_{\mu\loc}(\M,\UU)\subset\Fun((\M,g),\UU)$.
\end{proposition}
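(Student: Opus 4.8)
The plan is to compute the wave front set of $F^{(k)}[\varphi]$ directly from the defining relation \eqref{s2e15} together with the locality condition \eqref{s2e20}, and then read off the stated orthogonality to the tangent space of the small diagonal. First I would note that since $F\in\Fun_{\mu\loc}(\M,\UU)\subset\Fun_{\loc}(\M,\UU)$, Proposition \ref{s2p1} gives us $\supp F^{(k)}[\varphi]\subset\Delta_k(\M)$ for all $k\geq 2$ and all $\varphi\in\UU$. Thus $F^{(k)}[\varphi]$ is a compactly supported distribution (density) on $\M^k$ concentrated on the small diagonal. The structure theorem for distributions supported on a submanifold then tells us that, locally near a point $(p,\ldots,p)\in\Delta_k(\M)$, the distribution $F^{(k)}[\varphi]$ is a finite sum of transverse derivatives of distributions pulled back from $\Delta_k(\M)\cong\M$; equivalently, after choosing adapted coordinates $(y,z)$ with $z$ transverse to the diagonal, $F^{(k)}[\varphi]=\sum_{|\alpha|\le N}u_\alpha(y)\,\dd_z^\alpha\delta(z)$ for smooth coefficient densities $u_\alpha$. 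The wave front set of any such distribution is contained in $N^*\Delta_k(\M)$, the conormal bundle of the small diagonal, which is precisely the annihilator of $T\Delta_k(\M)$; hence $\WF(F^{(k)}[\varphi])\perp T\Delta_k(\M)$.

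The main obstacle is justifying that the structure-theorem representation is available \emph{with the adapted-coordinate derivatives taken purely in the transverse directions} — that is, that no tangential derivatives along $\Delta_k(\M)$ are needed — and, more to the point, ensuring the argument only invokes $\supp F^{(k)}[\varphi]\subset\Delta_k(\M)$ plus the a priori fact that $F^{(k)}[\varphi]$ is a continuous multilinear functional of finite (local) order, which follows from Definition \ref{s2d3}. In fact one does not need the full transversality refinement: it suffices that any distribution supported on a submanifold $N\subset X$ has wave front set contained in $N^*N$, which is standard (see e.g. Hörmander). So the heart of the matter is to recall that general fact and apply it with $X=\M^k$, $N=\Delta_k(\M)$. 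I would phrase the proof so as to cite this directly, noting that $N^*\Delta_k(\M)=\{((p,\ldots,p),(\xi_1,\ldots,\xi_k)):\sum_{j=1}^k\xi_j=0\}$, and that a covector in this set annihilates every vector of the form $(X,\ldots,X)\in T_p\M\times\cdots\times T_p\M=T_{(p,\ldots,p)}\Delta_k(\M)$, since $\sum_j\xi_j(X)=(\sum_j\xi_j)(X)=0$.

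Finally I would deduce the inclusion $\Fun_{\mu\loc}(\M,\UU)\subset\Fun((\M,g),\UU)$. For $k=1$ this is immediate: microlocality forces $F^{(1)}[\varphi]$ to be a smooth compactly supported density by \eqref{s2e21}, so $\WF(F^{(1)}[\varphi])=\varnothing\subset\Upsilon_{1,g}$. For $k\geq 2$, the previous paragraph shows $\WF(F^{(k)}[\varphi])$ lies in $N^*\Delta_k(\M)$, so any $((x_1,\ldots,x_k),(\xi_1,\ldots,\xi_k))\in\WF(F^{(k)}[\varphi])$ has $x_1=\cdots=x_k=:x$ and $\sum_j\xi_j=0$. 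I then need to check $(\xi_1,\ldots,\xi_k)\notin\ol{V}^k_{+,g}(x,\ldots,x)\cup\ol{V}^k_{-,g}(x,\ldots,x)$. If all $\xi_j\in\ol{V}_{+,g}(x)$ then $\sum_j\xi_j\in\ol{V}_{+,g}(x)$, and since this cone is proper (contains no nonzero covector together with its negative) and the tuple is nonzero, the sum cannot vanish — contradiction; the past case is symmetric. Hence the wave front condition of Definition \ref{s3d6} holds for every $k$, and $F\in\Fun((\M,g),\UU)$. This last step is routine once the conormal containment is in hand; the only point requiring a word of care is the nonzero hypothesis $\WF\subset T^*\M^k\sm 0$, which guarantees at least one $\xi_j\neq 0$, so that $(\xi_1,\ldots,\xi_k)$ being in $\ol V^k_{\pm,g}$ would force a genuinely nonzero element of a proper cone to be the negative sum of elements of that same cone.
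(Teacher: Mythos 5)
Your proof has a genuine gap: the ``standard fact'' you invoke --- that any distribution supported on a submanifold $N\subset X$ has wave front set contained in $N^*N$ --- is false. A counterexample: take $X=\RR^2$, $N=\{y=0\}$, and $u=v\otimes\delta$ with $v\in\E'(\RR)$ singular. Then $\supp u\subset N$, but a direct computation (or Theorem 8.2.9 in H\"ormander) shows $\WF(u)\supset\{(x,0;\xi,0):(x,\xi)\in\WF(v)\}$, and these covectors are \emph{not} conormal to $N$. In the present setting one can realize this obstruction concretely: $F(\varphi)=\spr{u,\varphi^2}$ with $u\in\E'(\M)$ singular is smooth, has compact space-time support, and is \emph{local} (its second derivative kernel $2u(x)\delta(x,y)$ is supported on $\Delta_2(\M)$, and higher derivatives vanish), yet $\WF(F^{(2)}[\varphi])$ is not contained in $N^*\Delta_2(\M)$ and indeed is not orthogonal to $T\Delta_2(\M)$. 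Such an $F$ is local but not microlocal, precisely because $F^{(1)}[\varphi]=2u\varphi$ is singular --- and the proposition's conclusion genuinely fails for it.

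The consequence is that your plan to ``only invoke $\supp F^{(k)}[\varphi]\subset\Delta_k(\M)$'' cannot work: the smoothness of the coefficient densities $u_\alpha$ in the local decomposition $F^{(k)}[\varphi]=\sum_\alpha u_\alpha\,\partial_z^\alpha\delta(z)$ is essential, and that smoothness does \emph{not} follow from the support condition. It is exactly here that microlocality (the requirement $F^{(1)}[\varphi]\in\Gamma^\infty_c(\wedge^dT^*\!\!\M\To\M)$ from \eqref{s2e21}) enters. The paper's proof uses Proposition \ref{s2p2} in addition to Proposition \ref{s2p1}: the representation formula for microlocal functionals exhibits $F^{(k)}[\varphi]$ as the kernel of a $(k-1)$-linear, $(k-1)$-differential operator with \emph{smooth} coefficients, so that locally $F^{(k)}[\varphi]$ is a finite sum of smooth densities times derivatives of the Dirac kernel $\delta_k$, which is a pullback by the diagonal inclusion. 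Only then does H\"ormander's Theorem 8.2.4 (wave front set of a pullback) give the conormal containment. Once this is repaired, the rest of your argument --- the identification $N^*\Delta_k(\M)=\{\sum_j\xi_j=0\}$, and the proper-cone computation showing such a tuple cannot lie in $\ol V^k_{\pm,g}$ unless it vanishes --- is correct and matches the intended route for deducing the inclusion $\Fun_{\mu\loc}(\M,\UU)\subset\Fun((\M,g),\UU)$.
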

\begin{proof}
  Let $\varphi\in\UU$. For all $k\geq 2$, $F^{(k)}[\varphi]$ is the kernel of a $(k-1)$-linear,
  $(k-1)$-differential operator taking values in the vector bundle of $\CC$-valued $d$-forms, as 
  shown by Propositions \ref{s2p1} and \ref{s2p2}. That is, if $\vec{\varphi}_1,\ldots,
  \vec{\varphi}_{k-1}\in\C^\infty(\M)$, then $F^{(k)}[\varphi](\cdot,\vec{\varphi}_1,\ldots,
  \vec{\varphi}_{k-1})$ can be thought of locally in $\M$ as a sum of products of $d$-form-valued 
  linear partial differential operators acting on $\vec{\varphi}_1$ multiplied by a product of 
  derivatives of $\vec{\varphi}_j$ for all $2\leq j\leq k-1$. This means that $F^{(k)}[\varphi]$
  can be written locally as a finite sum of derivatives of the Dirac kernel $\delta_k$ in 
  $\M^k$, defined by
  \[
  \delta_k(\omega\otimes\vec{\varphi}_1\otimes\cdots\otimes\vec{\varphi}_{k-1})\doteq\int_\M
  \prod^{k-1}_{j=1}\vec{\varphi}_j\omega\ ,\quad\omega\in\Gamma^\infty_c(\wedge^dT^*\!\!\M\To\M)\ ,
  \]
  with each term of the sum evaluated at a possibly different, $\varphi$-dependent $\omega$. 
  Since $\delta_k$ is simply the pullback of the constant function $u\equiv 1$ (seen as a 
  distribution in $\M^k$) by the inclusion $\Delta_k(\M)\hookrightarrow\M^k$, the assertion 
  follows from Theorem 8.2.4, pp. 263--265 of \cite{horm1}.
\end{proof}

Proposition \ref{s3p3} justifies the term ``microlocal'' for designating the elements of 
$\Fun_{\mu\loc}(\M,\UU)$, establishing the link with the notion of local functional employed 
in \cite{brudf}. One may wonder whether locality in the sense of Definition \ref{s2d4} (i.e.
through formula \eqref{s2e20}) and microcausality together entail microlocality, as claimed
e.g. in Section 2 of \cite{dutfre2} (more precisely, see formula (2.8), pp. 1296). This happens
to be \emph{false}, as example \eqref{s2e25} shows -- there $F^{(k)}\equiv 0$ for $k>1$ but
$\WF(F^{(1)}[\varphi])$ is conormal to $\N$, hence it consists of spacelike covectors only.
This shows that such an $F$ is microcausal. However, we have seen that $F$ is local but not
microlocal, thus establishing our claim.

It is of paramount importance that the Peierls bracket can actually be extended from microlocal 
to arbitrary microcausal functionals.

\begin{theorem}\label{s3t3} 
  Let $\UU, \Li$ be as in Proposition \ref{s3p2}. The Peierls bracket associated with any such 
  $\Li$ extends to the whole of $\Fun((\M,g),\UU)$, possesses the support property \eqref{s3e68} 
  and depends \emph{only locally on} $\Li$ -- that is, for all $F,G\in\Fun((\M,g),\UU)$ we have 
  that $\{F,G\}_\Li$ is \emph{unaffected} by perturbations of $\Li$ outside $\OO_{\supp F,\supp G}$. 
  Likewise, for all $F,G\in\Fun_{\mu\loc}(\M,\UU)$ we have that $\Ret_\Li(F,G)$ (resp. $\Adv_\Li(F,
  G)$) is unaffected by perturbations of $\Li$ outside $\OO^\Rt_{\supp F,\supp F}$ (resp. 
  $\OO^\Av_{\supp F,\supp F}$).
\end{theorem}
\begin{proof}
  We first check whether the Peierls bracket is well defined when extended to $\Fun((\M,g),\UU)$. 
  As argued right after the proof of Proposition \ref{s3p2}, since the wave front set of the 
  first derivative of a microcausal functional contains only spacelike covectors and, by 
  \eqref{s3e38}, the wave front set of $\Delta_{\Li}[\varphi]$ contains only pairs of null 
  covectors, which after parallel transport along a null geodesic add to zero, the term 
  $\Delta_{\Li}[\varphi]G^{(1)}[\varphi]$ is smooth and can therefore be integrated with the 
  compactly supported distributional density $F^{(1)}[\varphi]$. The proof of \eqref{s3e68} 
  then carries through \emph{ipsis literis} as in the case that $F$ and $G$ are microlocal 
  (Proposition \ref{s3p2}).

  Concerning the dependence of the Peierls bracket on local data of $\Li$, let us first pick two
  arbitrary microcausal functionals $F,G$. Now, let two Lagrangians $\Li_1,\Li_2$ satisfy the 
  hypotheses of Proposition \ref{s3p2} and such that we have for any $\varphi\in\UU$ $E'(\Li_1)
  [\varphi]$ and $E'(\Li_2)[\varphi]$ differ only outside $\OO_{\supp F,\supp G}$. More precisely, 
  we suppose that 
  \begin{equation}\label{s3e71}
  \supp(E'(\Li_1)[\varphi]-E'(\Li_2)[\varphi])\cap\OO_{\supp F,\supp G}=\varnothing 
  \end{equation}
  for all $\varphi\in\UU$. We see that
  \begin{align}
    \left\langle\Delta_{\Li_1}^\Av[\varphi]F^{(1)}[\varphi],(E^\prime(\Li_2)[\varphi]\right.
    &\left.-E^\prime(\Li_1)[\varphi])\Delta^\Rt_{\Li_2}[\varphi]G^{(1)}[\varphi]\right\rangle
      \label{s3e72}\\ &=\Spr{F^{(1)}[\varphi],(\Delta^\Rt_{\Li_1}[\varphi]-\Delta^\Rt_{\Li_2}[\varphi])
      G^{(1)}[\varphi]}=0\ ,\nonumber\\
    \left\langle\Delta_{\Li_1}^\Rt[\varphi]F^{(1)}[\varphi],(E^\prime(\Li_2)[\varphi]\right.
    &\left.-E^\prime(\Li_1)[\varphi])\Delta^\Av_{\Li_2}[\varphi]G^{(1)}[\varphi]\right\rangle
      \label{s3e73}\\ &=\Spr{F^{(1)}[\varphi],(\Delta^\Av_{\Li_1}[\varphi]-\Delta^\Av_{\Li_2}[\varphi])
      G^{(1)}[\varphi]}=0\nonumber
  \end{align}
  for all $\varphi\in\UU$ thanks to \eqref{s3e71}, which also guarantees that the left-hand 
  sides of \eqref{s3e72} and \eqref{s3e73} are well defined since there are no common base 
  points in the wave front sets of either side for any of the dual pairings involved therein.
  This already entails the desired properties for $\Ret_\Li(F,G)$ and $\Adv_\Li(F,G)$ if $F,G$
  are microlocal, since \eqref{s3e72} (resp. \eqref{s3e73}) remain valid if we allow $E'(\Li_1)
  [\varphi]$ and $E'(\Li_2)[\varphi]$ to differ only outside $\OO^\Rt_{\supp F,\supp G}$ (resp.
  $\OO^\Av_{\supp F,\supp G}$) for all $\varphi\in\UU$. As for $\{F,G\}_\Li$, we have from 
  \eqref{s3e72} and \eqref{s3e73} that
  \begin{equation}\label{s3e74}
    \begin{split}
      \{F,G\}_{\Li_1}(\varphi)-\{F,G\}_{\Li_2}(\varphi) &=\Spr{F^{(1)}[\varphi],
      (\Delta_{\Li_1}[\varphi]-\Delta_{\Li_2}[\varphi])G^{(1)}[\varphi]}\\ &=\Spr{F^{(1)}[\varphi],
      (\Delta_{\Li_1}^\Rt[\varphi]-\Delta_{\Li_2}^\Rt[\varphi])G^{(1)}[\varphi]}\\ &\phantom{=}
    -\Spr{F^{(1)}[\varphi],(\Delta_{\Li_1}^\Av[\varphi]-\Delta_{\Li_2}^\Av[\varphi])G^{(1)}[\varphi]}\\ 
    &=0
    \end{split}
  \end{equation}
  for all $\varphi\in\UU$, as asserted.
\end{proof}

We have now the following strengthening of Proposition \ref{s3p2} and Theorem \ref{s3t3}, which 
is crucial to this whole Subsection and justifies the christening ``microcausal'' given to the 
elements of $\Fun((\M,g),\UU)$. We shall take advantage of the fact that, thanks to Theorem 
\ref{s3t3}, we may replace $E(\Li)[\varphi]$ by its \emph{cutoff version}
\begin{equation}\label{s3e75}
  E'(\Li)[\varphi_0]\varphi+f(E(\Li)[\varphi]-E'(\Li)[\varphi_0]\varphi)
\end{equation}
with any $f\in\C^\infty_c(\M)$ such that $f\equiv 1$ in a neighborhood of $(J^+(\supp F,g)\cup 
J^-(\supp F,g))\cap(J^+(\supp G,g)\cup J^-(\supp G,g))$ while keeping $\{F,G\}_\Li(\varphi)$
\emph{unaltered} for all $\varphi,\varphi_0\in\UU$. The term in the right-hand side of 
\eqref{s3e75} proportional to the cutoff function $f$ corresponds to the nonlinear 
(interaction) part of $E(\Li)$ around the background field configuration $\varphi_0$. 

\begin{proposition}\label{s3p4}
  Let $\UU,\Li$ be as in Proposition \ref{s3p2}, and $F,G\in\Fun((\M,g),\UU)$. Then 
  $\{F,G\}_\Li$ also belongs to $\Fun((\M,g),\UU)$.
\end{proposition}
\begin{proof}    
  We look at the derivatives of $\{F,G\}_\Li$. To that end, we shall replace $E(\Li)$ by
  its cutoff version \eqref{s3e75} in order to make the distribution kernel of $D^2E(\Li)
  [\varphi]$ \emph{compactly supported}. This will allow us to obtain a technically more 
  convenient formula for the derivatives of the causal propagator. In what follows we shall 
  use the same notation for the cutoff Euler-Lagrange operator for simplicity. Since 
  $\Delta_\Li[\varphi]=\Delta^\Rt_\Li[\varphi]-\Delta^\Av_\Li[\varphi]$, formulae \eqref{s3e56} 
  and \eqref{s3e57} together imply
  \begin{equation}\label{s3e76}
    D\Delta_\Li[\varphi](\vec{\varphi})=-\Delta_\Li[\varphi]D^2E(\Li)[\varphi](\vec{\varphi})
    \Delta^\Rt_\Li[\varphi]-\Delta^\Av_\Li[\varphi]D^2E(\Li)[\varphi](\vec{\varphi})\Delta_\Li
    [\varphi]\ .
  \end{equation}
  In particular, such a formula implies that $D\Delta_\Li[\varphi](\vec{\varphi})$ has 
  the same wave front set as $\Delta_\Li[\varphi]$. As for higher orders, one obtains that
  \begin{equation}\label{s3e77}
    \begin{split}
      D^k\Delta_\Li&[\varphi](\vec{\varphi}_1,\ldots,\vec{\varphi}_k)  \\
      &=\sum^k_{l=0}(-1)^l\sum_{\{I_1,\ldots,I_l\}\in P_k}\sum_{\sigma\in S_l}\sum^l_{m=0}\left(\prod^m_{j=1}
        \Delta^\Av_\Li[\varphi_0]D^{|I_{\sigma(j)}|+1}E(\Li)[\varphi_0](\otimes_{i\in I_{\sigma(j)}}
        \vec{\varphi}_i)\right)\\ &\phantom{=\sum^k_{l=0}(-1)^l}
      \cdot\Delta_\Li[\varphi]\left(\prod^l_{j=m+1}D^{|I_{\sigma(j)}|+1}E(\Li)
        [\varphi_0](\otimes_{i\in I_{\sigma(j)}}\vec{\varphi}_i)\Delta^\Rt_\Li[\varphi_0]\right)\ .
    \end{split}
  \end{equation}
  and hence the $k$-th order functional derivative of the Peierls bracket at $\varphi$  
  is formally given by
  \begin{equation}\label{s3e78}
    \begin{split}
      D^k\{F,G\}_\Li[\varphi](\vec{\varphi}_1,\ldots,\vec{\varphi}_k) &=\sum_{\{J_1,J_2,J_3\}\subset P_k}
      F^{(|J_1|+1)}[\varphi]((\otimes_{j_1\in J_1}\vec{\varphi}_{j_1})\\ &\otimes D^{|J_2|}\Delta_\Li
      [\varphi]((\otimes_{j_2\in J_2}\vec{\varphi}_{j_2})\otimes G^{(|J_3|+1)}[\varphi]
      (\otimes_{j_3\in J_3}\vec{\varphi}_{j_3})))
    \end{split}
  \end{equation}
  with $D^k\Delta_\Li[\varphi](\vec{\varphi}_1,\ldots,\vec{\varphi}_k)$ given as above.
  Moreover, since $F$ and $G$ are microcausal, the wave front sets of their derivatives 
  contain no elements where either
  \begin{itemize}
  \item All covectors are in the closed forward light cone $\ol{V}_+$, or 
  \item All are in the closed backward light cone $\ol{V}_-$. 
  \end{itemize}
  Recall as well that since $D^kE(\Li)[\varphi]$ is a $k$-linear partial differential 
  operator with distribution kernel $D^{k+1}\Li(1)[\varphi]$, we conclude that (see 
  the proof of Proposition \ref{s3p3} for more details)
  \[
  \WF(D^{k+1}\Li(1)[\varphi])\subset N^*\Delta_{k+1}(\M)\sm 0\ ,
  \]
  where
  \[
  N^*\Delta_k(\M)=\{(x_1,\ldots,x_k;\xi_1,\ldots,\xi_k)\in T^*\!\!\M^k\ |\ x_1=\cdots=x_k,\,
  \xi_1+\cdots+\xi_k=0\}\ ,\quad k\geq 2
  \]
  is the \emph{conormal bundle} to the (small) diagonal $\Delta_k(\M)$ of $\M^k$.
  
  By a reasoning similar to that employed in the proof of Proposition \ref{s3p2}, 
  we notice that due to \eqref{s3e77} each term in the right-hand side of \eqref{s3e78} before
  smearing with $\vec{\varphi}_1,\ldots,\vec{\varphi}_k$ can be seen as a string of 
  compositions of:
  \begin{enumerate}
  \item[(i)] $l+1$ propagators either of the form $\Delta^\Av_\Li[\varphi]$, $\Delta_\Li
    [\varphi]$ or $\Delta^\Rt_\Li[\varphi]$; and 
  \item[(ii)] $l+2$ $k_i$-linear operators, $i=0,\ldots,l+1$ whose distribution kernels 
    are either of the form $F^{(k_0+1)}[\varphi]$, $D^{k_i+2}\Li(1)[\varphi]$ for $1\leq i
    \leq l$ or $G^{(k_{l+1}+1)}[\varphi]$,
  \end{enumerate}
  for each $l=1,\ldots,k$ with $k_0+\cdots+k_{l+1}=k$. Moreover, the obtained distributions 
  are once again compactly supported. The pairing of variables in such a composition for 
  each term in the right-hand side of \eqref{s3e77} is of the following form:
  \begin{itemize}
  \item The first variable of the kernel of the first propagator pairs with the first 
    variable of $F^{(k_1+1)}[\varphi]$;
  \item The first variable of $D^{k_i+2}\Li(1)[\varphi]$ pairs with the second variable
    of the kernel of the $i$-th propagator;
  \item The second variable of $D^{k_i+2}\Li(1)[\varphi]$ pairs with the first variable
    of the kernel of the $(i+1)$-th propagator;
  \item The second variable of the kernel of the last propagator pairs with the first 
    variable of $G^{(k_{l+1}+1)}[\varphi]$.
  \end{itemize}
  In particular, the kernel of the causal propagator $\Delta_\Li[\varphi]$ has its
  first variable paired with the second variable of $D^{k_m+2}\Li(1)[\varphi]$ and
  its second variable paired with the first variable of $D^{k_{m+1}+2}\Li(1)[\varphi]$.
  
  Suppose now that $(\xi_1,\ldots,\xi_k)\in\ol{V}^k_{+,g}(x_1,\ldots,x_k)$ is in
  $\WF(D^k\{F,G\}_\Li[\varphi])$. If $(y_m,z_m;$ $\eta_m,\zeta_m)\in \WF(\Delta_\Li[\varphi])$,
  then either $\eta_m$ or $\zeta_m$ is a past directed null covector w.r.t. $\hat{g}_\Li
  [\varphi]$. Suppose it is $\eta_m$, so that $\eta_m\in\ol{V}_{-,g}(y_m)$ -- then by 
  Theorem 8.2.14, pp. 269--270 of \cite{horm1} we must have that
  \[
  \begin{split}
    (z_{m-1},y_m,x_{k_0+\cdots+k_{m-1}+1},\ldots,x_{k_0+\cdots+k_m}&;\zeta_{m-1},-\eta_m,
    \xi_{k_0+\cdots+k_{m-1}+1},\ldots,\xi_{k_0+\cdots+k_m})\\ &\in \WF(D^{k_m+2}\Li(1)[\varphi])\ ,
  \end{split}
  \]
  implying that $z_{m-1}=y_m$ and $\zeta_{m-1}\in\ol{V}_{-,g}(z_{m-1})$. Now, if $(y_{m-1},z_{m-1};
  \eta_{m-1},\zeta_{m-1})\in \WF(\Delta^\Av_\Li[\varphi])$ for some $(y_{m-1},\eta_{m-1})$, then
  either $(y_{m-1},\eta_{m-1})=(z_{m-1},\zeta_{m-1})$ or $\zeta_{m-1}$ is null past directed
  and therefore $\eta_{m-1}$ is null future directed. In either case, we have that
  $\eta_{m-1}\in\ol{V}_{+,g}(y_{m-1})$. Repeating the above procedure backwards as many times
  as needed as dictated by Theorem 8.2.14, pp. 269--270 of \cite{horm1}, we conclude that 
  \[
  (x_1,\ldots,x_{k_0},y_1;\xi_1,\ldots,\xi_{k_0},\eta_1)\in \WF(F^{(k_1+1)}[\varphi])
  \] 
  with $\eta_1\in\ol{V}_{+,g}(y_1)$, which is absurd since $F$ is assumed to be microcausal. 
  Likewise, if instead $\zeta_m$ is null past directed, proceeding as above but forwards we 
  conclude that
  \[
  (z_l,x_{k_0+\cdots+k_l+1},\ldots,x_{k_0+\cdots+k_{l+1}};\zeta_l,\xi_{k_0+\cdots+k_l+1},\ldots,
  \xi_{k_0+\cdots+k_{l+1}})\in \WF(G^{(k_{l+1}+1)}[\varphi])
  \]
  with $\zeta_l\in\ol{V}_{+,g}(z_l)$, which is absurd since $G$ is assumed to be microcausal.
  In the same fashion, we conclude that no $(\xi_1,\ldots,\xi_k)\in\ol{V}^k_{-,g}(x_1,\ldots,
  x_k)$ can belong to $\WF(D^k\{F,G\}_\Li[\varphi])$. In particular, we see that no $k$-tuple 
  of zero covectors can arise from the above procedure, hence again by Theorem 8.2.14, pp. 
  269--270 of \cite{horm1} the (compactly supported) distribution $D^k\{F,G\}_\Li[\varphi]$ 
  is well defined. The proof is complete.
\end{proof}

We stress that the presence of $\Delta_\Li[\varphi]$ is crucial for the propagation argument 
underlying the proof of Proposition \ref{s3p4} to work, since it prevents the appearance of 
spacelike covectors which may disrupt the propagation procedure. Such an argument would not 
work if we had only retarded or only advanced propagators in each term of \eqref{s3e78}, 
because their wave front set may have elements over the diagonal whose covectors are 
spacelike. On the other hand, as the proof of Proposition \ref{s3p2} shows, in the case of
microlocal $F,G$ the clash of (spacelike) covectors though the propagation procedure is 
prevented by the fact that the wave front sets of $F^{(k)}[\varphi]$ and $G^{(l)}[\varphi]$
are conormal to $\Delta_k(\M)$ and $\Delta_l(\M)$ respectively for all $k,l>0$. More generally,
if $\M$ is parallelizable (e.g. if $(\M,g)$ is Minkowski space-time or if $d=4$ \cite{stiefel}) 
then one may define for each $k\geq 2$ the sets
\[
N^k(\M)=\{(x_1,\ldots,x_k;\xi_1,\ldots,\xi_k)\in T^*\!\!\M^k\sm 0\ |\ \xi_1+\cdots+\xi_k=0\}\ ,
\,N^1(\M)=\varnothing\ .
\]
If to deem $F,G$ as microcausal we required in addition to Definition \ref{s3d6} that $\WF(F^{(k)}
[\varphi])$, $\WF(G^{(k)}[\varphi])\subset N^k(\M)$ for all $k\geq 1$, one would be able to conclude 
by the same reasoning as in the proof of Proposition \ref{s3p2} that $\Ret_\Li(F,G)$ and 
$\Adv_\Li(F,G)$ are smooth, in fact even microcausal in this strengthened sense. This was the
path followed e.g. by \cite{dutfre1} in Minkowski space-time, see discussion between formulae 
(8) and (9), pp. 280 therein. However, it is clear that the definition of $N^k(\M)$ is tied to
a choice of global trivialization for $T^*\!\!\M$ (tacitly assumed therein), which is natural in 
the case of Minkowski space-time but generally no longer so, thus such requirement seems unnatural
in curved space-times.

\begin{corollary}\label{s3c3} 
  The Peierls bracket $F,G\mapsto\{F,G\}_\Li$ defines a Lie bracket on $\Fun((\M,g),\UU)$ 
  for any globally hyperbolic metric $g$ on $\M$ such that $g\gtrsim\hat{g}_\Li[\varphi]$ 
  for all $\varphi\in\UU$.
\end{corollary}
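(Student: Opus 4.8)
The plan is to extract the Jacobi identity from Lemma \ref{s3l7} by a clever choice of the auxiliary functional $H$. The Peierls bracket is already bilinear, $\RR$-linear in each slot, and antisymmetric by Lemma \ref{s3l6} (together with \eqref{s3e54}--\eqref{s3e55}); closure of $\Fun((\M,g),\UU)$ under $\{\cdot,\cdot\}_\Li$ is Proposition \ref{s3p4}. So the only thing left to verify is the Jacobi identity
\[
\{\{F,G\}_\Li,H\}_\Li+\{\{G,H\}_\Li,F\}_\Li+\{\{H,F\}_\Li,G\}_\Li=0
\]
for all $F,G,H\in\Fun((\M,g),\UU)$.

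First I would recall that $\{F,G\}_\Li=\Ret_\Li(F,G)-\Adv_\Li(F,G)$, and subtract the two identities in Lemma \ref{s3l7}: since the right-hand side of \eqref{s3e61} is the \emph{same} for the advanced and retarded versions, subtracting gives
\[
\{\Adv_\Li(H,F)-\Ret_\Li(H,F),G\}_\Li+\{F,\Adv_\Li(H,G)-\Ret_\Li(H,G)\}_\Li-\bigl(\Adv_\Li-\Ret_\Li\bigr)(H,\{F,G\}_\Li)=0,
\]
that is, using antisymmetry of $\{\cdot,\cdot\}_\Li$ and the identity $\{A,B\}_\Li=\Ret_\Li(A,B)-\Adv_\Li(A,B)$ applied with $A=H$,
\[
-\{\{H,F\}_\Li,G\}_\Li-\{F,\{H,G\}_\Li\}_\Li+\{H,\{F,G\}_\Li\}_\Li=0.
\]
Rearranging and using antisymmetry once more ($\{F,\{H,G\}_\Li\}_\Li=-\{\{H,G\}_\Li,F\}_\Li=\{\{G,H\}_\Li,F\}_\Li$, and $\{H,\{F,G\}_\Li\}_\Li=-\{\{F,G\}_\Li,H\}_\Li$) yields exactly the Jacobi identity displayed above. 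So the Jacobi identity is a purely formal consequence of Lemma \ref{s3l7} plus antisymmetry, once one notices that the common right-hand side cancels upon subtracting the advanced and retarded forms of \eqref{s3e61}.

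The one genuine point requiring care — and the step I expect to be the only real obstacle — is that all the functional derivatives and the operators $\Delta^{\Rt/\Av}_\Li[\varphi]$, $D^2E(\Li)[\varphi]$ appearing implicitly in this manipulation are only defined, and the wave-front-set composition rules only applicable, on the microcausal class; so I must check at each step that the intermediate objects $\Ret_\Li(H,F)$, $\Adv_\Li(H,G)$, $\{F,G\}_\Li$, etc., remain in $\Fun((\M,g),\UU)$, which is precisely the content of Proposition \ref{s3p4}, and that the bracket of two microcausal functionals with a microcausal functional is again well-defined — again Proposition \ref{s3p4}. Given that, the pairings and compositions in Lemma \ref{s3l7} are all legitimate, the identity \eqref{s3e61} holds as an identity of microcausal functionals on all of $\UU$, and the algebraic rearrangement above is valid pointwise in $\varphi$. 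Bilinearity over $\RR$ (or $\CC$) and compatibility with the $*$-operation are immediate from Definition \ref{s3d5}. Hence $\{\cdot,\cdot\}_\Li$ is a Lie bracket on $\Fun((\M,g),\UU)$, which is the claim.
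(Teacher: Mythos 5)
Your proof is correct and takes the same route as the paper: subtract the advanced and retarded forms of \eqref{s3e61} (the ``first identity'') to cancel the common right-hand side, rewrite $\Adv_\Li-\Ret_\Li$ as $-\{\cdot,\cdot\}_\Li$, and use antisymmetry to arrive at the Jacobi identity, with bilinearity/antisymmetry and closure in $\Fun((\M,g),\UU)$ supplied by the remark after Definition \ref{s3d5} and Proposition \ref{s3p4}. The paper is terser — it just says the Jacobi identity ``follows from the first identity in \eqref{s3e61}'' — but the computation it intends is exactly the one you spell out.
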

\begin{proof}
  $\{\cdot,\cdot\}_\Li$ is clearly bilinear. Antisymmetry of $\{\cdot,\cdot\}_\Li$ 
  follows from the argument right after Definition \ref{s3d5}. All that is left 
  to us is to prove that the Jacobi identity holds, that is, 
  \begin{equation}\label{s3e79}
    \{F,\{G,H\}_\Li\}_\Li+\{G,\{H,F\}_\Li\}_\Li+\{H,\{F,G\}_\Li\}_\Li=0
  \end{equation}
  for any $F,G,H\in\Fun((\M,g),\UU)$. To that end, we argue as in the proof of Proposition 
  \ref{s3p4} and replace once more $E(\Li)$ by the cutoff version \eqref{s3e75} while 
  keeping the same notation, this time with the cutoff function $f$ such that $f\equiv 1$ 
  in a neighborhood of the (compact) region $\OO_{\supp F,\supp G,\supp H}$, where for $\varnothing
  \neq K,L,M\subset\M$ we set
  \[
  \OO_{K,L,M}=\OO_{K,L}\cup\OO_{K,M}\cup\OO_{L,M}\cup\OO_{\OO_{K,L},M}\cup\OO_{\OO_{L,M},K}
  \cup\OO_{\OO_{M,K},L}\ .
  \]
  with $\OO_{K,L}$ defined as in \eqref{s3supps}. It is clear from Theorem \ref{s3t3} that all 
  Peierls brackets involved in the left-hand side of \eqref{s3e79} remain unaltered by the 
  cutoff. Now we have that
  \begin{equation}\label{s3e80}
    \begin{split}
      \{F,\{G,H\}_\Li\}_\Li(\varphi) &=H^{(2)}[\varphi](\Delta_\Li[\varphi]G^{(1)}[\varphi],
      \Delta_\Li[\varphi]F^{(1)}[\varphi])\\ &\phantom{=}-G^{(2)}[\varphi](\Delta_\Li[\varphi]
      H^{(1)}[\varphi],\Delta_\Li[\varphi]F^{(1)}[\varphi])\\ &\phantom{=}-G^{(1)}[\varphi]
      (D\Delta_\Li[\varphi](\Delta_\Li[\varphi]F^{(1)}[\varphi])H^{(1)}[\varphi])\ .
    \end{split}
  \end{equation}
  Consider the first two terms in the right-hand side of \eqref{s3e80}. Summing them along
  all three cyclic permutations of $F,G,H$ yields zero thanks to the symmetry of
  second-order derivatives of functionals in their linear entries, so one is only 
  left to show that
  \begin{equation}\label{s3e81}
    \begin{split}
      G^{(1)}[\varphi](D\Delta_\Li[\varphi](\Delta_\Li[\varphi]F^{(1)}[\varphi])H^{(1)}[\varphi])
      &+F^{(1)}[\varphi](D\Delta_\Li[\varphi](\Delta_\Li[\varphi]H^{(1)}[\varphi])
      G^{(1)}[\varphi])\\ &+H^{(1)}[\varphi](D\Delta_\Li[\varphi](\Delta_\Li[\varphi]
      G^{(1)}[\varphi])F^{(1)}[\varphi])=0\ .
    \end{split}
  \end{equation}
  Inserting into \eqref{s3e81} the formula \eqref{s3e76} for the derivative of 
  $\Delta_\Li[\varphi]$ obtained in the proof of Proposition \ref{s3p4} we find that
  \begin{equation}\label{s3e82}
    \begin{split}
      G^{(1)}[\varphi]&(D\Delta_\Li[\varphi](\Delta_\Li[\varphi]F^{(1)}[\varphi])H^{(1)}[\varphi])\\
      &=-G^{(1)}[\varphi](\Delta_\Li[\varphi]D^2E(\Li)[\varphi](\Delta_\Li[\varphi]F^{(1)}[\varphi])
      \Delta^\Rt_\Li[\varphi]H^{(1)}[\varphi])\\ &\phantom{=}-G^{(1)}[\varphi](\Delta^\Av_\Li[\varphi]
      D^2E(\Li)[\varphi](\Delta_\Li[\varphi]F^{(1)}[\varphi])\Delta_\Li[\varphi]H^{(1)}[\varphi])\\
      &=D^3\Li(1)[\varphi](\Delta_\Li[\varphi]G^{(1)}[\varphi],\Delta^\Rt_\Li[\varphi]H^{(1)}[\varphi],
      \Delta_\Li[\varphi]F^{(1)}[\varphi])\\ &\phantom{=}-D^3\Li(1)[\varphi](\Delta_\Li[\varphi]
      H^{(1)}[\varphi],\Delta^\Rt_\Li[\varphi]G^{(1)}[\varphi],\Delta_\Li[\varphi]F^{(1)}[\varphi])\ ,
    \end{split}
  \end{equation}
  where in the last identity we have exploited the symmetry of $D^3\Li(1)[\varphi]$ in its
  first two entries. Summing the last formula of \eqref{s3e82} along all three cyclic 
  permutations of $F,G,H$ yields zero once more thanks to the symmetry of $D^3\Li(1)[\varphi]$ 
  in its first and last entries. The proof is complete.
\end{proof}

Notice that the arguments employed in the proofs of Proposition \ref{s3p4} and Corollary 
\ref{s3c3} rely on the compactness of the support of the distribution kernel of $D^2E(\Li)
[\varphi]$ through the formula \eqref{s3e76} for the derivative of $\Delta_\Li[\varphi]$, 
for therein one adds and subtracts a term of the form 
\[
\Delta^\Av_\Li[\varphi]D^2E(\Li)[\varphi](\vec{\varphi})\Delta^\Rt_\Li[\varphi]
\] 
which is otherwise ill defined. However, as argued right after the proof of Theorem \ref{s3t3}, 
this entails no loss of generality since we can perform a suitable cutoff of the nonlinear
part of $E(\Li)$ through \eqref{s3e75}.

We shall prove in Section \ref{s4-gen} that $\Fun((\M,g),\UU)$ is closed under products 
and that the Peierls bracket satisfies Leibniz's rule (Theorem \ref{s4t1}). In other words, 
$\Fun((\M,g),\UU)$ becomes a \emph{Poisson} algebra when endowed with the Peierls bracket 
associated to $\Li$. 

\section{\label{s4-gen}First structural results}

With the body of results of Sections \ref{s2-kin} and \ref{s3-dyn} at hand, we can 
start a detailed and motivated discussion of the mathematical structures underlying 
our approach.

\subsection{\label{s4-gen-top}Topology of the space of microcausal functionals}

We can endow $\Fun((\M,g),\UU)$ with a topology which, despite being quite weak,
accommodates rather well our algebraic operations. The weakest possible choice
is the topology of pointwise convergence of functionals and their derivatives
of all orders, which is the locally convex topology on $\Fun((\M,g),\UU)$ induced 
by the separating system of seminorms
\[
F\mapsto|F^{(k)}[\varphi](\vec{\varphi}_1,\ldots,\vec{\varphi}_k)|\ ,\quad
\varphi\in\UU,\,\vec{\varphi}_1,\ldots,\vec{\varphi}_k\in\C^\infty(\M),\,k\in\NN\ .
\]
Equivalently, this topology is the initial locally convex topology on $\Fun((\M,g),\UU)$ 
induced by the linear maps
\begin{equation}\label{s4e1}
  F\mapsto 
  \begin{cases}
    F(\varphi)\in\CC & (k=0)\\
    F^{(k)}[\varphi]\in\E'(\wedge^{kd}T^*\!\!\M^k\To\M) & (k\geq 1)
  \end{cases}\ ,\quad\varphi\in\UU\ ,
\end{equation}
where the space $\E'(\wedge^{kd}T^*\!\!\M^k\To\M)$ of $d$-form-valued distributions of
compact support on $\M^k$ is the topological dual of $\C^\infty(\M^k)$. This choice, 
however, ignores the extra information on the wave front sets of $F^{(k)}[\varphi]$ 
which enters Definition \ref{s3d6} for microcausal functionals. A more natural choice 
is to replace the spaces of general, compactly supported distribution densities in 
\eqref{s4e1} for each $k\geq 1$ by the following subspaces:
\begin{equation}\label{s4e2}
  \E'_{\Upsilon_{k,g}}(\wedge^{kd}T^*\!\!\M^k\To\M^k)=\{u\in\E'(\wedge^{kd}T^*\!\!\M^k\To\M^k)
  \ |\ \WF(u)\subset\Upsilon_{k,g}\}\ .
\end{equation}
These, however, are not standard spaces of compactly supported distributions with
wave front sets within a prescribed (closed) cone, for $\Upsilon_{k,g}$ as defined in \eqref{s3e66}
is an \emph{open} conic subset of $T^*\!\!\M^k\sm 0$. Therefore, one cannot immediately endow 
$\E'_{\Upsilon_{k,g}}(\wedge^{kd}T^*\!\!\M^k\To\M^k)$ with the Hörmander topology (see, for 
instance, Section 8.2 of \cite{horm1}). It is possible, on the other hand, to define 
$\E'_{\Upsilon_{k,g}}(\wedge^{kd}T^*\!\!\M^k\To\M^k)$ as an inductive limit of an increasing 
sequence of spaces of compactly supported distributions with wave front sets contained in 
an increasing sequence of \emph{closed} conic subsets of $T^*\!\!\M^k\sm 0$, each of 
these spaces being endowed with the Hörmander topology. The key result which allows us to 
do this is the following

\begin{lemma}\label{s4l1}
  For each $k=1,2,\ldots$ there is a countable family $\{\Gamma_{k,m}\}_{m\in\NN}$ of closed
  conic subsets of $T^*\!\!\M^k$ such that $\Gamma_{k,m}\subset\In{\Gamma}_{k,m+1}$ and $\cup^\infty_{m=0}
  \Gamma_{k,m}=\Upsilon_{k,g}$ is given by \eqref{s3e70}.
\end{lemma}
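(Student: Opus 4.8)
The plan is to recognize that $\Upsilon_{k,g}$, as defined in \eqref{s3e60}, is an open conic subset of $T^*\!\!\M^k\sm 0$ --- indeed the complement there of the closed conic set $\Sigma_k\doteq(\ol{V}^k_{+,g}\cup\ol{V}^k_{-,g})\cap(T^*\!\!\M^k\sm 0)$, which is closed because $g$ is continuous and hence the closed half-cone fields $x\mapsto\ol{V}_{\pm,g}(x)$ assemble into closed subsets of $T^*\!\!\M$ --- and then to exhaust it by closed cones via a routine exhaustion argument carried out on the cosphere bundle. Fix $k$; as is customary for conic sets in microlocal analysis, ``closed'' and ``interior'' will be understood relative to the open submanifold $T^*\!\!\M^k\sm 0$, so that the union can be exactly $\Upsilon_{k,g}$, which avoids the zero section. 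First I would fix a Riemannian metric $e$ on $\M$ (such a metric exists since $\M$ is paracompact and second countable) and use the fiber norm it induces on $T^*\!\!\M$, and hence the fiber norm $|(\xi_1,\dots,\xi_k)|^2_e\doteq\sum^k_{j=1}|\xi_j|^2_e$ on $T^*\!\!\M^k$, to form the unit cosphere bundle $\Sp^*\M^k\doteq\{((x_1,\dots,x_k),(\xi_1,\dots,\xi_k))\in T^*\!\!\M^k:|(\xi_1,\dots,\xi_k)|_e=1\}$, a smooth fiber bundle over $\M^k$ with compact fiber. The scaling map $\Phi:\RR_{>0}\times\Sp^*\M^k\To T^*\!\!\M^k\sm 0$, $\Phi(t,\eta)=t\eta$, is a homeomorphism under which conic subsets of $T^*\!\!\M^k\sm 0$ correspond exactly to subsets of $\Sp^*\M^k$, closed (resp.\ open) conic ones to closed (resp.\ open) ones, and $\Upsilon_{k,g}$ to the open set $W\doteq\Upsilon_{k,g}\cap\Sp^*\M^k$.

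Next I would exhaust $W$ inside $\Sp^*\M^k$. Being a bundle with compact fiber over the $\sigma$-compact, locally compact, metrizable manifold $\M^k$, the space $\Sp^*\M^k$ is itself $\sigma$-compact, locally compact and metrizable; I would fix a compatible metric $\varrho$ on it and a compact exhaustion $\Sp^*\M^k=\bigcup_{m\in\NN}L_m$ with $L_m\subset\In{L}_{m+1}$. If $W=\Sp^*\M^k$ I set $W_m\doteq L_m$; otherwise $W_m\doteq\{\eta\in\Sp^*\M^k:\varrho(\eta,\Sp^*\M^k\sm W)\geq(m+1)^{-1}\}\cap L_m$. Each $W_m$ is then a closed subset of the compact set $L_m$, hence compact, and satisfies $W_m\subset W$ and $\bigcup_m W_m=W$ (every point of $W$ has positive $\varrho$-distance to $\Sp^*\M^k\sm W$ and lies in some $L_m$); moreover the open set $\{\eta:\varrho(\eta,\Sp^*\M^k\sm W)>(m+2)^{-1}\}\cap\In{L}_{m+1}$ lies between $W_m$ and $W_{m+1}$, whence $W_m\subset\In{W}_{m+1}$.

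Finally I would put $\Gamma_{k,m}\doteq\Phi(\RR_{>0}\times W_m)=\{t\eta:t>0,\ \eta\in W_m\}$. By construction $\Gamma_{k,m}$ is conic; it is closed in $T^*\!\!\M^k\sm 0$ because $W_m$ is compact --- if $t_n\eta_n\To\zeta\in T^*\!\!\M^k\sm 0$ with $t_n>0$, $\eta_n\in W_m$, then $t_n=|t_n\eta_n|_e\To|\zeta|_e>0$, so $\eta_n=(t_n\eta_n)/t_n\To\zeta/|\zeta|_e\in W_m$ and $\zeta\in\Gamma_{k,m}$. Since $\Phi$ is a homeomorphism and $W_m\subset\In{W}_{m+1}$, the set $\Phi(\RR_{>0}\times\In{W}_{m+1})$ is open, contains $\Gamma_{k,m}$, and is contained in $\Gamma_{k,m+1}$, so $\Gamma_{k,m}\subset\In{\Gamma}_{k,m+1}$; and $\bigcup_{m}\Gamma_{k,m}=\Phi(\RR_{>0}\times\bigcup_m W_m)=\Phi(\RR_{>0}\times W)=\Upsilon_{k,g}$, which is the assertion.

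The bulk of this is the standard exhaustion of an open subset of a $\sigma$-compact locally compact metric space, and I expect the only genuinely delicate point --- the reason the auxiliary Riemannian fiber norm is introduced at all --- to be the passage between closed conic subsets of $T^*\!\!\M^k\sm 0$ and closed subsets of the cosphere bundle, i.e.\ the verification that the cone over a \emph{compact} cross-section remains closed in $T^*\!\!\M^k\sm 0$ even though $\M$ is non-compact; this is precisely the limiting argument controlling the scaling parameters $t_n$ above.
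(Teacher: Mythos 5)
Your proof is correct, but it takes a genuinely different route from the paper's. You work abstractly: observe that $\Upsilon_{k,g}$ is an open conic subset of $T^*\!\!\M^k\sm 0$, pass to the unit cosphere bundle $\Sp^*\M^k$ (via an auxiliary Riemannian fiber norm), exhaust the corresponding open subset of that $\sigma$-compact, locally compact metric space by a nested sequence of compacta, and cone back up --- with the small but necessary verification that the cone over a compact cross-section stays closed in $T^*\!\!\M^k\sm 0$. This applies to \emph{any} open conic subset of the cotangent bundle minus the zero section over a second-countable manifold, and is essentially the shortest possible argument. The paper instead exploits the Lorentzian structure: it fixes a timelike covector field $\omega$, introduces a one-parameter family $g_{\epsilon_m}=g-\epsilon_m\omega\otimes\omega$ of Lorentzian metrics with strictly wider light cones that shrink back to those of $g$ as $\epsilon_m\searrow 0$, handles $k=1$ directly, and for $k>1$ performs an explicit combinatorial decomposition of the fiberwise complement $\complement(\ol{V}^k_{+,g}\cup\ol{V}^k_{-,g})$ into $3^k-2^k$ products, classifying factors into three types and approximating only those that need it. The paper's construction thus yields $\Gamma_{k,m}$'s that are concretely describable as finite unions of products of cones of the metrics $g_{\epsilon_m}$, staying purely Lorentzian-geometric without an auxiliary Riemannian metric; your construction is simpler and more general but produces non-explicit $\Gamma_{k,m}$'s depending on the choice of Riemannian metric, auxiliary metric $\varrho$ and compact exhaustion. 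For the downstream use in Corollary \ref{s4c1} (writing $\E'_{\Upsilon_{k,g}}$ as a countable inductive limit and establishing nuclearity), only existence of the exhaustion matters, so both approaches suffice equally.

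One small remark about the statement itself: both proofs implicitly read ``closed conic subsets of $T^*\!\!\M^k$'' as ``closed in $T^*\!\!\M^k\sm 0$'', which you make explicit; this is the standard convention for wave-front-set estimates and the only one consistent with $\Upsilon_{k,g}\subset T^*\!\!\M^k\sm 0$, since a non-empty cone avoiding the zero section is never closed in $T^*\!\!\M^k$ itself.
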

\begin{proof}
  Let $\omega$ be a future directed timelike covector field in $(\M,g)$ and $\epsilon>0$ such
  that $g_\epsilon\doteq g-\epsilon u\otimes u$ is a Lorentzian metric. We have that 
  $g<g_{\epsilon'}<g_\epsilon$ and hence $V_{\pm,g_\epsilon}(x)\supset\ol{V}_{\pm,g_{\epsilon'}}(x)$ for all 
  $0<\epsilon'<\epsilon$, $x\in M$. Let now $(\epsilon_m)_{m\in\NN}$ be a sequence of positive real 
  numbers such that $\epsilon_0=\epsilon$, $\epsilon_{m+1}<\epsilon_m$ and 
  $\epsilon_m\gotoas{0}{m}{\infty}$. We conclude that, for all $x\in\M$,
  \[
  \begin{split}
    \complement(\ol{V}_{+,g}(x)\cup\ol{V}_{-,g}(x)) &=\left(\bigcup^\infty_{m=0}
      \complement(V_{+,g_{\epsilon_m}}(x)\cup V_{-,g_{\epsilon_m}}(x))\right)\sm\{0\}\\
    &=\left(\bigcup^\infty_{m=0}\complement(V_{+,g_{\epsilon_m}}(x)
      \cup V_{-,g_{\epsilon_m}}(x))\sm\{0\}\right)\ .
  \end{split}
  \]
  The above argument settles the case $k=1$. For $k>1$, we can write $\complement(\ol{V}^k_{+,g}
  (x_1,\ldots,x_k)\cup\ol{V}^k_{-,g}(x_1,\ldots,x_k))$ as a union of subsets of the form $\Omega=
  \prod^k_{j=1}W_j$, such that the possibilities for $\Omega$ fall in exactly one of the following 
  three categories:
  \begin{enumerate}
  \item[(a)] $W_j=\complement(\ol{V}_{+,g}(x_j)\cup\ol{V}_{-,g}(x_j))$ for at least one $j$, and
    all $W_{j'}$'s which are not of this form are of the form $W_{j'}=\ol{V}_{+,g}(x_{j'})\cup
    \ol{V}_{-,g}(x_{j'})$. There are $\sum_{k'=1}^k\binom{k}{k'}=2^k-1$ such $\Omega$'s.
  \item[(b)] For all $j=1,\ldots,k$, we have either $W_j=\ol{V}_{+,g}(x_j)\sm\{0\}$ or
    $W_j=\ol{V}_{-,g}(x_j)\sm\{0\}$, and there is at least one pair ${j,j'}\subset\{1,\ldots,k\}$,
    such that $W_j=\ol{V}_{+,g}(x_j)\sm\{0\}$ and $W_{j'}=\ol{V}_{-,g}(x_{j'})\sm\{0\}$. There 
    are $\sum_{k'=1}^{k-1}\binom{k}{k'}=2^k-2$ such $\Omega$'s (we remark that this number
    is zero for $k=1$).
  \item[(c)] $W_j=\{0\}$ for at least one $j$, all $W_{j'}$'s which are not of this form are 
    either of the form $W_{j'}=\ol{V}_{+,g}(x_{j'})\sm\{0\}$ or $W_{j'}=\ol{V}_{-,g}(x_{j'})\sm\{0\}$, 
    and there is at least one pair ${j',j''}\subset\{j=1,\ldots,k\ |\ W_j\neq\{0\}\}$ such that 
    $W_{j'}=\ol{V}_{+,g}(x_{j'})\sm\{0\}$ and $W_{j''}=\ol{V}_{-,g}(x_{j''})\sm\{0\}$. There are
    \[
    \begin{split}
      \sum^{k-2}_{k'=1}(2^{k-k'}-2)\binom{k}{k'} &=2^k\left(\frac{3^k}{2^k}-1-\frac{1}{2^k}
        -\frac{2k}{2^k}\right)-2(2^k-2-k)\\ &=3^k-3\cdot 2^k+3
    \end{split}
    \]
    such $\Omega$'s (we remark that this number is zero for $k=1,2$).
  \end{enumerate}
  Let us enumerate the $3^k-2^k$ subsets $\Omega$ listed above, so that the first $2^k-1$ 
  ones are of type (a), and the remaining ones are of types (b) and (c):
  \[
  \begin{split}
    \complement(\ol{V}^k_{+,g}(x_1,\ldots,x_k)\cup\ol{V}^k_{-,g}(x_1,\ldots,x_k)) 
    &=\bigcup^{3^k-2^k}_{l=1}\Omega_l\\ &=\left(\bigcup^{2^k-1}_{l=1}\Omega_l\right)\cup
    \left(\bigcup^{3^k-2^k}_{l=2^k}\Omega_l\right)\ ,\\ \Omega_l &=\prod^k_{j=1}W_{j,l}\ .
  \end{split}
  \]
  Let now $l<2^k$. We can write $\Omega_l$ as the countable union of an increasing sequence
  of closed conic subsets of $T^*_{(x_1,\ldots,x_k)}\M^k\sm 0$ 
  \[
  \Omega_l=\bigcup^\infty_{m=0}\Omega_{l,m}\ ,\,\Omega_{l,m}=\prod^k_{j=1}W_{j,l,m}\ ,
  \]
  where
  \[
  W_{j,l,m}=\begin{cases}
    \complement(V_{+,g_{\epsilon_m}}(x_j)\cup V_{-,g_{\epsilon_m}}(x_j))\sm\{0\} & 
    \text{ if }W_{j,l}=\complement(\ol{V}_{+,g}(x_j)\cup\ol{V}_{-,g}(x_j))\ ,\\
    W_{j,l} & \text{ if }W_{j,l}=\ol{V}_{+,g}(x_j)\cup\ol{V}_{-,g}(x_j)\ .
  \end{cases}
  \]
  If $l\geq 2^k$, $\Omega_l$ is already a closed conic subset of $T^*_{(x_1,\ldots,x_k)}\M^k\sm 0$. 
  Finally, define
  \[
  \Gamma_{k,m}(x_1,\ldots,x_k)=\left(\bigcup^{2^k-1}_{l=1}\Omega_{l,m}\right)\cup
  \left(\bigcup^{3^k-2^k}_{l=2^k}\Omega_l\right)\ .
  \]
  By construction, $\Gamma_{k,m}(x_1,\ldots,x_k)$ is a closed conic subset of $T^*_{(x_1,\ldots,x_k)}
  \M^k\sm 0$, $\Gamma_{k,m}(x_1,\ldots,x_k)\subset\In{\Gamma}_{k,m+1}(x_1,\ldots,x_k)$ and 
  \[
  \complement(\ol{V}^k_{+,g}(x_1,\ldots,x_k)\cup\ol{V}^k_{-,g}(x_1,\ldots,x_k))=\bigcup^\infty_{m=0}
  \Gamma_{k,m}(x_1,\ldots,x_k)
  \]
  for all $(x_1,\ldots,x_k)\in\M^k$. Taking $\Gamma_{k,m}$ as the disjoint union of the
  $\Gamma_{k,m}(x_1,\ldots,x_k)$'s for all $(x_1,\ldots,x_k)\in\M^k$ gives the thesis.
\end{proof}

\begin{corollary}\label{s4c1}
  One can write $\E'_{\Upsilon_{k,g}}(\wedge^{kd}T^*\!\!\M^k\To\M^k)$ for all $k=1,2,\ldots$ 
  as the countable inductive limit
  \begin{equation}\label{s4e3}
    \E'_{\Upsilon_{k,g}}(\wedge^{kd}T^*\!\!\M^k\To\M^k)=\dlim{m\in\NN}\E'_{\Gamma_{k,m}}(\wedge^{kd}
    T^*\!\!\M^k\To\M^k)
  \end{equation}
  of the spaces $\E'_{\Gamma_{k,m}}(\wedge^{kd}T^*\!\!\M^k\To\M^k)$. Let $\E'_{\Upsilon_{k,g}}
  (\wedge^{kd}T^*\!\!\M^k\To\M^k)$ be endowed with the locally convex inductive limit topology
  induced by the Hörmander topology on each $\E'_{\Gamma_{k,m}}(\wedge^{kd}T^*\!\M^k\To\M^k)$ for 
  all $k=1,2,\ldots$; then $\E'_{\Upsilon_{k,g}}(\wedge^{kd}T^*\!\!\M^k\To\M^k)$ is nuclear for
  all such $k$.
\end{corollary}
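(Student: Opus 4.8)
The plan is to prove the Corollary in two movements: first establish \eqref{s4e3} as an identity of vector spaces, then obtain nuclearity from standard permanence properties of nuclear spaces together with the known nuclearity of the individual H\"ormander spaces. For the set-theoretic identity, since $\Gamma_{k,m}\subset\Upsilon_{k,g}$ for all $m$, every element of $\E'_{\Gamma_{k,m}}(\wedge^{kd}T^*\!\!\M^k\To\M^k)$ belongs to $\E'_{\Upsilon_{k,g}}(\wedge^{kd}T^*\!\!\M^k\To\M^k)$, which gives the inclusion ``$\supset$'' in \eqref{s4e3}. For the reverse inclusion, let $u$ have compact support with $\WF(u)\subset\Upsilon_{k,g}$, fix an auxiliary Riemannian metric on $\M^k$, and let $S^*\!\!\M^k$ be the associated unit cosphere bundle. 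As $\WF(u)$ is a closed cone contained in $T^*\!\!\M^k\restr{\supp u}\sm 0$, the set $\WF(u)\cap S^*\!\!\M^k$ is a closed subset of the compact set $S^*\!\!\M^k\restr{\supp u}$, hence compact. By Lemma \ref{s4l1} the open conic sets $\In{\Gamma}_{k,m}$ exhaust $\Upsilon_{k,g}$, so $\{\In{\Gamma}_{k,m}\cap S^*\!\!\M^k\}_{m\in\NN}$ is an open cover of $\WF(u)\cap S^*\!\!\M^k$; extracting a finite subcover and using $\Gamma_{k,m}\subset\In{\Gamma}_{k,m+1}$, there is some $M\in\NN$ with $\WF(u)\cap S^*\!\!\M^k\subset\Gamma_{k,M}$, whence $\WF(u)\subset\Gamma_{k,M}$ by conicity, i.e. $u\in\E'_{\Gamma_{k,M}}(\wedge^{kd}T^*\!\!\M^k\To\M^k)$.

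Next I would set up the topology and deduce nuclearity. For closed cones $\Gamma\subset\Gamma'$ the canonical inclusion $\E'_\Gamma(\wedge^{kd}T^*\!\!\M^k\To\M^k)\hookrightarrow\E'_{\Gamma'}(\wedge^{kd}T^*\!\!\M^k\To\M^k)$ is continuous for the H\"ormander topologies, because the weak topologies inherited from $\C^\infty(\M^k)$ agree and each seminorm defining the target restricts to one defining the source (cf. Section 8.2 of \cite{horm5}). Hence the right-hand side of \eqref{s4e3} is a genuine locally convex inductive limit of a sequence, whose underlying vector space is, by the previous step, exactly $\E'_{\Upsilon_{k,g}}(\wedge^{kd}T^*\!\!\M^k\To\M^k)$; we endow the latter with this inductive-limit topology. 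Nuclearity is then soft: each $\E'_{\Gamma_{k,m}}(\wedge^{kd}T^*\!\!\M^k\To\M^k)$ is nuclear, and a locally convex inductive limit of a countable sequence of nuclear spaces is nuclear -- for instance, realize it as a quotient of the countable locally convex direct sum $\bigoplus_{m}\E'_{\Gamma_{k,m}}(\wedge^{kd}T^*\!\!\M^k\To\M^k)$ and invoke stability of nuclearity under countable direct sums and under quotients. This yields the Corollary for every $k\geq 1$.

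The step I expect to require the most care is not either of the above but the input claim that $\E'_\Gamma(\wedge^{kd}T^*\!\!\M^k\To\M^k)$ with the H\"ormander topology is nuclear for a closed cone $\Gamma$. Its topology is defined by the weak-$*$ topology of $\E'(\wedge^{kd}T^*\!\!\M^k\To\M^k)$ together with an \emph{a priori} uncountable family of seminorms indexed by triples $(\chi,N,V)$ with $\chi\in\C^\infty_c(\M^k)$, $N\in\NN$ and $V$ a closed cone such that $(\supp\chi\times V)\cap\Gamma=\varnothing$; so one must first reduce this to a cofinal countable subfamily -- legitimate because supports are compact and $\M^k$ is second countable, so that $\chi$ may be drawn from a fixed countable partition of unity subordinate to a locally finite atlas and $V$ from a countable exhaustion of the relevant cone complements -- and then present $\E'_\Gamma$ as a closed subspace of a countable projective limit built from the nuclear space $\E'(\wedge^{kd}T^*\!\!\M^k\To\M^k)$ and weighted Fr\'echet spaces controlling the Fourier transforms over the cones $V$, so that its nuclearity follows from stability of nuclearity under closed subspaces and countable projective limits. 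Once this is granted, everything else above is routine.
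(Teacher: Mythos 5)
Your decomposition of the argument is sound, and your compactness argument for the set-theoretic identity \eqref{s4e3}---covering $\WF(u)\cap S^*\!\M^k$ by finitely many $\In\Gamma_{k,m}$---correctly spells out a step the paper leaves implicit. However, two points require attention. First, both the statement ``a countable locally convex inductive limit of nuclear spaces is nuclear'' and your quotient-of-direct-sum realization of it require the limit to be \emph{Hausdorff} (equivalently, the kernel of the quotient map $\bigoplus_m\E'_{\Gamma_{k,m}}\To\dlim{m}\E'_{\Gamma_{k,m}}$ to be closed); you use this implicitly but never verify it. The paper does: since each $\E'_{\Gamma_{k,m}}$ with its H\"ormander topology injects continuously into $\E'$, the inductive limit admits a continuous injection into the Hausdorff space $\E'$, hence is itself Hausdorff. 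This observation is small but necessary and should be added.

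Second, and more substantially, you take a genuinely different route for the nuclearity of the building blocks $\E'_{\Gamma}$, and as sketched it has a real gap. The paper recalls (citing \cite{horm6}) that the H\"ormander topology on $\E'_\Gamma$ is the \emph{initial} topology induced by the maps $u\mapsto u(f)\in\CC$ and $u\mapsto Pu\in\D'$, with $P$ ranging over properly supported pseudodifferential operators of order zero with $\mathrm{Char}\,P\supset\Gamma$; since $\CC$ and $\D'$ are nuclear and nuclearity passes to \emph{arbitrary} initial topologies (Pietsch, Prop.\ 5.2.3, with no countability hypothesis), the conclusion is immediate. Your Fourier-analytic alternative runs into two problems. (i) The countability reduction you perform is superfluous here: nuclearity is preserved under arbitrary initial topologies and arbitrary products, so the uncountability of the seminorm family is harmless; countability is only needed for the inductive limit over $m$, which is precisely what Lemma \ref{s4l1} supplies. (ii) More seriously, the ``weighted Fr\'echet spaces controlling the Fourier transforms over the cones $V$'' that you invoke as nuclear targets---presumably spaces of continuous functions on a cone $V$ with $\sup_V(1+|\xi|)^N|\cdot|<\infty$ for all $N$---are not obviously nuclear: the linking maps between successive Banach steps are multiplication by $(1+|\xi|)^{-k}$ acting on non-separable $C_b$-type spaces, which one should not expect to be nuclear, and without nuclearity of the target the subspace-permanence argument cannot even start. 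Either supply a proof that your chosen targets are nuclear (exploiting, for example, the analyticity and Paley--Wiener bounds on $\widehat{\chi u}$), or adopt the pseudodifferential characterization as the paper does.
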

\begin{proof}
  By Lemma \ref{s4l1}, one has the inclusions
  \[
  \E'_{\Gamma_{k,m}}(\wedge^{kd}T^*\!\!\M^k\To\M^k)\subset\E'_{\Gamma_{k,m'}}
  (\wedge^{kd}T^*\!\!\M^k\To\M^k)
  \]
  for all $m<m'$. Since, given any closed conic subset $\Gamma\subset T^*\!\!\M^k\sm 0$, one 
  can construct $u\in\E'_\Gamma=\E'_\Gamma(\wedge^{kd}T^*\!\!\M^k\To\M^k)$ with $\WF(u)=\Gamma$ 
  (Theorem 8.1.4, pp. 255--256 of \cite{horm1}), the above set inclusion is proper for all 
  $m<m'$. For the last statement, we recall that, for any given non-void, closed conic subset 
  $\Gamma$ of the cotangent bundle minus the range of its zero section, the Hörmander topology 
  on $\E'_\Gamma$ is the initial topology induced by the linear maps $u\mapsto u(f)\in\CC$ and 
  $u\mapsto Pu\in\Gamma^\infty_c(\wedge^{kd}T^*\!\!\M^k\To\M^k)$, where $f$ runs through all smooth 
  functions and $P$ runs through all properly supported pseudodifferential operators of order 
  zero on the vector bundle $\wedge^{kd}T^*\!\!\M^k$ over $\M^k$ such that $\WF(P)\cap\Gamma=
  \varnothing$, where 
  \[
  \begin{split}
  \WF(P)&=\{(x_1,\ldots,x_k;\xi_1,\ldots,\xi_k)\in T^*\!\!\M^k\sm 0\ |\\
  &\phantom{=\{}(x_1,\ldots,x_k,x_1,\ldots,x_k;\xi_1,\ldots,\xi_k,-\xi_1,\ldots,-\xi_k)\in\WF(K_P)\}
  \end{split}
  \]
  denotes the \emph{microsupport} of $P$ (see e.g. Proposition 18.1.26 and formulas (18.1.34), 
  (18.1.35), pp. 88 as well as the remark following Theorem 18.1.28, pp. 89--90 of \cite{horm2}). 
  Here $K_P$ is the Schwartz kernel of $P$. For the convenience of the reader, we recall that 
  (a) $\WF(K_P)\subset N^*\!\!\Delta_2(\M^k)$ (see e.g. Theorem 18.1.16, pp. 80 of \cite{horm2}), 
  (b) $P$ being properly supported means that the restrictions to $\supp K_P$ of the canonical 
  projections onto the first $k$ and the last $k$ arguments are proper maps, which entails that 
  $Pu$ is compactly supported if $u$ is, and (c) $\WF(P)\cap\Gamma=\varnothing$ implies that 
  $Pu$ is smooth (see e.g. Theorem 8.2.13, pp. 268--269 of \cite{horm1}). The above 
  inductive limit topology on $\E'_{\Gamma_{k,m}}(\wedge^{kd}T^*\!\!\M^k\To\M^k)$ is strictly finer 
  than the topology induced from $\E'(\wedge^{kd}T^*\!\!\M^k\To\M^k)$. Since the latter is 
  Hausdorff, we conclude that the former is also Hausdorff. Moreover, since $\CC$ is 
  finite-dimensional and $\Gamma^\infty_c(\wedge^{kd}T^*\!\!\M^k\To\M^k)$ is nuclear, it follows 
  from the permanence of nuclearity for initial topologies (Proposition 5.2.3, pp. 92 of 
  \cite{pietsch}) that $\E'_\Gamma$ is nuclear as well. By Proposition 4.2.1, pp. 76 of 
  \cite{jarchow} together with Theorems 5.1.1, pp. 85 and 5.2.2, pp. 91--92 of \cite{pietsch}, 
  any Hausdorff countable inductive limit of nuclear locally convex spaces is nuclear. Therefore, 
  $\E'_{\Upsilon_{k,g}}(\wedge^{kd}T^*\!\!\M^k\To\M^k)$ must be nuclear for all $k$, as claimed.
\end{proof}

\begin{remark}[Stefan Waldmann, personal communication]\label{s4r1}
  We remark that the inclusion
  \[
  \E'_{\Gamma_{k,m}}(\wedge^{kd}T^*\!\!\M^k\To\M^k)\subset\E'_{\Gamma_{k,m+1}}(\wedge^{kd}T^*\!\!\M^k\To\M^k)
  \]
  although being a proper injection, is \emph{not} a topological embedding. The reason
  is the following: the space $\Gamma^\infty_c(\wedge^{kd}T^*\!\!\M^k\To\M^k)$ of test
  densities is dense in the Hörmander topology of $\E'_{\Gamma_{k,m}}(\wedge^{kd}T^*\!\!\M^k\To\M^k)$
  for all $m\in\NN$. Since by Theorem 8.1.4 of \cite{horm1} one can find $u_m\in\E'_{\Gamma_{k,m+1}}
  (\wedge^{kd}T^*\!\!\M^k\To\M^k)$ which does not belong to $\E'_{\Gamma_{k,m}}(\wedge^{kd}T^*
  \!\!\M^k\To\M^k)$, there is a sequence $(v_{n,m}\in\Gamma^\infty_c(\wedge^{kd}T^*\!\!\M^k\To
  \M^k))_{n\in\NN}$ converging to $u_m$ in the Hörmander topology of $\E'_{\Gamma_{k,m+1}}(\wedge^{kd}
  T^*\!\!\M^k\To\M^k)$. If $\E'_{\Gamma_{k,m}}(\wedge^{kd}T^*\!\!\M^k\To\M^k)$ were a closed subspace 
  in this topology, then $u_m$ would have to be an element of this subspace, which is false by 
  assumption. In fact, even more is true: since $\Gamma^\infty_c(\wedge^{kd}T^*\!\!\M^k\To\M^k)$
  is dense in the Hörmander topology of $\E'_{\Gamma_{k,m+1}}(\wedge^{kd}T^*\!\!\M^k\To\M^k)$, so
  is $\E'_{\Gamma_{k,m}}(\wedge^{kd}T^*\!\!\M^k\To\M^k)$. Hence, the inductive limit \eqref{s4e3}
  cannot be a \emph{strict} one. 
\end{remark}

From now on we tacitly assume that $\E'_{\Upsilon_{k,g}}(\wedge^{kd}T^*\!\!\M^k\To\M^k)$ is
endowed with the topology defined in Corollary \ref{s4c1} for all $k$. Once this is done,
we may proceed to proving the main result of this Subsection.

\begin{theorem}\label{s4t1}
  Let $\UU\subset\C^\infty(\M)$ be open in the compact-open topology, and let $\Fun((\M,g),\UU)$
  be endowed with the initial topology induced by the linear maps
  \begin{align}
    F &\mapsto F(\varphi)\in\CC\ ,\label{s4e4}\\
    F &\mapsto F^{(k)}[\varphi]\in\E'_{\Upsilon_{k,g}}(\wedge^{kd}T^*\!\!\M^k\To\M^k)\ ,\,k=1,2,\ldots\ ,\label{s4e5}
  \end{align}
  with $\varphi$ running through all elements of $\UU$. Then $\Fun((\M,g),\UU)$ is a nuclear
  locally convex vector space over the complex numbers which is also a Poisson *-algebra when
  endowed with the Peierls bracket associated to a microlocal generalized Lagrangian of
  first order on $\UU$ with normally hyperbolic Euler-Lagrange operator. As a consequence, 
  the Poisson *-subalgebra $\Fun_0(\M,\UU)$ and the self-adjoint linear subspace 
  $\Fun_{\mu\loc}(\M,\UU)$ are also nuclear locally convex subspaces when endowed with 
  the relative topology. Moreover, the involution of $\Fun((\M,g),\UU)$ is continuous.
\end{theorem}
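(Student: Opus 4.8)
The plan is to treat nuclearity, the $*$-algebra structure, the Poisson structure and the two subspaces in turn. \emph{Nuclearity.} Since every $F\in\Fun((\M,g),\UU)$ is smooth, has compact space-time support, and satisfies $\WF(F^{(k)}[\varphi])\subset\Upsilon_{k,g}$ for all $\varphi\in\UU$ and $k\geq1$, the linear maps \eqref{s4e4}--\eqref{s4e5} are well defined; the maps $F\mapsto F(\varphi)$ alone already separate points, so the initial topology they generate is Hausdorff and locally convex. First I would invoke Corollary \ref{s4c1}, by which each $\E'_{\Upsilon_{k,g}}(\wedge^{kd}T^*\!\!\M^k\To\M^k)$ is nuclear; since $\CC$ is trivially nuclear and $\Fun((\M,g),\UU)$ is, by construction of the initial topology, a topological subspace of the product of all these spaces over $\varphi\in\UU$ and $k\geq0$, the stability of nuclearity under the formation of arbitrary products and of subspaces (\cite{pietsch,jarchow}) shows that $\Fun((\M,g),\UU)$ is nuclear. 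The sum, scalar multiplication and the unit are continuous by the very definition of the initial topology; the involution is continuous because complex conjugation acts continuously on each H\"ormander space $\E'_{\Gamma_{k,m}}(\wedge^{kd}T^*\!\!\M^k\To\M^k)$ and, $\Upsilon_{k,g}$ and each $\Gamma_{k,m}$ being invariant under $\xi\mapsto-\xi$, preserves the inductive limit \eqref{s4e3}; its compatibility with the other operations is a pointwise verification.

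\emph{Closure under products.} Given $F,G\in\Fun((\M,g),\UU)$, the pointwise product $F\cdot G$ is smooth and has compact space-time support by Lemma \ref{s2l3}; by the Leibniz rule \eqref{a1e4}, $(F\cdot G)^{(k)}[\varphi]$ is a finite sum, over the partitions of $\{1,\dots,k\}$ into two (ordered) blocks, of the corresponding exterior tensor products $F^{(r)}[\varphi]\otimes G^{(s)}[\varphi]$ (reindexed accordingly, with the convention $F^{(0)}[\varphi]=F(\varphi)$, $G^{(0)}[\varphi]=G(\varphi)$). Next I would apply the wave-front-set calculus for exterior tensor products (Theorem 8.2.9 of \cite{horm5}): any covector in the wave front set of such a term is nonzero, so at least one of its two blocks is nonzero and lies in the wave front set of the corresponding derivative of $F$ or of $G$; the microcausality of that functional then forces that block -- hence the whole covector -- to have a component outside $\ol{V}_{+,g}$ and another outside $\ol{V}_{-,g}$, so the covector lies in $\Upsilon_{k,g}$. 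Thus $F\cdot G\in\Fun((\M,g),\UU)$, and the same Leibniz expansion, combined with the continuity of the exterior tensor product on the H\"ormander spaces $\E'_{\Gamma_{r,m}}\times\E'_{\Gamma_{s,m'}}\To\E'_{\Gamma_{k,m''}}$ (for suitable $m''$), shows that each defining seminorm of $\Fun((\M,g),\UU)$ applied to $F\cdot G$ is dominated by a product of a seminorm of $F$ and a seminorm of $G$, so multiplication is continuous.

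\emph{Poisson structure.} By Proposition \ref{s3p4} the Peierls bracket maps $\Fun((\M,g),\UU)\times\Fun((\M,g),\UU)$ into $\Fun((\M,g),\UU)$, and by Corollary \ref{s3c3} it is a Lie bracket. The Leibniz identity $\{F,G\cdot H\}_\Li=\{F,G\}_\Li\cdot H+G\cdot\{F,H\}_\Li$ is a one-line consequence of $\{F,G\}_\Li(\varphi)=\Spr{F^{(1)}[\varphi],\Delta_\Li[\varphi]G^{(1)}[\varphi]}$, the $\CC$-linearity of $\Delta_\Li[\varphi]$ and the Leibniz rule for first functional derivatives; compatibility with the involution, $\{F,G\}_\Li^*=\{F^*,G^*\}_\Li$, follows in the same way from the fact that $\Delta_\Li[\varphi]$ has a real distribution kernel, $E'(\Li)[\varphi]$ being real since $\Li$ is real-valued. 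Continuity of the bracket in each entry is obtained as for the product, now starting from the expansion \eqref{s3e59} of $D^k\{F,G\}_\Li$, using the continuity of the maps $\varphi\mapsto D^t\Delta^{\Rt/\Av}_\Li[\varphi]$ recorded in \eqref{s3e52} together with the continuity of the pairings and compositions of wave-front-restricted distributions involved. This establishes that $\Fun((\M,g),\UU)$ with the Peierls bracket is a nuclear locally convex Poisson $*$-algebra.

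\emph{The subspaces.} One has $\Fun_0(\M,\UU)\subset\Fun((\M,g),\UU)$ trivially and $\Fun_{\mu\loc}(\M,\UU)\subset\Fun((\M,g),\UU)$ by Proposition \ref{s3p3}. The former is a $*$-subalgebra (by the remark following Definition \ref{s2d4}), and is closed under the Peierls bracket because $\Delta^{\Rt/\Av}_\Li[\varphi]$ and all their functional derivatives map smooth densities to smooth functions (Theorem \ref{s3t1} and Proposition \ref{s3p1}), so every functional derivative of $\Ret_\Li(F,G)$ given by \eqref{s3e59} is again smooth, while \eqref{s3e56} together with the inclusion $\supp F^{(k)}[\varphi]\subset(\supp F)^k$ shows it is compactly supported; hence $\Fun_0(\M,\UU)$ is a Poisson $*$-subalgebra. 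The latter is a self-adjoint linear subspace because the involution preserves both the support condition on the second functional derivative and the smoothness of the first (\eqref{s2e20}--\eqref{s2e21}). As subspaces of a nuclear space, both are nuclear in the relative topology. The main obstacle will be the continuity of multiplication (and of the Peierls bracket) with respect to the inductive-limit topology of Corollary \ref{s4c1}: since inductive limits are not compatible with bilinear maps in general, one must verify explicitly that the cone families $\{\Gamma_{k,m}\}_m$ are cofinal under the exterior tensor product (and under the operations entering \eqref{s3e59}) and control the resulting seminorm estimates; by contrast, the wave-front-set inclusions are routine H\"ormander calculus, the Leibniz and $*$-compatibility identities are immediate, and nuclearity follows directly from Corollary \ref{s4c1}.
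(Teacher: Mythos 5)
Your proposal follows essentially the same route as the paper: nuclearity via permanence under initial topologies and Corollary~\ref{s4c1}; closure of $\Fun((\M,g),\UU)$ under pointwise products via Leibniz's rule and the wave-front-set calculus for exterior tensor products (Theorem~8.2.9 of \cite{horm5}); the Lie structure via Proposition~\ref{s3p4} and Corollary~\ref{s3c3}; and nuclearity of the subspaces via permanence under passage to linear subspaces. The wave-front-set bookkeeping (including the invariance of $\Upsilon_{k,g}$ and of each $\Gamma_{k,m}$ under $\xi\mapsto-\xi$, which underlies continuity of the involution) is correctly carried out, and several points that the paper leaves implicit --- the stability of $\Fun_0(\M,\UU)$ under the Peierls bracket, the precise mechanism by which zero components in the Theorem~8.2.9 decomposition are absorbed by $\Upsilon_{k,g}$ --- are spelled out more carefully than in the published argument.

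There is, however, a genuine gap, which you yourself flag at the end: the joint continuity of multiplication (and, analogously, of the Peierls bracket) with respect to the inductive-limit topologies on the spaces $\E'_{\Upsilon_{k,g}}(\wedge^{kd}T^*\!\!\M^k\To\M^k)$. Your plan to verify ``by hand'' that the cone families $\{\Gamma_{k,m}\}_m$ are cofinal under exterior tensor products and then chase seminorm estimates is precisely the delicate step that the paper avoids: the authors instead invoke Schwartz's kernel theorem in the nuclear setting (from \cite{jarchow}), which converts the separately continuous bilinear map into a continuous linear one on the completed tensor product and delivers joint continuity without any cofinality analysis. Note also that Remark~\ref{s4r1} shows these inductive limits are \emph{not} strict, which makes the direct-estimate route genuinely treacherous. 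Similarly, the paper does not attempt a self-contained proof of continuity of the Peierls bracket from \eqref{s3e52} and \eqref{s3e59}, but defers it to the energy estimates established in the second paper of the series; your sketch of a direct argument would need the quantitative dependence of $\Delta^{\Rt/\Av}_\Li[\varphi]$ on $\varphi$ that those estimates supply. So: right architecture, but the two continuity claims are the load-bearing steps and they are exactly where you stop short; Schwartz's kernel theorem (for the product) and the deferred energy estimates (for the bracket) are what the paper uses to close them.
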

\begin{proof}
  That $\Fun((\M,g),\UU)$ is a nuclear locally convex space follows from the permanence
  of nuclearity for initial topologies (Proposition 5.2.3, pp. 92 of \cite{pietsch}) together 
  with Corollary \ref{s4c1}. Involution is obviously well-defined, continuous and commutes 
  with the Peierls bracket; Proposition \ref{s3p4} and Corollary \ref{s3c3} show that the 
  Peierls bracket $\{\cdot,\cdot\}_\Li$ associated to $\Li$ is a Lie bracket on $\Fun((\M,g),\UU)$.
  It remains to check that $\Fun((\M,g),\UU)$ is closed under products -- Leibniz's rule for 
  $\{\cdot,\cdot\}_\Li$ 
  \begin{equation}\label{s4e6}
    \{F,GH\}_\Li=\{F,G\}_\Li H+G\{F,H\}_\Li
  \end{equation}
  will then follow from Leibniz's rule \eqref{a1e4} for functional derivatives of pointwise 
  products of functionals (see also the discussion right after Theorem \ref{s4t2}).
  
  Take $F,G\in\Fun((\M,g),\UU)$, and consider their pointwise product $(F\cdot G)(\varphi)=F(\varphi)
  G(\varphi)$. Now, by Leibniz's rule for derivatives of order $k$,
  \begin{align*}
    (F\cdot G)^{(k)}&[\varphi](\vec{\varphi}_1,\dots,\vec{\varphi}_k)\\
                    &=\sum_{\pi\in\Po_k}\sum_{l=0}^{k} F^{(k-l)}[\varphi](\vec{\varphi}_{\pi(1)},
                      \ldots,\vec{\varphi}_{\pi(k-l)})G^{(l)}[\varphi](\vec{\varphi}_{\pi(k-l+1)},
                      \ldots, \vec{\varphi}_{\pi(k)})\ ,
  \end{align*}
  where $\Po_k$ is the group of permutations of $k$ elements. By Theorem 8.2.9, pp. 267 of 
  \cite{horm1}, the wave front set of each distribution appearing in the above sum is given by 
  \begin{align*}
    \WF(F^{(k-l)}[\varphi]\otimes G^{(l)}[\varphi])
    &\subset\WF(F^{(k-l)}[\varphi])\times\WF(G^{(l)}[\varphi]) \\
    &\cup\left(\WF(F^{(k-l)}[\varphi])\times(\supp(G^{(l)}[\varphi])\times\{0\})\right)\\
    &\cup\left((\supp F^{(k-l)}[\varphi])\times\{0\})\times \WF(G^{(l)}[\varphi])\right)\ .
  \end{align*}
  By direct inspection the right-hand side is included in the open set $\Upsilon_{k,g}$, as it 
  should. The wave front set of $(F\cdot G)^{(k)}[\varphi]$ is clearly contained in the union 
  of all the wave front sets of the components, which satisfies again the requested bound by 
  the closedness of the wave front sets. 
  The remaining claims follow immediately from the permanence of nuclearity under taking linear 
  subspaces (Proposition 5.1.1, pp. 85 of \cite{pietsch}).
\end{proof}

\subsection{\label{s4-gen-cr}$\C^\infty$-ring structure and its consequences}

Actually, one can strengthen Theorem \ref{s4t1} considerably:

\begin{theorem}[Smooth functional calculus]\label{s4t2}
  Given $F_1,\ldots,F_n\in\Fun((\M,g),\UU)$, let $V\subset\RR^{2n}\cong\CC^n$ be an open 
  set containing the range of $(F_1,\ldots,F_n)$, and let $\psi:V\To\CC$ be a smooth map. Then
  $\psi\circ(F_1,\ldots,F_n)\in\Fun((\M,g),\UU)$ with $\supp(\psi\circ(F_1,\ldots,F_n))
  \subset\cup^n_{j=1}\supp F_j$.
\end{theorem}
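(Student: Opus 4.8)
The plan is to mimic the proof of Theorem \ref{s4t1}, treating the pointwise composition $G \doteq \psi\circ(F_1,\ldots,F_n)$ exactly as there we treated the pointwise product, the product case being simply $\psi(z_1,z_2)=z_1 z_2$. The three things to check are: (1) the space-time support bound, (2) membership in $\Fun((\M,g),\UU)$, i.e. smoothness of $G$ together with the wave front set condition $\WF(G^{(k)}[\varphi])\subset\Upsilon_{k,g}$ for all $\varphi\in\UU$ and all $k\geq 1$, and (3) continuity of the composition map. First I would dispose of the support claim: if $p\notin\bigcup_{j=1}^n\supp F_j$, then by Definition \ref{s2d1} there is a neighbourhood $U$ of $p$ such that $F_j(\varphi_0+\vec\varphi)=F_j(\varphi_0)$ for every $j$ whenever $\supp\vec\varphi\subset U$ and $\varphi_0,\varphi_0+\vec\varphi\in\UU$; hence $G(\varphi_0+\vec\varphi)=\psi(F_1(\varphi_0),\ldots,F_n(\varphi_0))=G(\varphi_0)$, so $p\notin\supp G$. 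This also shows $\supp G$ is compact.

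Next, smoothness of $G$ as a functional (existence and joint continuity of all Gâteaux derivatives) is a direct application of Faà di Bruno's formula \eqref{a1e7} to the composition of the smooth map $\psi:V\to\CC$ with the smooth map $\varphi\mapsto(F_1(\varphi),\ldots,F_n(\varphi))$ into $\CC^n$; since $V$ is open and contains the range of $(F_1,\ldots,F_n)$, and each $F_j$ is smooth, the composition is smooth on $\UU$. Explicitly, $G^{(k)}[\varphi]$ is a finite sum, over set partitions $\{B_1,\ldots,B_r\}$ of $\{1,\ldots,k\}$, of terms of the form $(\partial^r\psi)(F_1(\varphi),\ldots,F_n(\varphi))\cdot F_{i_1}^{(|B_1|)}[\varphi]\otimes\cdots\otimes F_{i_r}^{(|B_r|)}[\varphi]$ (with appropriate symmetrisation and summation over the multi-indices labelling which $F_i$'s appear), where the partial derivatives of $\psi$ are scalars. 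Thus $G^{(k)}[\varphi]$ is, for fixed $\varphi$, a finite scalar-coefficient linear combination of \emph{tensor products} of the distributions $F_{i}^{(m)}[\varphi]$ with $m\leq k$ and the total number of tensor factors equal to $k$.

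The wave front set estimate is the heart of the matter, and it runs exactly as in the product case of Theorem \ref{s4t1}: by Theorem 8.2.9 of \cite{horm5}, for distributions $u\in\E'(\M^{k_1})$, $v\in\E'(\M^{k_2})$ one has
\[
\WF(u\otimes v)\subset\bigl(\WF u\times\WF v\bigr)\cup\bigl(\WF u\times(\supp v\times\{0\})\bigr)\cup\bigl((\supp u\times\{0\})\times\WF v\bigr)\ .
\]
Applying this iteratively to each tensor-product term $F_{i_1}^{(|B_1|)}[\varphi]\otimes\cdots\otimes F_{i_r}^{(|B_r|)}[\varphi]$ and using that each $\WF(F_{i_s}^{(|B_s|)}[\varphi])\subset\Upsilon_{|B_s|,g}$, one checks by direct inspection — precisely as in the proof of Theorem \ref{s4t1} — that the resulting set, after relabelling the $k$ factors, lies in $\Upsilon_{k,g}$: a covector over a point of $\M^k$ that is in the closed forward (or backward) cone in \emph{every} slot would have to be so in every slot of each block, which is excluded; the zero-covector slots only help. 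Since $\Upsilon_{k,g}$ is conic and the finite union of the wave front sets of the summands is again contained in it (the sets $\Gamma_{k,m}$ of Lemma \ref{s4l1} are closed, so a suitable $\Gamma_{k,m}$ absorbs the finite union), we get $\WF(G^{(k)}[\varphi])\subset\Upsilon_{k,g}$. Hence $G\in\Fun((\M,g),\UU)$.

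Finally, continuity of $(F_1,\ldots,F_n)\mapsto\psi\circ(F_1,\ldots,F_n)$ with respect to the initial topologies \eqref{s4e4}--\eqref{s4e5}: it suffices, by definition of the initial topology, to show that for each $\varphi\in\UU$ and each $k$ the map $(F_1,\ldots,F_n)\mapsto G^{(k)}[\varphi]\in\E'_{\Upsilon_{k,g}}(\wedge^{kd}T^*\!\!\M^k\To\M^k)$ is continuous. From the Faà di Bruno expansion this map is a finite sum of compositions of: (i) the continuous linear evaluations $F_i\mapsto F_i^{(m)}[\varphi]$, (ii) the scalar-valued maps $(F_1,\ldots,F_n)\mapsto(\partial^r\psi)(F_1(\varphi),\ldots,F_n(\varphi))$, which are continuous because $F_i\mapsto F_i(\varphi)$ is continuous and $\partial^r\psi$ is continuous on $V$, (iii) the tensor-product (bilinear, hence via Schwartz's kernel theorem \cite{jarchow} jointly continuous) maps on the nuclear spaces $\E'_{\Upsilon_{\bullet,g}}(\wedge^{\bullet d}T^*\!\!\M^\bullet\To\M^\bullet)$, and (iv) scalar multiplication, which is jointly continuous. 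I expect the main obstacle to be purely combinatorial bookkeeping: writing the Faà di Bruno terms over $\CC^n$-valued inner functions cleanly enough that the wave front inclusion $\subset\Upsilon_{k,g}$ can be read off "by direct inspection", and being careful that joint continuity in the several arguments $(F_1,\ldots,F_n)$ follows from the separate and bilinear continuities (which is fine here since everything factors through finitely many continuous linear/bilinear/scalar operations, and the target spaces are nuclear so that the relevant bilinear maps are jointly continuous).
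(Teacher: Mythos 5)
Your proposal is correct and takes essentially the same route as the paper: the paper's own proof is a three-line reduction to Faà di Bruno's formula \eqref{a1e7}, the support argument of Lemma \ref{s2l3}, and "the argument used for products in Theorem \ref{s4t1}," all of which you carry out explicitly (the iterated application of Hörmander's Theorem 8.2.9, the observation that a covector lying in $\ol{V}_{\pm,g}$ in every slot would force each nonzero block to do the same and thus contradict $\WF(F_{i_s}^{(|B_s|)}[\varphi])\subset\Upsilon_{|B_s|,g}$, and the continuity via Schwartz's kernel theorem on the nuclear spaces $\E'_{\Upsilon_{\bullet,g}}$). The only cosmetic imprecision is the phrase "which is excluded" — one should add "unless the block covector vanishes", but since the full covector is nonzero at least one block is nonzero and the contradiction goes through, exactly as you conclude.
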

\begin{proof}
  Smoothness of $\psi\circ(F_1,\ldots,F_n)$ follows from Faà di Bruno's formula \eqref{a1e7}. 
  The validity of the aforementioned support property follows from an argument similar to that
  used in the proof of Lemma \ref{s2l3} for products (i.e. $\psi(z_1,z_2)=z_1z_2$), so we are 
  only left with proving that the wave front set of the functional derivative of 
  $\psi\circ(F_1,\ldots,F_n)$ of order $k$ is contained in $\Upsilon_{k,g}$ for all $k\geq 1$. 
  This fact then follows from Faà di Bruno's formula \eqref{a1e7} together with an argument 
  similar to that used for products in Theorem \ref{s4t1}. 
\end{proof}

In particular, $\Fun((\M,g),\UU)$ is a $\C^\infty$\emph{-ring} \cite{moerr}. To our knowledge, 
this is the first non-trivial example in which such a structure appears in applications outside 
pure mathematics. Moreover, the Peierls bracket acts as a $\C^\infty$-derivation on $\Fun((\M,g),
\UU)$, that is, if $\psi,F_1,\ldots,F_n$ are as in Theorem \ref{s4t2} and $G\in\Fun((\M,g),\UU)$, 
then
\begin{equation}\label{s4e7}
  \{\psi(F_1,\ldots,F_n),G\}_\Li=\sum^n_{j=1}\left[\frac{\dd\psi}{\dd\Re z_j}
    (F_1,\ldots,F_n)\{\Re F_j,G\}_\Li+\frac{\dd\psi}{\dd\Im z_j}(F_1,\ldots,F_n)
    \{\Im F_j,G\}_\Li\right]\ .
\end{equation}
The above formula follows immediately from the chain rule \eqref{a1e3} and yields Leibniz's 
rule for the Peierls bracket as a special case.

\begin{remark}\label{s4r2}
  A consequence of Theorem \ref{s4t2} is that the topology of $\Fun((\M,g),\UU)$ given 
  in Corollary \ref{s4c1} is \emph{not} sequentially complete. To see this, let $F:\UU
  \ni\varphi\mapsto F(\varphi)\doteq\int_\M\varphi\omega$, where $\omega$ is a smooth 
  real-valued $d$-form of compact support in $\M$. Let $(f_n)$ be a sequence of even 
  smooth functions $f_n:\RR\To[0,1]$ supported in $[-2,2]$ which converges pointwise 
  to the characteristic function $\chi_{[-1,1]}$ of $[-1,1]$ and whose derivatives of 
  all orders converge pointwise to zero (e.g. take $f_n(|x|)=1$ for $|x|\leq 1+(4n)^{-1}$ 
  and $f_n(|x|)=0$ for $|x|\geq 1+(2n)^{-1}$). Defining $F_n\doteq f_n\circ F$ gives a
  sequence $(F_n)$ of elements of $\Fun((\M,g),\UU)$, whose functional derivatives of 
  order $k\geq 1$ are given by Faà di Bruno's formula \eqref{a1e7} as
  \begin{equation}\label{s4e8}
    F^{(k)}_n[\varphi](\vec{\varphi}_1,\ldots,\vec{\varphi}_k)=f^{(k)}_n(F(\varphi))
    \left(\int_\M\vec{\varphi}_1\omega\right)\cdots\left(\int_\M\vec{\varphi}_k\omega\right)\ .
  \end{equation}
  Hence, the functional derivatives of all orders of the elements of the sequence $(F_n)$
  converge to zero in the respective topologies for all $\varphi\in\UU$. The sequence
  $(F_n(\varphi))$, however, converges pointwise to $\chi_{[-1,1]}\circ F(\varphi)$, which
  defines a functional on $\UU$ which is in general \emph{not even continuous}, let alone 
  microcausal. By the Stone-Weierstrass theorem in the interval $[-2,2]$ \cite{jarchow}, 
  there is even a sequence of functionals $F_n\in\Fun((\M,g),\UU\cap F^{-1}((-2,2)))$ 
  which lies in the *-subalgebra of $\Fun((\M,g),\UU\cap F^{-1}((-2,2)))$ generated by 
  $\Fun_{\mu\loc}(\M,\UU\cap F^{-1}((-2,2)))$ and converges to $\chi_{[-1,1]}\circ F$ in the 
  topology of $\Fun((\M,g),\UU\cap F^{-1}((-2,2)))$. 
\end{remark}

\begin{remark}\label{s4r3}
  In view of the counterexample discussed in Remark \ref{s4r2}, it would be desirable to find 
  a stronger topology on $\Fun((\M,g),\UU)$ which is compatible with its Poisson *-algebraic 
  and $\C^\infty$-ring structures, and (at least sequentially) complete. It is clear from this 
  counterexample that even if we follow the proposal of \cite{dabrowskib} and replace the (weak) 
  seminorms $|F^{(k)}[\varphi](\vec{\varphi}_1,\ldots,\vec{\varphi}_k)|$ by the \emph{strong} 
  seminorms 
  \[
  F\mapsto\sup\left\{\big|F^{(k)}[\varphi](\vec{\varphi})\big|\ \Big|
    \ \vec{\varphi}\in\Bo\right\}\ ,
  \]  
  where $\varphi$ runs over $\UU$ and $\Bo$ runs over all closed and bounded subsets of 
  $\C^\infty(\M^k)$, we still get sequential incompleteness since the functional derivatives 
  of each element in the sequence we have constructed have \emph{empty} wave front sets and 
  therefore weak convergence of the derivatives entails their strong convergence by the 
  Banach-Steinhaus theorem. We point that the phenomenon described in Remark \ref{s4r2} 
  is \emph{independent} of the actual failures of sequential completeness for 
  $\E'_{\Upsilon_{k,g}}(\wedge^{kd}T^*\!\!\M^k\To\M^k)$ in either the weak or strong topologies, 
  which were shown in \cite{dabrowskib}. Therefore, this phenomenon \emph{cannot} be 
  circumvented by allowing the functional derivatives of microcausal functionals to take 
  values in the completions of $\E'_{\Upsilon_{k,g}}(\wedge^{kd}T^*\!\!\M^k\To\M^k)$ in the 
  strong topology for each $k$, as advocated in \cite{broudlr,dabrowski1,dabrowski2}. 

  A seemingly better way out is to take full advantage of the Michal-Bastiani notion of 
  differentiability and replace the seminorms $|F^{(k)}[\varphi](\vec{\varphi}_1,\ldots,
  \vec{\varphi}_k)|$ by the even stronger seminorms 
  \[
  F\mapsto\sup\left\{\big|F^{(k)}[\varphi](\vec{\varphi})\big|\ \Big|\ \vec{\varphi}\in\Bo\right\}\ ,
  \] 
  where $\Bo$ is as above and $\Ko$ runs over the compact subsets of $\UU$. In other words,
  we require now uniform convergence of functional derivatives of all orders in compact subsets
  (this topology is called \emph{Bastiani topology} in \cite{broudlr}). Since $\C^\infty(\M)$ is 
  semi-Montel, we have that $\Ko\times\Bo$ is compact. Therefore, since $\UU\times\C^\infty(\M^k)$ 
  is metrizable and hence compactly generated\footnote{\label{s4f1} Recall that a completely 
    regular topological space $X$ is said to be \emph{compactly generated} or a 
    $k$\emph{-space} if the topology of $X$ coincides with the final topology induced 
    by the inclusions of compact subsets of $X$. This is equivalent to the space of continuous 
    real-valued functions on $X$ being complete with respect to the topology of uniform convergence 
    on compact subsets of $X$ (see e.g. Theorem 3.6.4, pp. 70 of \cite{jarchow}).}, we conclude 
  that the completion of $\Fun((\M,g),\UU)$ in this stronger topology does correspond to allowing 
  the functional derivatives of each order $k\in\NN$ to take values in the completion of 
  $\E'_{\Upsilon_{k,g}}(\wedge^{kd}T^*\!\!\M^k\To\M^k)$ in the strong topology (see e.g. Proposition 
  16.6.2, pp. 361 of \cite{jarchow}). Thanks to the results of \cite{broudh2}, one then has 
  separate continuity of the pointwise product and the Peierls bracket on $\Fun((\M,g),\UU)$ 
  with respect to this topology and thus these bilinear operations extend (separately) continuously 
  to the completion (see also \cite{broudlr,dabrowski2} for related results). By Faà di Bruno's 
  formula \eqref{a1e7}, separate continuity also holds for the $\C^\infty$-ring operations. It is 
  not clear, however, whether \emph{nuclearity} survives in this stronger topology. For instance, 
  as Meise has shown \cite{meise}, the space of MB-smooth functionals on $\UU$ endowed 
  with the topology of uniform convergence of functional derivatives on compact subsets of $\UU$ 
  \emph{cannot} be nuclear despite being complete.

  Fortunately, there is a middle course able to get the best of both worlds, thanks to the 
  coincidence of MB smoothness and convenient smoothness in Fréchet spaces (see Remark \ref{a1r2} 
  below). We start from the simple but important observation (see e.g. Lemma 3.11, pp. 30 of 
  \cite{km}) that the space $\C^\infty(\UU,\CC)$ of (conveniently) smooth maps from $\UU\subset
  \C^\infty(\M)$ open to $\CC$ is the projective limit 
  \[
  \begin{split}
    \C^\infty(\UU,\CC) &=\ilim{\gamma\in\C^\infty(\RR,\UU)}\C^\infty(\RR,\CC)\\ &=\Bigg\{
      (F_\gamma)_{\gamma}\in\prod_{\gamma\in\C^\infty(\UU,\CC)}\C^\infty(\RR,\CC)\ \Bigg|\ F_\gamma
      \circ\kappa=F_{\gamma\circ\kappa}\\ &\phantom{=\Bigg\{}\text{ for all }\kappa\in
    \C^\infty(\RR,\RR)\Bigg\}
  \end{split}  
  \]
  along the preordered set $(\C^\infty(\RR,\UU),\preccurlyeq)$ with preorder $\preccurlyeq$ given 
  by smooth reparametrization:
  \[
  \gamma\preccurlyeq\tilde{\gamma}\;\Iff\;\gamma=\tilde{\gamma}\circ\kappa\text{ for some }
  \kappa\in\C^\infty(\RR,\RR)\ .
  \]
  To see the second identity, notice that any $(F_\gamma)_{\gamma}\in\prod_{\gamma\in\C^\infty(\UU,\CC)}
  \C^\infty(\RR,\CC)$ such that $F_\gamma\circ\kappa=F_{\gamma\circ\kappa}$ for all $\kappa\in\C^\infty
  (\RR,\RR)$ defines a map $\UU\ni\varphi\mapsto F(\varphi)=F_{\varphi}$, where we identify 
  $\varphi$ with the constant curve $\RR\ni t\mapsto\varphi(t)\equiv\varphi\in\UU$. One 
  immediately sees that $F_\gamma(t_0)=F(\gamma(t_0))$ for all $\gamma\in\C^\infty(\RR,\UU)$, 
  $t_0\in\RR$ by means of the constant reparametrization $\kappa(t)\equiv t_0$. Conversely, any 
  $F\in\C^\infty(\UU,\RR)$ gives rise to such an $(F_\gamma)_\gamma$ by setting $F_\gamma=\gamma^*F=
  F\circ\gamma$ for all $\gamma\in\C^\infty(\RR,\UU)$ -- one then obviously has $F_\gamma\circ
  \kappa=F\circ\gamma\circ\kappa=F_{\gamma\circ\kappa}$ for all $\kappa\in\C^\infty(\RR,\RR)$. As 
  such, it is natural to impose on $\C^\infty(\UU,\CC)$ the initial topology induced from the 
  compact-open topology of $\C^\infty(\RR,\CC)$ through the pullbacks $\gamma^*$ by all $\gamma
  \in\C^\infty(\RR,\UU)$ as in Definition 3.11, pp. 30 of \cite{km}, which is just the induced 
  subspace topology from the direct product $\prod_{\gamma\in\C^\infty(\UU,\CC)}\C^\infty(\RR,\CC)$. 
  Since $\C^\infty(\UU,\CC)$ is a \emph{closed} subspace of the latter and the compact-open 
  topology of $\C^\infty(\RR,\CC)$ is nuclear and complete, it follows from the permanence of 
  nuclearity for initial topologies (Proposition 5.2.3, pp. 92 of \cite{pietsch}) and the 
  permanence of completeness for closed subspaces and products (respectively Propositions 
  3.2.5 and 3.2.6, pp. 59 of \cite{jarchow}) that this topology on $\C^\infty(\UU,\CC)$ is also 
  nuclear and complete, as desired. Likewise, since convenient smoothness and MB smoothness 
  coincide on $\UU$, the topology induced on the (closed) subspace $\Fun_{00}(\M,\UU)\cap
  \C^\infty(\UU,\CC)$ is nuclear (due to the permanence of nuclearity for linear subspaces, 
  see Proposition 5.1.1, pp. 85 of \cite{pietsch}), complete and finer than the topology of 
  pointwise convergence of all derivatives. To see the latter, notice that this topology is 
  induced by the so-called \emph{(strong) convenient seminorms}
  \[
  F\mapsto\sup\left\{\big|F^{(k)}[\gamma(t)](\vec{\varphi})\big|\ \Big|\ t\in[a,b],\,a<b
    \in\RR,\,\gamma\in\C^\infty(\RR,\UU),\,\vec{\varphi}\in\Bo\right\}\ ,
  \]
  with $\Bo$ as before. In other words, we consider only the ``at most one-dimensional'' 
  compact subsets $\Ko=\gamma([a,b])\subset\UU$, $a<b\in\RR$, $\gamma\in\C^\infty(\RR,\UU)$, 
  which of course include all singleton subsets of $\UU$ through all constant curves into 
  $\UU$. Substituting the convenient seminorms for $|F^{(k)}[\varphi](\vec{\varphi}_1,\ldots,
  \vec{\varphi}_k)|$ in $\Fun((\M,g),\UU)$ then yields a nuclear locally convex topology in
  the latter, which we suggestively call the \emph{(strong) convenient topology} and whose 
  completion amounts once more to allowing $F^{(k)}[\varphi]$ to take values in the 
  completion of $\E'_{\Upsilon_{k,g}}(\wedge^{kd}T^*\!\!\M^k\To\M^k)$ in the strong topology for 
  all $\varphi\in\UU$, $k\in\NN$. Unlike before, thanks to the permanence of nuclearity for 
  completions (Proposition 5.3.1, pp. 93 of \cite{pietsch}) we can be sure that the completion 
  of $\Fun((\M,g),\UU)$ in the convenient topology is also nuclear. A similar proposal has 
  been put forward in \cite{broudlr,dabrowski2} by including all smooth maps with finite 
  (but otherwise arbitrary) dimensional domains and $\UU$ as codomain in addition to just 
  smooth curves into $\UU$. The aforementioned continuity of the Poisson *-algebraic 
  operations of $\Fun((\M,g),\UU)$ still survives, of course.
\end{remark}

Finally, we recall the important fact that $\Fun_{\mu\loc}(\M,\UU)\subset\Fun((\M,g),\UU)$
contains the squared Sobolev seminorms 
\[
\varphi\mapsto F_{k,f}(\varphi)=\|\varphi\|^2_{2,k,f}\ ,
\]
defined in \eqref{s2e5}, for all $f\in\C^\infty_c(\M)$. This together with Theorem \ref{s4t2} 
yields:

\begin{proposition}\label{s4p1}
  The following facts hold true:
  \begin{enumerate}
  \item[(i)] Given any open set $\UU\subset\C^\infty(\M)$ in the compact-open topology 
    and $\varphi_0\in\UU$, there is $F\in\Fun((\M,g),\C^\infty(\M))$ such that $F(\varphi_0)=1$, 
    $0\leq F\leq 1$, and $F\equiv 0$ in $\C^\infty(\M)\sm\UU$. In particular, one can
    completely recover the compact-open topology of $\C^\infty(\M)$ from the complements
    of zero sets of elements of $\Fun((\M,g),\C^\infty(\M))$.
  \item[(ii)] Any $\UU\subset\C^\infty(\M)$ open in the compact-open topology admits 
    locally finite partitions of unity whose elements belong to $\Fun((\M,g),\UU)$.
  \item[(iii)] Given any open set $\UU\subset\C^\infty(\M)$ in the compact-open topology, 
    the algebra $\Fun((\M,g),\UU)$ separates the points of $\UU$, that is, for any $\varphi_1,
    \varphi_2\in\UU$ there is an $F\in\Fun((\M,g),\UU)$ such that $F(\varphi_1)\neq F(\varphi_2)$.
  \item[(iv)] Given any open set $\UU\subset\C^\infty(\M)$ in the compact-open topology, 
    any unital *-morphism $\omega:\Fun((\M,g),\UU)\To\CC$ (i.e. a \emph{*-character} on 
    $\Fun((\M,g),\UU)$) is given by the evaluation functional at some $\varphi\in\UU$ 
    (by (iii), $\varphi$ must be unique).
  \item[(v)] Given any open sets $\UU,\VV\subset\C^\infty(\M)$ in the compact-open topology, 
    any continuous unital *-morphism $\alpha:\Fun((\M,g),\UU)\To\Fun((\M,g),\VV)$ is 
    the pullback of a unique smooth map $\alpha^*:\VV\To\UU$.
  \end{enumerate}
\end{proposition}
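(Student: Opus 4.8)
The plan is to prove the five assertions in order, using throughout Theorem \ref{s4t2} (smooth functional calculus on $\Fun((\M,g),\UU)$) together with the fact that this algebra contains both the linear functionals $G_h\colon\varphi\mapsto\int_\M\varphi\, h\,\ud\mu_g$ for $h\in\C^\infty_c(\M)$ (these are regular, hence microcausal) and the squared local Sobolev seminorms $F_{k,f}\colon\varphi\mapsto\|\varphi\|^2_{2,k,f}$ (microlocal, by Proposition \ref{s3p3}). For \emph{(i)}, given $\varphi_0\in\UU$ I would pick a basic compact-open neighbourhood $\bigcap_{j=1}^m\{\,\|\varphi-\varphi_0\|_{2,k_j,f_j}<\epsilon\,\}\subset\UU$, set $G:=\sum_{j=1}^m\|\,\cdot\,-\varphi_0\|^2_{2,k_j,f_j}$ (expanding the squares shows $G\in\Fun_{\mu\loc}(\M,\UU)+\Fun_0(\M,\UU)+\CC\,\id\subset\Fun((\M,g),\C^\infty(\M))$, the shift by the fixed $\varphi_0$ being harmless because the $f_j$ have compact support), so that $G\ge0$, $G(\varphi_0)=0$, $\{G<\epsilon^2\}\subset\UU$, and then put $F:=\psi\circ G$ with $\psi\in\C^\infty(\RR,[0,1])$, $\psi\equiv1$ near $0$, $\psi\equiv0$ on $[\epsilon^2,\infty)$; Theorem \ref{s4t2} gives $F$ with the required properties, and since smooth functionals are compact-open continuous the sets $\{F\neq0\}$ are compact-open open and form a base, which is the last claim of (i). For \emph{(ii)}, $\C^\infty(\M)$ is a separable Fréchet space so $\UU$ is metrizable, Lindelöf and paracompact; applying (i) pointwise to a given cover $\{\VV_\alpha\}$ gives bumps $F_\varphi$ with $F_\varphi\equiv1$ near $\varphi$ and $\{F_\varphi\neq0\}\subset\VV_{\alpha(\varphi)}$, from which Lindelöf extracts a countable subcover of $\UU$ by the open sets $\{F_n=1\}^{\circ}$, and then the finite products $H_n:=F_n\prod_{j<n}(1-F_j)$ lie in $\Fun((\M,g),\UU)$, take values in $[0,1]$, satisfy $\{H_n\neq0\}\subset\VV_{\alpha(n)}$ and $\sum_{j\le n}H_j=1-\prod_{j\le n}(1-F_j)$, which equals $1$ from the first index $n$ with $\varphi\in\{F_n=1\}^{\circ}$ on; since all $H_j$ with $j>n$ vanish identically on that open set, $\{H_n\}$ is locally finite and is the desired partition of unity. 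For \emph{(iii)}, if $\varphi_1\neq\varphi_2$ choose $h\in\C^\infty_c(\M)$ with $\int_\M(\varphi_2-\varphi_1)h\,\ud\mu_g\neq0$; then $G_h-\bigl(\int_\M\varphi_1 h\,\ud\mu_g\bigr)\id\in\Fun_0(\M,\UU)$ separates $\varphi_1$ and $\varphi_2$.

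Item \emph{(iv)} is the substantial one. First, Hadamard's lemma applied inside the $\C^\infty$-ring $\Fun((\M,g),\UU)$ shows that any unital $*$-character $\omega$ is automatically a $\C^\infty$-ring homomorphism, $\omega(\psi\circ(F_1,\dots,F_n))=\psi(\omega(F_1),\dots,\omega(F_n))$; in particular $\omega$ is positive, since $\omega(F)+\epsilon=\omega\bigl((F+\epsilon)^{1/2}\bigr)^2\ge0$ for $F\ge0$ (using only multiplicativity, the $*$-property, and Theorem \ref{s4t2}), hence sup-norm contractive. Next I would build the candidate field configuration: the linear form $h\mapsto\omega(G_h)$ on $\C^\infty_c(\M)$ is, after integration by parts, dominated on each compact $K$ by $\|h\|_{C^0(K)}$ times powers of $\omega(F_{k,f})$ (Cauchy--Schwarz plus positivity of $\omega$), so all its distributional derivatives are of order zero; by Sobolev embedding it is integration against a unique $\varphi_\omega\in\C^\infty(\M)$, i.e. $\omega$ agrees with $\mathrm{ev}_{\varphi_\omega}$ on every $G_h$. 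One then shows $\omega=\mathrm{ev}_{\varphi_\omega}$ on all of $\Fun((\M,g),\UU)$ by a Cauchy-filter argument: the sets $\{F'>\tfrac12\}$, over $F'\in\Fun((\M,g),\UU)$ with $0\le F'\le1$ and $\omega(F')=1$, form a filter base on $\UU$; using for each $F$ the auxiliary functional $\psi_0\circ\bigl|F-\omega(F)\id\bigr|^2$ (with $\psi_0\in\C^\infty([0,\infty),[0,1])$, $\psi_0(0)=1$, $\psi_0\equiv0$ off a small interval) and the $\C^\infty$-ring property one sees that along this filter every $F$ converges to $\omega(F)$, and, feeding in the $G_h$ together with the $F_{k,f}$ and invoking Rellich compactness in local Sobolev spaces, that the filter is Cauchy; by completeness of $\C^\infty(\M)$ it converges, necessarily to $\varphi_\omega$, and $\omega=\mathrm{ev}_{\varphi_\omega}$ then follows on $\Fun((\M,g),\C^\infty(\M))$ — hence on $\Fun((\M,g),\UU)$ — by continuity, provided $\varphi_\omega\in\UU$.

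The remaining and hardest point is precisely that $\varphi_\omega\in\UU$ and not merely $\varphi_\omega\in\overline{\UU}$. The exterior case is immediate (a member of the filter converging to $\varphi_\omega$ lies in $\UU$, so $\varphi_\omega\in\overline\UU$), but a boundary point cannot be excluded by functionals extending to $\C^\infty(\M)$, and since $\UU\subset\C^\infty(\M)$ is open yet infinite-dimensional — hence not $\sigma$-compact — the classical exhaustion-function device of $\C^\infty$-ring theory is unavailable. The intended remedy is to exhibit, for an arbitrary $\varphi_*\in\partial\UU$, a microcausal functional on $\UU$ that is unbounded in every $\UU$-neighbourhood of $\varphi_*$, built from bumps (via (i)) supported on pairwise disjoint balls $B_k\subset\UU$ shrinking to $\varphi_*$ and weighted by $k$ — disjointness and accumulation only at $\varphi_*\notin\UU$ making the sum locally finite on $\UU$ — and then to conclude from the Cauchy-filter argument (which forces every element of $\Fun((\M,g),\UU)$ to stay bounded near $\varphi_\omega$) that $\varphi_\omega$ cannot lie on $\partial\UU$. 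Arranging this functional to have compact space-time support, as required of members of $\Fun((\M,g),\UU)$, is the delicate bookkeeping that I expect to be the principal obstacle.

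Finally, \emph{(v)} is a corollary of (iv). Given a continuous unital $*$-morphism $\alpha\colon\Fun((\M,g),\UU)\to\Fun((\M,g),\VV)$, composition with $\mathrm{ev}_\psi$ for $\psi\in\VV$ is a $*$-character of $\Fun((\M,g),\UU)$, hence by (iv) equals $\mathrm{ev}_{\alpha^*(\psi)}$ for a unique $\alpha^*(\psi)\in\UU$, i.e. $(\alpha F)(\psi)=F(\alpha^*(\psi))$ for all $F$; by (iii) the map $\alpha^*\colon\VV\to\UU$ is well defined and is the unique map with $\alpha=(\alpha^*)^*$. Smoothness of $\alpha^*$ follows from the convenient-calculus criterion of Appendix \ref{a1-calc}: for any smooth curve $t\mapsto\gamma(t)$ in $\VV$ the maps $t\mapsto F(\alpha^*(\gamma(t)))=(\alpha F)(\gamma(t))$ are smooth for all $F$ — in particular for $F=G_h$ and $F=F_{k,f}$ — which together with continuity of $\alpha$ and completeness of $\C^\infty(\M)$ shows $t\mapsto\alpha^*(\gamma(t))$ is a smooth curve in $\C^\infty(\M)$; hence $\alpha^*$ is smooth, as required.
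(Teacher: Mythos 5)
Your arguments for (i), (iii) and (v) are mild variants of the paper's: for (i) the paper uses a \emph{single} squared Sobolev seminorm $F_{k,f}(\cdot-\varphi_0)$ (the family $\|\cdot\|_{2,k,f}$ is already directed, so a sum is unnecessary) and composes with a bump exactly as you do; for (iii) the paper deduces it directly from (i) plus the Hausdorff property, while your use of the linear functionals $G_h$ is equally fine; for (v) both arguments recover $\alpha^*$ from (iv) by pulling back characters, and prove its smoothness by testing against $G_h$, relying on Mackey-completeness of $\C^\infty(\M)$. For (ii) you give the standard explicit Lindel\"of $+$ telescoping construction, whereas the paper simply invokes Theorem 16.10 of \cite{km}; both are valid and the underlying facts (separability of the nuclear Fr\'echet space $\C^\infty(\M)$, hence $\UU$ second countable and Lindel\"of) are the same.

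The substantial divergence is (iv), and there your proposal has a genuine, and self-acknowledged, gap. Your plan is to recover $\varphi_\omega$ via the linear functionals $G_h$, run a Cauchy-filter argument, and then rule out $\varphi_\omega\in\partial\UU$ by exhibiting a microcausal functional in $\Fun((\M,g),\UU)$ that is unbounded near an arbitrary boundary point. But that last step is not merely ``delicate bookkeeping'': an element of $\Fun((\M,g),\UU)$ must have \emph{compact space-time support}, and the bump functionals from (i) attached to a sequence of shrinking balls $B_k\subset\UU$ will in general require test densities $f_k$ whose supports need not stay inside a fixed compact set, nor need the resulting sum remain microcausal or smooth in the sense of Definition \ref{s2d3}. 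Nothing in the argument forces these supports to be uniformly compact, so the ``exhaustion-type'' functional you want cannot be manufactured this way without a new idea. The positivity argument via $(F+\epsilon)^{1/2}$ and Hadamard's lemma is fine as far as it goes, and the reconstruction of $\varphi_\omega$ from $\omega(G_h)$ is plausible, but without closing the boundary gap the proof of (iv) is incomplete. The paper avoids this obstruction entirely: it uses (ii) to get that $\UU$ is Lindel\"of, uses (i) to get complete regularity, deduces that $\UU$ is (topologically) \emph{realcompact} (Engelking, Theorem 3.11.12), and then invokes the results of \cite{km} (Theorem 17.6, Remark 18.1, Proposition 18.3) which transfer realcompactness to $\RR$-algebra characters of subalgebras of $C(\UU)$ admitting bump functions and partitions of unity — exactly what (i) and (ii) supply. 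This route never needs to distinguish interior from boundary points by hand, which is why it succeeds where your construction stalls.
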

\begin{proof}
  \begin{enumerate}
  \item[(i)] Let $\chi:\RR\To[0,1]$ be an even smooth function such that $\chi(t)=0$
    for $|t|\geq 1$ and $\chi(t)=1$ $|t|\leq\frac{1}{2}$. There are $k\in\NN$, 
    $f\in\C^\infty_c(\M)$ and $R>0$ such that $\varphi_0\in\{\varphi\in\C^\infty(\M)\ |\ F_{k,f}
    (\varphi-\varphi_0)<R^2\}\subset\UU$. Set $F(\varphi)=\chi(R^{-2}F_{k,f}(\varphi-\varphi_0))$, 
    and we are done by Theorem \ref{s4t2}.
  \item[(ii)] Recall that, since $\C^\infty(\M)$ is a nuclear Fréchet space, it follows that
    $\C^\infty(\M)$ is separable, hence second countable and Lindelöf. $\UU$ is then a second 
    countable metric space, hence also separable and Lindelöf. Since (i) holds, the result
    then follows from Theorem 16.10, pp. 171--172 of \cite{km}.
  \item[(iii)] $\UU$ is Hausdorff, hence the result follows immediately from (i).
  \item[(iv)] By the proof of (ii), we know that $\UU$ is Lindelöf. Since (i) implies that
    $\UU$ is completely regular, it follows that it must be \emph{realcompact}, that is, any 
    $\RR$-algebra homomorphism from the $\RR$-valued continuous functions on $\UU$ into $\RR$ 
    is given by evaluation at some $\varphi\in\UU$ (see \cite{engelking}, Theorem 3.11.12, pp. 
    216). The result then follows for the $\RR$-subalgebra of real-valued elements of 
    $\Fun((\M,g),\UU)$ by combining (ii) with Theorem 17.6, pp. 187--188, Remark 18.1, 
    pp. 188--189 and Proposition 18.3, pp. 191 of \cite{km}. The general case is immediate.
  \item[(v)] Notice that the pullback of any *-character by $\alpha$ is also a *-character, 
    hence by (iii)-(iv) $\alpha^*$ as above is really the pullback by $\alpha$ (thus also 
    justifying our notation). Moreover, the action of $\alpha$ on functionals of the form 
    $\UU\ni\varphi\mapsto\int_\M f\varphi\ud\mu_g$, $f\in\C^\infty_c(\M)$, shows that $\alpha^*$ 
    must be smooth on $\VV$.
  \end{enumerate}
\end{proof}

Some comments about the meaning of Proposition \ref{s4p1} are in order. Proposition \ref{s4p1} 
(ii) shows that we can ``glue together'' microcausal functionals defined on an open covering 
of $\C^\infty(\M)$, that is, the assignment 
\begin{equation}\label{s4e9}
  \UU\subset\C^\infty(\M)\text{ open }\To\Fun((\M,g),\UU)\ ,
\end{equation}
together with the restriction morphisms induced by inclusions between pairs of open subsets 
in $\C^\infty(\M)$ in the compact-open topology, constitute a \emph{sheaf} of *-algebras over 
the topological space $\C^\infty(\M)$. However, multiplying $F\in\Fun((\M,g),\UU)$ by a 
``bump'' functional as given by Proposition \ref{s4p1} (i) improves the localization of $F$ 
in field configuration space at the cost of \emph{losing} information about the 
\emph{space-time} support of $F$. This must be kept in mind when multiplying $F$ by the 
elements of a partition of unity on $\UU$ belonging to $\Fun((\M,g),\UU)$. A more conceptual 
discussion of the interplay between these two notions of localization will take place in 
future work.

\subsection{\label{s4-gen-em}On-shell ideals}

\begin{definition}\label{s4d1}
  Let $\UU,\Li$ be as in Proposition \ref{s3p2}. We define the \emph{on-shell ideal} of
  $\Fun((\M,g),\UU)$ associated to $\Li$ as the subspace $\JJ_\Li((\M,g),\UU)\subset\Fun
  ((\M,g),\UU)$ of all microcausal functionals $F$ of the form
  \begin{equation}\label{s4e10}
    F(\varphi)=X[\varphi]E(\Li)[\varphi]\ ,\quad\varphi\in\UU\ ,
  \end{equation}
  where $X:\UU\times\Gamma^\infty(\wedge^{kd}T^*\!\!\M\To\M)\ni(\varphi,\omega)\mapsto
  X[\varphi]\omega\in\CC$ is jointly smooth and linear with respect to $\omega$. 
\end{definition}

It is clear that $F(\varphi)=0$ for all $F\in\JJ_\Li((\M,g),\UU)$ and all $\varphi\in\UU$
such that $E(\Li)[\varphi]=0$. A key consequence of \eqref{s4e10} is the following

\begin{proposition}\label{s4p2}
  $\JJ_\Li((\M,g),\UU)$ is a Poisson *-ideal of $\Fun((\M,g),\UU)$.
\end{proposition}
\begin{proof}
  It is clear that $\JJ_\Li((\M,g),\UU)$ is a *-ideal of $\Fun((\M,g),\UU)$, so what
  is left is to show that $\JJ_\Li((\M,g),\UU)$ is also a Lie ideal of $\Fun((\M,g),\UU)$.
  Let $G\in\JJ_\Li((\M,g),\UU)$, so that $G(\varphi)=X[\varphi]E(\Li)[\varphi]$ with 
  $X$ as in Definition \ref{s4d1}. By the chain rule \eqref{a1e3} applied to the pair 
  of maps $X,(\id,E(\Li))$, we get that
  \begin{equation}\label{s4e11}
    G^{(1)}[\varphi](\vec{\varphi})=DX[\varphi](\vec{\varphi})E(\Li)[\varphi]
    +X[\varphi]E'(\Li)[\varphi]\vec{\varphi}\ ,\quad\vec{\varphi}\in\C^\infty(\M)\ ,
  \end{equation}
  where $DX$ is defined as in \eqref{a1e10}. Let now $F\in\Fun((\M,g),\UU)$. Then
  \[
  \begin{split}
    \{F,G\}_\Li(\varphi) &=\Spr{F^{(1)}[\varphi],\Delta_\Li[\varphi]G^{(1)}[\varphi]}\\
    &=\Spr{F^{(1)}[\varphi],DX[\varphi](\vec{\varphi})E(\Li)[\varphi]}-X[\varphi]E'(\Li)[\varphi]
    \Delta_\Li[\varphi]F^{(1)}[\varphi]\\ &=\Spr{F^{(1)}[\varphi],DX[\varphi](\vec{\varphi})
      E(\Li)[\varphi]}
  \end{split}
  \]
  and therefore $\{F,G\}_\Li\in\JJ_\Li((\M,g),\UU)$, as desired.
\end{proof}

One is then led to the 

\begin{definition}\label{s4d2}
Let $\UU,\Li$ as in Proposition \ref{s3p2}. The quotient Poisson*-algebra 
\begin{equation}\label{s4e12}
  \Fun_\Li((\M,g),\UU)\doteq\Fun((\M,g),\UU)/\JJ_\Li((\M,g),\UU)
\end{equation}
is called the \emph{on-shell algebra} over $\UU$ associated to $\Li$. 
\end{definition}

As stated in the introduction, the on-shell algebra correspond to our algebra of observables
once we have imposed the equations of motion $E(\Li)[\varphi]=0$ on field configurations in
$\UU$. A natural question at this point is whether any $F\in\Fun((\M,g),\UU)$ vanishing on 
solutions $\varphi\in\UU$ of $E(\Li)[\varphi]=0$ is of the form \eqref{s4e10}. This question 
shall be addressed in future work.

\section{\label{s5-ciao}Final considerations}

We have presented the very first steps into a novel, algebraic approach 
to classical field theory in which the main role is played by algebras 
of functionals over sets of field configurations on any globally hyperbolic 
space-time. 

As a whole, our formalism can be extended to field theories living on
any fiber bundle over space-time. In fact, extensions of parts of our framework
have already appeared in the literature, including fermion fields \cite{rejzner},
Yang-Mills models and gravity \cite{frerej,weise}. These works also show that our
formalism is capable of dealing with Lagrangians possessing local symmetries which
constrain the dynamics -- more precisely, a rigorous version of the classical
Batalin-Vilkoviski\u{\i} approach to gauge theories can be provided within our setup 
\cite{frerej}. Such subtleties are absent in the case of real scalar fields, which 
do not possess any ``internal'' structure. A full account of our framework 
encompassing all the above examples will be pursued in the future.

On a more technical side, treating the above examples will occasionally require 
(particularly in the case of fermion fields) extending the results concerning 
normally hyperbolic linear partial differential operators presented in this
series of papers to more general hyperbolic systems. Theorem \ref{s3t1} can be 
extended to symmetrizable, first-order hyperbolic systems with very few changes 
in the arguments. Arguably, Theorem \ref{s3t2} could be reworked along the lines 
of the paper of Dencker \cite{dencker} to encompass symmetrizable, first-order 
hyperbolic systems of real principal type, of which the Dirac operator is an 
example \cite{rejzner}. One could try to go even further and encompass the case 
of second-order regularly hyperbolic systems of Christodoulou \cite{christo}, 
but the microlocal analysis of such systems is severely underdeveloped, due to 
the possibility of occurrence of bicharacteristics with varying multiplicity 
(e.g. birefringence in crystal optics; see \cite{liess} for the state of the 
art on these matters). 

%

In this paper we have restricted ourselves to studying \emph{linearized}
dynamics. This, of course, is far from being the full story -- the analysis 
of full \emph{nonlinear} dynamics within our approach, to be undertaken in a 
followup publication \cite{bfr2}, will be based on a \emph{semi-global solvability} 
result for second-order, quasi-linear hyperbolic partial differential operators 
$P:\C^\infty(\M)\To\C^\infty(\M)$. More precisely, for a suitably large family 
of compact regions $K$ of the space-time manifold $\M$, that the equation
\[
P(\varphi_0+\varphi)=P(\varphi_0)+f
\]
has a smooth solution $\varphi$ in $K$ for any $\varphi_0,f\in\C^\infty(K)$, 
$f$ sufficiently small. Moreover, if we prescribe the Cauchy data for $\varphi$
on a suitable Cauchy hypersurface crossing $K$, this solution must be unique.
Such a result can be proved by combining a simple refinement (due to Klainerman 
\cite{klai1,klai2}, see also Hintz and Vasy \cite{hinv}) of classical energy 
estimates for second-order linear hyperbolic partial differential operators with 
a variant of the Nash-Moser-Hörmander inverse function theorem \cite{hamilton}, 
pretty much in the spirit of the results by Bryant, Griffiths and Yang \cite{bgy} 
and Tso \cite{tso}. Taking $f=P_0(\varphi_0)-P(\varphi_0)$, where $P$ is a ``small'' 
perturbation of $P_0$, yields that setting $m_{P,P_0}(\varphi_0)\doteq\varphi_0
+\varphi$ with $\varphi$ as above leads to the formula
\[
P\circ m_{P,P_0}=P_0\ .
\]
A map $m_{P,P_0}$ intertwining $P$ and $P_0$ in the above sense is called a 
\emph{M\o{}ller map}, in analogy with the M\o{}ller wave operators in quantum
mechanical scattering theory. M\o{}ller maps in classical field theory were
discussed formally in \cite{brudf,brufre2,dutfre1,dutfre2} and will constitute 
the backbone of our take on nonlinear dynamics -- in particular, since they act 
as Poisson maps with respect to the Peierls brackets associated to two 
Euler-Lagrange operators differing by a perturbation, they can be used to 
locally linearize a Peierls bracket around a given field configuration, pretty 
much like the Darboux-Weinstein theorem for regular, finite-dimensional Poisson 
manifolds \cite{vaisman}. We also hope that finer details of on-shell ideals
might be elucidated with such methods.

%

The final release of the present paper was delayed because of incomplete proofs 
of Proposition \ref{s3p4} and Corollary \ref{s3c3} in previous versions. We hope 
that we have now clarified the validity of those statements. In the meantime, 
several papers appeared dealing with other side aspects of the present paper, 
namely \cite{broudlr,dabrowski1,dabrowski2,dabrowskib}. Some of these aspects
were addressed in Remark \ref{s4r3}.

\section*{Acknowledgements}

We would like to thank Prof. Frank Michael Forger for a critical reading of 
an early version of the Introduction, as well as for invaluable advice on the 
mathematical literature on classical field theory and general presentation 
details. We are specially grateful to him for discussions on physically relevant 
functionals, which led to most of the examples presented in Subsection \ref{s2-kin-obs}.
We would also like to thank Prof. Stefan Waldmann for pointing out a mistake in
the proof of Corollary \ref{s4c1} in a previous version of the present paper, and
also for his clarifying comments. Finally, we are much grateful to Prof. Christian 
Brouder for his several inquires about our work, particularly for pointing out a
substantial gap in the previous proofs of of Proposition \ref{s3p4}
and Corollary \ref{s3c3}, and for numerous discussions, as well as to Prof. Peter
Michor for the enlightening observations on MathOverflow which led us to the crucial 
Lemma \ref{s2l6}, and the anonymous referees for the valuable comments.

The junior author (P.L.R.) would like to thank the hospitality of the II. Institut für
theoretische Physik, Universität Hamburg, the Dipartimento di Matematica, Facoltà di 
Scienze della Università di Trento, and the Instituto de Matemática e Estatística, 
University of São Paulo, where most of the work presented in this paper was developed,
and also the support from the Research Training Group 1670 -- ``Mathematics Inspired by
String Theory and Quantum Field Theory'', Universität Hamburg as well as from the Centro 
Italiano di Ricerca Matematica (CIRM) and the Bruno Kessler Foundation in the final 
stages of the writing.

\begin{appendix}

  %
  
  \renewcommand\thetheorem{\thesection.\arabic{theorem}}
  \makeatletter
  \@addtoreset{theorem}{section}
  \makeatother
  \setcounter{theorem}{0}
  
  
  \renewcommand\theequation{\thesection.\arabic{equation}}
  \makeatletter
  \@addtoreset{equation}{section}
  \makeatother
  \setcounter{equation}{0}
  
  %
  
  \section{\label{a1-calc}A short review of differential calculus on locally convex topological vector spaces} 
  
  In this Appendix we list the basic definitions and results of differential 
  calculus we need. Our basic references are \cite{hamilton} and \cite{km}, to 
  whom we refer for more details and proofs. The first reference works only with 
  Fréchet spaces, but the proofs of the results quoted below work in the general 
  case with little or no change.
  
  The notion of differentiability of curves in locally convex topological vector 
  spaces is straightforward.
  
  \begin{definition}\label{a1d1}
    Let $\gamma:(a,b)\To\Fun$, $a<b\in\RR\cup\{\pm\infty\}$ be a continuous curve 
    into a locally convex topological vector space $\Fun$. We say that $\gamma$ is 
    a $\C^1$ \emph{curve} if for all $t\in(a,b)$ the limit 
    \[
    \gamma'(t)\doteq\lim_{s\To 0}\frac{1}{s}(\gamma(t+s)-\gamma(t))
    \]
    exists and defines a \emph{continuous} curve $\gamma':(a,b)\To\Fun$ (continuity
    of $\gamma$ actually follows from these conditions alone, hence it does not hurt 
    to assume it from the start). We also say that $\gamma$ is a $\C^m$ curve, $m\geq 1$, 
    if $\gamma^{(k)}\doteq(\gamma^{(k-1)})'$ exists and is continuous for all $1\leq k\leq m$, 
    where $\gamma^{(0)}\doteq\gamma$. If $\gamma$ is a $\C^m$ curve for all $m$, we 
    say that $\gamma$ is a \emph{smooth} curve.
  \end{definition}
  
  We stress that there would be no loss of generality if we required the domain 
  of smooth curves to be the whole real line: by the chain rule \eqref{a1e3},
  $\gamma:(a,b)\To\Fun$ is smooth if and only if $\gamma\circ f:\RR\To\Fun$ is 
  smooth for any diffeomorphism $f:\RR\To(a,b)$ (e.g. $f(\lambda)=\frac{b+a}{2}
  +\frac{b-a}{2}\tanh(\lambda)$). Once this is said, let us see how Definition 
  \ref{a1d1} is realized in the concrete cases that interest us.
  
  \begin{itemize}
  \item $\Fun=\C^\infty(\M)$ (endowed with the compact-open topology): 
    $\gamma:\RR\To\Fun$ is smooth if and only if $\gamma(\lambda)(p)=\Phi(\lambda,p)$ 
    for all $(\lambda,p)\in\RR\times\M$, where $\Phi\in\C^\infty(\RR\times\M)$;
  \item $\Fun=\C^\infty_c(\M)$ (endowed with the usual inductive limit topology): 
    $\gamma:\RR\To\Fun$ is smooth if and only if $\gamma(\lambda)(p)=\Phi(\lambda,p)$ 
    for all $(\lambda,p)\in\RR\times\M$, where $\Phi\in\C^\infty(\RR\times\M)$ is
    such that for any $a<b\in\RR$ there is a compact subset $K\subset\M$ such 
    that $\Phi(\lambda,p)=\Phi(a,p)$ for all $p\not\in K$, $\lambda\in[a,b]$.
  \end{itemize}
  
  The notion of smooth curves allows one to introduce another topology on $\Fun$, 
  given by the final topology induced by $\RR$ through all smooth curves $\gamma:
  \RR\To\Fun$. We call this topology the $c^\infty$-topology on $\Fun$. This topology 
  is necessarily finer than the original one, but it is not in general a vector 
  space topology -- the finest locally convex vector space topology on $\Fun$ that 
  is coarser then the $c^\infty$-topology is the bornologification of $\Fun$'s original 
  topology. The $c^\infty$- and the original locally convex vector space topologies
  coincide if $\Fun$ is e.g. metrizable (such as $\C^\infty(\M)$), but are distinct for 
  $\Fun=\C^\infty_c(\M)$ if $\M$ is non-compact since then the $c^\infty$-topology 
  is not a vector space topology (see e.g. Proposition 4.26 (ii), pp. 45 of 
  \cite{km}).\footnote{\label{a1f1} Nonetheless, in this case the $c^\infty$-topology 
    coincides with the so-called \emph{Kelleyfication} of $\Fun$, which is the final 
    topology induced by all compact subsets of $\Fun$ through their respective inclusions 
    (see e.g. Theorem 4.11 (3), pp. 39--40 of \cite{km}). It is clear that the Kelleyfication 
    of $\Fun$ coinciding with the original topology of $\Fun$ amounts to $\Fun$ being compactly 
    generated (see footnote \ref{s4f1} above). This happens if e.g. $\Fun$ is metrizable.}
  
  Given two locally convex vector spaces $\Fun_1$, $\Fun_2$, $\UU\subset\Fun_1$ 
  $c^\infty$-open, we say that a map $\Phi:\UU\To\Fun_2$ is \emph{conveniently smooth} 
  if $\Phi\circ\gamma$ is a smooth curve on $\Fun_2$ for every smooth curve $\gamma:
  \RR\To\UU$. We stress that conveniently smooth maps need not even be continuous (see 
  \cite{glockner} for a counterexample). A simple non-trivial example of a conveniently
  smooth map $\Phi:\Fun\To\Fun$ is, of course, the translation $\varphi\mapsto\Phi
  (\varphi)=\varphi+\varphi_0$ by a fixed element $\varphi_0\in\Fun$. In particular,
  the coordinate change maps $\kappa_{\varphi_2}\circ\kappa_{\varphi_1}^{-1}:\C^\infty_c(\M)
  \To\C^\infty_c(\M)$ in the affine flat manifold $\C^\infty(\M)$ (endowed with the 
  Whitney topology) are conveniently smooth for all $\varphi_1,\varphi_2\in\C^\infty(\M)$ 
  such that $\varphi_1-\varphi_2\in\C^\infty_c(\M)$. This shows that the atlas 
  $\Un$ defined in \eqref{s2e7} induces a smooth structure on $\C^\infty(\M)$; 
  the corresponding smooth manifold topology is, of course, the manifold topology 
  generated by the $c^\infty$-open subsets of the modelling vector space $\C^\infty_c(\M)$, 
  which is even finer than the Whitney topology. The connected components of this topology 
  are, however, also of the form $\C^\infty_c(\M)+\varphi_0$, $\varphi_0\in\C^\infty(\M)$; 
  therefore, the smooth curves in $\C^\infty(\M)$ with respect to the smooth structure 
  induced by the atlas $\Un$ must be of the form $\RR\ni\lambda\mapsto\gamma
  (\lambda)=\varphi_0+\gamma_0(\lambda)$, where $\gamma_0:\RR\To\C^\infty_c(\M)$ is 
  smooth. Hence, it is just fair to say that such $\gamma$ is \emph{a smooth curve 
    with respect to the Whitney topology}, and the smooth structure induced
  by the atlas $\Un$, \emph{the smooth structure on} $\C^\infty(\M)$ 
  \emph{induced by the Whitney topology}.
  
  \begin{remark}\label{a1r1}
    It can be shown \cite{km} that, for $\C^\infty(\M)$ endowed with the smooth 
    structure induced by the Whitney topology, the bundles
    \[
    T^{r,s}\C^\infty(\M)=\left(\otimes^sT^*\C^\infty(\M)\right)\otimes
    \left(\otimes^rT\C^\infty(\M)\right)
    \]
    of tensors of contravariant rank $r$ and covariant rank $s$ are given 
    at each $\varphi\in\C^\infty(\M)$ by the space of bounded linear mappings 
    from $\otimes^s_\beta\C^\infty_c(\M)$ to $\otimes^r_\beta\C^\infty_c(\M)$. 
    Here $\otimes_\beta$ denotes the \emph{bornological} tensor product, whose 
    topology is the finest locally convex topology on the algebraic tensor product  
    such that the canonical quotient map is \emph{bounded}; this topology is finer 
    than the projective tensor product topology. Nonetheless, $T\C^\infty(\M)$ and 
    $T^*\C^\infty(\M)$ do assume the form given in Subsection \ref{s2-kin-geom} 
    (see the proof of Theorem 42.17, pp. 447--448 of \cite{km}). It also turns 
    out that the particular structure of $\C^\infty_c(\M)$, together with Theorems 
    6.14, pp. 72--73 and 28.7, pp. 280--281 of \cite{km}, imply that every 
    kinematical tangent vector on $\C^\infty(\M)$ is also an \emph{operational} 
    one, i.e. it defines a point derivation on (conveniently) smooth maps 
    $F:\C^\infty(\M)\To\RR$. 
  \end{remark}
  
  In principle, we could develop essentially all tools of differential calculus 
  by using convenient smoothness. However, for the purposes of this paper, it is 
  often preferrable to use a stronger concept of smoothness. Such a notion is 
  provided, for instance, by Michal \cite{michal} and Bastiani \cite{bastiani}. 
  This is also the notion employed in the accounts of infinite dimensional 
  differential calculus done by Milnor \cite{milnor} and Hamilton \cite{hamilton}, 
  and all the basic results of Calculus we present in the remainder of this Appendix
  are formulated in this context (see, however, Remark \ref{a1r2} below). The
  basic definition is as follows (See also Definition \ref{s2d3} for the special 
  case of real-valued maps):
  
  \begin{definition}\label{a1d2}
    Let $\Fun_1,\Fun_2$ be locally convex topological vector spaces, $\UU\subset\Fun_1$ open,
    and $F:\UU\To\Fun_2$ a continuous map. We say that $F$ is \emph{(MB-)differentiable 
      of order $m$} (``MB'' stands for the names of Michal and Bastiani) if for all $k=1,
    \ldots,m$ the $k$-th order directional (Gâteaux) derivatives 
    \begin{equation}\label{a1e1}
      F^{(k)}[\varphi](\vec{\varphi}_1,\ldots,\vec{\varphi}_k)\doteq\frac{\dd^k}{\dd\lambda_1
        \cdots\dd\lambda_k}\Restr{\lambda_1=\cdots=\lambda_k=0}F\left(\varphi+\sum^k_{j=1}\lambda_j
        \vec{\varphi}_j\right)
    \end{equation}
    exist as jointly continuous maps from $\UU\times\Fun^k_1\ni(\varphi,\vec{\varphi}_1,\ldots,
    \vec{\varphi}_k)$ to $\Fun_2$. If $F$ is differentiable of order $m$ for all $m\in\NN$, we 
    say that $F$ is \emph{(MB-)smooth}.\footnote{\label{a1f2} MB differentiability and MB 
      smoothness are respectively listed in Keller's treatise \cite{keller} as ``$\C^k_c$- and 
      $\C^\infty_c$-differentiability''. Here we avoid his nomenclature, for it clashes with the 
      usual notation for differentiable and smooth functions with compact support.}
  \end{definition}
  
  The right-hand side of formula \eqref{a1e1} should be understood as the differentiation
  of a $k$-parameter curve taking values in $\Fun_2$, for fixed $\varphi,\vec{\varphi}_1,\ldots,
  \vec{\varphi}_k$. The argument of $F$ inside the limit is guaranteed to lie inside $\UU$ 
  for sufficiently small $\lambda_1,\ldots,\lambda_k$.
  
  It follows from Definition \ref{a1d2} that if $F:\UU\subset\Fun_1\To\Fun_2$ is MB-differentiable
  of order $m>0$ then the maps $\UU\ni\varphi\mapsto F^{(k)}[\varphi]\in\Li^k(\Fun_1,\Fun_2)$ are
  \emph{continuous} for all $1\leq k\leq m$, where $\Li^k(\Fun_1,\Fun_2)$ is the locally convex
  topological vector space of all $k$-linear maps from $\Fun_1^k$ to $\Fun_2$ endowed with the 
  compact-open topology. If $\Fun_1$ is semi-Montel (i.e. closed and bounded subsets of 
  $\Fun_1$ are compact), such topology amounts to uniform convergence in bounded subsets of 
  $\Fun_1^k$. If $\Fun_1^k$ is compactly generated (e.g. when $\Fun_1$ is metrizable, see 
  e.g. Proposition 3.3.20, pp. 152 of \cite{engelking} and footnote \ref{a1f1} above) and $\Fun_2$ 
  is complete, then by Proposition 16.6.2, pp. 361 of \cite{jarchow} $\Li^k(\Fun_1,\Fun_2)$ is also 
  complete.

  Given $\UU$ an \emph{arbitrary} (i.e. not necessarily open) subset of $\Fun_1$, we say that 
  a continuous map $F:\UU\To\Fun_2$ is differentiable of order $m$ (resp. smooth) if there is 
  $\VV\supset\UU$ open in the compact-open topology and a functional $\tilde{F}:\VV\To\Fun_2$ 
  extending $F$ (i.e. $\tilde{F}\restr{\UU}=F$) such that $\tilde{F}$ is differentiable of order 
  $m$ (resp. smooth). For completely arbitrary $\UU$, the derivatives of $F$ on $\UU$ depend on 
  the choice of extension $\tilde{F}$ (take for instance $\UU=\{\varphi\}$ for some $\varphi\in
  \Fun_1$). However, if $\UU$ happens to have a nonvoid interior, then it is easily shown that 
  the derivatives of $F$ on $\UU$ do not depend on the choice of extension. Under certain 
  conditions on $F$, one can weaken this condition (see, for instance, Remark \ref{s2r4}).
  
  \begin{remark}\label{a1r2}
    For Mackey-complete locally convex topological vector spaces (also called 
    $c^\infty$\emph{-complete} or \emph{convenient} topological vector spaces), 
    convenient smoothness enjoys essentially all the rules of Calculus presented in the remainder 
    of this Appendix assuming MB differentiability (see e.g. footnote \ref{a1f3} below). Moreover, 
    for Fréchet spaces (which are convenient and whose topology coincides with the corresponding 
    $c^\infty$-topology) convenient and MB smoothness coincide (see e.g. Theorem 1, pp. 77 of 
    \cite{frolicher} together with Theorem 2.14, pp. 20--21 of \cite{km}).
  \end{remark}
  
  Let $\gamma:[a,b]\To\Fun$, $a<b\in\RR$, be a continuous curve segment in the \emph{complete}
  locally convex topological vector space $\Fun$. We can define the \emph{(Riemann) integral} of 
  $\gamma$ along $[a,b]$
  \[
  \int^b_{a}\gamma(\lambda)\ud\lambda\in\Fun
  \]
  as the unique linear map from the space $\C([a,b],\Fun)$ of continuous curves from
  $[a,b]$ to $\Fun$ into the space $\Fun$ such that\footnote{\label{a1f3} However, as argued 
    e.g. in Proposition 2.7, pp. 17 of \cite{km}, if $\gamma$ is Lipschitz (i.e. the subset 
    $\{(t-s)^{-1}(\gamma(t)-\gamma(s))\ |\ t\neq s\ ,\,a\leq t,s\leq b\}$ is bounded) then it 
    suffices to assume that $\Fun$ is convenient to get the Riemann integral of $\gamma$ along 
    $[a,b]$ with all the properties discussed in this Appendix.}:
  \begin{enumerate}
  \item For any continuous linear functional $u:\Fun\To\RR$, we have that $u\left(\int^b_{a}
      \gamma(\lambda)\ud\lambda\right)=\int^b_{a}u(\gamma(\lambda))\ud\lambda$;
  \item For any continuous seminorm $\|\cdot\|$ on $\Fun$, we have that $\left\|\int^b_{a}
      \gamma(\lambda)\ud\lambda\right\|\leq\int^b_{a}\|\gamma(\lambda)\|\ud\lambda$;
  \item If $a<c<b\in\RR$, then $\int^b_{a}\gamma(\lambda)\ud\lambda=\int^c_{a}\gamma(\lambda)
    \ud\lambda+\int^b_{c}\gamma(\lambda)\ud\lambda$.
  \end{enumerate}
  
  The \emph{Fundamental Theorem of Calculus} holds for the Riemann integral of curves taking
  values in $\Fun$:
  
  \begin{theorem}[\cite{hamilton}, Theorems 2.2.3 and 2.2.2]\label{a1t1}
    Let $\gamma_0:[a,b]\To\Fun$ be a continuous curve, $a\leq t\leq b$, and define
    $\gamma_1(t)\doteq\int^t_{a}\gamma_0(\lambda)\ud\lambda$. Then $\gamma_1:[a,b]\To\Fun$ 
    is a $\C^1$ curve, and $\gamma'_1(t)=\gamma_0(t)$. Conversely, if $\gamma_1:[a,b]\To
    \Fun$ is a $\C^1$ curve, then $\gamma_1(b)-\gamma_1(a)=\int^b_{a}\gamma'_1(\lambda)\ud
    \lambda$.\hspace*{\fill}\qed
  \end{theorem}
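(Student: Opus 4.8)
The plan is to reduce both halves of the statement to the classical fundamental theorem of Calculus for real-valued functions of one variable, exploiting the two characterizing properties of the $\Fun$-valued Riemann integral that refer, respectively, to continuous linear functionals (property (1)) and to continuous seminorms (property (2)).

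First I would establish that $\gamma_1$ is $\C^1$ with $\gamma_1'=\gamma_0$. Fix $t\in[a,b]$ and a small increment $s$ (with the convention $\int_t^{t+s}=-\int_{t+s}^t$ for $s<0$). Property (3) gives $s^{-1}(\gamma_1(t+s)-\gamma_1(t))=s^{-1}\int_t^{t+s}\gamma_0(\lambda)\,\ud\lambda$, and since the integral of the constant curve $\lambda\mapsto\gamma_0(t)$ over $[t,t+s]$ equals $s\,\gamma_0(t)$ (immediate from property (1)), property (2) applied to an arbitrary continuous seminorm $\|\cdot\|$ on $\Fun$ yields
\[
\left\| s^{-1}\big(\gamma_1(t+s)-\gamma_1(t)\big)-\gamma_0(t)\right\|\leq\sup_{|\lambda-t|\leq|s|}\|\gamma_0(\lambda)-\gamma_0(t)\|,
\]
which tends to $0$ as $s\to 0$ because $\gamma_0$ is continuous. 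As this bound holds for every continuous seminorm, the difference quotient converges to $\gamma_0(t)$ in the topology of $\Fun$; since $\gamma_0$ is continuous by hypothesis, $\gamma_1$ is a $\C^1$ curve with derivative $\gamma_0$. (Continuity of $\gamma_1$ itself, needed before its derivative can even be discussed, follows from the same kind of estimate applied to $\gamma_1(t)-\gamma_1(t_0)=\int_{t_0}^t\gamma_0$, which is bounded uniformly on the compact interval $[a,b]$.)

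For the converse, let $\gamma_1$ be a $\C^1$ curve and put $w\doteq\int_a^b\gamma_1'(\lambda)\,\ud\lambda-\big(\gamma_1(b)-\gamma_1(a)\big)\in\Fun$. For any continuous linear functional $u:\Fun\To\RR$, property (1) gives $u\big(\int_a^b\gamma_1'(\lambda)\,\ud\lambda\big)=\int_a^b u(\gamma_1'(\lambda))\,\ud\lambda$, while $u\circ\gamma_1$ is a real-valued $\C^1$ function with $(u\circ\gamma_1)'=u\circ\gamma_1'$, so the ordinary fundamental theorem of Calculus gives $\int_a^b u(\gamma_1'(\lambda))\,\ud\lambda=u(\gamma_1(b))-u(\gamma_1(a))$; hence $u(w)=0$. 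Since $\Fun$ is locally convex, the Hahn--Banach theorem forces $w=0$, which is the assertion. The only point that genuinely needs care is that in the forward direction the convergence of the difference quotient must be checked against the \emph{entire} family of continuous seminorms of $\Fun$, not a single norm --- which is exactly what property (2) supplies; beyond that I do not anticipate a real obstacle, since existence and uniqueness of the integral are not in question here, having been fixed by properties (1)--(3) stated immediately before the theorem.
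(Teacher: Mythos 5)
Your proof is correct and follows the standard argument, which is essentially the one in Hamilton's cited Theorems 2.2.2--2.2.3: the forward direction via the seminorm estimate from property (2) together with continuity of $\gamma_0$, the converse via property (1), the classical scalar FTC for $u\circ\gamma_1$, and Hahn--Banach. The paper itself states this result with a citation and no proof, so there is no internal argument to compare against; your reduction to the scalar case is precisely what the reference does. (One small remark: the step asserting $\int_t^{t+s}\gamma_0(t)\,\ud\lambda=s\,\gamma_0(t)$ needs not just property (1) but also that continuous linear functionals separate points of $\Fun$, i.e. that $\Fun$ is Hausdorff --- the same separation already implicit in your invocation of Hahn--Banach --- so it is worth making that hypothesis explicit once.)
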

  
  \begin{corollary}[\cite{hamilton}, Theorem 3.2.2]\label{a1c1}
    Let $F:\UU\subset\Fun_1\To\Fun_2$ be a continuous map with $\Fun_2$ complete, $\varphi_0
    \in\UU$, and $\vec{\varphi}\in\UU-\varphi_0\doteq\{\varphi-\varphi_0\in\Fun_1\ |\ \varphi
    \in\UU\}$. Assume that $\UU$ is convex for simplicity. If $F$ is differentiable of order 
    one in the sense of Definition \ref{a1d2}, then 
    \begin{equation}\label{a1e2}
      F(\varphi_0+\vec{\varphi})-F(\varphi_0)=\int^1_0F^{(1)}[\varphi_0+\lambda\vec{\varphi}]
      (\vec{\varphi})\ud\lambda\ .
    \end{equation}\hspace*{\fill}\qed
  \end{corollary}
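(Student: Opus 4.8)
The plan is to reduce the statement to the fundamental theorem of Calculus for curves, Theorem \ref{a1t1}, by restricting $F$ to the affine segment joining $\varphi_0$ to $\varphi_0+\vec{\varphi}$. Concretely, I would define the curve $\gamma_1:[0,1]\To\Fun_2$ by $\gamma_1(\lambda)\doteq F(\varphi_0+\lambda\vec{\varphi})$. Since $\UU$ is assumed convex and $\varphi_0,\varphi_0+\vec{\varphi}\in\UU$, the whole segment $\{\varphi_0+\lambda\vec{\varphi}:\lambda\in[0,1]\}$ lies in $\UU$, so $\gamma_1$ is well defined; as $F$ is continuous and $\lambda\mapsto\varphi_0+\lambda\vec{\varphi}$ is an affine (hence continuous) map of $[0,1]$ into $\UU$, the curve $\gamma_1$ is continuous.

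The next step is to show that $\gamma_1$ is a $\C^1$ curve with $\gamma_1'(\lambda)=F^{(1)}[\varphi_0+\lambda\vec{\varphi}](\vec{\varphi})$. Fixing $\lambda\in[0,1]$ and writing $\psi\doteq\varphi_0+\lambda\vec{\varphi}$, the difference quotient is
\[
\frac{1}{s}\big(\gamma_1(\lambda+s)-\gamma_1(\lambda)\big)=\frac{1}{s}\big(F(\psi+s\vec{\varphi})-F(\psi)\big)\ ,
\]
whose limit as $s\To 0$ is precisely the first directional derivative $F^{(1)}[\psi](\vec{\varphi})$ of Definition \ref{a1d2} (taken with $k=1$, $\vec{\varphi}_1=\vec{\varphi}$); at the endpoints $\lambda=0,1$ one uses the corresponding one-sided limits, which exist for the same reason. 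Joint continuity of $F^{(1)}:\UU\times\Fun_1\To\Fun_2$, guaranteed by the hypothesis that $F$ is differentiable of order one, composed with the continuous map $\lambda\mapsto(\varphi_0+\lambda\vec{\varphi},\vec{\varphi})$, shows that $\lambda\mapsto\gamma_1'(\lambda)$ is continuous; hence $\gamma_1\in\C^1([0,1],\Fun_2)$.

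Finally, I would apply the second part of Theorem \ref{a1t1} to $\gamma_1$, obtaining $\gamma_1(1)-\gamma_1(0)=\int^1_0\gamma_1'(\lambda)\ud\lambda$, which upon unravelling the definitions of $\gamma_1$ and $\gamma_1'$ is exactly \eqref{a1e2}. There is essentially no deep obstacle here; the only point demanding care is ensuring that the segment along which one integrates stays inside the domain of $F$, which is exactly what the convexity assumption on $\UU$ provides (for a general $\UU$ with nonvoid interior the argument still runs after first rescaling $\vec{\varphi}$, but this refinement is not needed for the present statement).
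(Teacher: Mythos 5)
Your proof is correct and follows the standard route — it is essentially the argument Hamilton gives for Theorem 3.2.2, which the paper simply cites without reproducing: restrict $F$ to the affine segment $\lambda\mapsto\varphi_0+\lambda\vec{\varphi}$ (well-defined by convexity), check the resulting curve is $\C^1$ using the defining directional derivative and joint continuity of $F^{(1)}$, and apply Theorem \ref{a1t1}. The only point that might merit a word of caution is that Theorem \ref{a1t1} is stated for curves on $[a,b]$ and Definition \ref{a1d1} is phrased for open intervals, so the one-sided-limit remark you make at the endpoints is exactly the right thing to say.
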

  
  With the aid of the fundamental theorem of Calculus \ref{a1t1}, the following key results 
  can be proven. First, the usual linearity property for first-order derivatives holds:

  \begin{lemma}[\cite{hamilton}, Lemma 3.2.3 and Theorem 3.2.5]\label{a1l1}
    Let $F:\UU\subset\Fun_1\To\Fun_2$ be a continuous map with $\Fun_2$ complete, $\varphi
    \in\UU$. If $F$ is differentiable of order one in the sense of Definition \ref{a1d2}, 
    then for all scalars $\lambda,\mu$ and all $\vec{\varphi},\vec{\varphi}'\in\Fun_1$
    we have that
    \[
    F^{(1)}[\varphi](\lambda\vec{\varphi}+\mu\vec{\varphi}')=\lambda F^{(1)}[\varphi](\vec{\varphi})
    +\mu F^{(1)}[\varphi](\vec{\varphi}')\ .
    \]\hspace*{\fill}\qed
  \end{lemma}
  
  Next, the \emph{chain rule} holds:
  
  \begin{theorem}[\cite{hamilton}, Theorem 3.3.4]\label{a1t2}
    Let $F:\UU\subset\Fun_1\To\Fun_2$, $G:\VV\subset\Fun_2\To\Fun_3$ be respectively continuous 
    maps from open subsets $\UU,\VV$ of locally convex topological vector spaces $\Fun_1,\Fun_2$
    into $\Fun_2$ and the locally convex topological vector space $\Fun_3$, such that
    $F(\UU)\subset\VV$. Suppose that $\Fun_2$ and $\Fun_3$ are complete. If $F$ (resp. $G$) is 
    once differentiable on $\UU$ (resp. $\VV$) in the sense of Definition \ref{a1d2}, then for 
    all $\varphi\in\UU$, $\vec{\varphi}\in\Fun_1$ we have that
    \begin{equation}\label{a1e3}
      (G\circ F)^{(1)}(\varphi)(\vec{\varphi})=G^{(1)}[F(\varphi)](F^{(1)}[\varphi](\vec{\varphi}))\ .
    \end{equation}\hspace*{\fill}\qed
  \end{theorem}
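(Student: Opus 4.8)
The plan is to reduce everything to the fundamental theorem of Calculus (Theorem \ref{a1t1}, in the form of Corollary \ref{a1c1}) applied twice --- once to $F$ and once to $G$ --- together with the joint continuity of the first derivatives. Fix $\varphi\in\UU$ and $\vec{\varphi}\in\Fun_1$. First I would observe that the curve $\lambda\mapsto F(\varphi+\lambda\vec{\varphi})$, defined for $|\lambda|$ small enough that the segment $\{\varphi+s\lambda\vec{\varphi}:s\in[0,1]\}$ lies in the open set $\UU$, is a $\C^1$ curve in $\Fun_2$ with derivative $\lambda\mapsto F^{(1)}[\varphi+\lambda\vec{\varphi}](\vec{\varphi})$: the derivative at $\lambda_0$ is, by Definition \ref{a1d2} applied at the point $\varphi+\lambda_0\vec{\varphi}$, precisely $F^{(1)}[\varphi+\lambda_0\vec{\varphi}](\vec{\varphi})$, and this depends continuously on $\lambda_0$ because $F^{(1)}$ is jointly continuous on $\UU\times\Fun_1$. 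Applying Corollary \ref{a1c1} on a convex open neighborhood of $\varphi$ contained in $\UU$, and substituting $s\mapsto\lambda s$, gives
\[
F(\varphi+\lambda\vec{\varphi})-F(\varphi)=\lambda\, w_\lambda\ ,\qquad w_\lambda\doteq\int^1_0 F^{(1)}[\varphi+\lambda s\vec{\varphi}](\vec{\varphi})\,\ud s\ ,
\]
and, again by joint continuity of $F^{(1)}$ together with the basic estimate $\Norm{\int\gamma}\leq\int\norm{\gamma}$ for Riemann integrals, the map $\lambda\mapsto w_\lambda$ is continuous with $w_0=F^{(1)}[\varphi](\vec{\varphi})$.

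Next I would set $\psi_0\doteq F(\varphi)$ and $\psi_\lambda\doteq F(\varphi+\lambda\vec{\varphi})=\psi_0+\lambda w_\lambda$. Since $F$ is continuous, $\psi_\lambda\to\psi_0$ as $\lambda\to 0$; as $\Fun_2$ is locally convex there is a convex open $\VV_0$ with $\psi_0\in\VV_0\subset\VV$, and for $|\lambda|$ small the whole segment from $\psi_0$ to $\psi_\lambda$ lies in $\VV_0$. Applying Corollary \ref{a1c1} to $G$ on $\VV_0$ and dividing by $\lambda\neq 0$ yields
\[
\frac{1}{\lambda}\bigl(G(\psi_\lambda)-G(\psi_0)\bigr)=\int^1_0 G^{(1)}[\psi_0+\lambda s\, w_\lambda](w_\lambda)\,\ud s\ .
\]
It then remains to pass to the limit $\lambda\to 0$ inside this integral. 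The map $(s,\lambda)\mapsto(\psi_0+\lambda s\,w_\lambda,\,w_\lambda)$ is continuous on the compact set $[0,1]\times[-\delta,\delta]$ for $\delta$ small (here the continuity of $\lambda\mapsto w_\lambda$ is used), hence has compact image $\Ko\subset\VV_0\times\Fun_2$; since $G^{(1)}$ is continuous on $\VV\times\Fun_2$, its restriction to $\Ko$ is uniformly continuous, so for every continuous seminorm $q$ on $\Fun_3$ one has $\sup_{s\in[0,1]}q\bigl(G^{(1)}[\psi_0+\lambda s\,w_\lambda](w_\lambda)-G^{(1)}[\psi_0](w_0)\bigr)\to 0$ as $\lambda\to 0$. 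Using the Riemann integral estimate once more, the integral above converges to $\int^1_0 G^{(1)}[\psi_0](w_0)\,\ud s=G^{(1)}[\psi_0](w_0)$. Unwinding the notation, $\lim_{\lambda\to0}\tfrac{1}{\lambda}(G(F(\varphi+\lambda\vec{\varphi}))-G(F(\varphi)))=G^{(1)}[F(\varphi)]\bigl(F^{(1)}[\varphi](\vec{\varphi})\bigr)$, which is \eqref{a1e3}; joint continuity of the right-hand side in $(\varphi,\vec{\varphi})$ is immediate from joint continuity of $F^{(1)}$ and $G^{(1)}$, so $G\circ F$ is differentiable of order one.

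The hard part is exactly this exchange of $\lim_{\lambda\to 0}$ with $\int_0^1(\cdot)\,\ud s$: in infinite dimensions one cannot invoke scalar dominated convergence directly and must instead argue seminorm by seminorm, the crucial point being that the arguments of $G^{(1)}$ appearing in the integrand sweep out a genuinely \emph{compact} subset of $\VV\times\Fun_2$, on which the continuous map $G^{(1)}$ is automatically uniformly continuous; establishing that compactness in turn relies on $\lambda\mapsto w_\lambda$ being continuous, itself a consequence of the Riemann integral estimate together with the joint continuity of $F^{(1)}$. Everything else is routine bookkeeping with the three defining properties of the Riemann integral and the definition of Michal-Bastiani differentiability, and I would carry out the two applications of Corollary \ref{a1c1} in the order above.
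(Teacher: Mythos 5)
Your proof is correct and follows exactly the argument Hamilton gives for his Theorem 3.3.4, which the paper cites without reproducing: write $F(\varphi+\lambda\vec\varphi)=F(\varphi)+\lambda w_\lambda$ via the fundamental theorem of Calculus, apply it again to $G$ along the segment of length $\lambda w_\lambda$, and pass to the limit using uniform continuity of $G^{(1)}$ on the compact image of $(s,\lambda)\mapsto(\psi_0+\lambda s w_\lambda,w_\lambda)$. The seminorm-by-seminorm handling of the limit under the Riemann integral is precisely the right way to replace dominated convergence in this setting, so there is nothing to add.
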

  
  The chain rule \eqref{a1e3} yields, after taking direct sums, the \emph{Leibniz's rule} 
  for derivatives of composition of $n$-tuples of maps $F_1,\ldots,F_n$ with 
  a continuous $n$-linear map $\psi$
  \begin{equation}\label{a1e4}
    (\psi(F_1,\ldots,F_n))^{(1)}[\varphi](\vec{\varphi})=\sum^n_{j=1}\psi(F_1[\varphi],
    \ldots,F^{(1)}_j[\varphi](\vec{\varphi}),\ldots,F_n[\varphi])\ .
  \end{equation}
  This, together with the fundamental theorem of Calculus \eqref{a1e2}, 
  yields the \emph{integration by parts formula} and, even more importantly, 
  \emph{Taylor's formula with (integral) remainder}
  \begin{equation}\label{a1e5}
    F(\varphi_0+\vec{\varphi})=\sum^k_{j=0}\frac{1}{j!}F^{(j)}[\varphi_0]
    (\vec{\varphi},\ldots,\vec{\varphi})+\int^1_0\frac{(1-\lambda)^k}{k!}F^{(k+1)}
    [\varphi_0+\lambda\vec{\varphi}](\vec{\varphi},\ldots,\vec{\varphi})\ud\lambda\ .
  \end{equation}
  To see this, note that Leibniz's rule implies the following key formula:
  \begin{equation}\label{a1e6}
    \begin{split}
      \frac{(1-\lambda)^{k-1}}{(k-1)!}F^{(k)}[\varphi_0+\lambda\vec{\varphi}](\vec{\varphi},
      \ldots,\vec{\varphi}) &=\frac{(1-\lambda)^k}{k!}F^{(k+1)}[\varphi_0+\lambda\vec{\varphi}]
      (\vec{\varphi},\ldots,\vec{\varphi})\\ &-\frac{\ud}{\ud\lambda}
      \left[\frac{(1-\lambda)^k}{k!}F^{(k)}[\varphi_0+\lambda\vec{\varphi}](\vec{\varphi},\ldots,
        \vec{\varphi})\right]\ .
    \end{split}
  \end{equation}
  Integrating both sides of formula \eqref{a1e6} from $\lambda=0$ to $\lambda=1$ by means
  of the fundamental theorem of Calculus \eqref{a1e2} yields the fundamental induction step 
  from $k-1$ to $k$. Since the case $k=0$ of \eqref{a1e5} is settled by the fundamental
  theorem of Calculus itself, we are done.
  
  For the convenience of the reader, we prove the generalization of the chain rule \eqref{a1e3}
  for higher derivatives, since this proof is not easy to find in the literature at the 
  present level of generality. We follow the argument employed in \cite{kai}.
  
  \begin{corollary}[Faà di Bruno's formula]\label{a1c2}
    Let $F:\UU\subset\Fun_1\To\Fun_2$, $G:\VV\subset\Fun_2\To\Fun_3$ satisfy the hypotheses of
    Theorem \ref{a1t2}. If $F$ (resp. $G$) is $m$-times differentiable on $\UU$ 
    (resp. $\VV$), then $G\circ F$ is also $m$-times differentiable on $\UU$, and
    for all $1\leq k\leq m$, 
    \begin{equation}\label{a1e7}
      (G\circ F)^{(k)}[\varphi](\vec{\varphi}_1,\ldots,\vec{\varphi}_k)=\sum_{\pi\in P_k}
      G^{(|\pi|)}[F(\varphi)]\left(\bigotimes_{I\in\pi} F^{(|I|)}[\varphi](\otimes_{j\in I}
        \vec{\varphi}_j)\right)\ ,
    \end{equation}
    where $P_k$ is the set of all partitions $\pi=\{I_1,\ldots,I_l\}$ of $\{1,\ldots,k\}$,
    that is, $I_j\neq\varnothing$, $I_j\cap I_{j'}=\varnothing$ for $j\neq j'$ and 
    $\cup^l_{j=1}I_j=\{1,\ldots,k\}$.
  \end{corollary}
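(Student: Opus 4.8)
The plan is to prove \eqref{a1e7} by induction on $k$, establishing simultaneously that $G\circ F$ is $m$-times differentiable; the case $k=1$ is exactly the chain rule \eqref{a1e3}. So suppose $1\leq k<m$ and that \eqref{a1e7} holds at order $k$, with the right-hand side regarded as a function of $\varphi\in\UU$ for fixed $\vec{\varphi}_1,\ldots,\vec{\varphi}_k$. By Definition \ref{a1d2} we have $(G\circ F)^{(k+1)}[\varphi](\vec{\varphi}_1,\ldots,\vec{\varphi}_{k+1})=\frac{\ud}{\ud\lambda}\big|_{\lambda=0}(G\circ F)^{(k)}[\varphi+\lambda\vec{\varphi}_{k+1}](\vec{\varphi}_1,\ldots,\vec{\varphi}_k)$, so it suffices to differentiate the right-hand side of \eqref{a1e7} at $\lambda=0$ after replacing $\varphi$ with $\varphi+\lambda\vec{\varphi}_{k+1}$. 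Each summand, indexed by a partition $\pi=\{I_1,\ldots,I_l\}\in P_k$, then becomes a curve $\lambda\mapsto G^{(l)}[F(\varphi+\lambda\vec{\varphi}_{k+1})]\big(v_1(\lambda)\otimes\cdots\otimes v_l(\lambda)\big)$, with $v_i(\lambda)=F^{(|I_i|)}[\varphi+\lambda\vec{\varphi}_{k+1}](\otimes_{j\in I_i}\vec{\varphi}_j)$.

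First I would differentiate this curve at $\lambda=0$ by combining the chain rule \eqref{a1e3} with Leibniz's rule \eqref{a1e4}. The ``outer'' map $(\psi,w_1,\ldots,w_l)\mapsto G^{(l)}[\psi](w_1,\ldots,w_l)$ is once differentiable: its derivative in $\psi$ is $G^{(l+1)}$, by iterating the defining formula \eqref{a1e1}, and in each slot $w_i$ it is the trivial (linear, continuous) derivative; the ``inner'' curves $\lambda\mapsto F(\varphi+\lambda\vec{\varphi}_{k+1})$ and $\lambda\mapsto v_i(\lambda)$ are $\C^1$ with velocities at $0$ equal to $F^{(1)}[\varphi](\vec{\varphi}_{k+1})$ and $F^{(|I_i|+1)}[\varphi](\otimes_{j\in I_i}\vec{\varphi}_j\otimes\vec{\varphi}_{k+1})$, again by iterated use of \eqref{a1e1}. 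Hence the derivative of the $\pi$-summand equals $G^{(l+1)}[F(\varphi)]\big(F^{(1)}[\varphi](\vec{\varphi}_{k+1})\otimes v_1\otimes\cdots\otimes v_l\big)$ plus, for each $i=1,\ldots,l$, the term $G^{(l)}[F(\varphi)]\big(v_1\otimes\cdots\otimes F^{(|I_i|+1)}[\varphi](\otimes_{j\in I_i}\vec{\varphi}_j\otimes\vec{\varphi}_{k+1})\otimes\cdots\otimes v_l\big)$, where $v_i=F^{(|I_i|)}[\varphi](\otimes_{j\in I_i}\vec{\varphi}_j)$ and, by symmetry of $G^{(l)}$ in its vector arguments, the ordering of the tensor factors is immaterial.

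Next comes the combinatorial bookkeeping. The first term above corresponds to the partition $\pi\cup\{\{k+1\}\}$ of $\{1,\ldots,k+1\}$, and the $i$-th of the remaining terms to the partition obtained from $\pi$ by adjoining $k+1$ to the block $I_i$. Conversely, every $\pi'\in P_{k+1}$ arises in exactly one of these two ways: either $\{k+1\}$ is a block of $\pi'$, whence $\pi=\pi'\setminus\{\{k+1\}\}\in P_k$ is uniquely determined, or $k+1$ lies in a block of cardinality at least two, whence deleting $k+1$ from that block produces a unique $\pi\in P_k$ together with a unique distinguished block of $\pi$. Summing the contributions above over all $\pi\in P_k$ therefore reproduces exactly $\sum_{\pi'\in P_{k+1}}G^{(|\pi'|)}[F(\varphi)]\big(\bigotimes_{I'\in\pi'}F^{(|I'|)}[\varphi](\otimes_{j\in I'}\vec{\varphi}_j)\big)$, i.e. the right-hand side of \eqref{a1e7} at order $k+1$. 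Finally, each building block of this closed-form expression --- the derivatives $F^{(l)}$, $G^{(l)}$ with $l\leq k+1\leq m$, which are jointly continuous by Definition \ref{a1d2}, together with tensor products, evaluation of continuous multilinear maps, compositions and finite sums --- is jointly continuous, so the limit defining $(G\circ F)^{(k+1)}$ exists and is jointly continuous in $(\varphi,\vec{\varphi}_1,\ldots,\vec{\varphi}_{k+1})$; this advances the induction and, at $k+1=m$, shows $G\circ F$ is $m$-times differentiable.

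The main obstacle is the justification of the term-by-term differentiation in the second paragraph, which hinges on two facts about Michal--Bastiani derivatives that must be invoked with care: that the directional derivative of $\psi\mapsto G^{(l)}[\psi](w_1,\ldots,w_l)$ in a direction $w_{l+1}$ is $G^{(l+1)}[\psi](w_1,\ldots,w_{l+1})$ --- which is the iteration of \eqref{a1e1} together with the symmetry of higher derivatives in their vector arguments --- and that, since $G^{(l)}[\psi]$ is linear and continuous in each of its slots, Leibniz's rule \eqref{a1e4} applies to those slots verbatim, the full derivative of the $\pi$-summand being the ensuing combined chain-and-product rule. Once these are in place the differentiation is purely formal; the only genuinely hands-on piece is the elementary verification of the correspondence $P_k\rightsquigarrow P_{k+1}$ and the attendant tracking of which $\vec{\varphi}_j$ occupies which tensor slot.
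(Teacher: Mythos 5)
Your proposal is correct and follows essentially the same route as the paper's own proof: induction on $k$ with the base case being the chain rule, differentiating each $\pi$-indexed summand via Leibniz's rule to produce one term where the index $k+1$ forms its own block and one term for each block $I_i$ absorbing $k+1$, and then observing that this exhausts $P_{k+1}$ bijectively. The only (cosmetic) differences are the indexing of the inductive step ($k\to k+1$ vs.\ $k-1\to k$) and your added remarks on joint continuity and the justification of term-by-term differentiation, which the paper leaves implicit.
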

  \begin{proof}
    We proceed by induction on $k$. The case $k=1$ is just the usual chain rule \eqref{a1e3}. 
    Assume that the formula is valid up to order $k-1$ along $\vec{\varphi}_1,\ldots,
    \vec{\varphi}_{k-1}$. Then for each partition $\pi$ of $\{1,\ldots,k-1\}$ in the above sum 
    we have, by Leibniz's rule \eqref{a1e4},
    \[
    \begin{split}
      \Bigg[G^{(|\pi|)}&\left.\circ\:F\left(\bigotimes_{I\in\pi} F^{(|I|)}(\otimes_{j\in I}\vec{\varphi}_j)
        \right)\right]^{(1)}[\varphi](\vec{\varphi}_k) \\ 
      &=G^{(|\pi|+1)}[F(\varphi)]\left(F^{(1)}[\varphi](\vec{\varphi}_k)\otimes\bigotimes_{I\in\pi} 
        F^{(|I|)}[\varphi](\otimes_{j\in I}\vec{\varphi}_j)\right) \\ 
      &+\sum_{I'\in\pi}G^{(|\pi|)}[F(\varphi)]\left(F^{(|I'|+1)}[\varphi]\left(\vec{\varphi}_k\otimes
          \bigotimes_{j\in I'}\vec{\varphi}_j\right)\otimes\bigotimes_{I\in\pi\sm\{I'\}} F^{(|I|)}[\varphi]
        (\otimes_{l\in I}\vec{\varphi}_l)\right)\ .
    \end{split}
    \]
    However, any partition $\pi'$ of $\{1,\ldots,k\}$ is either of the form $\pi'=\{\{k\}\}\cup\pi$
    or $\pi'=(\pi\sm\{I'\})\cup\{I'\cup\{k\}\}$ for some $I'\in\pi$, $\pi\in P_{k-1}$. Hence, 
    summing the above identities over all such $\pi$ gives the desired result.
  \end{proof}
  
  A consequence of Faà di Bruno's formula \eqref{a1e7} is the generalization of Leibniz's rule 
  \eqref{a1e4} for higher order derivatives of composition of $l$-tuples of maps $F_1,\ldots,F_l$ 
  with a continuous $l$-linear map $\psi$
  \begin{equation}\label{a1e8}
    (\psi(F_1,\ldots,F_l))^{(k)}[\varphi]\left(\vec{\varphi}_1,\ldots,\vec{\varphi}_k\right)
    =\sum_{\{I_1,\ldots,I_l\}\in\tilde{P}_{k,l}}\psi\left(F_1^{(|I_1|)}[\varphi](\otimes_{j\in I_1}
      \vec{\varphi}_j),\ldots,F_l^{(|I_l|)}[\varphi](\otimes_{j\in I_l}\vec{\varphi}_j)\right)\ ,
  \end{equation}
  where $\tilde{P}_{k,l}$ is the set of all partitions $\pi=\{I_1,\ldots,I_l\}$ of $\{1,\ldots,k\}$
  in $l$ \emph{possibly (but not all) empty} subsets, i.e. $I_j\cap I_{j'}=\varnothing$ for 
  $j\neq j'$ and $\cup^l_{j=1}I_j=\{1,\ldots,k\}$. As another application, we obtain the so-called
  \emph{$k$-th order resolvent formula} \eqref{a1e12} below which shall often be useful. Consider 
  two MB-differentiable maps $F:\UU\times\Fun_1\To\Fun_2$, $G:\UU\times\Fun_2\To\Fun_1$ of order 
  one, where $\Fun_1,\Fun_2$ are locally convex topological vector spaces and $\UU\subset\Fun$ 
  is a nonvoid open subset of the locally convex topological vector space $\Fun$. For notational 
  convenience, we also occasionally write $F(\varphi,\vec{\varphi})\doteq F[\varphi]\vec{\varphi}$, 
  $G(\varphi,\vec{\psi})\doteq G[\varphi]\vec{\psi}$. Suppose that both $F$ and $G$ are 
  \emph{linear} in their second arguments and satisfy
  \begin{equation}\label{a1e9}
    \begin{split}
      F[\varphi]G[\varphi]\vec{\psi} &=\vec{\psi}\ ,\quad\forall\varphi\in\UU\ ,\,
      \vec{\psi}\in\Fun_2\ ,\\
      G[\varphi]F[\varphi]\vec{\varphi} &=\vec{\varphi}\ ,\quad\forall\varphi\in\UU\ ,\,
      \vec{\varphi}\in\Fun_1\ .
    \end{split}
  \end{equation}
  If we define 
  \begin{equation}\label{a1e10}
    D^k_1F[\varphi](\vec{\varphi}_1,\ldots,\vec{\varphi}_k)\vec{\varphi}=F^{(k)}[\varphi,
    \vec{\varphi}]((\vec{\varphi}_1,0),\ldots,(\vec{\varphi}_k,0))\ ,\quad D^1_1\doteq D_1\ ,\,
    D^0_1=\id\ ,
  \end{equation}
  then by the chain rule \eqref{a1e3} applied to the pair of maps $F,(\id,G)$ and \eqref{a1e9} 
  we have the \emph{(first-order) resolvent formula}
  \begin{equation}\label{a1e11}
    D_1G[\varphi](\vec{\varphi}_1)\vec{\psi}=-G[\varphi]D_1F[\varphi](\vec{\varphi}_1)G[\varphi]
    \vec{\psi}\ .
  \end{equation}
  It follows from the above formula that if in addition $F$ is MB-smooth, then so 
  is $G$. More precisely, in this case we obtain the following (not so pleasant) higher-order 
  generalization of \eqref{a1e11}, obtained by induction on $k\geq 1$ from \eqref{a1e11} and 
  an argument analogous to the one used in the proof of Corollary \ref{a1c2}:
  \begin{equation}\label{a1e12}
    D^k_1G[\varphi]\left(\vec{\varphi}_1,\ldots,\vec{\varphi}_k\right)\vec{\psi}
    =\sum^k_{l=1}(-1)^l\sum_{\{I_1,\ldots,I_l\}\in P_k}\sum_{\sigma\in S_l}\left(\prod^l_{j=1}G[\varphi]
      D^{|I_{\sigma(j)}|}F[\varphi](\otimes_{i\in I_{\sigma(j)}}\vec{\varphi}_i)\right)G[\varphi]
    \vec{\psi}\ .
  \end{equation}
  
  Here, $P_k$ is again the set of all partitions of $\{1,\ldots,k\}$ as in the statement of
  Corollary \ref{a1c2}, whereas $S_l$ is the set of all permutations of $\{1,\ldots,l\}$.

  Finally, one can show that the order of differentiation for higher order derivatives is 
  irrelevant:
  
  \begin{theorem}[\cite{hamilton}, Theorem 3.6.2]\label{a1t3}
    Let $F:\UU\subset\Fun_1\To\Fun_2$ be a continuous map with $\Fun_2$ complete. If $F$ 
    is differentiable of order $m>1$ in the sense of Definition \ref{a1d2}, then $F^{(k)}
    [\varphi]:\Fun^k_1\ni(\vec{\varphi}_1,\ldots,\vec{\varphi}_k)\mapsto F^{(k)}[\varphi]
    (\vec{\varphi}_1,\ldots,\vec{\varphi}_k)\in\Fun_2$ is a symmetric, $k$-linear map for 
    all fixed $\varphi\in\UU$, $2\leq k\leq m$.\hspace*{\fill}\qed
  \end{theorem}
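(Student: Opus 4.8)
The plan is to split the statement into its two halves --- multilinearity and symmetry of $F^{(k)}[\varphi]$ --- and to extract each from the one-variable fundamental theorem of Calculus \eqref{a1e2} together with the \emph{joint} continuity of the derivatives that is built into Definition \ref{a1d2}.

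\emph{Multilinearity.} Homogeneity $F^{(k)}[\varphi](\ldots,c\vec{\varphi}_j,\ldots)=c\,F^{(k)}[\varphi](\ldots,\vec{\varphi}_j,\ldots)$ is immediate by rescaling $\lambda_j$ in \eqref{a1e1}, so only additivity is at issue, and I would first settle the case $k=1$. Writing $F(\varphi+t\vec{\varphi}_1+t\vec{\varphi}_2)-F(\varphi)$ as the telescoping sum $[F(\varphi+t\vec{\varphi}_1+t\vec{\varphi}_2)-F(\varphi+t\vec{\varphi}_1)]+[F(\varphi+t\vec{\varphi}_1)-F(\varphi)]$, applying \eqref{a1e2} to each bracket, dividing by $t$ and letting $t\to 0$ --- at which point joint continuity of $F^{(1)}$ and the seminorm estimate for the Riemann integral make the two resulting integrals converge to $F^{(1)}[\varphi](\vec{\varphi}_2)$ and $F^{(1)}[\varphi](\vec{\varphi}_1)$ --- gives $F^{(1)}[\varphi](\vec{\varphi}_1+\vec{\varphi}_2)=F^{(1)}[\varphi](\vec{\varphi}_1)+F^{(1)}[\varphi](\vec{\varphi}_2)$. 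For general $k$, reading the iterated derivative in \eqref{a1e1} from the innermost variable $\lambda_k$ outward shows $F^{(k)}[\varphi](\vec{\varphi}_1,\ldots,\vec{\varphi}_k)=G^{(k-1)}[\varphi](\vec{\varphi}_1,\ldots,\vec{\varphi}_{k-1})$ with $G(\psi)\doteq F^{(1)}[\psi](\vec{\varphi}_k)$; since $\vec{\varphi}_k\mapsto G$ is linear, so is every derivative of $G$, whence $F^{(k)}[\varphi]$ is linear in its last argument --- hence, once symmetry is established, in every argument.

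\emph{Symmetry.} Since transpositions of adjacent slots generate $S_k$, and (via the same rewriting) swapping $\vec{\varphi}_i\leftrightarrow\vec{\varphi}_{i+1}$ inside $F^{(k)}[\varphi]$ is just the interchange of two adjacent differentiations, it suffices to treat $k=2$. Here I would consider the symmetric second difference
\[
\Delta(s,t)\doteq F(\varphi+s\vec{\varphi}_1+t\vec{\varphi}_2)-F(\varphi+s\vec{\varphi}_1)-F(\varphi+t\vec{\varphi}_2)+F(\varphi)\ ,
\]
which is invariant under interchanging $s\vec{\varphi}_1$ with $t\vec{\varphi}_2$. Applying \eqref{a1e2} in the $\vec{\varphi}_1$-direction and then in the $\vec{\varphi}_2$-direction identifies $\Delta(s,t)$ with $\int_0^s\!\int_0^t F^{(2)}[\varphi+\sigma\vec{\varphi}_1+\tau\vec{\varphi}_2](\vec{\varphi}_2,\vec{\varphi}_1)\,\ud\tau\,\ud\sigma$; dividing by $st$ and letting $s,t\to 0$, joint continuity of $F^{(2)}$ and the Riemann-integral estimate give the limit $F^{(2)}[\varphi](\vec{\varphi}_2,\vec{\varphi}_1)$. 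By the symmetry of $\Delta$ the very same limit also equals $F^{(2)}[\varphi](\vec{\varphi}_1,\vec{\varphi}_2)$, so the two coincide.

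\emph{The main obstacle} --- the only point beyond bookkeeping --- is the passage $s,t\to 0$ in the double-integral identity: one has to be sure that joint continuity of $F^{(2)}$ as a map $\UU\times\Fun_1^2\to\Fun_2$, together with the seminorm estimate for the Riemann integral, genuinely forces $\tfrac{1}{st}\int_0^s\!\int_0^t F^{(2)}[\varphi+\sigma\vec{\varphi}_1+\tau\vec{\varphi}_2](\vec{\varphi}_2,\vec{\varphi}_1)\,\ud\tau\,\ud\sigma\to F^{(2)}[\varphi](\vec{\varphi}_2,\vec{\varphi}_1)$ in $\Fun_2$. This is exactly where Michal--Bastiani (rather than merely G\^ateaux) differentiability is needed: with only separately continuous derivatives the classical counterexamples to equality of mixed partials would survive. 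A fully equivalent alternative that outsources this point to the finite-dimensional Schwarz theorem is to compose with an arbitrary continuous linear functional $u\in\Fun_2'$ --- these separate the points of the (Hausdorff) locally convex space $\Fun_2$, and the chain rule \eqref{a1e3} gives $u\bigl(F^{(k)}[\varphi](\vec{\varphi}_1,\ldots,\vec{\varphi}_k)\bigr)=(u\circ F)^{(k)}[\varphi](\vec{\varphi}_1,\ldots,\vec{\varphi}_k)$ --- and then to restrict $u\circ F$ to the slice $(\lambda_1,\ldots,\lambda_k)\mapsto u\bigl(F(\varphi+\sum_j\lambda_j\vec{\varphi}_j)\bigr)$, which by Fa\`a di Bruno's formula \eqref{a1e7} and joint continuity is an ordinary $C^k$ function near $0\in\RR^k$.
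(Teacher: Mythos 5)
The paper does not actually prove this statement: the theorem environment ends with \(\square\) immediately after the statement and defers to Hamilton's Theorem 3.6.2, so there is no in-paper argument to compare against. Your proposal is correct. Your primary route --- establishing \(k=1\) additivity by telescoping and the fundamental theorem of Calculus \eqref{a1e2}, obtaining the \(k=2\) symmetry from the symmetric second difference \(\Delta(s,t)\) identified with a double Riemann integral and a \((s,t)\to 0\) limit via joint continuity, then pushing to general \(k\) through \(F^{(k)}[\varphi](\cdot,\ldots,\cdot,\vec{\varphi}_k)=G^{(k-1)}[\varphi](\cdot,\ldots,\cdot)\) with \(G(\psi)=F^{(1)}[\psi](\vec{\varphi}_k)\) --- is essentially Hamilton's intrinsic argument, carried out inside the locally convex space with the Riemann integral machinery of Theorem~\ref{a1t1}. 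Your ``outsourced'' alternative is a genuinely different and arguably cleaner route: compose with \(u\in\Fun_2'\) (which separates points of a Hausdorff locally convex space), observe via \eqref{a1e3} that \((\lambda_1,\ldots,\lambda_k)\mapsto u\bigl(F(\varphi+\sum_j\lambda_j\vec{\varphi}_j)\bigr)\) is an ordinary \(\C^k\) function of \(k\) real variables by joint continuity of the \(F^{(j)}\), and invoke the finite-dimensional Schwarz theorem; this buys you a shorter argument at the cost of the Hahn--Banach separation, and it dispenses automatically with the only place where your direct route is slightly terse. That place is the reduction of general-\(k\) symmetry to \(k=2\): the \(G\)-trick by itself only gives symmetry in the first \(k-1\) slots, so one must additionally apply the \(k=2\) case at the shifted base point \(\varphi+\sum_{j}\lambda_j\vec{\varphi}_j\) to the two adjacent differentiations in question, freezing the remaining \(\lambda\)'s, and then differentiate outward; also note that \(G\) is differentiable of order \(m-1\) (not \(m\)), which is why one needs \(m\geq k\), consistent with the implicit restriction \(k\leq m\) in the statement.
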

  
\end{appendix}

\end{document}